\documentclass[a4paper]{lipics-v2021}
\bibliographystyle{plainurl}
\usepackage[T1]{fontenc}

\nolinenumbers
%
%
%
\usepackage{ifthen}
\usepackage{multicol}
\usepackage{pifont}
\usepackage{stmaryrd}
\usepackage{latexsym}
\usepackage{amsfonts}
\usepackage{alltt}
\usepackage{tabularx}
\usepackage{proof}
\usepackage{xfrac}
\usepackage{DotArrow}
\usepackage{xspace}
\usepackage{ulem} 
\usepackage{mathabx}
\usepackage{mynatproof}
%
%
  
\renewcommand{\emph}[1]{\textit{#1}}

\title{\mbox{Strong Priority and Determinacy in Timed CCS}}

\titlerunning{\mbox{Strong Priority and Determinacy in Timed CCS}}


\author{Luigi Liquori}{Inria, France}{}{}{Partially funded by ETSI.} 
\author{Michael Mendler}{University of Bamberg, Germany}{}{}{Partially funded by UniCA/I3S.} 
\authorrunning{Liquori and Mendler}  


\ccsdesc{Theory of computation, Process algebras}

\keywords{Timed process algebras, determinacy, priorities, synchronous programming}


 



\newtheorem{Conjecture}{Conjecture}

\newcommand{\store}[1]{}

\newcommand{\version}{short} 

\newcommand{\longshort}[3]{%
\ifthenelse{\equal{#1}{short}}{#2}{}\ifthenelse{\equal{#1}{long}}{#3}{}}

\newcommand{\cin}[2]
{#1.} 
\newcommand{\cout}[2]
{\ol{#1}.} 

\newcommand{\eqdef}{\stackrel{_\mathit{def}}{=}}
\newcommand{\pdef}{\eqdef}

\newcommand{\myr}[1]{{\color[rgb]{0.9,0,0}{#1}}}
\newcommand{\myb}[1]{{\color[rgb]{0,0,0.7}{#1}}}

\newcommand{\gll}[1]{}
\newcommand{\rll}[1]{}

\newcommand{\rmm}[1]{}

\newcommand{\mrev}[2]{#1}

\newcommand{\gmm}[1]{}

\newcommand{\bag}[1]{}
%

%
%
%
%

\newcommand{\cone}{\ding{172}\xspace}
\newcommand{\ctwo}{\ding{173}\xspace}
\newcommand{\cthree}{\ding{174}\xspace}
\newcommand{\cfour}{\ding{175}\xspace}
\newcommand{\cfive}{\ding{176}\xspace}
\newcommand{\csix}{\ding{177}\xspace}
\newcommand{\cseven}{\ding{178}\xspace}


\newcommand{\free}{\xspace}   

\newcommand{\ccs}{\textsf{CCS}\xspace}   		 
\newcommand{\csp}{\textsf{CSP}\xspace}   		 
\newcommand{\csa}{\textsf{CSA}\xspace}   		 
\newcommand{\tpl}{\textsf{TPL}\xspace}   
\newcommand{\pmc}{\textsf{PMC}\xspace}   
\newcommand{\sccs}{\textsf{SCCS}\xspace}

\newcommand{\ccscc}{\ensuremath{\mathsf{CCS}^{\mathsf{cc}}}\xspace}   
   
\newcommand{\ccscw}{\ensuremath{\mathsf{CCS}^{\mathsf{CW}}}\xspace}   
\newcommand{\ccsp}{\ensuremath{\mathsf{CCS}^{\mathsf{Ph}}}\xspace}   
\newcommand{\ccslm}{\ensuremath{\mathsf{CCS}^{\mathsf{spt}}}\xspace}


\newcommand{\T}{\tcode{T}}

\newcommand{\tsf}[1]{\textsff{#1}}

\newcommand{\ol}[1]{\ensuremath{\overline{#1}}}
\renewcommand{\ul}[1]{\ensuremath{\underline{#1}}}

\newcommand{\textsff}[1]{\textsf{#1}}

\newcommand{\sestep}[1]%
  {\ensuremath{\mathrel{{\relbar}{#1}{\twoheadrightarrow}}}}
  
\newcommand{\spstep}[1]%
  {\ensuremath{\mathrel{{\relbar}{#1}{\rightarrow}}}}
  
\newcommand{\fmstep}[1]%
  {\ensuremath{\mathrel{\stackrel{#1}{\twoheadrightarrow}_{^{\free}}}}}

\newcommand{\pstep}[1]%
  {\ensuremath{\mathrel{\stackrel{#1}{\longrightarrow}_{\textsf{c}}}}}
\newcommand{\sstep}[1]%
  {\ensuremath{\mathrel{\stackrel{#1}{\longrightarrow}}}}

\newcommand{\tcode}[1]{{\mathsf{#1}}}

\newcommand{\of}{\col}
\newcommand{\col}{\ensuremath{{:}}}
\newcommand{\rcol}{\ensuremath{\mathrel{:}}}
\newcommand{\cnf}{\ensuremath{\Vdash}}
\newcommand{\ncnf}{\ensuremath{\nVdash}}

\newcommand{\csum}{\ensuremath{\mathrel{\smalltriangleright}\xspace}}

\newcommand{\wpsum}{\ensuremath{\mathrel{+\hspace{-1.5mm}\rangle}}\xspace}

\newcommand{\cpar}{\,|\,}

\newcommand{\cseq}{\ensuremath{\tcode{.}}\xspace}

\newcommand{\bang}[1]{\ensuremath{{#1}^{\ast}}}

\newcommand{\crec}{\ensuremath{\mathop{\textsf{rec}}}\xspace}

\newcommand{\zero}{{\tcode{0}}\xspace}
\newcommand{\one}{{\tcode{1}}\xspace}


\newcommand{\A}{\mathcal{A}}
\newcommand{\I}{\mathcal{I}}
\newcommand{\R}{\mathcal{R}}

\renewcommand{\L}{\mathcal{L}}
\newcommand{\C}{\mathcal{C}}
\renewcommand{\P}{\mathcal{P}}

\newcommand{\Act}{\mathit{Act}}

\newcommand{\coA}{\overline{\mathcal{A}}}



\newcommand{\olwilA}[1]{\ensuremath{\ol{\tsf{iA}}^{_{#1}}}\xspace}

\newcommand{\olwiA}{\ensuremath{\ol{\tsf{iA}}^{_\ast}}\xspace}

\newcommand{\wilA}[1]{\ensuremath{\tsf{iA}^{#1}}\xspace}

\newcommand{\oliA}{\ensuremath{\ol{\tsf{iA}}}\xspace}

\newcommand{\olfwiA}{\ensuremath{\ol{\tsf{fiA}}^{_\ast}}\xspace}
\newcommand{\wiA}{\ensuremath{\tsf{iA}^\tau}\xspace}
\newcommand{\wiS}{\wiA} 

\newcommand{\fwiA}{\ensuremath{\tsf{fiA}^{_\ast}}\xspace}

\newcommand{\iA}{\ensuremath{\tsf{iA}}\xspace}
\newcommand{\vA}{\ensuremath{\tsf{fA}}\xspace}

\newcommand{\ordpre}{\raisebox{-1pt}{\ensuremath{\dotarrow{}}}}
\newcommand{\sordpre}[1]{\raisebox{-1pt}{\ensuremath{\dotarrow{}\!\!_{#1}}}\xspace}

\newcommand{\indep}{\ensuremath{\mathrel{\diamond}}}

\newcommand{\eset}{\ensuremath{\{\,\}}\xspace}

\newcommand{\sRDerives}[2]{\underset{#2}{\xleftarrow{#1}}}

 
\newcommand{\oldDerives}[3]{\ensuremath{\xrightarrow[#3]{#1}\!\!{_{#2}}\, }}

\newcommand{\Derives}[3]
{\mbox{ \ensuremath{\xrightarrow[{\protect\raisebox{-0.25em}[0em][0em]{$^{#3}$}}]
{\protect\raisebox{-0.5em}[0em][0em]{$^{#1}$}}\!\!{_{#2}}\, }}}

\newcommand{\LDerives}[3]
{\ensuremath{\xleftarrow[{\protect\raisebox{-0.2em}[0em][0em]{$^{#3}$}}]{#1}\!\!{_{#2}}\, }}






\newcommand{\fstep}[1]%
  {\ensuremath{\mathrel{\stackrel{#1}{\rightarrow}_{^{\free}}}}}

\newcommand{\astep}[1]%
  {\ensuremath{\mathrel{\stackrel{#1}{\rightsquigarrow}}}}

\newcommand{\nfstep}[1]%
  {\ensuremath{\mathrel{\stackrel{#1}{\not\rightarrow\!\!}_{^{\free}}}}}

\newcommand{\snDerives}[2]{\stackrel{#1}{\nrightarrow{}}}



%
%

\newcommand{\ActR}{\ensuremath{\textit{Act}}\xspace}
\newcommand{\ClkR}{\ensuremath{\textit{Clk}}\xspace}

\newcommand{\HideR}{\ensuremath{\textit{Hide}}\xspace}

\newcommand{\RestrR}{\ensuremath{\textit{Restr}}\xspace}

\newcommand{\StructR}{\ensuremath{\textit{Struct}}\xspace}
\newcommand{\RepR}{\ensuremath{\textit{Rep}}\xspace}

\newcommand{\PriSumR}{\ensuremath{\textit{PriSum}}\xspace}

\newcommand{\SumR}{\ensuremath{\textit{Sum}}\xspace}

\newcommand{\ParR}{\ensuremath{\textit{Par}}\xspace}
\newcommand{\SParR}{\ensuremath{\textit{Par}^{_\ast}}\xspace}

\newcommand{\ConR}{\ensuremath{\textit{Con}}\xspace}
\newcommand{\ComR}{\ensuremath{\textit{Com}}\xspace}

\newcommand{\restrict}{\ensuremath{\,\backslash\, }}
\newcommand{\hide}{\ensuremath{\,\slash\,}}

\newcommand{\inp}{\ensuremath{{?}}\xspace}
\newcommand{\out}{\ensuremath{{!}}\xspace}



\newcommand{\scong}{\ensuremath{\equiv}}


\newcommand{\wAND}{\ensuremath{\mathsf{wAND}}}

\newcommand{\emit}{\ensuremath{\mathit{emit}}\xspace}
\newcommand{\pres}{\ensuremath{\mathit{pres}}\xspace}
\newcommand{\abs}{\ensuremath{\mathit{abs}}\xspace}

\newcommand{\ABRO}{\ensuremath{\mathsf{ABRO}}\xspace}
\newcommand{\ABO}{\ensuremath{\mathsf{ABO}}\xspace}
\newcommand{\AB}{\ensuremath{\mathsf{AB}}\xspace}
\newcommand{\sA}{\ensuremath{\mathsf{A}}\xspace}

\newcommand{\sB}{\ensuremath{\mathsf{B}}\xspace}
\newcommand{\sR}{\ensuremath{\mathsf{R}}\xspace}
\newcommand{\sO}{\ensuremath{\mathsf{O}}\xspace}
\newcommand{\sT}{\ensuremath{\mathsf{T}}\xspace}

\newcommand{\mem}{\ensuremath{\mathsf{mem}}\xspace}
\newcommand{\prg}{\ensuremath{\mathsf{prg}}\xspace}
\newcommand{\psig}{\ensuremath{\mathsf{psig}}\xspace}


\begin{document}
\maketitle


\begin{abstract}
Building on the standard theory of process algebra with priorities, we identify a new scheduling mechanism, called \textit{constructive reduction} which is designed to capture the essence of synchronous programming. The distinctive property of this evaluation strategy is to achieve determinacy-by-construction for multi-cast concurrent communication with shared memory.  
In the technical setting of \ccs extended by clocks and priorities, we prove for a large class of \textit{c-coherent} processes a confluence property for constructive reductions. We show that under some restrictions, called \textit{pivotability}, c-coherence is preserved by the operators of prefix, summation, parallel composition, restriction and hiding. Since this permits memory and sharing, we are able to cover a strictly larger class of processes compared to those in Milner's classical confluence theory for \ccs without priorities.
\end{abstract}

\section{Introduction}\label{sec:intro}

Concurrency, by expression or by implementation, is both a convenient and unavoidable feature of modern software systems. However, this does not mean that we must necessarily give up the requirement of functional determinism which is crucial for maintaining predictability and to manage design complexity by simple mathematical models~\cite{Lee06,BocchinoAAS09,Edwards18}. 
While the pure $\lambda$-calculus is naturally deterministic by design, it cannot model shared memory. Process algebras can naturally model shared objects, but do not guarantee determinism out of the box.
In this paper we use the standard mathematical formalism \ccs~\cite{Milner:CCS} to study methods for reconciling concurrency and determinism.
%
%

\paragraph*{Determinacy in Process Algebra}

In \ccs, the interaction of concurrent processes $P \cpar Q$ arises from the rendezvous synchronisation of an \textit{action} $\alpha$ of $P$ and an associated \textit{co-action} $\ol{\alpha}$ from $Q$, generating a \textit{silent} action $\tau$, also called a \textit{reduction}. 
A process is in \textit{normal form} if cannot reduce any more and \textit{determinate} if it reduces to at most one normal form, up to some notion of structural congruence. 
\longshort{\version}{}{Our goal is to find compositional techniques to construct determinate processes that permit shared resources.}
\longshort{\version}
{%
 As a simplified example for the scenarios that we are interested in, consider a triple of concurrent processes $R \cpar S \cpar W$, where $S \eqdef r \cseq S + w \cseq \ol{r} \cseq \zero$ 
 acts as a shared resource, such as a (write-once) store, while $R \eqdef \ol{r} \cseq \zero$, $W \eqdef \ol{w} \cseq \zero$ are read and write processes, respectively, sharing $S$ with each other. The store $S$ offer a read action $r$ (rendezvous) leaving the state unchanged, or a write action $w$ upon which it changes its state to $\ol{r} \cseq \zero$. At this point, the store only permits a single read and then becomes the inactive process $\zero$. 
 The synchronisation between the actions $r$, $w$ and co-actions $\ol{r}$, $\ol{w}$ generates two sequences of reductions from $R \cpar S \cpar W$:\\[1ex]
\mbox{\hspace{2.75cm}
$\centering \zero \cpar \ol{r} \cseq \zero \cpar \zero \sRDerives{\tau}{} \zero \cpar S \cpar W \sRDerives{\tau}{} R \cpar S \cpar W \Derives{\tau}{}{} R \cpar \ol{r} \cseq \zero \cpar \zero \Derives{\tau}{}{} \zero \cpar \zero \cpar \zero.
$
}
\\ 
 Obviously, $R \cpar S \cpar W $ is not determinate. It reduces to two distinct normal forms $\zero \cpar \ol{r} \cseq \zero \cpar \zero$ and $\zero \cpar \zero \cpar \zero$. The 
 final store $\ol{r} \cseq \zero$ on the left permits one more read, while $\zero$ on the right does not.
}
{As an example for the general scenario that we are going to deal with, consider a triple of concurrent processes $P \cpar Q \cpar R$, where $Q$ acts as a shared resource, such as a store, store and $P$, $R$ are distinct processes sharing the resource concurrently with each other. 
An individual access of $Q$ by $P$ will be modelled as a \ccs-style rendezvous between an action $\alpha_1$ by $Q$ and a matching co-action $\ol{\alpha}_1$ by $P$. Concurrently, process $R$ may be accessing the store, expressed as a synchronisation between an action $\alpha_2$ by $Q$ and $\ol{\alpha}_2$ by $R$. The synchronisation between an action and its co-action generates a silent $\tau$-action of the concurrent composition $P \cpar Q \cpar R$ of the three processes leading to the following reductions: 
\longshort{\version}
{$P \cpar Q_2 \cpar R_2\sRDerives{\tau}{} P \cpar Q \cpar R \Derives{\tau}{}{} P_1 \cpar Q_1 \cpar R$.}
{\[
\begin{Prooftree}
  \Bproof
    \Lproof 
      P \Derives{\overline{\alpha_1}}{}{} P_1
    \ANDproof
      Q \Derives{\alpha_1}{}{} Q_1
    \Rproof 
      P \cpar Q \cpar R \Derives{\tau}{}{} P_1 \cpar Q_1 \cpar R
  \EEproof
  \qquad 
  \Bproof
    \Lproof 
      Q \Derives{\alpha_2}{}{} Q_2
    \ANDproof
      R \Derives{\overline{\alpha_2}}{}{} R_2
    \Rproof 
      P \cpar Q \cpar R \Derives{\tau}{}{} P \cpar Q_2 \cpar R_2
  \EEproof
\end{Prooftree}
\]
}
The issue for nondeterminism here is that the two silent $\tau$ transitions (modelling an internal and unobservable computation step) make the concurrent system $P \cpar Q \cpar R$ end up in two configurations $P_1 \cpar Q_1 \cpar R$ and $P \cpar Q_2 \cpar R_2$ in which a \textit{single} component $Q$ has changed into two \textit{distinct} states $Q_1$ and $Q_2$.
The problem for confluence may now arise in two ways: 

\begin{itemize}

  \item $Q_1$ does not offer the action $\alpha_2$ to $R$, or $Q_2$ does not offer $\alpha_1$ to $P$, anymore; for instance, say $Q$ is a shared data queue and $\alpha_i$ competing dequeue actions performed by $P$ and $R$. 
  \longshort{\version}
  {}
  {Suppose the dequeue action $\alpha_i$ removes the last element in $Q$. Then $Q_i$ is empty and the dequeue action $\alpha_{3-i}$ is no longer admissible in $Q_i$. Or, $Q$ is a shared memory cell, $\alpha_1$ is a read and $\alpha_2$ a write action. Obviously, if $P$ executes a read $\alpha_1$ first on $Q$ with a specific value $v$ transmitted, then $\alpha_1$  will not be admissible anymore in $Q_2$ if the write action $\alpha_2$ by $R$ has updated the memory. Formally, we have
  \[
  \begin{Prooftree}
  \Bproof
    \Lproof 
      P \Derives{\overline{\alpha_1}}{}{} P_1
    \ANDproof
      Q_2 \snDerives{\alpha_1}{}
    \Rproof 
      P \cpar Q_2 \cpar R_2 \snDerives{\tau}{} 
  \EEproof
  \quad \text{or} \quad 
  \Bproof
    \Lproof 
      Q_1 \snDerives{\alpha_2}{}
    \ANDproof
      R \Derives{\overline{\alpha_2}}{}{} R_2
    \Rproof 
      P_1 \cpar Q_1 \cpar R \snDerives{\tau}{}
  \EEproof
  \end{Prooftree}
  \]
  where one of $P$ or $R$ has blocked the other process from proceeding with its access to $Q$. 
}
  \item Both actions $\alpha_{3-i}$ ($i \in\{1,2\}$) remain admissible in $Q_i$, say $Q_1 \Derives{\alpha_2}{}{} Q_{12}$ and $Q_2 \Derives{\alpha_1}{}{} Q_{21}$ but have different effects, so that $Q_{12}$ and $Q_{21}$ are not observationally equivalent. 
 \longshort{\version}{}{For instance, if $Q$ is a data queue and $\alpha_i$ actions to enqueue distinct values $v_i$, say. Then the queue $Q_{12}$ resulting from enqueuing first $v_1$ and then $v_2$ will be different from the queue $Q_{21}$ where $v_2$ has been enqueued first. Similarly, if $Q$ is a shared memory and $\alpha_i$ are actions to destructively overwrite the same cell in $Q$ by distinct values $v_i$, then $Q_{12}$ and $Q_{21}$ are different. In Plotkin's Structural Operational Semantics (SOS)~\cite{PlotkinSOS}, we have
  \[
  \begin{Prooftree}
  \Bproof
    \Lproof 
      P \Derives{\overline{\alpha_1}}{}{} P_1
    \ANDproof
      Q_2 \Derives{\alpha_1}{}{} Q_{21}
    \Rproof 
      P \cpar Q_2 \cpar R_2 \Derives{\tau}{}{} 
        P_1 \cpar Q_{21} \cpar R_2
  \EEproof
  \quad \text{and} \quad
  \Bproof
    \Lproof 
      Q_1 \Derives{\alpha_2}{}{} Q_{12}
    \ANDproof
      R \Derives{\overline{\alpha_2}}{}{} R_2
    \Rproof 
      P_1 \cpar Q_1 \cpar R \Derives{\tau}{}{} 
       P_1 \cpar Q_{12} \cpar R_2
  \EEproof
  \quad \text{but} \quad
   Q_{21} \neq Q_{12}
  \end{Prooftree}
  \]
under a given standard behavioural equivalence, where the order of execution has produced a nondeterministic outcome $P_1 \cpar Q_{21} \cpar R_2 \neq 
   P_1 \cpar Q_{12} \cpar R_2$.
   }
\end{itemize}
}

\longshort{long}{Classical process algebras, like \ccs~\cite{Milner:CCS}, \csp~\cite{Hoare:CSP}, $\Pi$-calculus~\cite{Milner1:PI,Milner2:PI}, just to mention a few, have been devised as a model of concurrency where a system and its environment continuously interact to mutually control each other's behaviour through synchronisation and communication actions. This contrasts with automata theory where the interaction is single-pass: The environment provides a sequence of input actions to be fully consumed by the automaton before it outputs a result. This result is a decision value, determining if the string is accepted or not. The difference is reflected in the semantical theory: 
\begin{itemize}
\item Process algebras are based on forms of bisimulation equivalence~\cite{LeeP95,Glabbeek97}, a game-theoretic notion. Two processes are equivalent if they define the same interaction game.
\item Traditional automata theory is based on language equality, which is a functional model. Two programs are equivalent if they implement the same decision function. 
\end{itemize}
These two modes of interaction are usually treated independently. 
Specifically, process algebras are typically applied for asynchronous systems, which are inherently non-deterministic. Functional models are used for synchronous systems which are inherently deterministic.

}{} 


\medskip 

\noindent According to Milner~\cite{Milner:CCS}, the notion of determinacy is tied up with predictability:

\begin{quote}
 {\it ``if we perform the same experiment twice on a determinate system, starting each time in its initial state, then we expect to get the same result, or behaviour, each time.''}
\end{quote}

\noindent A sufficient condition for determinacy is determinism.
\longshort{\version}
{A process $P$ is \emph{(structurally) deterministic} if for all its derivatives $Q$ and action $\alpha \in \Act$,
if $Q \xrightarrow{\alpha}{}{} Q_1$ and $Q \xrightarrow{\alpha}{}{} Q_2$ then $Q_1 \equiv Q_2$, where $\equiv$ is a suitable structural congruence.}
{\begin{definition}
 A process $P$ is \emph{structurally deterministic} for an action $\alpha \in \Act$, if for every derivative $Q$ of $P$, 
 if $Q \xrightarrow{\alpha}{}{} Q_1$ and $Q \xrightarrow{\alpha}{}{} Q_2$ we have $Q_1 \equiv Q_2$.
\label{def:strong-determinism} 
\end{definition}
}
However, determinism is too strong. For instance, the store $P \eqdef R_1 \cpar S \cpar R_2$ with two readers $R_i \eqdef \ol{r} \cseq R_i'$ ($i{=}1,2$)
is determinate but not deterministic, when the $R_i$ may
reach structurally distinct states 
after reading, say, $P \xrightarrow{\tau} R_1' \cpar S \cpar R_2$ and $P \xrightarrow{\tau} R_1 \cpar S \cpar R_2'$ where $R_1' \cpar S \cpar 
R_2 \not\equiv
R_1 \cpar S \cpar R_2'$. 
Moreover, determinism is not closed under parallel composition. In the above example, all process $R_1$, $R_2$, $S$ are deterministic, but their composition $P$ is not.
An insightful solution, proposed by Milner (Chap.~11 of ~\cite{Milner:CCS}), is to replace determinism by the notion of \textit{confluence} which still implies determinacy but turns out to be closed under parallel composition, under natural restrictions.
\begin{definition}
 $P$ is \emph{(structurally) confluent}\footnote{A variation of Milner's ``strong confluence'' that includes $\tau$-actions and uses $\equiv$ not bisimulation $\sim$.} if for every derivative $Q$ of $P$ and reductions $Q \oldDerives{\alpha_1}{}{} Q_1$ and $Q \oldDerives{\alpha_2}{}{} Q_2$ with 
 $\alpha_1 \neq \alpha_2$ or 
 $Q_1 \not\equiv 
 Q_2$, there exist 
 $Q_1' \equiv 
 Q_2'$ such that $Q_1 \oldDerives{\alpha_2}{}{} Q_1'$ and $Q_2 \oldDerives{\alpha_1}{}{} Q_2'$.
\label{def:Church-Rosser} 
\end{definition}
%

Milner shows that confluence is preserved by 
\emph{``confluent composition''} $P {\cpar}_{\! L} Q \eqdef 
(P \cpar Q) \restrict L$, 
combining the parallel and restriction operators, subject to the condition that $\L(P) \cap \L(Q) = \eset$ and $\ol{\L(P)} \cap \L(Q) = L \cup \ol{L}$, where $\L(R)$ is the \textit{sort} of a process $R$.
\longshort{\version}{}{The former condition says that $P$ and $Q$ operate on disjoint actions and the latter says that every possible communication between $P$ and $Q$ is restricted by $L$.} 
This is a form of \textit{sort separation}, ensuring that every action $\alpha$ 
has at most one synchronisation partner $\ol{\alpha}$ either inside or outside of $(P \cpar Q) \restrict L$.
Many practical examples of determinate systems can be understood as
sort-separated compositions of confluent processes. However, this still 
covers only a rather limited class of applications.
Memory processes such as $S$ above  
are intrinsically not confluent, and even if they were, confluent composition
forbids direct multi-cast communication, because labels such as $r$ in $S$ could not be shared by two readers $R_1$, $R_2$ due to sort-separation. 
This means that concurrent programming languages that support shared memory 
and yet have  
determinate reduction semantics, cannot be handled.


\paragraph*{Determinacy in Synchronous Programming}

\longshort{\version}{}{%
We can identify three key principles to resolve data races and achieve confluence: \textit{isolation}, \textit{cloning} and \textit{precedence}. Let us discuss these briefly using the notation of \ccs.
If $\alpha_1 \neq \alpha_2$ are distinct actions 
affecting two concurrently independent parts of a ``shared'' process
$
  Q \eqdef \alpha_1.Q_1 \cpar \alpha_2.Q_2,
$ 
then confluence is guaranteed, even if $Q_1$ and $Q_2$ are observably distinct. 
The parallel composition $P \cpar Q \cpar R$ with $P = \ol{\alpha_1}.\, P_1$ aqnd $R = \ol{\alpha_2}.\, R_2$ ends up in configuration $P_1 \cpar Q_1 \cpar Q_2 \cpar R_2$ no matter which of $P$ or $R$ communicates first.
Similarly, $P \cpar Q$ and $Q \cpar R$ deterministically generate $P_1 \cpar Q_1$ and $Q_2 \cpar R_2$, respectively. 
This is the method of \textit{isolation} which is implemented naturally by parallel composition inside $Q$ provided $\alpha_1$ and $\alpha_2$ are distinct. 
If both actions are indistinguishable, $\alpha_1 {=} \alpha {=} \alpha_2$, then we get nondeterminism since $P \cpar Q$ can generate either $P_1 \cpar Q_1$ or $P_1 \cpar Q_2$. In this case, for $Q$ to exhibit deterministic behaviour, $Q_1$ must be indistinguishable from $Q_2$. 
This is the idea of \textit{cloning} a resource. If we wish $Q$ to behave deterministically in any environment it must clone $\alpha$ unboundedly often, e.g., via a repetition operator. 
Specifically, if $Q \eqdef \alpha^\ast.\, Q'$ means the prefix $\alpha$ is repeated infinitely often, i.e., $Q \Derives{\alpha}{}{} (Q' \cpar Q)$, 
then $Q$ will offer $\alpha$ by cloning the residual $Q'$. 
The parallel compositions $P \cpar Q$, $Q \cpar R$ and $P \cpar Q \cpar R$ will always deterministically end up in $P_1 \cpar Q' \cpar Q$, $Q' \cpar Q \cpar R_2$ and $P_1 \cpar Q' \cpar Q' \cpar Q \cpar R_2$, respectively, when both $P$ and $R$ have finished their communication with $Q$. 
In all these cases, isolation or cloning, the shared object $Q$ does not really care about the ordering in which $\alpha_1$ and $\alpha_2$ are executed. 
Yet, what if the behaviour of $Q$ must depend on the choice between $\alpha_1$ and $\alpha_2$? 
\longshort{\version}{}
{A generic shared memory component, for example, must permit both reads and writes simultaneously without knowing the order in which these might be requested in the context $P \cpar {[\cdot]} \cpar R$, where ${[\cdot]}$ represents a ``hole'' in the process expression. Also, it must permit a choice of write actions for distinct values that are determined in the environment. Isolation or repetition does not work, here.} 
Consider a process 
$ 
  Q = \alpha_1.Q_1 + \alpha_2.Q_2
 $ 
which offers a conflicting choice 
leading to distinct states $Q_i$. 
Then, the canonical solution is to enforce a \textit{scheduling precedence} on the actions of the environment in order to preserve \myb{c-coherence} of the resource.  
Specifically, we may restrict the concurrent environment $E = P \cpar {[\cdot]} \cpar R$ in the following ways: 
\begin{itemize} 
  \item If $Q$ is a store where $\alpha_1 = a \inp v$ is a data input and $\alpha_2 = a \out w$ an output, we want $\alpha_1$ take \textit{precedence} over $\alpha_2$ to implement data-flow causality. 
  \longshort{\version}{} 
  {Let us write 
    $
     Q \Derives{\alpha_2}{\{ \alpha_1 \}}{} Q_2
    \label{eqn:intro-q-step-1}
    $ 
    to express that the output $\alpha_2$ only synchronises with a matching $\ol{\alpha_2}$ action if the environment $E$ does not offer to engage in an input $\ol{\alpha_1}$ at the same time. 
    In contrast, the input action $Q \Derives{\alpha_1}{\{ \}}{} Q_1$ is not blocked by the output $\alpha_2$.
  }
  \longshort{\version}
  {In the configuration $P \cpar Q \cpar R$ the writer $P$ is then enabled first and then the reader $Q$ is scheduled, thereby achieving derterminism.}
  {In our case $E = P \cpar {[\cdot]} \cpar R$ in which $P$
  offers $\alpha_1$, whence so the reader $R$ offering $\alpha_2$ 
  is blocked, because of the precedence constraint associated with the transition.
  The transition 
  $ 
  P \cpar Q \cpar R \Derives{\tau}{}{} P \cpar Q_2 \cpar R_2
  $ 
  is not possible anymore. However, since there is no precedence on the read action $\alpha_1$, i.e.,   
  $ 
     Q \Derives{\alpha_1}{\{ \}}{} Q_1
  $ 
  the writer $P$ executing $\ol{\alpha_1}$ can go ahead, the transition
  $ 
  P \cpar Q \cpar R \Derives{\tau}{}{} P_1 \cpar Q_1 \cpar R
  $ 
   is permitted.}

  \item Sometimes an ordering does not make sense, say if $\alpha_1 = a \inp v$ and $\alpha_2 = a \inp w$ are inputs of \emph{distinct} values $v \neq w$ to be received by $Q$. Since these actions make $Q$ take on distinct continuation states $Q_1 \neq  Q_2$, the environment $E$ must be blocked as it attempts to engage in both $\alpha_1$ and $\alpha_2$, concurrently from $P$ and from $R$. This can be modelled as a precedence of $\alpha_1$ over $\alpha_2$ and $\alpha_2$ over $\alpha_1$. 
  \longshort{\version}{}
  {which generates transitions
    $ 
     Q_2  \LDerives{\alpha_2}{\{ \alpha_1 \}}{} Q \Derives{\alpha_1}{\{ \alpha_2 \}}{} Q_1.
    $ 
  }
  \longshort{\version}
  {In the configuration $P \cpar Q \cpar R$ both writers $P$ and $Q$ are then blocked, which is deterministic.}
  {%
   Then, since $E$ offers both $\alpha_i$, as above, both transitions
   $ 
    P \cpar Q \cpar R \Derives{\tau}{}{} P_1 \cpar Q_1 \cpar R \text{ and } P \cpar Q \cpar R \Derives{\tau}{}{} P \cpar Q_2 \cpar R_2
   $ 
   are blocked.}

  \longshort{\version}{}%
  {\item Finally, on top of the issues discussed above, $E$ must not attempt to consume any of the action prefixes offered by $Q$ several times concurrently. 
  For instance, suppose $\alpha_1 = a \out v$ is the action of dequeuing a value $v$ from (the front of) a queue $Q$. Then the concurrent environment $P \cpar P \cpar {[\cdot]} \cpar R$ would introduce a data race, because the two copies of $P$ would attempt to consume this value simultaneously.

  To avoid the data race, we can block this conflict by imposing a \textit{(reflexive) precedence} of $\alpha_1$ over itself:
  $ 
     Q \Derives{\alpha_1}{\{ \alpha_1 \}}{} Q_1.
  $ 
  Likewise, if $\alpha_2$ is an enqueue operation on $Q$, the two copies of $R$ in an environment  $P \cpar {[\cdot]} \cpar R \cpar R$ would compete for the first value of the queue. \longshort{\version}{}{Whoever comes second might block, because the value $\alpha_2$ is only available once. 
  To protect from that, we would introduce reflexive precedence 
  $ 
     Q \Derives{\alpha_2}{\{ \alpha_2 \}}{} Q_2.
  $ 
  }
}
\end{itemize} 
} 


\noindent The most prominent class of shared-memory programming languages that manage to reconcile concurrency and determinacy, is known as \textit{Synchronous Programming} (SP). 
In SP, processes interact with each other asynchronously at a \textit{micro-step} level, yet synchronise in lock-step to advance jointly in iterative cycles of synchronous \textit{macro-steps}. The micro-step interactions taking place during a macro-step determine the final outcome of the macro-step. This outcome is defined at the point where all subsystem pause to wait for a global logical \textit{clock}. When this clock arrives, all subsystem proceed into the next computation cycle. 
Under the so-called \textit{synchrony hypothesis}\longshort{\version}{}{\footnote{The  ``synchrony hypothesis'' assume that the reaction of a subsystem to input happens faster than the environment is able to produce new stimulus: the time needed to compute the reaction is unobservable by the environment. Also known as the ``zero-time'' assumption \cite{KloosB95} or ``fundamental mode operation''~\cite{Unger:async-circ}.}}, the final outcome of a macro-step for each subsystem is fully determined from the stimulus provided by the environment of that component during the micro-step interactions. 
As such, the interactions inside each subsystem at macro-step level can be abstracted into a global Mealy automaton with deterministic I/O.
\longshort{long}{This synchronous model of computation has been very successful, replacing analogue and asynchronous circuits, dominating hardware design up to today.}{} 
SP started with Statecharts~\cite{HarPnuPruShe87} and has generated languages such as Signal~\cite{GuernicGBM91}, Lustre~\cite{HalbwachsCRP91}, Esterel~\cite{Berry99}, Quartz~\cite{Schneider09}, SCCharts~\cite{vonHanxledenDM+14}, just to mention a few.

When it comes to mathematical modelling it is natural to think of a clock as a broadcast action in the spirit of Hoare's \csp~\cite{Hoare:CSP} that acts as a synchronisation barrier for a set of concurrent processes. Used with some scheduling control, it can bundle each process' actions into a sequence of macro-steps that are aligned with each other in a lock-step fashion. During each macro-step, a process executes only a temporal slice of its total behaviour, at the end of which it waits for all other processes in the same \textit{clock scope} to reach the end of the current phase. When all processes have reached the barrier, they are released into the next round of computation. 
This suggests a priority-based scheduling mechanism, 
i.e., that the clock should fire only if the system has stabilised and no other admissible data actions are possible. This principle, called \textit{maximal progress}, is built into timed process algebras, see e.g.~\cite{TPA,CleavelandLM97}.

\paragraph*{Contribution}

We introduce an extension of \ccs, denoted $\ccslm$, that brings together the concepts of \textit{priorities}~\cite{CamilleriWin95,CleavelandLN01,Phillips01} and \textit{clocks}~\cite{AndersenMen94,CleavelandLM97,NortonLM03} that have previously been studied for \ccs, but independently.
We thus obtain an adequate compositional setting for SP based on standard techniques from process algebra. This is surprising since many different customised 
semantics (e.g.,~\cite{LuttgenBC99,BoussinotD96,Berry99,AguadoMvHF14}) have been developed for SP over the years, few of which fit into the classical framework of \ccs.
Similar ideas might be possible for other process algebras like {\sf ATP}~\cite{ATP}, \tpl~\cite{TPA}, \csp or some synchronous variants of Milner's $\pi$-calculus~\cite{SPI}, just to mention a few. We find \ccs most plausible as a point of departure since the concept of priorities 
is already well established 
and the combination of asynchronous scheduling and communication via local rendezvous synchronisation 
makes the problem of ensuring determinacy more prominent.

Our calculus \ccslm provides a natural setting 
to take a fresh look at confluence in \ccs as an adequate mathematical model for SP. Moreover, we envisage that \ccslm,  extended by value-passing, can be used as a playground to study compositional embeddings of concurrent $\lambda$-calculi. In particular, we have in mind those with sharing, as required for the lazy semantics of Haskell and with deterministic memory such as IVars/TVars~\cite{Marlow:CEFP12} and LVars~\cite{KuperN:2013}. 
At the technical level, we make the following specific contributions in this paper: 

\longshort{\version}{}{This paper is inspired by earlier work of the second author on SP~\cite{AguadoMvHF14,mvm18:ESOP} and an attempt to ground the semantics of concurrent programming with deterministic shared objects in classical concurrency theory.}
\longshort{\version}{}{Combining prioritised rendezvous with broadcast actions in a Plotkin SOS style and discovering that an interleaving of micro- and macro-steps leads to confluence is the main result of this paper. Specifically, we make the following contributions:}

\noindent $\bullet$ We use broadcast actions (clocks) as global synchronisation barriers to schedule processes using priorities in a more flexible way than any of the earlier approaches. While \pmc~\cite{AndersenMen94} has clocks but no priorities and in 
\csa~\cite{CleavelandLM97} the priorities are hardwired and expressing precedence only between the rendezvous actions and the clock, we use a fully general priority scheme as in Phillips' \ccsp~\cite{Phillips01}.
By adding clocks to \ccsp we generalise previous \ccs extensions such as Milner's \sccs~\cite{SCCS}, \pmc or 
\csa for expressing synchronous interactions of concurrent processes. To achieve this, we must strengthen Phillips' notion of \textit{weak enabling} (Def.~\ref{def:weakly-enabled}) by \textit{constructive enabling} (Def.~\ref{def:c-enabling}). While all classical extensions of \ccs by priorities~\cite{CamilleriWin95,CleavelandLN01,Phillips01} schedule the processes by their immediate initial actions, constructive enabling takes into account all actions potentially executable up to the next clock barrier. 

\noindent $\bullet$ We identify a strictly larger class of processes that enjoy the Church Rosser (CR) Property (and thus reduce to unique normal forms) than the ``confluence class'' discussed by Milner, see e.g., Chap.~11 of~\cite{Milner:CCS}. Since priorities are not part of the classical confluence theory, it cannot model deterministic shared objects with conflicting choice. In this paper, exploiting priorities and clocks, we enrich Milner's notion of confluence to ``confluence up-to priorities'' that we call \textit{c-coherence} (Def.~\ref{def:coherence}). We show that c-coherent processes are CR for constructive enabling (Thm.~\ref{thm:church-rosser}). 

\noindent $\bullet$ We show that c-coherence is preserved by parallel composition under reasonable restrictions that permit sharing and memory (Thm.~\ref{thm:summary-coherence-closure}). Specifically, we construct c-coherent processes in a type-directed compositional fashion using the notion of \textit{precedence policies} (Def.~\ref{def:precedence-policy}). The policy for a conformant process specifies an upper bound for its precedence-blocking behaviour (Def.~\ref{def:conformance}). A policy enriches the classical sort of a process by priority information. We define the special class of \textit{pivot policies} and call c-coherent processes conforming to them  \textit{pivotable processes} (Def.~\ref{def:pivot-policy}). These are not only c-coherent (and thus CR) but preserved by parallel composition and other operators of the language (Thm.~\ref{thm:summary-coherence-closure}). In this fashion, we are able to extend Milner's classical results for confluent processes to cover significantly more applications, specifically Esterel-style multi-cast SP with shared objects.

\noindent $\bullet$ 
We show that each c-coherent process
is \textit{clock-deterministic} (Prop.~\ref{prop:action-determinism}) and that pivotable processes satisfy \textit{maximal progress} (Prop.~\ref{prop:maximal-progress}). 
No matter in which order we execute constructively enabled reductions, when the normal form is reached, and only then, a clock can tick. Since the normal form is uniquely determined\longshort{\version}{}{by confluence, the next state reached by each clock tick is uniquely determined. Thus}, each pivotable process represents a ``synchronous stream''. As such, our work extends the classical non-deterministic theory of timed (multi-clocked) process algebras~\cite{AndersenMen94,TPA,CleavelandLM97} for applications in deterministic synchronous programming.

As a final remark, for the purposes of this paper, adding synchrony and determinism in process algebras, we are not (and indeed we do not need to be) interested in higher-order calculi, like e.g. the $\pi$-calculus~\cite{Milner1:PI,Milner2:PI}, where processes are \textit{first-class} objects, i.e. sent and received as a values, as in $a \out P \cseq Q$ and $a \inp x \cseq (x \cpar Q)$.

For lack of space, four separate appendices will present (A) auxiliary results and (B) the full proofs; (C) a collection of more tricky examples; (D) how research towards determinacy for process algebras passes through literature and our speculations.

\longshort{\version}{}
{In the next sections,  we present the syntax, the structural operational semantics à la Plotkin, we set up all the definitions and the metatheory necessary to formally prove the claimed informal properties.
}

%
 
\section{A Process Algebra with Clocks and Priorities}\label{sec:lang}

%
\longshort{\version}
{The syntax and operational semantics of our process algebra `SynPaTick', denoted $\ccslm$, is defined in this section.}
{We call our process algebra `SynPaTick', denoted $\ccslm$, whose syntax and operational semantics are defined in this section.}

\subsection{\ccslm Syntax}
%
%
The terms in \ccslm form a set $\P$ of \textit{processes}. We use $P,Q,R,S \ldots$ to range over $\P$.
Let $\I$ be a set of \textit{process names} and let $A,B \ldots$ range over $\I$.
Let $\A$ be a countably infinite set of \textit{channel names} with $a,b \ldots$ ranging over $\A$. As in \ccs, the set $\coA$ is a disjoint set of \textit{co-names} with elements denoted by $\ol{a}, \ol{b}, \ol{c} \ldots$ 
in bijection with $\A$; The overbar operator 
switches between $\A$ and $\coA$, i.e., $\ol{\ol{a}} = a$ for all $\ol{a} \in \coA$. 
We will refer to the names 
$a \in \A$ as \textit{input} (or \textit{receiver}) labels while the co-names $\ol{a} \in \coA$ denote \textit{output} (or \textit{sender}) labels.
Let $\C$ be a countably infinite set of broadcasted \textit{clock names}, disjoint from both $\A$ and $\coA$ and let $\sigma$ range over $\C$. Every clock acts as its own co-name, $\ol{\sigma} = \sigma$ and so each $\sigma \in \C$ is also considered as a \textit{clock label}.
Let $\L = \A \cup \coA \cup \C$ be the set of input, output and clock \textit{labels} and let $\ell$ range over $\L$, while $L,H$ range over subsets of $\L$. We write $\ol{L}$ for the set $\{\ol{\ell} \mid \ell \in L\}$. Let $\Act = \L \cup \{\tau\}$ be the set of all \textit{actions} obtained by adjoining the \textit{silent} action $\tau \notin \L$ and let $\alpha$ range over $\Act$. Viewed as actions, the input and output labels $\ell \in \R \eqdef \A \cup \coA \subseteq \Act$ are referred to as \textit{rendezvous actions} 
\longshort{\version}{to distinguish them from the \textit{clock actions} $\sigma \in \C \subseteq \Act$.}{to distinguish them from the \textit{clock actions} $\sigma \in \C \subseteq \Act$ which also called \textit{broadcast actions}.
} 
\longshort{\version}{}{Further, to distinguish the elements of $\R \cup \C$ from $\tau$ as a subset of $\Act$ we call the former \textit{visible} actions in contrast to $\tau$ that is the \textit{invisible} action.} All symbols can appear indexed.
\longshort{\version}{The process terms $\P$ are defined thus:}{The process terms $\P$ are defined by the following abstract syntax:} 

\medskip 

\begin{minipage}[T]{0.5\textwidth}
$ \begin{array}{lcll@{\quad}l}
 P,Q,R,S &::=& 
 \zero
 & & \text{stop (inaction)} \\
 &\mid& A 
 & & \text{name, } A \in \I \\
 &\mid& \alpha \col H.P 
 & & \text{action, } \alpha \in \L \text{,  } H \subseteq \L\\
  &\mid& P + Q
 & & \text{choice}
\end{array}
$
 \end{minipage}
 \hspace{20mm}
 \begin{minipage}[t!]{0.5\textwidth}
$ \begin{array}{lcll@{\quad}l}
\\
 &\mid& P \cpar Q
 & & \text{parallel} \\ 
 &\mid& P \restrict L
 & & \text{restriction} \\
 &\mid& P \hide L
 & & \text{hiding} 
 \end{array}
$
 \end{minipage}
%
\medskip 

\noindent Intuitively, $\zero$ denotes the \textit{inactive} process which stops all computations, including clock actions. 
The \textit{action prefix} $\alpha \col H.P$ denotes a process that offers to engage in the action $\alpha$ and then behaves as $P$ while the \textit{blocking} set $H \subseteq \L$ lists all actions taking precedence over $\alpha$. A prefix with $\alpha\in \R$ denotes a \ccs-style rendezvous action. In case $\alpha \in \C$, it denotes a \csp-style broadcast action that can communicate with all the surrounding processes sharing on the same clock only when the clock is universally \textit{consumed}.
We assume that in every restriction 
$P \restrict L$ the set $L \subseteq \R$ consists of rendezvous actions, and in every hiding 
$P \hide L$ we have $L \subseteq \C$. These \textit{scoping} operators act as name binders that introduce local scopes. 
The acute reader will notice the absence of the relabelling operator $P[f]$, useful in \ccs to define the well-known ``standard concurrent form'' $(P_1[f_1] \cpar \cdots \cpar P_2[f_n]) \restrict L$. Because all of our examples are captured without need of relabelling, 
and for the sake of simplicity, relabelling will be treated in the full version of this work.
A useful abbreviation is to drop the blocking sets if they are empty or drop the continuation process if it is inactive. For instance, we will typically write $a \cseq P$ instead of $a \of \eset \cseq P$ and $a \col H$ rather than $a \col H \cseq \zero$. Also, it is useful to drop the braces for the blocking set if it is a singleton, writing $a \col b \cseq P$ instead of $a \col \{b\} \cseq P$. 
\longshort{\version}{}{With these notational shortcuts we can naturally take each action $\alpha \in \L$ also as a process $\alpha \col \eset \cseq \zero$. In this way, each action $\alpha$ is a process that offers to engage in $\alpha$, without blocking, and then stops.
} 
\longshort{\version}{}{%
To encode meaningful examples in our process algebra conveniently, we also define a generic \textit{halting} process $\one$ which does not offer any rendezvous actions but is happy to permit each clock without changing its state. For a finite set of clocks $C = \{ \sigma_1, \ldots, \sigma_n \} \subset \C$ it is definable as $\one_C \eqdef \sigma_1.\one_C + \cdots + \sigma_n.\one_C$. When $C$ is understood we also write $\one$ instead of $\one_C$.
}
In our concrete notation using the abstract syntax of processes, we 
assume that the binary operator $+$ is right associative and taking higher binding strength than the binary operator $\cpar$, i.e., writing $P + Q + R \cpar S$ instead of $(P + (Q + R)) \cpar S$. Also, we assume that the unary operators of prefix, restriction and hiding have higher binding strength than the binary operators. Hence, we write $P \restrict c + a \cseq b \cseq c \cseq Q \cpar R \hide \sigma $ instead of $((P \restrict c) + (a \cseq (b \cseq (c \cseq Q)))) \cpar (R \hide \sigma)$.
The \textit{sub-processes} of a $P$ are all sub-expressions of $P$ and (recursively) all sub-processes of $Q$ for \textit{definitions} $A \pdef Q$ where $A$ is a sub-expression of $P$.
Sometimes it is useful to consider fragments of $\P$. 
Specifically, a process is called \textit{unclocked} if $P$ does not contain any clock prefix $\sigma\col H \cseq Q$ as a sub-process. A process is called \textit{sequential} or \textit{single-threaded} if it does not contain any parallel sub-process $Q \cpar R$. 
\longshort{\version}{}{%
\begin{definition}[Visible Labels]\hfill
 The set $\vA(P) \subseteq \L$ of \textit{visible labels} of a process $P$ is defined inductively as follows:
 \[
 \begin{array}{rcll}
 \vA(\zero) & \eqdef & \eset \\
 \vA(\alpha \col H.P) & \eqdef & \{\alpha\} \cup H \cup \vA(P)
 \\
 \vA(P \wr L) & \eqdef & \vA(P) - (L \cup \ol{L}) & \text{ where $\wr \in \{\restrict, \hide\}$}
 \\
 \vA(P \star Q) & \eqdef & \vA(P) \cup \vA(Q) & 
 \text{ where $\star \in\{\cpar, + \}$}
 \\ 
 \vA(A) &\eqdef& \vA(P) & \text{ where $A \pdef P$.}
 \end{array}
 \]
\label{def:visible-labels}
\end{definition}
\myb{It is not difficult to extend this definition to the halting process $\one_C$ as follows: $\vA(\one_C) = C$.}
}
$P$ is 
\textit{closed} if 
\longshort{\version}{%
all sub-processes $\ell\col H\cseq Q$ in $P$ occur in the scope of a restriction or hiding operator that binds $\ell$.
}
{$\vA(P) = \eset$.} 
As usual, we denote by $\L(P) \subseteq \L$ the free labels, called \textit{(syntactic) sort} of $P$,
identify 
processes that differ only in the choice of bound labels.

\begin{figure}[t!]
\noindent \hspace{-3.5mm}
$\begin{array}{l@{\hspace{5mm}}rcl@{\quad}l}

({\sf Comm}) & P \star Q & \equiv & Q \star P 
\\[1mm] 

({\sf Assoc}) & (P \star Q) \star R & \equiv & P \star (Q \star R)
\\[1mm]

({\sf Zero}) & P \star \zero & \equiv & P & 
\end{array}
\quad
\begin{array}{l@{\quad}rcl@{\quad}l}
({\sf Idem}_{+}) & P + P & \equiv & P
& 
\\[1mm]

({\sf Scope}_{\wr}) & P \wr L \wr L' & \equiv & P \wr (L\cup L')
\\[1mm]

({\sf Scope}_{\zero}) & \zero \wr L & \equiv & \zero & 
\end{array}$
\\[1mm]

$\noindent \hspace{-2mm}
\begin{array}{l@{\quad}rcl@{\quad}l}
({\sf Scope}_{\alpha}) & (\alpha \col H.P) \restrict L & \equiv &
			\left\{ \begin{array}{l}
			\zero 
			\\
			\alpha \col H. (P \restrict L) 
			\end{array}
			\right.
			&
			\begin{array}{l}
			\mbox{if } \alpha \in L \cup \ol{L}\; 
			\\
			\mbox{otherwise}\; 
			\end{array}
\\[1mm]

({\sf Scope}_{\restrict}) & (P \star Q) \restrict L & \equiv &
			\left\{ \begin{array}{ll}
			P \star Q \restrict L 
			\\
			P \restrict L + Q \restrict L 
			\\
			P \restrict L \cpar Q \restrict L 
			\end{array}
			\right.
			&
			\begin{array}{l}
			\mbox{if } \L(P) \cap (L \cup \ol{L}) = \eset
			\\
			\mbox{if } \star = +
			\\
			\mbox{if } \star = \cpar \mbox{ and } \L(P) \cap \ol{\L(Q)} \cap (L \cup \ol{L}) = \eset
			\end{array}

\end{array}
$
\caption{Structural Congruence, where $\star \in\{\cpar,+\}$ and $\wr \in \{\restrict, \hide\}$.}
\label{fig:structural-cong}
\end{figure}

Following 
Milner's presentation of the $\pi$-calculus in~\cite{PI}, we define a structural congruence over processes. 
\longshort{\version}{Specifically, 
\textit{structural congruence} $\equiv$ is the smallest congruence over $\P$ that satisfies the equations in Fig.~\ref{fig:structural-cong}.}{}
The idea of the structural congruence is to split interaction rules between agents from rules governing the ``left-to-right'' representation of expressions. 
It allows to use those equations in the derivation rules of the SOS making the latter simpler and easy to use, and it allows to set up formal equivalence of processes increasing for each program the number of legal SOS transitions that, otherwise, would be blocked. 
Milner divides \ccs equations in static, dynamic and expressions laws. We choose an approach inspired by the work of Berry and Boudol~\cite{BB-cham} by adding an explicit structural rule in the operational semantics.
\longshort{\version}{}{%
\begin{definition}
\textit{Structural congruence} $\equiv$ is the smallest congruence over $\P$ that satisfies the equations in Fig.~\ref{fig:structural-cong}. 
\label{def:structural-cong}
\end{definition}
}
Note that we do not include in our structural rules all of Milner's $\tau$-laws, i.e. the ones related to the presence of $\tau$-actions, like e.g. $\alpha.\tau. P \equiv \alpha.P$, or $P + \tau.P \equiv \tau.P$,
and we have also dropped the 
congruence equations related to the uniqueness of solution of recursive equations and the 
\textit{expansion laws} related to the \textit{standard concurrent form} (cf. Chap. 3, Props. 2, 4(2) and 5 of \cite{Milner:CCS}).
\longshort{\version}{Since we do not propose notions of behavioural equivalence in this paper, we prefer to take a conservative approach and only consider a minumum set of purely structural rules that do not impinge on the dynamics of expression behaviour.}{
It is known that timed process algebras and process algebras with priorities enjoy extra expressiveness that invalidates some of the familiar laws, see e.g.,~\cite{AndersenMen94,TPA,CleavelandLM97,CleavelandLN01,Phillips01}.
\ccslm shares features (clocks and priorities) with timed and prioritised extensions of \ccs. Hence, we must take a conservative approach and only consider a minumum set of purely structural rules that do not impinge on the dynamics of expression behaviour.
}

\subsection{Operational Semantics}

The operational semantics for \ccslm is given using a labelled transition relation $P \fstep{\alpha} Q$ in the style of Plotkin's Structured Operational Semantics (SOS)~\cite{PlotkinSOS,PlotkinSOSorigin}, where $\alpha \in \Act$ is the action that labels the SOS transition. When $\alpha \in \L$, the transition corresponds to a communication step, namely a rendezvous action {\it or} a broadcasted clock action. 
When $\alpha = \tau$ is silent, then the transition corresponds to a \textit{reduction} or an \textit{evaluation} step. 
In addition, we decorate the nondeterministic SOS with annotations that provides sufficient information to express a uniform and confluent reduction strategy based on scheduling precedences. 
Including these annotations, our SOS judgment takes the form
%
$ P \Derives{\alpha}{H}{R} Q$
%
where $P, Q, R \in \P$ are processes, $\alpha \in \Act$ is an action, and $H \subseteq \Act$ a set of actions.
The semantics is defined inductively as the smallest relation closed under the rules in Fig.~\ref{fig:free-sos}. We call each transition
derivable by Fig.~\ref{fig:free-sos} an \textit{admissible} transition. We denote by $\alpha \col H[R]$ a compound  \textit{c-action} consisting of the three annotations $\alpha$, $H$, and $R$. The information provided by such a c-action is sufficient to define our \textit{constructive reduction} strategy for \ccslm that satisfies a refined confluence property.  Sometimes, we write $P \xrightarrow{\alpha} Q$ to state that there is a transition for some $R$ and $H$. 

The intuition behind the annotations of admissible transitions is as follows: $\alpha$ is the usual action ``passed back'' in the SOS. The \textit{blocking set} $H$ is the set of actions that take precedence over $\alpha$; 
$R$, called the \textit{concurrent context}, is a process that represents the behaviour of all threads in $P$ that execute concurrently with the transition, i.e., all actions in $R$ are potentially in competition with $\alpha$. 
Think of $H$ as the resources that are required by the action and of $R$ as the concurrent context that potentially competes for $H$. In a parallel composition $P \cpar Q$ of $P$ with another process $Q$, the context $R$
might interact with the environment $Q$ to generate actions in the blocking set $H$. If this happens, the reduction
in $P$ with blocking set $H$ will be considered blocked in our scheduled semantics (see Def.~\ref{def:strong-enabling}).

In the rules of Fig.~\ref{fig:free-sos}, the reader can observe how the annotation for the concurrent context and the blocking sets are inductively built. 
In a nutshell: 
Rule $(\ActR)$ is the axiom annotating in the transition the action and the (empty) environment. 
Rules $(\RestrR)$ and $(\HideR$) deal with ``local restriction'' of rendezvous actions and ``local hiding'' of broadcasted clocks. 
Rules $(\ParR)$
and $(\SumR)$
are almost standard and deal with parallelism and choice. 
Rule $(\StructR)$ internalises in the transition the above presented structural equivalence of processes. 
Rule $(\ComR)$ constitutes the core of our
SOS. It captures, in an elegant way, both the classical \ccs rendezvous communication rule \textit{and} the classical \csp broadcasting communication; it has an extra $race$ test, enriching the blocking set of the conclusion by $\tau$, in case $P$ and $Q$ interfere. The presence of $\tau$ will block unconditionally.
Finally, rule $(\ConR)$ deals with process names and refers to a process whose behaviour is specified by a definitional equation $A \pdef P$.
\longshort{\version}{}{%
It is not difficult to observe that 
\longshort{\version}{the transition $\one_C \Derives{\sigma}{\eset}{\zero} \one_C$ is derivable when $\sigma \in C$} 
{the following rule $(\ActR_\one)$ is derivable:
\[
\infer[(\ActR_\one)]
{\one_C \Derives{\sigma}{\eset}{\zero} \one_C}
{\sigma \in C}
\]
}
}

\begin{figure}[!t]
$ 
\begin{array}{l@{\hspace{2cm}}r}
\multicolumn{2}{l}{
\infer[(\ActR)]
{\alpha \col H \cseq P \Derives{\alpha}{H}{\zero} P}
{}
\qquad
\hfill \infer[(\StructR)]
{P \Derives{\alpha}{H}{R} Q}
{
P \equiv P' & 
P' \Derives{\alpha}{H}{R'} Q' & 
Q' \equiv Q 
& 
R' \equiv R}
} 
\\[3mm]
\multicolumn{2}{l}{
\infer[(\RestrR)]
{P \restrict L \Derives{\alpha}{H'}{R \restrict L} Q \restrict L}
{P \Derives{\alpha}{H}{R} Q & L' = L \cup \ol{L} & \alpha \not\in L' &
 H' = H - L'}
\qquad 
\infer[(\HideR)]
{P \hide L \Derives{\alpha \hide L}{H'}{R \hide L} Q \hide L}
{P \Derives{\alpha}{H}{R} Q & H' = H - L}
}
\\[3mm]
\multicolumn{2}{l}{
\infer[(\ParR)]
{P \cpar Q \Derives{\alpha}{H}{R \cpar Q} P' \cpar Q}
{P \Derives{\alpha}{H}{R} P' & \alpha \not\in \C } 
\qquad \hfill
\infer[(\SumR)]
{P + Q \Derives{\alpha}{H}{R} P'}
{P \Derives{\alpha}{H}{R} P'}
}
\\[4mm]
\multicolumn{2}{l}{
\infer[(\ComR)]
{P \cpar Q \oldDerives{\ell \cpar \ol{\ell}}{H_1 \cup H_2 \cup H}{R_1 \cpar R_2} P' \cpar Q'}
{P \Derives{\ell}{H_1}{R_1} P' & Q \Derives{\ol{\ell}}{H_2}{R_2} Q' & H = race(P, Q)} 
\qquad \hfill
\infer[(\ConR)]
{A \Derives{\alpha}{H}{R}P'}
{A \eqdef P & P \Derives{\alpha}{H}{R} P'}
}
\\[3mm]
\multicolumn{2}{l}{
\textit{where } 
 \hfill 
\begin{array}{l}
\hspace{10mm} \alpha \hide L \eqdef \left\{\begin{array}{l@{\quad\ \ }l} \tau & \text{if } \alpha \in L \\ \alpha & \text{otherwise} \end{array} \right.
\quad 
\ell \cpar \ol{\ell} \eqdef \left\{\begin{array}{ll} \tau & \text{if } \ell \in \R \\ \ell & \text{if } \ell \in \C \end{array}\right.
\\[4mm]
 race(P,Q) \eqdef \left\{\begin{array}{ll} \{\tau\} & 
 \mbox{if } 
 H_1 \cap \oliA(Q) \not\subseteq \{\ell\} 
 \text{ or } 
 H_2 \cap \oliA(P) \not\subseteq \{\ol{\ell}\}
 \\ \eset & \mbox{otherwise}
 \end{array}
 \right.
\end{array}
} 
\end{array}
$ 
\caption{The admissible transitions of \ccslm processes. 
See Def.~\ref{def:weakly-enabled} for the \textit{initial actions} $\iA(P)$.}\vspace{-4mm}
\label{fig:free-sos}
\end{figure}

\subsection{Constructive Scheduling}
\label{sec:constr-sched}

The operational semantics of Fig.~\ref{fig:free-sos} for rendezvous actions  coincides with that of \ccs and the rules for clock actions coincide with the semantics of \csp broadcast actions. The scheduling annotations $H$ and $R$ of a c-action $\alpha \col H[R]$ expose extra information about blocking and concurrent context, respectively, of the underlying action $\alpha$. We can use them to implement different scheduling strategies.
Prioritised process algebras~\cite{CamilleriWin95,CleavelandLN01,Phillips01} block an admissible transition $\alpha \col H[R]$ when $H$ contains the silent action $\tau$ or an action that synchronises with some initial action of the concurrent context $R$.
Otherwise, it is (weakly) enabled.

\begin{definition}[Initial actions and weakly-enabled transition]
\longshort{\version}{}{[Strong Initial Actions \& Weak Enabling]} \mbox{}
 \begin{itemize}
   \item The set $\iA(R) \subseteq \Act$ of \textit{initial actions} of a process $R$ is given as $\iA(R) = \{ \alpha \mid \exists Q.\, R \xrightarrow{\alpha} Q \}$.
   \item 
   \longshort{\version}{A transition with c-action $\alpha \col H[R]$ 
   is called \textit{weakly enabled} if $H \cap (\oliA(R) \cup \{\tau\}) = \eset$.
   }
   {An admissible transition 
   $P \Derives{\alpha}{H}{R} Q$ 
   is called \textit{weakly enabled} if $H \cap (\oliA(R) \cup \{\tau\}) = \eset$.
   }
 \end{itemize}
\label{def:weakly-enabled}
\end{definition}

\longshort{\version}{}{%
Extending this definition to the halting process $\one_C$ yields $\iA(\one_C) = C$.
}

\noindent We call $P$ \textit{discrete} if all its c-actions $\alpha\col H[R]$ are singleton self-blocking, i.e., if $H = \{\alpha\}$. 
Let us call a process $P$ \textit{free} if 
all its c-actions $\alpha\col H[R]$ 
have an empty blocking set $H = \eset$. 
The weakly-enabled transitions of unclocked, free processes coincide with the semantics of Milner's \ccs~\cite{Milner:CCS}, as stated in the following proposition.
\begin{proposition}
 If $P$ is unclocked and free, then a transition 
 $P \Derives{\alpha}{H}{R} Q$  
 is weakly enabled iff $P \Derives{\alpha}{}{} Q$ is derivable in Milner's original SOS for \ccs.
\label{prop:free-ccs}
\end{proposition}

In the semantics of \ccsp \textit{self-blocking} prefixes do not contribute any transitions. 
More precisely, a process $P$ is called \textit{irreflexive} if no action prefix $\alpha \col H \cseq Q$ occurring in $P$ blocks itself, i.e., $\alpha \not\in H$. 
Thus, \ccsp captures a theory of irreflexive processes.
One can show that for unclocked, irreflexive processes a transition is weakly enabled iff is derivable in the operational semantics of \ccsp~\cite{Phillips01}\footnote{In \ccsp the blocking sets are written before the action, $H \col \alpha \cseq P$, where we write $\alpha \col H \cseq P$ in \ccslm.}.

%
\begin{proposition}
 If $P$ is unclocked and irreflexive, then a transition 
 $P \Derives{\alpha}{H}{R} Q$ 
 is weakly enabled iff $P \Derives{\alpha}{H}{} P'$ is derivable in the operational semantics of \ccsp.
\end{proposition}

Thus, from the point of view of \ccslm, the theories of \ccs and \ccsp correspond to theories of free and irreflexive processes, respectively, under weak enabling.
Likewise the theories~\cite{CamilleriWin95,CleavelandLN01} can be subsumed; specifically, the \textit{prioritised sum} operator $P \wpsum Q$ of~\cite{CamilleriWin95} 
can be coded as $P + Q \col \iA(P)$ in \ccslm.
The theory~\cite{CleavelandLN01} proposes two \textit{levels} of labels, $a\col 0$ and $a\col 1$, where $0$ denotes higher priority and $1$ ordinary priority. 
Translated into \ccslm, the prefix $a \col 0 \cseq P$ corresponds to $a\col \eset \cseq P$ 
while an 
action prefix $a \col 1 \cseq P$ maps in \ccslm to $a \col \mathsf{Prio} \cseq P$ if we define $\sf Prio$ as the set of all high-prioritised labels.



\medskip 

Examples~\ref{ex:parallel-blocking} and~\ref{ex:binary-blocking} below in Sec.~\ref{examples:scheduling} demonstrate how prioritised scheduling by weak enabling (Def.~\ref{def:weakly-enabled}) can indeed help to ensure determinacy in special cases. For the general case, however, weak enabling is not sufficient. 
In order to prevent non-determinacy (a global property) from priorities (a purely local property), 
we must be able to block a c-action  $\alpha \col H[R]$ not just if the initial actions $\iA(R)$ of the concurrent context $R$ can synchronise with $H$, as in weak enabling, but look at the set $\wilA{\ast}(R)$ of all actions that $R$ can potentially engage in, before it reaches a clock. 
\longshort{\version}{}{%
It obtained by closing the weak initial actions $\wilA{\tau}(P)$ by arbitrary non-clock transitions to generate \textit{potential actions} $\wilA{\ast}(P)$ of a process $P$.
}
Example~\ref{ex:weak-causality-cycle} in Sec.~\ref{examples:scheduling} illustrates this.  

%
%

\begin{definition}[Potential actions and constructive enabling] \mbox{}
 \begin{itemize} 
   \item The set $\wilA{\ast}(P) \subseteq \L$ of \textit{potential actions} is the smallest extension $\iA(P) \cap \L \subseteq \wilA{\ast}(P)$ such that if $\ell \in \wilA{\ast}(Q)$ and $P \xrightarrow{\alpha} Q$ for $\alpha \in \R \cup \{\tau\}$, then $\ell \in \wilA{\ast}(P)$.
   \item 
   \longshort{\version}{A transition with c-action $\alpha \col H[R]$ is called \textit{constructively enabled}, or \textit{c-enabled} for short, if $H \cap (\olwilA{\ast}(R) \cup \{\tau\}) = \eset$.}
   {A transition $P \Derives{\alpha}{H}{R} Q$ is called \textit{constructively enabled}, or \textit{c-enabled} for short, if $H \cap (\olwilA{\ast}(R) \cup \{\tau\}) = \eset$.}
  \end{itemize}
\label{def:c-enabling}
\end{definition}

The following Sec.~\ref{examples:scheduling} presents some motivating examples, while Sec.~\ref{sec:properties} focuses 
on the mathematical theory underlying \ccslm.


%

\section{Examples}
\label{examples:scheduling}

\longshort{\version}{}{The following examples illustrate the operation of our scheduling semantics. The focus is on how the transition labels, blocking sets and action environments implement precedences.}
\longshort{\version}{}
{%
\myb{C-Coherence} (see Def.~\ref{def:coherence})
is one of our main concepts, intuitively, that races between concurrent sub-processes will not affect the final result. %
For $s \cseq s \cseq t \col t$ to be \myb{c-coherent} we need to add self-blocking to both $s$ prefixes.
Firstly, $s \cseq t \col t$ is not \myb{c-coherent} because 
$s \cseq t \col t \Derives{s}{\{\}}{\zero} t \col t$, but the successor process $t \col t$ does not permit us to repeat the label $s$. 
\longshort{\version}{}{%
Instead, $s \col s \cseq t \col t$ is \myb{c-coherent}.
For the same reason, the process $s \cseq s \col s \cseq t \col t$ is not \myb{c-coherent}, because 
$s \cseq s \col s \cseq t \col t \Derives{s}{}{\zero} s \col s \cseq t \col t$ 
but the successor which repeats the label $s$, is self-blocking: 
$s \col s \cseq t \col t \Derives{s}{}{\zero} t \col t$, i.e., the blocking set has increased. 
The process $s \col s \cseq s \col s \cseq t \col t$ is \myb{c-coherent}, but the parallel composition $\ol{s} \cpar \ol{s} \cpar s \col s \cseq s \col s \cseq t \col t$ is blocking. None of the senders is permitted to handshake with the initial label $s \col s$ of the receiver.
}
Coherence and confluence are not the same thing: the process $\ol{s} \cpar \ol{s} \cpar s \cseq s \cseq \ol{t}$ is not \myb{c-coherent} but confluent. Confluence requires that there are at least as many receivers as there are senders, so it does not matter which is synchronising first. If we add another sender as in $\ol{s} \cpar \ol{s} \cpar \ol{s} \cpar s \cseq s \cseq \ol{t}$ we lose confluence. Now it depends which two of the three senders $\ol{s} \cpar \ol{s} \cpar \ol{s} $ actually synchronises with the receiver $s \cseq s \cseq \ol{t}$. 
} 
The first example shows the basic mechanism of how priorities can be used to schedule processes under weak enabling.
\begin{example}[Read Before Write]
\label{ex:parallel-blocking}
 Consider a ``store'' $S \eqdef w\cseq r + r \col w$ that offers a write action $w$ followed by a read $r$, or a read action $r$ but prefers the write over the read. The fact that $w$ takes priority over $r$ is recorded in the blocking set $\{w\}$ which contains the action $w$. If we put $S$ in parallel with a reader $R \eqdef \ol{r}$, the read actions $r$ and $\ol{r}$ synchronise in a reduction 
 $ 
 S \cpar R 
 \Derives{\tau}{\{ w \}}{\zero} \zero.
 $ 
 The blocking set $\{w\}$ will block $R$'s transition when a parallel writer $W \eqdef \ol{w}$ is added, as well. Specifically, 
 %
 the reduction $S \cpar R \cpar W \Derives{\tau}{\{w\}}{\ol{w}} W$ 
 is not weakly enabled, because $\{w\} \cap \oliA(\ol{w}) = \{w\} \cap \{w\} \neq \eset$. 
 The only enabled reduction 
 is that between $S$ and $W$ giving 
 %
 $S \cpar R \cpar W \Derives{\tau}{\{w \}}{\ol{r}} r \cpar R$
 %
 and only then continue with the read $r \cpar R \Derives{\tau}{\{\}}{\zero} \zero$. Thus, weak enabling makes the reduction determinate.
 Observe that $S$ is not confluent (Def.~\ref{def:Church-Rosser}): 
 We have $S \xrightarrow{w} r$ and $S \xrightarrow{r} \zero$ but there is no $P_1 \equiv P_2$ such that $r \xrightarrow{r} P_1$ and $\zero \xrightarrow{w} P_2$. However, $S$ is c-coherence (Def.~\ref{def:coherence}), because the transition $S \Derives{r}{\{w\}}{\zero} \zero$ interferes (Def.~\ref{def:interference-free}) with the former $S \Derives{w}{\{\}}{\zero} r$ as $w \in \{w\}$, whence c-coherence does not require confluence.
\qed
\end{example}
Processes can block each other from making progress. This happens under weak enabling if processes offer two-way rendez-vous synchronisation with contradicting precedences. 
\begin{example}[Binary Blocking]
\label{ex:binary-blocking} 
 Take processes $P \eqdef a \col b \cseq A + b$ and $Q \eqdef \ol{b} \col \ol{a} \cseq B + \ol{a}$. 
 Their composition $P \cpar Q$ has two admissible reductions, through $a \col b \cpar \ol{a}$ to $A$ or through $b \cpar \ol{b} \col \ol{a}$ to $B$.
 So, disregarding the precedences, it is non-deterministic if $A \not\equiv B$.  Under weak enabling, however, $P \cpar Q$ creates a circular deadlock: $P$ offers $a$ unless synchronisation with $\ol{b}$ is possible, while $Q$ offers $\ol{b}$ provided there is no synchronisation with $a$. None of the reductions is enabled. By the ``race'' side-condition of $(\ComR)$ the synchronisation $a \col b \cpar \ol{a}$ generates the c-action $\tau \col H[\zero]$ with $\tau \in H$ because $\{ b \} \cap \oliA(Q) \not\subseteq \{ a \}$. Similarly, since $\{ \ol{a}\} \cap \oliA(P) \not\subseteq \{ \ol{b} \}$, the synchronisation $b \cpar \ol{b} \col \ol{a}$ is blocked. 
  We note that neither $P$ nor $Q$ is confluent, but they are c-coherent.
\qed
\end{example}

\begin{example}[Reflexive Blocking]
\label{ex:reflexive-blocking}
  Consider the free process $s \cpar \ol{s} \cseq A \cpar \ol{s} \cseq B$ where action $s$ can be consumed by $\ol{s} \cseq A$ or by $\ol{s} \cseq B$, generating non-deterministic $\tau$-transitions to $A \cpar \ol{s} \cseq B$ or to $\ol{s} \cseq A \cpar B$. However, by adding self-blocking, as in $s \col s \cpar \ol{s} \cseq A \cpar \ol{s} \cseq B$, weak enabling protects the prefix $s$ from being consumed when both $\ol{s} \cseq A$ and by $\ol{s} \cseq B$ compete for it. If only one is present, as in $s \col s \cpar \ol{s} \cseq A$ and $s \col s \cpar \ol{s} \cseq B$, no blocking occurs.
\end{example}

Blocking under weak enabling ensures determinism in the special case where the circular dependency involves only initial actions. However, parallel threads can create dependency chains through sequences of actions, e.g., when accessing a shared process that is acting as a resource. This is the general case handled by c-enabling.
\begin{example}[Transitive Blocking]
\label{ex:weak-causality-cycle}
 Consider the process 
 $  
 S \eqdef w_0 + r_0 \col w_0 \cpar w_1 + r_1 \col w_1
 $ 
 which offers receiver actions $w_i$ and $r_i$ for $i \in \{0,1\}$ such that $w_i$ has precedence over $r_i$. 
 Now take processes $P_0 \eqdef \ol{r_0} \cseq \ol{w_1}$ and $P_1 \eqdef \ol{r_1} \cseq \ol{w_0}$ which first aim to synchronise with $r_i$ and then sequentially afterwards with $w_{1-i}$. The parallel composition $P_0 \cpar S \cpar P_1$ has two initial $\tau$-re\-ductions
 \[ 
  P_0 \cpar w_0 + r_0 \col w_0 \cpar \ol{w_0}  \LDerives{\tau}{\{ w_1 \}}{P_0 \cpar (w_0 + r_0 \col w_0)}P_0 \cpar S \cpar P_1
 \Derives{\tau}{\{ w_0 \}}{w_1 + r_1 \col w_1 \cpar P_1} 
 \ol{w_1} \cpar w_1 + r_1 \col w_1 \cpar P_1.
\]
These reductions are both weakly enabled. If we continue under weak enabling then the residual process $\ol{w_1} \cpar w_1 + r_1 \col w_1 \cpar P_1$ for the first reduction must first execute the synchro\-nisation on $w_1$ and thus proceed as $\ol{w_1} \cpar w_1 + r_1 \col w_1 \cpar P_1 \fstep{\tau} P_1$ while the residual process $P_0 \cpar w_0 + r_0 \col w_0 \cpar \ol{w_0}$ can only reduce to $P_0$. Hence, the resulting final outcome of executing $P_0 \cpar S \cpar P_1$ is non-deterministic.
Observe that none of the two reductions displayed above is actually c-enabled. For the right reduction we find that its concurrent environment $R_0 \eqdef w_1 + r_1 \col w_1 \cpar P_1$ has a reduction sequence $R_0 \fstep{\tau} \ol{w_0}$ and thus 
$ 
\{ w_0 \} \cap \olwiA(R_0) \subseteq \{ w_0 \} \cap \olwiA(\ol{w_0}) = \{ w_0 \} \cap \{ w_0 \} \neq \eset.
$ 
Symmetrically, for the concurrent environment $R_1 \eqdef P_0 \cpar w_0 + r_0 \col w_0$ of the left reduction we have $R_1 \fstep{\tau} \ol{w_1}$ and hence $\{ w_1 \} \cap \olwiA(R_1) \neq \eset$. The parallel composition $P_0 \cpar S \cpar P_1$ is deterministic under c-enabling. We note again, $S$ is not confluent but c-coherence (Def.~\ref{def:coherence}).
\qed
\end{example}
The next example generalises Ex.~\ref{ex:parallel-blocking} to a full memory cell.
\begin{example}[Read/Write Memory]
\label{ex:rwmem}
Let $w_0, w_1, r_0, r_1$ the labels of writing/reading a boolean value $0$ or $1$, respectively. A memory would be modelled by two mutually recursive processes 
 \begin{eqnarray*}
 \wAND_0 & \pdef & 
 w_1 \cseq \wAND_1 + w_0 \col w_1 \cseq \wAND_0 + r_0 \col \{w_0, w_1 \} \cseq \wAND_0 \\
 \wAND_1 & \pdef & 
 w_1 \cseq \wAND_1 + w_0 \col w_1 \cseq \wAND_0 + r_1 \col \{ w_0, w_1 \} \cseq \wAND_1
 \end{eqnarray*}
\begin{minipage}{0.6\textwidth}
  The process $\wAND_v$ represents the memory cell in a state where it has stored the value $v \in \{ 0, 1 \}$. In either state, it offers to synchronise with a write action $w_u$ to change its state to $\wAND_u$, and also with a read action $r_v$ to transmit its stored value $v$ to a reader. The write actions $w_u$ do not only change the state of the memory, they also block the read actions since they appear in their blocking sets $r_v \col \{ w_0, w_1\}$. 
\end{minipage}
$\quad$
\begin{minipage}{0.35\textwidth}
   \includegraphics[scale=.5]{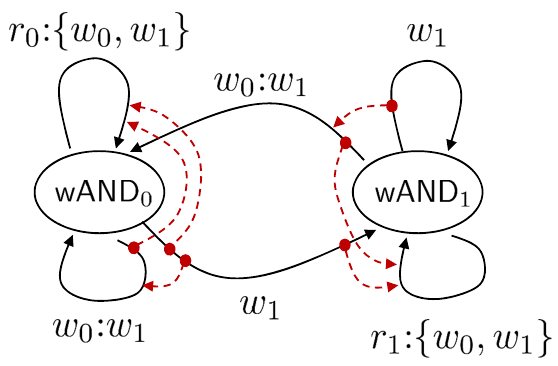}
\end{minipage}\\[1mm]
Between the write actions, the blocking sets implement a precedence of $w_1$ over $w_0$. In the picture on the right the process state transitions are visualised with the blocking sets as {\color{red} red} dashed arrows.
\qed
\end{example}

\begin{example}[Esterel Signals]
\label{examples:Esterel-signal}
Concurrent Esterel threads communicate via \textit{signals}~\cite{Berry00,Esterelv705}. A (pure, temporary) signal can have two statuses, \textit{present} and \textit{absent}. At the beginning of a synchronous macro-step, each signal is initialised to be absent, by default. The signal becomes present as soon as some thread emits it. It remains present thereafter for throughout the current macro-step, i.e., until the next clock cycle is started. The clock is a global synchronisation, taken by all threads simultaneously. The value of the signal is broadcast to all concurrent threads which can test it and branch their control-flow according to the signal's status.  To maintain coherency of data-flow and deterministic behaviour, the signal status can only be read when no writing is possible any more. Esterel compilers conduct a causality analysis on the program and obtain a suitable static schedule when they generate imperative code. In Esterel hardware compilers, the scheduling is achieved dynamically by 
propagating signal statuses.  Programs that cannot be scheduled to satisfy the emit-before-test protocol, or hardware that that does not stabilise are called \textit{non-constructive} and rejected by the compiler. A pure, temporary signal is modelled by the following recursive behaviour:

\noindent 
\begin{minipage}[b]{.6\textwidth}
 \begin{eqnarray*}
 S_0 &\pdef& 
 \emit \cseq (S_{11} \cpar S_{12}) +
 \abs\col\emit \cseq S_0 + 
 \sigma \col \{ \abs, \emit  \} \cseq S_0 \\ 
  S_{11} &\pdef& \pres \cseq S_{11} + \sigma \col \pres \cseq \zero \\
 S_{12} &\pdef& \emit \cseq S_{12} + \sigma \col \emit \cseq S_0 
\end{eqnarray*}
which is depicted on the right
and c-coherent according to Def.~\ref{def:coherence}. 
\end{minipage}
\qquad
\begin{minipage}[b]{.37\textwidth}
\centering \includegraphics[scale=0.65]{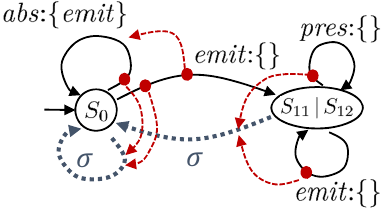}
{\small (states up to structural equivalence)}
\end{minipage}
\mrev{}{%
\noindent In the original form:
 \begin{eqnarray*}
S_0 &\pdef& 
 \emit \cseq S_1 +
 \abs\col\emit \cseq S_0 + 
 \sigma \col \{ \abs, \emit  \} \cseq S_0 \\ 
 S_1 &\pdef& 
 \pres \cseq S_1 + 
 \emit \cseq S_1 + 
 \sigma \col \{ \pres, \emit \} \cseq S_0, 
\end{eqnarray*}
Esterel signals are not c-coherent: We have c-enabled and interference-free transitions 
$$
  S_1 \Derives{\pres}{\eset}{\zero} S_1
  \text{ and } 
  S_1 \Derives{\emit}{\eset}{\zero} S_1
$$
which form a diamond reconvergence. But since $\pres \neq \emit$ we would need strong residual transitions $\zero \fstep{\emit} \zero$ and $\zero \fstep{\pres} \zero$ in the concurrent environments, as per the definition of c-coherence (in the correct form as in Technical Report). Such residual transitions do not exist here, however. The problem shows up when we compose $S_1 \cpar P$ with a (c-coherent) program $P = \ol{\pres}\cseq P_1 + \ol{\emit}\col\ol{\pres}\cseq P_2$ which tests for presence and if the signal is not present then emits it. Of course, such program will not occur in practice but they depend on the race test in parallel composition. Both processes $S_1$ and $P$ are pivot processes, because they are conformant to $\pres \indep \emit$ and $\ol{\pres} \ordpre \ol{\emit}$. But we have a non-determinism unless we schedule $S_1 \cpar P$ with the race test in rule $(\ComR)$. However, the idea behind the main theorem was that pivotable processs do not need the ``ménage-a-deux'' race test at all. It seems they do. So, the notion of pivotability is too generous.
}
\qed
\end{example}
\begin{example}[\ABRO]
\label{ex:abro}
The \textit{hello word} in synchronous languages, see \cite{Berry00}, \ABRO has the following behaviour: 
 {\it Emit the output $o$ as soon as both inputs $a$ and $b$ have been received. Reset this behaviour each time the input $r$ occurs, without emitting $o$ in the reset cycle.~\cite{Esterelv705}}
The behaviour of \ABRO is expressed as a Mealy machine and a SyncChart, displayed in Fig.~\ref{fig:ABRO} on the right. 
The synchronous behaviour of Esterel lies in the assumption that each transition of the Mealy machine summarises a single \textit{macro-step} interaction of the machine with its environment.
The transitions of parallel machines synchronise so that state changes proceed in lock-step. The simultaneous execution of these transitions that make up a macro-step consists of the sending and receiving of signals. The propagation of individual signals between
\begin{minipage}[t]{0.6\textwidth}
 machines is modeled in the \textit{micro-step} operational semantics of Esterel. 
In \ccslm we represent the micro-steps as sequences of $\tau$-transitions, i.e., 
rendez-vous synchronisations, whereas the macro-step is captured as a global clock synchronisation.  Unfortunately, the Mealy machine formalism is not scalable because the number of states can grow exponentially as the number of signals grow, while in Esterel can be expressed in a linear number of code lines as seen on the right.
\end{minipage}
$\quad$
\begin{minipage}[t]{0.35\textwidth}
{\small
\begin{verbatim}

module ABRO:
 input A, B, R;
 output O;
 loop
  await A || await B;
  emit O
 each R
end module
\end{verbatim}
}
\end{minipage}
\\[1mm]
\noindent  The
\texttt{loop P each R} construct of Esterel executes the behaviour of program \verb!P! for an unbounded number of clock cycles, but restarts its body \verb!P! when the signal \verb!R! becomes present. The reset \verb!R! takes priority over the behaviour of \verb!P! which is preempted at the moment that the reset occurs. Note that in Esterel  each signal appears exactly once, as in the specification and unlike in the Mealy machine.
Other parts of Esterel also naturally arise from precedence-based scheduling. Esterel's \verb!loop P each R! mechanism, for instance, can be coded using the static parallel composition operator that combines the process \verb!P! to be aborted and reset with a ``watchdog'' process \verb!R!. The latter waits for the reset signal from the environment in each clock cycle. The choice between the reset and continuing inside \verb!P! is resolved by giving \verb!R! higher priority. When the reset occurs, the watchdog sends ``kill'' signals to all threads running in \verb!P! and waits for these to terminate. Only then the watchdog \verb!R! restarts \verb!P! from scratch. This is a form of precedence that can be expressed in the blocking sets of \ccslm. As such, Fig.~\ref{fig:ABRO} on the left translates the Esterel code of \ABRO compositionally in \ccslm. 
\begin{figure}[t]
\begin{minipage}{0.9\textwidth}
 \includegraphics[width=.55\textwidth]{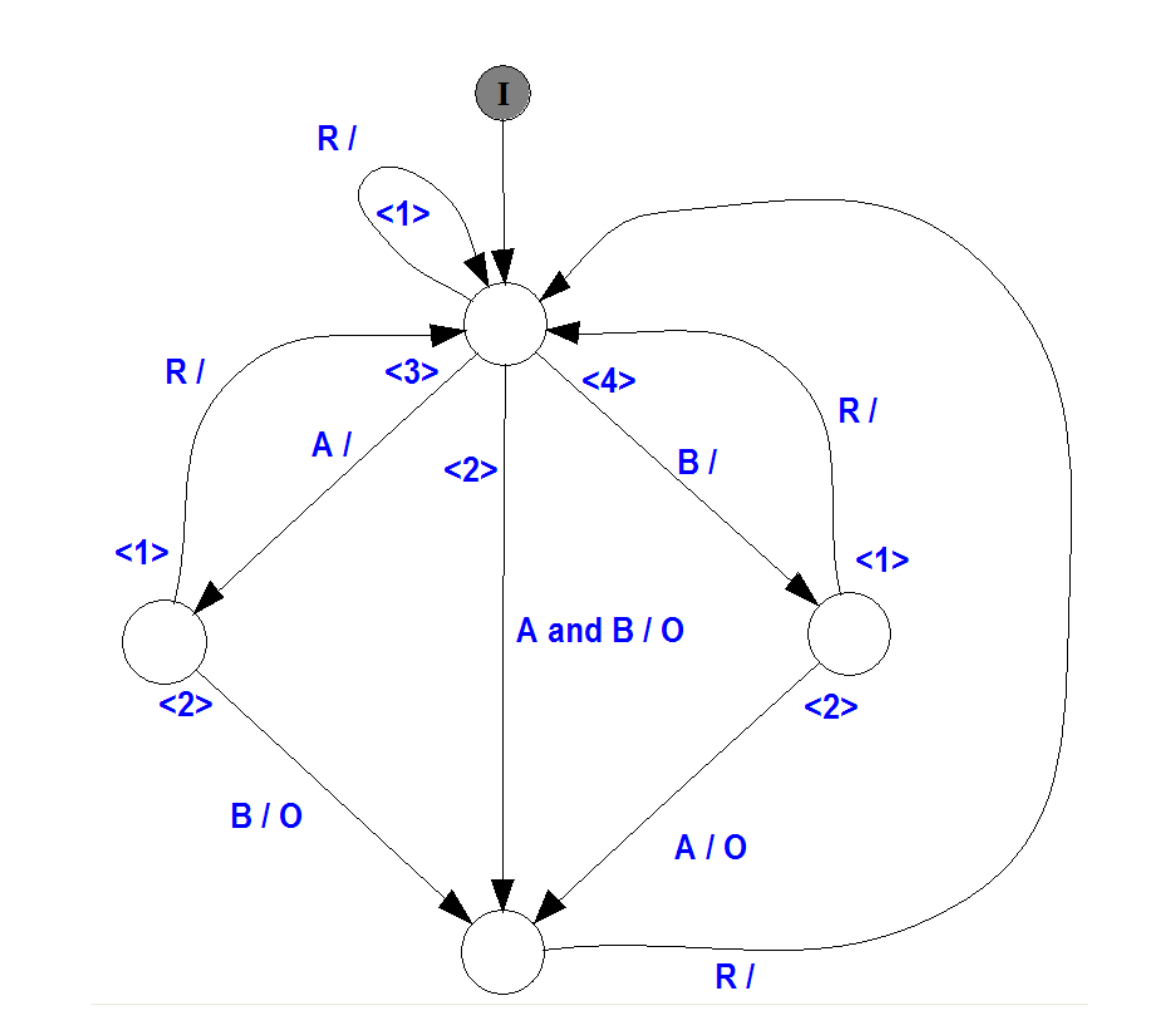}
 \hfill 
 \includegraphics[width=.5\textwidth]{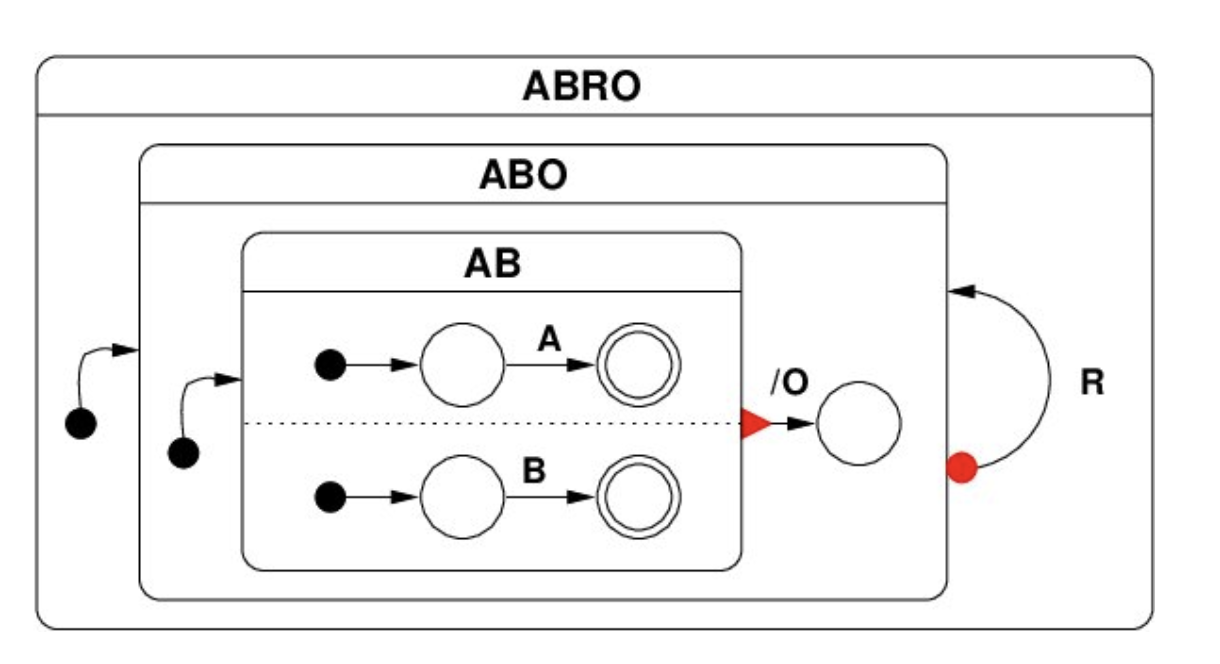}
\end{minipage}

\begin{minipage}{0.6\textwidth}
\[ \begin{array}{rcl}
 \ABRO &\pdef& 
    \sigma \cseq (\sA \cpar \sB \cpar \sR \cpar \sO \cpar \sT)  \restrict \{ s,t \} \\[2mm] 

\sA &\pdef& 
 k \col k + a \col \{k, a \} \cseq \ol{s} \col \ol{s} + 
 \tau \col \{k, a \} \cseq \sigma \cseq \sA \\[2mm]
 
 \sB &\pdef& 
 k \col k + b \col \{k, b \} \cseq \ol{s} \col \ol{s}  + 
 \tau \col \{k, b \} \cseq \sigma \cseq \sB \\[2mm]
 
 
 \sR &\pdef& 
   r \col r \cseq \sR' + 
   \tau \col {r} \cseq \sigma \cseq \sR \\[2mm]

 \sR' &\pdef& 
   \ol{k} \cseq \sR' + 
    \tau \col \ol{k} \cseq \ABRO \\[2mm]

 \sO &\pdef& 
   k \col k + t \col \{k, t \} \cseq \ol{o} \col \ol{o} + \tau \col \{ k, t \} \cseq \sigma \cseq \sO \\[2mm]
 
 \sT &\pdef& 
   k \col k + s \col k \cseq \sT 
    + \ol{t} \col \{k, s, \ol{t} \} +
    \tau \col \{k, s, \ol{t} \} \cseq \sigma \cseq \sT

\end{array}
\]
\end{minipage}

\caption{\ABRO as a \ccslm process, Mealy machine and SyncChart. The $\tau$-prefixes $\tau \col H \cseq P$ can be obtained as $(c \col H \cpar \ol{c} \cseq P) \restrict c$ where $c$ does not occur in $P$.}
\label{fig:ABRO}
\vspace{-4mm}
\end{figure}
\ABRO is obtained as a parallel composition $(\sR \cpar \ABO) \restrict \{ s,t \}$ combining a watchdog $\sR$ with the behaviour $\ABO \eqdef \sA \cpar \sB \cpar\sO \cpar \T$. The $\sR$ is responsible to implement the reset loop at the superstate called $\ABO$, as in the state charts.
The $\ABO$ is the behaviour of the interior of this superstate, which corresponds to the statement \verb!(await A || await B); emit O! in Esterel:
it is the parallel composition of $\sA \cpar \sB$ making up the behaviour of the region called $\AB$ with code \verb!await A || await B!. The processes $\sO \cpar \sT$ implement the synchronisation that waits for termination of $\AB$ to start the output process $\sO$. 
We assume that the channels $a$, $b$, $r$, $o$ modelling interface signals \verb!A!, \verb!B!, \verb!R!, \verb!O! 
all have a unique sender and receiver. 
This is seen in our coding from the fact that the prefixes for these channels are self-blocking. 
%
\begin{itemize}

\item The $\sA$ and $\sB$ processes are analogous to each other with the role of $a$ and $b$ swapped. Looking at $\sA$, we see that it has a choice of three actions, viz to receive a kill signal $k$, an input signal $a$ or a silent action $\tau \col \{ k, a \}$, which preempts the other prefixes $k \col k$ and $\ol a \col \{ k, a \}$ to perform the clock action $\sigma$ and start the next macro step. The kill $k$ has highest priority, action $a$ second and the clock can only be activated if neither $k$ nor $a$ can synchronise. When there is no kill but an $a$ signal, then the process $\sA$ sends an $\ol{s}$ signal to indicate its termination. When the kill $k$ occurs, it enters into the halt state $\zero$ where it is inactive but permits any number of clock ticks. Since this is the neutral element for parallel composition, $\sA$ essentially drops out of the picture. Note that we cannot use the inactive $\zero$ here, because then the terminated process $\sA$ would block the clock for all other threads. Finally, when there is neither a kill $k$ nor signal $a$, the $\sA$ process can execute the silent $\tau \col \{ k, a \}$ and then synchronise on $\sigma$ and repeat as $\sA$ in the next macro-step. In this fashion, $\sA$ will patiently wait for a clock tick in which $k$ or $a$ occur. If both occur, the kill takes priority: For instance, the transition $\sA \cpar \ol{k} \cpar \ol{a} \Derives{\tau}{\{k\}}{\ol{a}} \ol{a}$
is constructively enabled while 
$\sA \cpar \ol{k} \cpar \ol{a} \Derives{\tau}{\{k, a \}}{\ol{k}} \ol{k}$ is not enabled. 

\item The reset watchdog $\sR$ can receive a reset signal $r$ or silently move to state $\sigma \cseq R$ where it synchronises on the clock $\sigma$ to repeat in the next macro-step. When a reset $r$ is received by the environment, then $\sR$ offers the infinite repetition $\ol{k} \cseq \sR'$,
 whereby it sends a ``kill'' signal $\ol{k}$ to each of the four threads $\sA$, $\sB$, $\sO$ and $\sT$, in some arbitrary order. Only when all kills have been delivered, i.e., no more consumption on $\ol{k}$ is possible, then the watchdog $\sR$ can silently preempt the kill loop and restart $\ABRO$. 
 Note how the second transition 
 $\sR \Derives{\tau}{\{r\}}{\zero} \sigma \cseq \sR$ of $\sR$
 preempts the reset choice
 but with lower priority as it is blocked by $r$. It is only taken when there is no reset possible, i.e., when the reset signal $r$ is absent. Note that with this preemption of the prefix $r \ol r \cseq \sR'$, 
 the kill signal $\ol{k}$ is removed from the potential, because $\olwilA{\ast}(\sigma \cseq R) = \eset$, since we do not look past the clock prefix. 
 
 \item The process $\sT$ is the termination detector implementing the principle discussed above. It forms a clock-patient, killable counter that waits for any possible reception of signal $s$ before it sends termination $\ol{t}$ to process $\sO$. Note that $\sT$ itself does not need to know how many possible termination signals $s$ there are. It waits for the termination count (implemented by $\wilA{\ast}$) to reach zero. Only then the prefix $ \ol{t} \col \{k, s, \ol{t} \} $ can fire and inform $\sO$. If there is not kill $k$ or temrinatikon $s$, then $\sT$ can silently preempt the summation and reach the state $\sigma \cseq \sT$ where it waits for the clock to repeat.
 
 \item When process $\sO$ receives the termination signal $t$ it sends the output signal $\ol{o}$ to the environment and then halts. Alternatively, with highest priority, it can be killed by reception of $k$ or engage in the clock $\sigma$ if neither a kill $k$ nor a termination $t$ is present. 
 Note that the reduction $\tau = \ol{t} \col \{k, s, \ol{t} \} \cpar t \col \{ k, t \}$ between $\sT$ and $\sO$ will be blocked by precedence, for as long as at least one of the processes $\sA$ or $\sB$ has a choice term that sends $\ol{s}$, i.e., as long as $s \in \olwilA{\ast}(\sA \cpar \sB)$. Only when both prefix sequences $a \col \{k, a \} \cseq \ol{s} \col \ol{s} \cseq \zero$ and $b \col \{k, b \} \cseq \ol{s} \col \ol{s} \cseq \zero$ have been eliminated by the $\tau$-reductions, then the termination can occur.  
 Note that we can eliminate $\sT$ and integrate the termination test into the output process $\sO$ if we let $\sO$ loop on consumption of $s$ rather than $\sT$. 
\end{itemize}
%
\qed
\end{example}

\longshort{\version}{}{
All the above examples, except \ABRO 
have the special property that 
precedences only exist between receiver actions $\A$, while sender actions $\coA$ never appear in any choice contexts. These are the processes covered by~\cite{CamilleriWin95} which restrict prioritised choice to receiver actions (``input-guarded, prioritised choice'') in the same way as 
Occam~\cite{Barrett89} with its ``PRIALT'' construct does.
%
%
\longshort{\version}{}{What can be shown, however, is that the parallel composition of \textit{two} \emph{single-threaded} processes is always deadlock-free, even under strong enabling: this could be proved formally, although it would be a non trivial exercise of style. Strong enabling produces deterministic behaviour for PRIALT processes, because these have unique sender-receiver connections. A receiver $\ell$ can match and synchronise with exactly one sender $\ol{\ell}$. If we want to model sharing, where one (repetitive) sender can service multiple receivers, and vice versa, the notion of enabling must be strengthened further.}
This restriction is an example of a class of more general precedence schemes, that we call \textit{pivot policies} (Def.~\ref{def:pivot-policy}).
}



\section{Coherence and Determinacy for \ccslm}
\label{sec:properties}

This section presents the core elements 
of our scheduling theory 
under constructive enabling: coherence, policies, and policy-conformant processes. 
\textit{Coherence} (Def.~\ref{def:coherence}) corresponds to Milner's notion of confluence, but refined ``up-to'' priorities. It requires structural confluence for c-enabled and diverging c-actions only if they do not interfere with each other considering their blocking sets and concurrent environments. 
\longshort{\version}{}{This ensures that they are not part of a conflicting choice in which each preempts the other.}


\begin{definition}[Transition Non-Interference]
  \mrev{Two transitions $Q \Derives{\alpha_1}{H_1}{E_1} Q_1$ and $Q \Derives{\alpha_2}{H_2}{E_2} Q_2$ 
  are called \emph{interference-free} if the following holds for all $i \in \{ 1,2 \}$:}{re-establish original definition}
  \begin{bracketenumerate}
     \item If $\alpha_1 \neq \alpha_2$ then $\alpha_i \not\in H_{3-i}$.
     \item \mrev{If $\alpha_{3-i} = \tau$ then $H_{i} \cap (\olwilA{\ast}(E_i) \cup \{ \tau \}) = \eset$.}{added, then c-enabledness in c-coherence not needed, and we align again with proofs.}
     \item \mrev{$\alpha_1 \neq \alpha_2$ or $Q_1 \not\equiv Q_2$ or both $\{ \alpha_1, \alpha_2 \} \subseteq \R$ and $\alpha_i \not\in H_{3-i}$.}{moved here to simplify statement of c-coherence below.}
  \end{bracketenumerate}
\label{def:interference-free}
\end{definition}

\mrev{The first condition (1) of Definition~\ref{def:interference-free} says that distinct c-actions must not block each other. This ensures that they are not part of a conflicting choice in which each preempts the other. 
By the second condition, non-interference with a silent action requires that the action is actually enabled, i.e., it does not get blocked by computations in its own reduction context. 
Note that two c-actions $\ell \col H_1[E_1]$ and $\ell \col H_2[E_2]$ for the same visible label $\ell \in \R$ are always considered to be interference-free, regardless $H_i$ and $E_i$.}{added original explanations}

\medskip

Observe that for free processes (i.e., with empty blocking sets) all c-actions are interference-free and they are c-enabled iff they are admissible.
\longshort{\version}{}{Note that two c-actions $\ell \col H_1[E_1]$ and $\ell \col H_2[E_2]$ for the same visible label $\ell \in \R$ are always considered to be interference-free, regardless $H_i$ and $E_i$.}
Confluence of interference-free c-actions is formulated using an auxiliary form of \textit{residual transition} $P \astep{\alpha} Q$. This is a transition from $P$ to $Q$ with an action that is a ``factor'' of $\alpha$. Formally, we define $P \astep{\alpha} Q$ if 
\longshort
{\version}
{$(i)$ $\alpha \in \L$ with $P \Derives{\alpha}{}{} Q$ or $P \scong Q$, or $(ii)$ $\alpha = \tau$ with $P \Derives{\tau}{}{} Q$ or \mrev{$P \xrightarrow{\ell}{}{} Q$ for some $\ell \in \R$.}{simplified, by removing redundancy.}}
{\begin{enumerate}
  \item If $\alpha \in \L$ then $P \xrightarrow{\alpha}{}{} Q$ or $P \scong Q$
  \item If $\alpha = \tau$ then 
  $P \xrightarrow{\tau}{}{} Q$,  
  \mrev{$P \xrightarrow{\ell}{}{} Q$ for some $\ell \in \R$.}{Just a simplification}
\end{enumerate}
}
Note that for visible labels $\ell \in \L$ a residual transition 
does not require any change of state, i.e., we can have $P \scong Q$.  

\begin{definition}[Process Coherence]
\longshort{\version}{}{[Coherence]}
 A process $P$ is \mrev{\emph{c-coherent}}{change terminology for forward compatibility. This notion of ``c-coherence'' here is a restricted class that we are going to replace by a larger class, that we call ``coherent'' in our TCS paper. We do not want to confuse the two. The notion of c-coherence here does not take the local initial choices of each thread into account and works for an SOS without the race condition in rule $(\ComR)$.} if for all its derivatives $Q$ the following holds:
 For any two \longshort{\version}{interference-free}{} transitions
 \[
 Q \Derives{\alpha_1}{H_1}{E_1} Q_1
 \text{ and }
 Q \Derives{\alpha_2}{H_2}{E_2} Q_2
 %
 %
 \text{ there exist \longshort{\version}{transitions}{$H_i'$ and processes $E_i'$, $Q'$ such that} } 
 Q_1 \Derives{\alpha_{2}}{H_{2}'}{E_{2}'} Q'
 \text{ and }
 Q_2 \Derives{\alpha_{1}}{H_{1}'}{E_{1}'} Q'
 \] 
 with $H_{i}'\subseteq H_{i}$ and
 $E_1 \astep{\alpha_{2}} E_1'$ and $E_2 \astep{\alpha_{1}} E_2'$. 
 \longshort{\version}{}{Moreover, 
 $ 
 E_1 \Derives{\alpha_{2}}{}{} E_1' 
 \text{ and } 
 E_2 \Derives{\alpha_{1}}{}{} E_2'
 $ 
 if $\{ \alpha_1, \alpha_2 \} \subseteq \R$ and $\alpha_1 \neq \alpha_2$ or $Q_1 \not\equiv Q_2$, or $\{ \alpha_1, \alpha_2 \} \cap \C \neq \eset$.} 
 \mrev{Further, if $\{\alpha_1, \alpha_2\} \subseteq \R$ and one of $\alpha_1 \neq \alpha_2$ or $Q_1 \not\equiv Q_2$ or $\{\alpha_1, \alpha_2\} \cap \C \neq \eset$ then the residuals must be strong, i.e., $E_1 \Derives{\alpha_{2}}{}{} E_1'$ and $E_2 \Derives{\alpha_{1}}{}{} E_2'$.}{This was in the technical report but got dropped by mistake.}
\label{def:coherence}
\end{definition}
\mrev{A consequence of Def.~\ref{def:coherence} is that for any derivative $Q$ of a c-coherent process, if $Q \Derives{\ell_1}{R_1}{H_1} Q_1$ and $Q \Derives{\ell_2}{R_2}{H_2} Q_2$ with $\ell_1 \neq \ell_2$: if $\ell_i \not\in H_{3-i}$ for all $i$, then $\ell_i \in R_{3-i}$ for all $i$.}{This is another way of expressing the residual property. This was not a property of our original coding of Esterel signals. With the above change, this is fixed.}

\medskip 

For free processes, c-coherence implies confluence (Def.~\ref{def:Church-Rosser}), but is strictly stronger. 

\begin{example} 
  For instance, $Q \eqdef  a\col\{\,\} \cseq \zero$ 
  is confluent but not c-coherent. Since the prefix not self-blocking, c-coherence ensures  that the label must \emph{consumable} by arbitrarily many concurrent processes. Technically, since $Q \xrightarrow{a} Q_1$ and $Q \xrightarrow{a} Q_2$ for 
  $Q_1 \equiv 
  \zero \equiv 
  Q_2$, there should be a transition $Q_i \xrightarrow{a} Q'$ which is not the case. However, for discrete processes, c-coherence is identical with confluence: With self-blocking, $Q \eqdef a\col\{a\} \cseq \zero$ is both confluent and c-coherent. It is important to note that all examples in Sec.~\ref{examples:scheduling} are c-coherent but not confluent. 
\qed
\end{example}

\begin{theorem}[Coherence implies Determinacy] 
 Every c-coherent process is determinate under 
 c-enabled reductions. 
\label{thm:church-rosser}
\end{theorem}
\longshort{\version}{}
{\begin{proof}[Of Thm.~\ref{thm:church-rosser}]
 Let $P$ be c-coherent and $Q$ a derivative with c-enabled reductions 
$ 
Q \Derives{\tau}{H_1}{R_1} Q_1 \text{ and }$  $Q \Derives{\tau}{H_2}{R_2} Q_2.
$ 
If $Q_1 \equiv Q_2$ we are done immediately. Suppose $Q_1 \not\equiv Q_2$. Then c-enabling implies $\tau \not \in H_i$ as well as $H_i \cap \olwilA{\ast}(R_i) = \eset$, which implies $H_i \cap (\olwilA{\ast}(R_i) \cup \{ \tau \}) = \eset$. But then the c-actions $\tau \col H_1[R_1]$ and $\tau \col H_2[R_2]$ are interference-free, whence c-coherence Def.~\ref{def:coherence}(2) implies there exist $H_i'$ and processes $R_i'$, $Q'$ such that 
$ 
 Q_1 \Derives{\tau}{H_{2}'}{R_{2}'} Q'
 \text{ and }$ 
\mbox{$Q_2 \Derives{\tau}{H_{1}'}{R_{1}'} Q'$}
$\text{and }
 R_1 \astep{\tau} R_1' 
 \text{ and } 
 R_2 \astep{\tau} R_2'
 $ 
 with $H_{i}'\subseteq H_{i}$. 
 Finally, observe that $\olwilA{\ast}(R_i') \subseteq \olwilA{\ast}(R_i)$ by Lem.~\ref{lem:c-enabled-aux} from which we infer 
$ 
H_i' \cap (\olwilA{\ast}(R_i') \cup \{ \tau \}) \subseteq 
 H_i \cap (\olwilA{\ast}(R_i) \cup \{ \tau \}).
 $ 
Thus, the reconverging reductions 
$Q_i \Derives{\tau}{H_{3-i}'}{R_{3-i}'} Q'$
are again c-enabled. This was to be shown.
\end{proof}
}

A c-coherent process cannot simultaneously offer both a clock and another distinct action without putting them in precedence order. Moreover, whenever a clock is enabled, then every further reduction of the process is blocked by the clock.
\longshort{\version}{}{%
Transitions
$ 
P \Derives{\ell_1}{H_1}{R_1} P_1 \text{ and } P \Derives{\ell_2}{H_2}{R_2} P_2
$ 
with distinct visible labels $\ell_1 \neq \ell_2$ must stem from distinct action prefixes in $P$ for syntactic reasons. Structural c-coherence requires that the two transitions are either interfering, $\ell_1 \in H_1$ or $\ell_2 \in H_1$, and thus will disable each other, or they are offered in concurrent threads, $\ell_1 \in \iA(R_{2})$ and $\ell_2 \in \iA(R_{1})$ and thus are concurrently independent. 
}
\longshort{\version}
{\noindent Thm.~\ref{thm:church-rosser} concerns the $\tau$-transitions of a c-coherent process. For clock transitions we have an even stronger result.}
{We can show that c-coherent processes are structurally deterministic for clock transitions. This is a consequence of a weaker form of structural determinism, that we call \textit{action determinism}. To be precise, 
a c-coherent process does
not offer a choice on the same visible label to structurally different successor states unless these actions also occur in concurrent threads.}
\longshort{\version}{}{Since clock actions are global, they are not executed concurrently with other threads. As a consequence, a clock cannot occur in a choice context with itself.}

\begin{proposition}[Clock Determinism]
 \longshort{\version}
 {Every c-coherent process}
 {Let $P$ be c-coherent with 
 $
 P \Derives{\ell}{H_1}{R_1} P_1
 $ 
 such that $\ell \in \L$ and $\ell \not\in \iA(R_1)$. Then, for every transition 
 $ 
 P \Derives{\ell}{H_2}{R_2} P_2
 $ 
 we have $P_1 \equiv P_2$.  
 In particular,}
 $P$ is structurally clock deterministic, i.e., if 
 $ 
 P \xrightarrow{\sigma} P_1 \text{ and } P \xrightarrow{\sigma} P_2
 $ 
 then $P_1 \equiv P_2$. 
\label{prop:action-determinism}
\end{proposition}
\longshort{\version}{}
{\begin{proof}[Of Prop. \ref{prop:action-determinism}]
 The first part of the statement follows directly from Def.~\ref{def:coherence}(2): The c-actions $\ell \col H_i[R_i]$ are trivially interference-free. Now if $P_1 \nequiv P_2$, then \myb{c-coherence} gives us a strong environment shifts implying $\ell \in \iA(R_1)$. 
 The second part is because by Lem.~\ref{lem:prio-basic-1}(1) we have $R_i = \zero$ and thus $\iA(R_i) = \eset$ when $\ell \in C$.
\end{proof}
}

The question is now how we can obtain c-coherent processes by composition. In view of Milner's classical result for confluence, we expect to have to impose some form of ``sort-separation'' for parallel processes. 
It turns out that we can express such a condition and even avoid the restriction operator (thereby permit sharing) if we enrich the sorts $\L(P)$, $\L(Q)$ by priority information. We call the resulting enriched sorts $\pi(P)$, $\pi(Q)$ \textit{precedence policies}. 

\longshort{\version}{}{%
The problem of causality cycles that occurs in the parallel composition of two single-threaded processes stems from the symmetry between input and output actions (see, e.g., Ex.~\ref{ex:not-pivotable}). The standard trick to eliminate the problem as used e.g., Occam's ``PRIALT'', is to introduce a \textit{causality order} so that outputs (sender actions) are deterministic by construction. All remaining choices are choices between inputs and these are resolved by the environment selecting the matching output.
To express such causality properties of processes by static interfaces we refine the notion of a \textit{sort} known from \ccs to include precedences in \ccslm.
}

\begin{definition}[Precedence Policy]\mbox{}
  A \emph{precedence policy} is a pair $\pi = (L, \ordpre)$ where $L \subseteq \L$ is a subset of visible actions, called the \emph{alphabet} of $\pi$ and $\ordpre \subseteq L \times L$ a binary relation on $L$, called the \emph{precedence relation}. 
\label{def:precedence-policy} 
\end{definition}

We will write $\ell \in \pi$ to state that $\ell$ is in the alphabet of $\pi$ and we write $\ell_1 \ordpre \ell_2 \in \pi$ to express that the pair $(\ell_1, \ell_2)$ is 
in the precedence relation of the policy.
Further, if $\ell_1, \ell_2 \in \pi$ and both $\ell_1 \ordpre \ell_2 \not\in \pi$ as well as $\ell_2 \ordpre \ell_1 \not \in \pi$, we say that $\ell_1$ and $\ell_2$ are \emph{concurrently independent} under $\pi$, written (by abuse of the set-notation) $\ell_1 \indep{} \ell_2 \in \pi$.

\medskip 

For any policy $\pi = (L, \ordpre)$ and set of labels $K \subseteq \L$ we denote by $\pi \restrict K$ the policy $\pi$ restricted to the alphabet $L - (K \cup \ol{K})$ in the standard way. 
There is also the normal inclusion ordering $\pi_1 \subseteq \pi_2$ between policies defined by $L_1 \subseteq L_2$ and $\sordpre 1 \subseteq \sordpre 2$.

\longshort{\version}{}{%
We obtain a partial ordering $\pi_1 \preceq \pi_2$ on precedence policies by subset inclusion. Specifically, we have $(L_1, \sordpre{1}) \preceq (L_2, \sordpre 2)$ if $L_1 \subseteq L_2$ 
such that for all $\ell_1, \ell_2 \in L_1$, 
if $\ell_1 \sordpre 2 \ell_2$ then $\ell_1 \sordpre 1 \ell_2$. 
Intuitively, if $\pi_1 \preceq \pi_2$ then $\pi_2$ is a tightening of $\pi_1$ in the sense that it exports more labels (resources) subject to possibly fewer precedences (causality constraints) than $\pi_1$.
\longshort{\version}{}{There is also the normal inclusion ordering $\pi_1 \subseteq \pi_2$ between policies defined by $L_1 \subseteq L_2$ and $\sordpre 1 \subseteq \sordpre 2$. It differs from $\preceq$ in the inclusion direction of the precedence alphabets. 
Another notation we need is restriction.}
}

\medskip 

The next Def.~\ref{def:conformance} on conformance captures how policies statically specify the externally observable scheduling behaviour of a process. 
 
\begin{definition}[Conformance] 
\longshort{\version}{}{[Conformance]}
 $P$ \emph{conforms} to a policy $\pi$ if for each derivative $Q$ of $P$: If 
 $Q \Derives{\ell}{H}{R} Q'$  
 with $\ell \in \L$, then $\ell \in \pi$ and for all $\ell' \in H \cap \L$, then
 $\ell' \ordpre \ell \in \pi$.
\label{def:conformance}
\end{definition}
For a process $P$ and policy $\pi$, let us write\footnote{In Milner's notation we would write $P : \pi$, but we already use the colon `:' for blocking sets as in \ccsp.} $P \cnf \pi$ to express that $P$ is c-coherent and conforms to policy $\pi$. We will say that $P$ is \textit{c-coherent for} $\pi$.
If $P \cnf \pi$ then the alphabet of $\pi$ corresponds to the sort $\L(P)$ in \ccs. It gives an upper bound on the labels that can be used by the process, and the precedence relation in $\pi$ provides a static over-approximation of the blocking, i.e., which labels can be in a choice conflict with each other. As a notational shortcut we will write $P \cnf \ell_1 \ordpre \ell_2$ to say that there is a policy $\pi$ with $P \cnf \pi$ and $\ell_1 \ordpre \ell_2 \in \pi$. Analogously, we write $P \cnf \ell_1 \indep \ell_2$ if there is a policy $\pi$ with $\ell_1 \indep \ell_2 \in \pi$ and $P \cnf \pi$. 

\begin{example}
  The most permissive policy for a given sort $L \subseteq \L$, written $\pi_{max}(L)$, contains all labels from $L$ as its alphabet but no precedences, i.e., $\ell_1 \indep \ell_2 \in \pi_{max}(L)$ for all $\ell_1, \ell_2 \in L$. If $P \cnf \pi_{max}$ then $P$ has sort $L$ and will permit arbitrarily many concurrent processes to consume its labels $L$ under c-enabling. 
  The most restrictive policy, denoted $\pi_{min}(L)$, also has alphabet $L$, but full precedences, i.e., $\ell_1 \ordpre \ell_2 \in \pi_{min}$ for all $\ell_1, \ell_2 \in L$. If $P \cnf \pi_{min}(L)$, then $P$ has sort $L$ but does not permit any concurrency; it can only communicate with a single sequential thread.
\qed
\end{example}

\begin{example}
  Reflexive precedences $\ell \ordpre \ell \in \pi$ play a special role in our theory. They permit a conformant $P \cnf \pi$ and its derivatives to execute $\ell$ only with a single partner. For instance, $\ABRO$ from Ex.~\ref{ex:abro} satisfies $\ABRO \ncnf r \indep r$ and so $\ABRO  \cpar \sigma\cseq \ol{r}$ reduces but $\ABRO  \cpar \sigma \cseq \ol{r} \cpar \sigma \cseq \ol{r}$ will block after the initial clock transition. 
  In contrast, if the policy specifies $\ell \indep{} \ell \in \pi$, then in a process $P \cnf \pi$ any $\ell$-transition  is consumable arbitrarily often. For instance, the Esterel signal from Ex.~\ref{examples:Esterel-signal} has $S_1 \cnf \pres \indep \pres$ and so $S_1 \cpar \ol{pres} \cpar \ol{\pres}$ will not block. 
\qed
\end{example}

\longshort{\version}{}{%
By definition, notion of conformance is closed under taking transitions. Specifically, if $P$ conforms to $\pi$ and $P \Derives{\alpha}{}{} Q$, then $Q$ conforms to $\pi$, too. Conformance is not closed under taking sub-processes, however. 
For instance, $(a + b \col \{ a,b \}) \restrict a$ conforms to a policy $\pi = (L, \ordpre)$ with $L = \{b\}$ and $\pi \Vdash b \ordpre b$. The policy does not need to mention $a$ because it is a bound local signal that is restricted. The (proper) sub-process $a + b \col \{a, b\}$, however, does not conform to $\pi$, because $L$ does not permit the local action $a$.
Since conformance is not closed under sub-expressions, we could assume a system of typing rules that associates policies with sub-processes. This would be a topic of further work.
}
 
Here we are interested in
processes that conform to policies with special properties. 

\longshort{\version}{%
\begin{definition}[Pivot and precedence-closed policy]
%
  \mbox{} A policy $\pi$ with alphabet $L$ is called
  \begin{itemize} 
   \item  \emph{pivotable} if\longshort{\version}{ $\ol{L} \subseteq L$ and for all $\ell_1, \ell_2 \in L$ with $\ell_1 \neq \ell_2$ we have $\ell_1 \indep{} \ell_2 \in \pi$ or $\ol{\ell}_1 \indep{} \ol{\ell}_2 \in \pi$.}
  {$\ol{\pi} \preceq \pi$.} 
   \item \emph{precedence-closed} for $K \subseteq \L$ if $\ell_1 \in K$ and $\ell_1 \ordpre \ell_2 \in \pi$ implies $\ell_2 \in K$. 
  \end{itemize} 
  \longshort{\version}{A process is \emph{pivotable}/\emph{precedence-closed} if $P$ conforms to pivot/precedence-closed policy.}{A process is \emph{pivotable}/\emph{input-scheduled}/\emph{precedence-closed} if $P$ conforms to pivot/input-scheduled/precedence-closed policy.}
\label{def:special-policy}
\end{definition}
}
{%
\begin{definition}[Pivot Policy]
 A policy $\pi$ is a \emph{pivot} policy if $\ol{\pi} \preceq \pi$. 
 A process $P$ is \emph{pivotable} if $P$ conforms to a pivot policy. 
\label{def:pivot-policy}
\end{definition}
\begin{proposition}[Pivot policy]
 A policy $\pi = (L, \ordpre)$ is a pivot policy if 
 $\ol{L} \subseteq L$ and for all $\ell_1, \ell_2 \in L$ with $\ell_1 \neq \ell_2$ we have $\ell_1 \indep{} \ell_2 \in \pi$ or $\ol{\ell}_1 \indep{} \ol{\ell}_2 \pi$.
\label{prop:pivot-policy}
\end{proposition}
}
\longshort{\version}{}
{\begin{proof}[Of Prop.~\ref{prop:pivot-policy}]
Let $\pi$ be a pivotable process. Since the alphabet of $\ol{\pi}$ is $\ol{L}$ and the alphabet of $\pi$ is $L$ the assumption $\ol{\pi} \preceq \pi$ implies $\ol{L} \subseteq L$ by definition of $\preceq$. Given labels $\ell_1, \ell_2 \in L$ with $\ell_1 \neq \ell_2$ suppose $\pi \nVdash \ell_1 \indep{} \ell_2$. Hence $\pi \Vdash \ell_1 \ordpre \ell_2$ or $\pi \Vdash \ell_2 \ordpre \ell_2$. Consider the first case, i.e., $\pi \Vdash \ell_1 \ordpre \ell_2$. Since $\ol{L} \subseteq L$ and $L = \ol{\ol{L}}$ it follows that also $L \subseteq \ol{L}$, i.e., the labels $\ell_i$ are also in the alphabet of $\ol{\pi}$. But then by definition of the ordering and the fact that $\ell_i = \ol{\ol{\ell}}_i$, the assumption $\ol{\pi} \preceq \pi$ implies that $\ol{\pi} \Vdash \ol{\ol{\ell}}_1 \ordpre \ol{\ol{\ell}}_2$. Now, the definition of the dual policy means that $\pi \nVdash \ol{\ell}_1 \ordpre \ol{\ell}_2$ and $\pi \nVdash \ol{\ell}_2 \ordpre \ol{\ell}_1$. This is the same as $\pi \Vdash \ol{\ell}_1 \indep{} \ol{\ell}_2$ as desired.

Suppose the properties (1) and (2) of the proposition hold for a policy $\pi$. We claim that $\ol{\pi} \preceq \pi$. The inclusion of alphabets is by assumption (1). Then, suppose $\pi \Vdash \ol{\ell}_1 \ordpre \ol{\ell}_2$ which implies 
 $\pi \nVdash \ol{\ell}_1 \indep{} \ol{\ell}_2$. If $\ell_1 = \ell_2$ we have $\ol{\pi} \Vdash \ol{\ell}_1 \ordpre \ol{\ell}_2$ directly by definition. So, let $\ell_1 \neq \ell_2$. 
 Because of $\ol{L} \subseteq L$ and (2) this gives us $\pi \Vdash \ell_1 \indep{} \ell_2$ considering again that $\ell_i = \ol{\ol{\ell}}_i$. 
 But $\pi \Vdash \ell_1 \indep{} \ell_2$ is the same as $\ol{\pi} \Vdash \ol{\ell}_1 \ordpre \ol{\ell}_2$ by definition. 
\end{proof}
}

\begin{example} 
  An Esterel signal (Ex.~\ref{examples:Esterel-signal}) has the alphabet $\{ \emit, \pres, \abs \} = \L(S_i)$ ($i\in \{1,2\}$ and for simplicity we ignore the clock $\sigma$) and is conformant to the policy $\psig$ with $\emit \ordpre \abs \in \psig$. 
  A typical program $R$ accessing the signal has alphabet $\{\ol{\pres}, \ol{\abs} \} \subseteq \L(S_i)$ and is of form $R = \ol{\pres} \col \ol{\pres} \cseq R_1 + \ol{\abs} \col \{ \ol{\abs}, \ol{\pres} \} \cseq R_0$. If the signal is present then $R_1$ is executed, if it is absent then $R_0$ is executed.
  Such $R$ conforms to a policy $\prg$ with $\ol{\pres} \ordpre \ol{\abs} \in \prg$ and $\ell \ordpre \ell \in \prg$ for $\ell \in \{ \ol{\pres}, \ol{abs} \}$.  Both $\psig$ and $\prg$ as well as their union $\psig \cup \prg$ are pivot policies, so that $R \cpar S_i$ is pivotable.
  In a pivotable process, choices \mrev{between two admissible synchronisations are choices that involve at least three threads.}{revised for better explanation} These are resolved by the precedences in one of the involved threads, acting as the ``pivot''.
  \mrev{The choices of synchronisation between two threads do not need precedence but are instead handled by admissibility.}{added for explanation} For instance, in $R \cpar S_i$, admissibility in the signal $S_i$ resolves the choice $\ol{\pres} + \ol{\abs}$ in $R$. \mrev{When the signal thread is in state $S_0$, then only $\abs$ is admissible, when it is in state $S_1$, then only $\pres$ is admissible.}{for better explanation} Admissibility in the program $R$ resolves the choice $\abs + \emit$ in $S_0$ and the \mrev{choice $\pres + \emit$ in $S_1$: There is no derivative state in a single thread of $P$ where both $\pres$ and $\emit$ and both $\abs$ and $\emit$ are simultaneously admissible.}{filled in missing case}
\qed
\end{example}

\longshort{\version}{}{
\begin{example} 
  As another simple example, the process $a + \ol{b} \col a \cpar b \cpar \ol{a}$ is pivotable as it conforms to the pivot policy with $\pi \Vdash a \ordpre \ol{b}$ and $\pi \Vdash \ol{a} \indep{} b$. Here, the choice between synchronising on $b$ or on $\ol{a}$ is uniquely resolved by the first thread $a + \ol{b} \col a$ which, being offered $b$ and $\ol{a}$ concurrently by its environment, takes the synchronisation $a \cpar \ol{a}$ rather than $\ol{b} \col a \cpar a$.
\end{example} 
}

Non-pivotable processes, in general, may block because of two threads entering into a circular precedence deadlock with each other.

\begin{example} 
  For example, recall the processes $P \eqdef a \col b \cseq A + b$ and $Q \eqdef \ol{b} \col \ol{a} \cseq B + \ol{a}$ from Example~\ref{ex:binary-blocking}.
  Their parallel composition $P \cpar Q$  creates a deadlock under (even weak) enabling because of the contradicting precedences in the two possible interactions $a \col b \cpar \ol{a}$ and $b \cpar \ol{b} \col \ol{a}$.
  The deadlock is reflected in the policy $\pi$ of $P$ satisfying $b \ordpre a \in \pi$ and $\ol{a} \ordpre \ol{b} \in \pi$ which is not a pivot policy that would instead require $a \indep b \in \pi$ or $\ol{a} \indep \ol{b} \in \pi$. 
\qed
\end{example}

\longshort{\version}{}
{Input-scheduled processes 
 are processes in which choices between distinct actions need to be resolved only between receiver actions and clocks. In other words, each thread executes its send actions always deterministically, rather than in a preemption context. 
The policy $\pi_{is}$ 
forces all output actions to be concurrently independent from each other. On the other hand, the policy permits a process to introduce arbitrary priorites between receiver actions and clocks.
}

\longshort{\version}{}{%
Processes conformant to $\pi_{is}$ are also called \textit{input-scheduled}.
Note that $\pi_{is}$ is a pivot policy and thus every input-scheduled process is pivotable. }

\longshort{\version}{}
{%
\begin{proposition}
 Suppose $P$ is input-scheduled and $P \Derives{\alpha}{H}{R} Q$ for some $\alpha$, $H$, $R$, $Q$. Then: 
 \begin{enumerate}
 \item If $\alpha \in \coA$ then $H \subseteq \{ \alpha, \tau \}$
 \item If $\alpha \in \A \cup \C \cup \{ \tau \}$ then $H \subseteq \A \cup \C \cup \{ \tau \}$.
 \end{enumerate}
\label{prop:input-scheduled}
\end{proposition}

\begin{proof}[Of Prop.~\ref{prop:input-scheduled}]
 Obvious by definition of conformance and the construction of $\pi_{is}$.
\end{proof}

Finally, the following definition is relevant for creating c-coherent processes under action restriction (App.~\ref{sec:restriction}).

\begin{definition}
 A policy $\pi$ is called \emph{precedence-closed} for $L \subseteq \L$ if $\ell_1 \in L$ and $\pi \Vdash \ell_1 \ordpre \ell_2$ implies $\ell_2 \in L$.
\label{def:dep-closed}
\end{definition}
}

Reductions of pivotable processes never block because of two threads entering into a circular precedence deadlock with each other. \longshort{\version}{}{Therefore, it suffices to check enabling relative to the concurrent environment in which a rendez-vous synchronisation is happening.}
One can show that for pivotable processes of at most two threads, the three operational semantics of \ccs (admissible), of \ccsp (weak enabling) and our \ccslm (c-enabling) all coincide. 
This is a consequence of 
\longshort{\version}{the fact (Lem.~\ref{lem:pivot-no-local-block-1} and \ref{lem:pivot-no-local-block-2}) that}{%
the following Lem.~\ref{lem:pivot-no-local-block-1}. It states that for c-coherent pivotable processes,}
the race test in the rule $(\ComR)$ is never introducing the silent action $\tau$ into the blocking sets. 
\longshort{\version}{}{%
\begin{lemma} \mbox{}
 Let $P_1 : \pi$ and $P_2 : \pi$ be c-coherent for the same pivot policy $\pi$ and $\ell \in \L$ such that
 $ 
 P_1 \Derives{\ell}{H_1}{R_1} P_1'
 \text{ and }
 P_2 \Derives{\ol{\ell}}{H_2}{R_2} P_2'.
 $ 
 Then, $H_2 \cap \oliA(P_1) \subseteq \{\ol{\ell}\}$ and $H_1 \cap \oliA(P_2) \subseteq \{\ell\}$.
\label{lem:pivot-no-local-block-1}
\end{lemma}
%
\begin{proof}
 Let $\ell \in \L$ and the transitions be given as in the statement of the Lemma. Suppose $\alpha \in H_2$ and $\alpha \in \oliA(P_1) \subseteq \L$, in particular $\alpha \neq \tau$. The former means $\pi \Vdash \alpha \ordpre \ol{\ell}$ by Def.~\ref{def:coherence}(1). Since $\pi$ is pivot we must thus have $\pi \Vdash \ol{\alpha} \indep{} \ell$, specifically $\pi \nVdash \ol{\alpha} \ordpre \ell$, i.e., $\ol{\alpha}\not\in H_1$, again by Def.~\ref{def:coherence}(1).
 But since $\ol{\alpha} \in \iA(P_1)$ there is a transition 
$ 
P_1 \Derives{\ol{\alpha}}{H}{E} Q.
$ 
Since also $\pi \nVdash \ell \ordpre \ol{\alpha}$, it follows $\ell \not\in H$ by Def.~\ref{def:coherence}(1). Now both $\ol{\alpha} \not\in H_1$ and $\ell \not\in H$ mean that the c-actions $\ell \col H_1[R_1]$ and $\ol{\alpha}\col H[E]$ are interference-free. Note that $\{ \ell, \ol{\alpha} \} \subseteq \L$.
 Thus, the \myb{c-coherence} Def.~\ref{def:coherence} for $P_1$, implies $\ol{\alpha} = l$ or $\ol{\alpha} \in \iA(R_1)$. The latter is impossible, because 
 $H_2 \cap \oliA(R_1) \subseteq H_2 \cap \oliA(R_1 \cpar R_2) = \eset$ by assumption. Therefore we have $\alpha = \ell$ as desired. The inclusion $H_1 \cap \oliA(P_2) \subseteq \{\ell\}$ is obtained by a symmetric argument.
\end{proof}
}
In general, pivotable processes may consist of three or more interacting threads, and  blocking may occur and the semantics of \ccs, \ccsp and \ccslm are different. 
However, one can show that the blocking of a synchronisation is always due to the external environment. 
In other words, the computation of $race(P,Q)$ in rule $(\ComR)$ becomes redundant. Two single-threaded pivotable processes never block each other. 
\longshort{\version}{}
{
\begin{proof}[Of Lem.~\ref{lem:coherent-pivot-coincidence}]
 Consider an admissible transition of $P_1 \cpar P_2$ where $P_1$ and $P_2$ are single-threaded. Firstly, by Lem.~\ref{lem:prio-basic-1}, each of the single-threaded processes can only generate c-actions $\ell_i \col H_i[R_i]$ where $\tau \not\in H_i$ and $R_i = \zero$. Such transitions are always c-enabled for trivial reasons. Further, any rendez-vous synchronisation of such an $\ell_1 \col H_1[R_1]$ and $\ell_2 \col H_2[R_2]$ generates a c-action $\tau \col (H_1 \cup H_2 \cup B)[R_1 \cpar R_2]$ with $R_1 \cpar R_2 = \zero$. 
 Because of Lem.~\ref{lem:pivot-no-local-block-1} and \ref{lem:pivot-no-local-block-2}, the set
\[B = \{\tau \mid H_1 \cap \oliA(P_2) \subseteq \{\ell\} \text{ or } H_2 \cap \oliA(P_1) \subseteq \{\ol{\ell}\}\}\]
must be empty. Thus, the reduction $\ell_1 \cpar \ell_2$ is c-enabled, too. 
\end{proof}
}
\longshort{\version}{}
{
\begin{example}[On coherence, pivotable and policies]
The process $P_1 \pdef a \cpar \ol{b} \cpar \ol{a} \col b + b$ is c-coherent and pivotable with a policy $\pi$ such that $\pi \Vdash b \ordpre \ol{a}$ and $\pi \Vdash \ol{b} \indep a$. The rendez-vous synchronisation $\tau = a \cpar \ol{a}$ 
 is blocked by the presence of the offering $\ol{b}$ from a \textit{third} thread. 
 We have
$ 
P_1 \Derives{\tau}{\{b\}}{\ol{b}} \ol{b},
$ 
which is not weakly enabled, because $\{b\} \cap \oliA(\ol{b}) \neq \eset$. The blocking is due to the reaction environment $\ol{b}$ of the transition.  
 For contrast, the process $P_2 \pdef (a + \ol{b}) \cpar (\ol{a} \col b + b)$ is pivotable but not c-coherent.
 In our semantics, the synchronisation of $a$ and $\ol{a} \col b$ in $P_2$ is blocked by the side-condition of rule $(\ParR_3)$ which introduces $\tau$ into the blocking set
$ 
P_2 \Derives{\tau}{\{b,\tau\}}{\zero} \zero
$ 
because $\{ b \} \cap \oliA(a + \ol{b}) \not\subseteq \{ \ol{a} \}$. This transition is not weakly enabled, because of $ \{b,\tau\} \cap (\iA(\zero) \cup \{ \tau \}) \neq \eset$. We can make the pivotable $P_1$ also c-coherent by forcing a sequential ordering between $a$ and $\ol{b}$ in the first thread, say $P_3 \pdef a \cseq \ol{b} \cpar (\ol{a} \col b + b)$, then the synchronisation is even strongly enabled
$ 
P_3 \Derives{\tau}{\{b\}}{\zero} \ol{b}.
$ 
If we introduce a precedence into the first thread, say $P_4 \pdef (a \col b + \ol{b}) \cpar (\ol{a} \col b + b)$, then we are also c-coherent but no longer pivotable. Like for $P_2$ the synchronisation $\tau = a \cpar \ol{a}$ is blocked and not weakly enabled, because of the introduction of $\tau$ into the blocking set by the side condition of $\ParR_3$. \qed
\end{example} 
}

\medskip 

\longshort{\version}
{Pivotable process cannot offer a clock action and exhibit reduction at the same time. This means that in this class of processes, clocks and reductions are sequentially scheduled.}
{The following proposition says that a pivotable process cannot offer a clock action and exhibit another clock or a rendez-vous synchronisation at the same time. This means that in this class of processes, clocks and reductions are sequentially scheduled. 
}

\begin{proposition}[Clock Maximal Progress]
 \longshort{\version}
 {Suppose $P$ is c-coherent and pivotable. If $\sigma \in \iA(P)$ then $P$ is in normal form, i.e., there is no reduction $P \fstep{\tau} P'$.}
 {Suppose $P$ is c-coherent and pivotable and $\sigma \in \iA(P)$. Then, for all $\ell \in \L$ with $\ell \neq \sigma$ we have $\ell \not\in \iA(P)$ or $\ol{\ell} \not\in \iA(P)$. In particular, $P$ is in normal form, i.e., there is no reduction $P \fstep{\tau} P'$.} 
\label{prop:maximal-progress}
\end{proposition}
\longshort{\version}{}
{
\begin{proof}[Of Prop.~\ref{prop:maximal-progress}]
 The proof proceeds by contradiction. Let $P : \pi$ for pivot policy $\pi$. Suppose $\ell \in \L$, $\ell \neq \sigma$ and
 $ 
 P \Derives{\sigma}{H}{\zero} P_1 \text{ and }
 P \Derives{\ell}{N}{F} P_2 \text{ and }
 P \Derives{\ol{\ell}}{M}{G} P_2.
 $ 
By Prop.~\ref{prop:clock-interference} we infer $\sigma \in N$ or $\ell \in H$, i.e., Conformance Def.~\ref{def:conformance} we have $\pi \Vdash \sigma \ordpre \ell$ or $\pi \Vdash \ell \ordpre \sigma$. Hence, $\pi \nVdash \sigma \indep{} \ell$. For the same reason we have $\pi \nVdash \sigma \indep{} \ol{\ell}$. 
 However, this contradicts the pivot property of $\pi$, which requires $\pi \Vdash \sigma \indep{} \ell$ or $\pi \Vdash \sigma \indep{} \ol{\ell}$. Finally, if $P \fstep{\tau} P'$ then the reduction must arise from a rendez-vous synchronisation inside $P$, i.e., there is $\ell \in \R$ with $\ell \in \iA(P)$ and $\ol{\ell} \in \iA(P)$. But then $\sigma \in \iA(P)$ is impossible as we have just seen. 
\end{proof}
}
Prop.~\ref{prop:maximal-progress} can be seen as saying that in pivotable processes all clocks take lowest precedence. As a result, clock transitions satisfy \textit{maximal progress}, i.e., a clock is only enabled on normal forms, i.e., if there is no reduction possible. Thus clocks behave like time steps in the standard timed process algebras \cite{TPA,CleavelandLM97}. More importantly (Thm.~\ref{def:coherence}), all reductions are confluent. So, a pivotable process, no matter in which order the reductions are executed, when it terminates, it computes a unique normal form. At this point it either stops if there is no clock possible, or it offers a clock step (pausing) to a continuation process from which a new normal form is produced. In this way, coherent pivotable processes correspond to synchronous streams.

\medskip 

It remains to be seen how we can obtain c-coherent processes systematically by construction. The following Thm.~\ref{thm:summary-coherence-closure} identifies a set of closure operators for c-coherent processes based on policy-conformance. 
This is our main theorem that summarises Propositions~\ref{prop:zero-coherent}--\ref{prop:restrict-coherent} (Appendix), which
\longshort{\version}{}{
cover idling and prefixes (SSec.~\ref{sec:stop-prefix}), summation (SSec.~\ref{sec:summation}), parallel composition (SSec.~\ref{sec:parallel}), repetition (SSec.~\ref{sec:repetition}), hiding (SSec.~\ref{sec:hiding}) and restriction (SSec.~\ref{sec:restriction}).
}
provide in \ccslm a generalisation of Milner's Confluence Class, Prop.~15-17 in~\cite{Milner:CCS}.

\begin{theorem}[Milner's Confluence Class Generalised]
\mbox{}
 \begin{bracketenumerate} 
   \item \textit{(Idling)} $\zero \cnf \pi$ for all policies $\pi$.
   \item \textit{(Rendezvous Prefix)} If $Q \cnf \pi$, then for every $\ell \in \R$ we have $\ell \col H \cseq Q \cnf \pi$ if $\ell \in H \subseteq \{ \ell' \mid \ell' \ordpre \ell \in \pi\}$.
   \item \textit{(Clock Prefix)} If $Q \cnf \pi$ and clock $\sigma \in \C$, then $\sigma \col H \cseq Q \cnf \pi$ if $H \subseteq \{ \beta \mid \beta \ordpre \sigma \in \pi \}$. 
   \item \textit{(Choice)} 
     \mrev{Let $Q \cnf \pi_1$ and $R \cnf \pi_2$ be c-coherent and for all pairs of initial transitions 
     $ 
     Q \Derives{\alpha_1}{H_1}{F_1} Q' \text{ and }
      R \Derives{\alpha_2}{H_2}{F_2} R'
      $ 
      we have $\alpha_1 \neq \alpha_2$ as well as $\alpha_1 \in H_2$ or $\alpha_2 \in H_1$. 
      Then $Q + R$ is c-coherent for $\pi$ if $\pi_1 \subseteq \pi$ and $\pi_2 \subseteq \pi$.}{corrected statement} 
   \item \textit{(Parallel)} If $Q \cnf \pi$ and $R \cnf \pi$ where $\pi \restrict \R$ is a pivot policy, then $(Q \cpar R) \cnf \pi$. 
  \item \textit{(Hiding)} If $Q \cnf \pi$ and $\sigma \ordpre \ell \not\in \pi$ for all $\sigma \in L$ and $\ell \in \pi$, then $Q \hide L \cnf \pi$. 
  \item \textit{(Restriction)} If $Q \cnf \pi$ and $\pi$ precedence-closed for $L \cup \ol{L}$, then $({Q}\restrict{L}) \cnf \pi \restrict L$.
 \end{bracketenumerate} 
\label{thm:summary-coherence-closure} 
\end{theorem}
%

The theorem is the main result of the paper, namely that under a suitable ``policy decoration'', a process can be run deterministically in the scheduled SOS semantics of \ccslm.  

\begin{example}
  To give a concrete example of the property stated by the above theorem, let us revisit Ex.~\ref{ex:abro}: the processes $\mathsf{A}$, $\mathsf{B}$, $\mathsf{R}$, $\mathsf{O}$, $\mathsf{T}$ are all c-coherent and conform to the pivot policy $\pi_{\mathsf{ABRO}}$ with precedences (ignoring the clock, so $\pi \setminus \R = \pi$)
  $k \ordpre \ell$ for $\ell \in \{ k, a, b, s, t, \ol{t} \}$, $\ell \ordpre \ell$ for $\ell \in \{ r, k, a, b, \ol{s}, t, \ol{o} \}$ and $s \ordpre \ol{t}$.
  Thus, by Thm.~\ref{thm:summary-coherence-closure}(5), their composition satisfies $\sA \cpar \sB \cpar \sR \cpar \sO \cpar \sT \cnf \pi_{\mathsf{ABRO}}$. Since the policy is also precedence-closed for $\{s, t, \ol{s}, \ol{t} \}$, we have $\ABRO \cnf \pi_{\mathsf{ABRO}}'$ by Thm.~\ref{thm:summary-coherence-closure}(3,7), where $\pi_{\mathsf{ABRO}}' = \pi_{\mathsf{ABRO}} \restrict \{s, t \}$.
\qed
\end{example} 

\paragraph*{On Milner' Confluence Class}

As we have noted above, \ccs corresponds to the unclocked and (blocking) free processes of \ccslm (Prop.~\ref{prop:free-ccs}). For these processes, admissible transitions and weakly/constructively enabled transitions coincide. 
The key result of Milner~\cite{Milner:CCS} (Chap.~11, Prop.~19) is that confluence is preserved by inaction $\zero$, prefixing $\ell \cseq P$ and 
\textit{confluent composition} defined as 
$ P \mathrel{|_L} Q = (P \cpar Q) \restrict L$
where $\L(P) \cap \L(Q) = \eset$ and $\ol{\L(P)} \cap \L(Q) \subseteq L \cup \ol{L}$. 
Thm.~\ref{thm:summary-coherence-closure}(5,7) is our replacement of Milner's notion of confluent composition $(P \cpar Q) \restrict L$.
Specifically, let us see how confluence for the fragment of \ccs processes follows from our more general results on coherence. From now on, for the rest of this section, we assume that all processes are unclocked, i.e., each prefix $\ell \col H \cseq Q$ has $H \cup \{ \ell \}\subseteq \A \cup \coA$. 

\medskip 

Recall, a process $P$ is \textit{discrete} if all prefixes occurring in $P$ are of form $\ell\col\{\ell\}\cseq Q$, i.e., they are self-blocking. Under c-enabling, a discrete process behaves like a \ccs process with the restriction that each action $\ell$ only synchronises if there is a matching partner $\ol{\ell}$ in a \textit{unique} other parallel thread. If the matching partner is not unique, then the scheduling blocks. For instance, $\ell \col \ell \cpar \ol{\ell}\col\ol{\ell} \cpar \ol{\ell}\col\ol{\ell}$ blocks because there are two concurrent senders $\ol{\ell}$ matching the receiver $\ell$. 
Now consider the fragment $\ccscc$ of discrete processes built from inaction $\zero$, self-blocking prefixes $\ell\col\{\ell\}\cseq Q$ and confluent composition $(P \cpar Q) \restrict L$.
Milner's result says that the processes in $\ccscc$ are confluent under admissible scheduling. 

\medskip 

To emulate Milner's result in our setting, we consider the \textit{sort} $\L(P)$ of a process $P$ as the admissible actions of a \textit{discrete} policy $\pi_P$ with only reflexive precedences. In other words, $\ell_1 \ordpre \ell_2 \in \pi_P$ iff $\ell_1 = \ell_2$.
It is easy to see that a discrete process always conforms to $\pi_P$. 
Also, discrete policies are always pivot policies (Def.~\ref{def:pivot-policy}) and dependency-closed for all label sets (Def.~\ref{def:pivot-policy}), as one shows easily. Thus, discrete processes, which are c-coherent for discrete policies, fullfill all conditions of the preservation laws for inaction $\zero$ (Thm.~\ref{thm:summary-coherence-closure}(1)), self-blocking action prefix $\ell\col\{\ell\}\cseq Q$ (Thm.~\ref{thm:summary-coherence-closure}(2)), parallel composition $P \cpar Q$ (Thm.~\ref{thm:summary-coherence-closure}(5)) and restriction $P \restrict L$ (Thm.~\ref{thm:summary-coherence-closure}(7)).
As a consequence and in particular, all processes in $\ccscc$ are c-coherent under strong enabling by our results.

\medskip 

Our final observation now is that, for Milner's fragment $\ccscc$, the transition semantics under admissible and c-enabling, as well as the notions of confluence and coherence, coincide.
Firstly, if $P$ is discrete then in every c-action $\ell \col H[E]$ executed by a derivative $Q$ of $P$ we must have $H = \{ \ell \}$. 
The restictions imposed by Milner's confluent composition
has the further effect that for every c-action $\ell \col H[E]$ we must have $\ell \not\in \olwiA(E)$. This is proven by induction on the structure of $P \in \ccscc$. Further, the race condition of $(\ComR)$ is vacuous, and thus generally $\tau \not\in H$. 
Hence, for $\ccscc$, every c-action $\ell \col H[E]$ is c-enabled since $H \cap (\olwilA{\ast}(E) \cup \{\tau\}) = \eset$. In other words, c-enabling coincides with plain admissibility, and thus with admissibility in \ccs. 

\medskip 

Notice that the class of processes c-coherent for sorts $\L(P)$ considered as discrete policies $\pi_P$ is larger than the confluent processes of $\ccs$. In particular, we can explain coherence of broadcast actions with repetitive prefixes. Hence, our results are more general even for the restricted class of discrete behaviours. 
Finally, to highlight our claim that we extend the classical result, we offer the following semantic definition of confluent behaviours and an associated conjecture to catch Milner's arguments in \ccslm.

\begin{definition}[Milner's Confluence Class]
  \textit{Milner's confluence class} $\ccscc \subseteq \P$ is the set of processes $P$ such that for all derivatives $Q$ of $P$, if $Q \Derives{\alpha}{H}{E} Q'$, the following holds:
  \begin{itemize}
    \item If $\alpha \neq \tau$ then $H \subseteq \{\alpha\}$ and $\alpha \not\in \olwilA{\ast}(E)$.
    \item If $\alpha = \ell \cpar \ol{\ell}$ then $H \subseteq \{ \ell, \ol{\ell}\}$ and $H \cap \olwilA{\ast}(E) = \{\}$. 
  \end{itemize}
\end{definition}

\begin{Conjecture}[Milner's Confluence Class is Coherent]
  For all $P, Q \in \ccscc$:
  \begin{itemize}
    \item $P$ is c-coherent iff $P$ is confluent. \item A transition of $P$ is c-enabled iff it is admissible.
    \item If  $\L(P) \cap \L(Q) = \eset$ and $\ol{\L(P)} \cap \L(Q) = L \cup \ol{L}$, then $(P \cpar Q) \restrict L \in \ccscc$. 
  \end{itemize}
\end{Conjecture}

\longshort{\version}{}
{
\subsubsection{Stop and Prefixes}
\label{sec:stop-prefix}

\begin{proposition}
 The process $\zero$ is c-coherent for any policy, i.e., $\zero : \pi$ for all $\pi$.
\label{prop:zero-coherent} 
\end{proposition}

\begin{proof}
 The terminating process $\zero$ is locally c-coherent for trivial reasons, simply because it does not offer any transitions at all. 
\end{proof}

\begin{proposition}
 Let process $Q$ be c-coherent for $\pi$. Then, for every action $\ell \in \R$ the prefix expression $\ell \col H \cseq Q$ is c-coherent for $\pi$, if $\ell \in H \subseteq \{ \ell' \mid \pi \Vdash \ell' \ordpre \ell\}$. 
\label{prop:channel-prefix-coherent} 
\end{proposition}

\begin{proof}
 A prefix expression $P = \ell \col H \cseq Q$ generates only a single transition by rule $(\ActR)$, so that the assumption
 \[ 
 P \Derives{\alpha_1}{H_1}{E_1} Q_1
 \text{ and }
 P \Derives{\alpha_2}{H_2}{E_2} Q_2
 \] 
 implies $\alpha_i = \ell$, $H_i = H$, $E_i = \zero$ and $Q_i = Q$. 
 It is easy to see that $\ell \col H \cseq Q$ conforms to $\pi$ if $H \subseteq \{ \beta \mid \pi \Vdash \beta \ordpre \alpha \}$. 
 Since $\alpha_i = \ell \in H = H_i$ and $\{ \alpha_1, \alpha_2 \} \subseteq \R$ 
 nothing needs to be proved. 
 Finally, note that the only immediate derivative of $\alpha \col H \cseq Q$ is $Q$ which is c-coherent for $\pi$ by assumption.
\end{proof}
We will later see (Prop.~\ref{def:seq-bang-prefix}) that a prefix $\ell \col H\cseq Q$ for $\ell \in \R$ can very well be coherent even if $\ell \not\in H$. Here we note that for clock prefixes, the blocking set is only constrained by the policy.

\begin{proposition}
 If $Q$ is c-coherent for $\pi$ and clock $\sigma \in \C$, then the prefix expression $\sigma \col H \cseq Q$ is c-coherent for $\pi$, too, if $H \subseteq \{ \beta \mid \pi \Vdash \beta \ordpre \sigma\}$. 
\label{prop:clock-prefix-coherent} 
\end{proposition}

\begin{proof}
 The argument runs exactly as in case of Prop.~\ref{prop:channel-prefix-coherent}. However, we do not need to require condition $\sigma \in H$ because if $\alpha_i = \sigma$ then $\alpha_1 = \alpha_2$, $Q_1 = Q_2$ and $\{ \alpha_1, \alpha_2 \} \not\subseteq \R$. So, no reconvergence is required. 
 Again, we observe that the only immediate derivative of $\sigma \col H \cseq Q$ is $Q$ which is c-coherent for $\pi$ by assumption.
\end{proof}

\subsubsection{Summation}
\label{sec:summation}

\begin{proposition}
 Let $P_1 : \pi_1$ and $P_2 : \pi_2$ be c-coherent and for all pairs of initial transitions 
 \[ 
 P \Derives{\alpha_1}{H_1}{F_1} P_1 \text{ and }
 P \Derives{\alpha_2}{H_2}{F_2} P_2
 \] 
 we have $\alpha_1 \neq \alpha_2$ as well as $\alpha_1 \in H_2$ or $\alpha_2 \in H_1$. 
 Then $P_1 + P_2$ is c-coherent for $\pi$ if $\pi_1 \subseteq \pi$ and $\pi_2 \subseteq \pi$. 
 \label{prop:sum-coherent} 
\end{proposition}
%
\begin{proof}
 Let $P = Q + R$ be given with $Q$ and $R$ locally c-coherent for $\pi_1$ and $\pi_2$, respectively and the rest as in the statement of the proposition. Every transition of $P$ is either a transition of $Q$ or of $R$ via rule $\SumR$. Conformance to $\pi$ directly follows from the same property of $Q$ or $R$, respectively. The proof of \myb{c-coherence} Def.~\ref{def:coherence} is straightforward exploiting that two competing but non-interfering transitions must either be both from $Q$ or both from $R$. More precisely, suppose 
 \[ 
 P \Derives{\alpha_1}{H_1}{F_1} P_1 \text{ and }
 P \Derives{\alpha_2}{H_2}{F_2} P_2
 \] 
 with $\alpha_1 \neq \alpha_2$ or $P_1 \neq P_2$ or $\{ \alpha_1, \alpha_2 \} \subseteq \R$ and $\alpha_i \not\in H_{3-i}$.
 Also, we assume that $\alpha_1 \col H_1[F_1]$ and $\alpha_2 \col H_2[F_2]$ are interference-free. 
 Now if the two transitions of $P$ are by $(\SumR_1)$ from $Q$ and by $(\SumR_2)$ from $R$, then $\alpha_1 \in \iA(Q)$ and $\alpha_2 \in \iA(R)$. But then by assumption, $\alpha_1 \neq \alpha_2$ and $\alpha_1 \in H_2$ or $\alpha_2 \in H_1$, which implies that $\alpha_1 \col H_1[F_1]$ and $\alpha_2 \col H_2[E_2]$ are not interference-free. This means, to prove \myb{c-coherence} Def.~\ref{def:coherence}(2) we may assume that both transitions of $P$ are either by $(\SumR_1)$ from $Q$ or both by $(\SumR_2)$ from $R$.
 Suppose both reductions of $P = Q + R$ are generated by rule $(\SumR_1)$, i.e., 
 \[ 
 Q \Derives{\alpha_1}{H_1}{F_1} Q_1 \text{ and }
 Q \Derives{\alpha_2}{H_2}{F_2} Q_2
 \] 
 where the transitions generated by $\SumR_1$ are 
 \begin{eqnarray*}
 Q + R \Derives{\alpha_1}{H_1}{F_1} Q_1 
 \text{ and } 
 Q + R \Derives{\alpha_2}{H_2}{F_2} Q_2 
 \label{eqn:prisum-dia-1}
 \end{eqnarray*}
 Under the given assumptions, the reconverging transitions of \myb{c-coherence} Def.~\ref{def:coherence}(2) are obtained directly from \myb{c-coherence} of $Q$, by induction. The case that both transitions are from $R$ by rule $(\SumR_2)$ is symmetrical. Finally, the immediate derivatives $P_1$ and $P_2$ of $P_1 + P_2$ are c-coherent for $\pi$, because they are c-coherent for $\pi_i$ by assumption and thus also for the common extension $\pi_i \subseteq \pi$.
\end{proof}

\subsubsection{Parallel}
\label{sec:parallel}

\begin{lemma}
 Let $Q : \pi$ and $R : \pi$ be such that $\pi \restrict \R$ is a pivot policy. Then $(Q \cpar R) \rcol \pi$.
\label{prop:cpar-coherent} 
\end{lemma}
%
\begin{proof}
 Let $P = Q \cpar R$ such that both $Q$ and $R$ are c-coherent for $\pi$ where $\pi \restrict \R$ is a pivot policy.
 For conformance consider a transition 
 \begin{eqnarray} 
 P \Derives{\ell}{H}{E} P'
 \label{eqn:par-step}
 \end{eqnarray}
with $\ell \in \L$ and $\ell' \in H$. First suppose $\ell \not\in \C$. Then this transition is not a synchronisation but a transition of either one of the sub-processes $Q$ by $(\ParR_1)$ or $P$ by $(\ParR_2)$ with the same blocking set $H$. In either case, we thus use \myb{c-coherence} $Q : \pi$ or $R : \pi$ to conclude $\pi \Vdash \ell' \ordpre \ell$. 
 If $\ell = \sigma$, then the transition in question is a synchronisation of $Q$ and $R$ via rule $(\ParR_3)$
 \[ 
 Q \cpar R \Derives{\sigma}{H_2 \cup N_2 \cup B_2}{\zero} Q_2 \cpar R_2
 \] 
 with $E = \zero$, $H = H_2 \cup N_2 \cup B_2$ and $P' = Q_2 \cpar R_2$ generated from transitions 
 \[ 
 Q \Derives{\sigma}{H_2}{\zero} Q_2 
 \text{ and }
 R \Derives{\sigma}{N_2}{\zero} R_2.
 \]
 and $B_2 = \{ \ell \cpar \ol{\ell} \mid H_2 \cap \oliA(R) \not\subseteq \{ \sigma \} \text{ or }N_2 \cap \oliA(Q) \not\subseteq \{\sigma\}\}$. 
 By Lem.~\ref{lem:pivot-no-local-block-1} and \ref{lem:pivot-no-local-block-2} we have $B_2 = \eset$. If we now take an arbitrary $\ell' \in H \cap \L$ then $\ell' \in H_2$ or $\ell' \in N_2$. In these cases, we obtain $\pi \Vdash \ell' \ordpre \ell$ by conformance Def.~\ref{def:conformance} from $Q : \pi$ or $R : \pi$. 

In the sequel, we tackle \myb{c-coherence} Def.~\ref{def:coherence}. First we observe that the immediate derivatives of $Q \cpar R$ are always parallel compositions $Q' \cpar R'$ of derivatives of $Q$ and $R$. Thus, we may assume that the immediate derivatives $Q' \cpar R'$ are c-coherent for $\pi$ because $P$ and $Q$ \myb{c-coherence} is preserved under transitions (i.e., by co-induction).
 Now suppose 
 \[ 
 P \Derives{\alpha_1}{H_1}{E_1} P_1 
 \text{ and }
 P \Derives{\alpha_2}{H_2}{E_2} P_2
 \] 
 such that $\alpha_1 \neq \alpha_2$ or $P_1 \neq P_2$ or $\{ \alpha_1, \alpha_2 \} \subseteq \R$ and $\alpha_i \not\in H_{3-i}$. Moreover, the c-actions $\alpha_1 \col H_1[E_1]$ and $\alpha_2 \col H_2[E_2]$ are interference-free.
We argue reconvergence by case analysis on the rules that generate these transitions. These could be $(\ParR_1)$, $(\ParR_2)$ or $(\ParR_3)$. First note that by Lem.~\ref{lem:pivot-no-local-block-1} and \ref{lem:pivot-no-local-block-2} we can drop the consideration of the synchronisation action $\ell \cpar \ol{\ell}$ for the blocking set in $(\ParR_3)$ altogether:
 \[ 
 \infer[(\ParR_{3})]
 {P \cpar Q \Derives{\ell \cpar \ol{\ell}}{H_1 \cup H_2}{R_1 \cpar R_2} P' \cpar Q'}
 {P \Derives{\ell}{H_1}{R_1} P' & Q \Derives{\ol{\ell}}{H_2}{R_2} Q' 
 }
 \]
 We will use the name $(\ParR_{3a})$ to refer to the instance of $(\ParR_3)$ with $\ell \in \R$ and the name $(\ParR_{3\sigma})$ for the instance with $\ell \in \C$ and we proceeds by case analysis.

\begin{itemize}

\item $\{(\ParR_1), (\ParR_2) \} \indep{} (\ParR_{3a})$. We start with the case where one of the reductions to $P_i$ is non-synchronising by $(\ParR_1)$ or $(\ParR_2)$ and the second reduction to $P_{3-i}$ is a synchronisation obtained by $(\ParR_{3a})$. By symmetry, it suffices to consider the case that the non-synchronising transition is by $(\ParR_1)$ from $Q$ and the synchronising transition by $(\ParR_{3a})$ from $R$, i.e., $\alpha_1 \not\in C$ and $\alpha_2 = \ell \cpar \ol{\ell}$ and
 \[ 
 Q \cpar R \Derives{\alpha_1}{H_1}{F_1 \cpar R} Q_1 \cpar R 
 \text{ and }
 Q \cpar R \Derives{\ell \cpar \ol{\ell}}{H_2' \cup N_2}{F_2 \cpar G_2} Q_2 \cpar R_2
 \] 
 with $E_1 = F_1 \cpar R$, $E_2 = F_2 \cpar G_2$, 
 $H_2 = H_2' \cup N_2$, $P_1 = Q_1 \cpar R$ and $P_2 = Q_2 \cpar R_2$,  
 where $\ell \in \R$ is a rendez-vous action such that 
 \[ 
 Q \Derives{\alpha_1}{H_1}{F_1} Q_1 \text{ and }
 Q \Derives{\ell}{H_2'}{F_2} Q_2 \text{ and } 
 R \Derives{\ol{\ell}}{N_2}{G_2} R_2.
 \] 
 We may safely assume $\alpha_1 \neq \ell \cpar \ol{\ell} = \tau$ or $Q_1 \cpar R \neq Q_2 \cpar R_2$. The third case $\{ \alpha_1, \alpha_2 \} \subseteq \R$ is excluded since $\alpha_2 = \tau$. Further, suppose the c-actions $\alpha_1 \col H_1[F_1\cpar R]$ and $\ell\cpar\ol{\ell}\col (H_2'\cup N_2)[F_2 \cpar G_2]$ are non-interfering. 
 Observe that $\{ \alpha_1, \ell\cpar\ol{\ell} \} \not\subseteq \R$ and also $\{ \alpha_1, \ell\cpar\ol{\ell} \} \cap \C = \eset$. Thus only a weak environment shift is needed. We wish to exploit \myb{c-coherence} of $Q$. The first observation is that since $\alpha_2 = \tau$, the assumptions on non-interference Def.~\ref{def:interference-free}(2) implies that $H_1 \cap \olwilA{\ast}(F_1 \cpar R) = \eset$. Since $\ell = \ol{\ol{\ell}} \in \oliA(R) \cap \L \subseteq \olwilA{\ast}(F_1 \cpar R)$ by Lem.~\ref{lem:prio-basic-1} and Lem.~\ref{lem:c-enabled-aux}, this implies that $\ell \not\in H_1$ up front.
 Next, if $\alpha_1 \neq \ell \cpar \ol{\ell}$, non-interference Def.~\ref{def:interference-free}(1) means 
 $\alpha_1 \not\in H_2 = H_2' \cup N_2$ and thus 
 $\alpha_1 \not\in H_2'$.
 The same is true if $\alpha_1 = \ell \cpar \ol{\ell} = \tau$, but then by non-interference Def.~\ref{def:interference-free}(2) we have
 \begin{eqnarray}
 (H_2' \cup N_2) \cap (\olwilA{\ast}(F_2 \cpar G_2) \cup \{ \tau \}) = 
 H_2 \cap (\olwilA{\ast}(E_2) \cup \{ \tau \}) = \eset.
 \label{eqn:aux-non-interf}
 \end{eqnarray}
 from which $\alpha_1 = \tau \not \in H_2'$ follows.
 This settles the first part of the non-interference property Def.~\ref{def:interference-free}(1) for the diverging transitions out of $Q$. 

 For the second part of non-interference, suppose $\alpha_1 = \tau$. Then the non-interference assumption~\eqref{eqn:aux-non-interf} 
 implies that $H_2' \cap \olwilA{\ast}(F_2 \cpar G_2) = \eset$.
 Now since $\olwilA{\ast}(F_2) \subseteq \olwilA{\ast}(F_2 \cpar G_2)$ 
 we conclude from this that $H_2' \cap \olwilA{\ast}(F_2) = \eset$. 
 This completes the proof that the c-actions $\alpha_1 \col H_1[F_1]$ and $\ell \col H_2'[F_2]$ must be non-interfering in all cases. 

Note that we always have $\alpha_1 \neq \ell$ or $\{ \alpha_1, \ell \}\subseteq \R$, and in the latter case also $\alpha_1 \not\in H_2'$ and $\ell \not\in H_1$ from the above. 
 Hence, we can use \myb{c-coherence} Def.~\ref{def:coherence} of $Q$
 and obtain processes $F_1'$, $F_2'$, $Q'$ with reconverging transitions
 \begin{eqnarray} 
 Q_1 \Derives{\ell}{H_2''}{F_2'} Q' 
 \text{ and }
 Q_2 \Derives{\alpha_1}{H_1'}{F_1'} Q' 
 \label{eqn:reconv-par}
 \end{eqnarray} 
 such that $H_2'' \subseteq H_2'$ and $H_1' \subseteq H_1$. 
 We now invoke $(\ParR_{3a})$ and $(\ParR_1)$ to obtain reconverging transitions
 \[ 
 Q_1 \cpar R \Derives{\ell \cpar \ol{\ell}}{H_2'' \cup N_2}{F_2' \cpar G_2} Q' \cpar R_2 
 \text{ and } 
 Q_2 \cpar R_2 \Derives{\alpha_1}{H_1'}{F_1' \cpar R_2} Q' \cpar R_2
 \] 
 such that $H_1' \subseteq H_1$ and $H_2'' \cup N_2 \subseteq H_2' \cup N_2$.
 Regarding the concurrent environments, if $\alpha_1 \in \R$ then the \myb{c-coherence} of $Q$ guarantees that 
 \[ 
 F_1 \Derives{\ell}{}{} F_1' 
 \text{ and }
 F_2 \Derives{\alpha_1}{}{} F_2'
 \]
 from which we infer 
 \[ 
 F_2 \cpar G_2 \Derives{\alpha_1}{}{} F_2' \cpar G_2
 \text{ and } 
 F_1 \cpar R \Derives{\ell\cpar\ol{\ell}}{}{} F_1' \cpar R_2.
 \] 
 If $\alpha_1 = \tau$ the above guarantee from the \myb{c-coherence} of $Q$ is weaker: We only have $F_1 \astep{\ell} F_1'$ and $F_2 \astep{\alpha_1} F_2'$.
 In any case we obtain the environment shifts
 \[ 
 F_2 \cpar G_2 \astep{\alpha_1} F_2' \cpar G_2
 \text{ and }
 F_1 \cpar R \astep{\ell\cpar\ol{\ell}} F_1' \cpar R_2,
 \] 
 as required.
 
 \item $(\ParR_1) \indep{} (\ParR_2)$: Suppose the two non-interfering transitions of $P = Q \cpar R$ are by $(\ParR_1)$ from $Q$ and by $(\ParR_2)$ from $R$. Thus, we are looking at transitions
 \[ 
 Q \Derives{\alpha_1}{H_1}{F_1} Q_1 
 \text{ and } 
 R \Derives{\alpha_2}{H_2}{F_2} R_2
 \] 
 for $\{\alpha_1, \alpha_2\} \subseteq \R \cup \{\tau\}$ combined via $(\ParR_1)$ and $(\ParR_2)$, respectively, to generate
 \[ 
 Q \cpar R \Derives{\alpha_1}{H_1}{F_1 \cpar R} Q_1 \cpar R
 \text{ and } 
 Q \cpar R \Derives{\alpha_2}{H_2}{Q \cpar F_2} Q \cpar R_2
 \] 
 with $E_1 = F_1 \cpar R$ and $E_2 = Q \cpar F_2$. Then we can directly construct reconverging transitions, applying $(\ParR_2)$ and $(\ParR_1)$, respectively, for
 \[ 
 Q_1 \cpar R \Derives{\alpha_2}{H_2}{Q_1 \cpar F_2} Q_1 \cpar R_2 
 \text{ and }
 Q \cpar R_2 \Derives{\alpha_1}{H_1}{F_1 \cpar R_2} Q_1 \cpar R_2.
 \]
 Obviously, $Q_1 \cpar R_2$ reduces to $Q_1 \cpar R_2$ and by $(\ParR_1)$ and $(\ParR_2)$ we also have the strong environment shifts
 \[ 
 F_1 \cpar R \Derives{\alpha_2}{}{} F_1 \cpar R_2 
 \text{ and } 
 Q \cpar F_2 \Derives{\alpha_1}{}{} Q_1 \cpar F_2
 \] 
 which completes the claim for we have, in particular, $F_1 \cpar R \astep{\alpha_2} F_1 \cpar R_2$ and $Q \cpar F_2 \astep{\alpha_1} Q_1 \cpar F_2$.
 Notice that in this case we do not need to assume that $Q$ or $R$ are c-coherent, i.e., the assumption that the c-actions $\alpha_i \col H_i[E_i]$ are interference-free.

\item $(\ParR_i) \indep{} (\ParR_i)$ for $i \in \{1,2\}$: Suppose both non-interfering and diverging reductions are by $(\ParR_1)$ from $Q$ (or symmetrically both from $R$ by $(\ParR_2)$)
 \[ 
 Q \cpar R \Derives{\alpha_1}{H_1}{F_1 \cpar R} Q_1 \cpar R
 \text{ and }
 Q \cpar R \Derives{\alpha_2}{H_2}{F_2 \cpar R} Q_2 \cpar R
 \]
 with $\{\alpha_1, \alpha_2\} \subseteq \R \cup \{\tau\}$ and $P_i = Q_i \cpar R$, $E_i = F_i \cpar R$, where 
 \[ 
 Q \Derives{\alpha_1}{H_1}{F_1} Q_1 
 \text{ and } 
 Q \Derives{\alpha_2}{H_2}{F_2} Q_2. 
 \] 
 with $\alpha_1 \neq \alpha_2$ or $Q_1 \cpar R \neq Q_2 \cpar R$ or $\{ \alpha_1, \alpha_2 \} \subseteq \R$ and $\alpha_i \not\in H_{3-i}$. This, in particular, implies $Q_1 \neq Q_2$.
 Further, assume 
 non-interference of $\alpha_i \col H_i[F_i \cpar R]$.
 We obtain non-interference of $\alpha_i \col H_i[F_i]$ by Prop.~\ref{prop:interference}.
 We can therefore apply \myb{c-coherence} Def.~\ref{def:coherence} of $Q$ and obtain processes $F_1'$, $F_2'$ and $Q'$ with reconverging transitions 
 \[ 
 Q_1 \Derives{\alpha_2}{H_2'}{F_2'} Q' \text{ and }
 Q_2 \Derives{\alpha_1}{H_1'}{F_1'} Q'
 \] 
 with $H_i' \subseteq H_i$ and such that
 \begin{eqnarray} 
 F_1 \astep{\alpha_2} F_1' 
 \text{ and }
 F_2 \astep{\alpha_1} F_2'
 \label{eqn:par-aux-1}
 \end{eqnarray}
 and if $\{\alpha_1, \alpha_2\} \subseteq \R$ and $\alpha_1 \neq \alpha_2$ or $Q_1 \neq Q_2$, more strongly 
 \begin{eqnarray} 
 F_1 \Derives{\alpha_2}{}{} F_1' \text{ and }
 F_2 \Derives{\alpha_1}{}{} F_2'.
 \label{eqn:par-aux-2} 
 \end{eqnarray}
 Reapplying $(\ParR_1)$ we construct the transitions
 \[ 
 Q_1 \cpar R \Derives{\alpha_2}{H_2'}{F_2' \cpar R} Q' \cpar R
 \text{ and }
 Q_2 \cpar R \Derives{\alpha_1}{H_1'}{F_1' \cpar R} Q' \cpar R.
 \]
 Obviously, by $(\ParR_1)$ and $(\ParR_2)$ we also obtain transitions 
 \[ 
 F_1 \cpar R \astep{\alpha_2} F_1' \cpar R 
 \text{ and }
 F_2 \cpar R \astep{\alpha_1} F_2' \cpar R 
 \] 
 from~\eqref{eqn:par-aux-1} or, more strongly,
 \[ 
 F_1 \cpar R \Derives{\alpha_2}{}{} F_1' \cpar R 
 \text{ and }
 F_2 \cpar R \Derives{\alpha_1}{}{} F_2' \cpar R 
 \] 
 from~\eqref{eqn:par-aux-2} if $\{\alpha_1, \alpha_2\} \subseteq \R$ and $\alpha_1 \neq \alpha_2$ or $Q_1 \cpar R \neq Q_2 \cpar R$.
 
\item $(\ParR_{3a}) \indep{} (\ParR_{3a})$:
 This is the most interesting case in which we are going to exploit the assumption that $Q$ and $R$ are c-coherent for the same pivot policy $\pi$. Without loss of generality we assume $P = Q \cpar R$ and both the reductions 
 \[ 
 Q \cpar R \Derives{\alpha_i}{H_i \cup N_i}{E_i} Q_i \cpar R_i
 \] 
 with $P_i = Q_i \cpar R_i$ are $\tau$-actions generated by the communication rule $(\ParR_3)$, i.e. $\alpha_1 = \ell_1 \cpar \ol{\ell_1} = \tau = \ell_2 \cpar \ol{\ell_2} = \alpha_2$ for actions $\ell_i \in \R$.
 Since $\alpha_1 = \alpha_2$ and $\{ \alpha_1, \alpha_2 \} = \{ \tau \} \not\subseteq \R$ we only need to prove confluence the case that $P_1 \neq P_2$, i.e., $Q_1 \neq Q_2$ or $R_1 \neq R_2$.
 Moroever, we only need a weak environment shift, because $\{ \alpha_1, \alpha_2 \} = \{ \tau \} \not\subseteq \R$ and $\{ \alpha_1, \alpha_2 \} \cap \C = \eset$.
 Also, non-interference Def.~\ref{def:interference-free}(2) means that $(H_i \cup N_i) \cap (\olwilA{\ast}(E_i) \cup \{ \tau \}) = \eset$, i.e.,
 \begin{eqnarray}
 H_i \cap (\olwilA{\ast}(E_i) \cup \{ \tau \}) = \eset 
 \text{ and } 
 N_i \cap (\olwilA{\ast}(E_i) \cup \{ \tau \}) = \eset.
 \label{eqn:interf-assumption} 
 \end{eqnarray}
 Thus, we are looking at transitions
 \[ 
 Q \Derives{\ell_i}{H_i}{F_i} Q_i \text{ and } 
 R \Derives{\ol{\ell}_i}{N_i}{G_i} R_i.
 \]
 We claim that the two c-actions $\ell_i \col H_i[F_i]$ of $Q$, and likewise the two c-actions $\ol{\ell}_i \col N_i[G_i]$ of $R$, are interference-free. 
 Note that since $\ell_i \neq \tau$, we only need to consider the first part of Def.~\ref{def:interference-free}(1).
 We first show that at least one of these pairs of c-actions must be interference free, because of \myb{c-coherence} and conformance to $\pi$ where $\pi \restrict \R$ is a pivot policy. It will then transpire that the other must be interference-free, too, because of the assumption~\eqref{eqn:interf-assumption}. Obviously, if $\ell_1 = \ell_2$ then both $\ell_i \col H_i[F_i]$ of $Q$ and $\ol{\ell}_i \col N_i[G_i]$ are interference-free for trivial reasons. 
 We distinguish two cases depending on whether $\ell_1$ and $\ell_2$ are identical or not. So, suppose $\ell_1 \neq \ell_2$. Both $Q$ and $R$ are c-coherent for the same policy $\pi$ and $\pi \restrict \R$ is a pivot policy. Thus, $\pi \Vdash \ell_1 \indep{} \ell_2$ or $\pi \Vdash \ol{\ell}_1 \indep{} \ol{\ell}_2$, by Def.~\ref{def:pivot-policy}. 
 Hence, $\ell_i \not\in H_{3-i}$ or $\ol{\ell}_i \not\in N_{3-i}$ by conformance Def.~\ref{def:conformance}, which means that at least one of the pairs of c-actions $\ell_i \col H_i[F_i]$ of $Q$ or the two c-actions $\ol{\ell}_i \col N_i[G_i]$ must be interference-free by policy. Now we argue that if $\ell_1 \neq \ell_2$ and one of the pairs $\ell_i \col H_i[F_i]$ or $\ol{\ell}_i \col N_i[G_i]$ is interference-free the other must be, too, and so we can close the diamond for both $Q$ and $R$. By symmetry, it suffices to run the argument in case that the c-actions $\ell_i \col H_i[F_i]$ of $Q$ are non-interfering. Then, we have $\{ \ell_1, \ell_2 \} \subseteq \R$ and $\ell_i \not\in H_{3-i}$. 
 Hence, we invoke \myb{c-coherence} Def.~\ref{def:coherence} of $Q$ obtaining processes $F_1'$ and $F_2'$ with transitions (strong environment shift)
 \[ 
 F_1 \Derives{\ell_2}{}{} F_1' 
 \text{ and }
 F_2 \Derives{\ell_1}{}{} F_2'
 \]
 and a process $Q'$ with transitions
 \[ 
 Q_1 \Derives{\ell_2}{H_2'}{F_2'} Q' 
 \text{ and }
 Q_2 \Derives{\ell_1}{H_1'}{F_1'} Q'.
 \]
 such that $H_i' \subseteq H_i$. 
 This means $\ell_i \in \iA(F_{3-i}) \subseteq \iA(F_{3-i} \cpar G_{3-i}) = \iA(E_{3-i}) \subseteq \wilA{\ast}(E_{3-i})$ and therefore $\ol{\ell}_i \not\in N_{3-i}$ because of~\eqref{eqn:interf-assumption}. Thus, as claimed, the c-actions $\ol{\ell}_i \col N_i[G_i]$ are interference-free, too. Again $\{ \ol{\ell}_1, \ol{\ell}_2 \} \subseteq \R$ and $\ol{\ell}_i \not\in N_{3-i}$ entitles us to exploit \myb{c-coherence} Def.~\ref{def:coherence} for $R$ 
 from which we obtain processes $G_1'$ and $G_2'$ with transitions 
 (again, a strong environment shift)
 \[ 
 G_1 \Derives{\ol{\ell}_2}{}{} G_1' 
 \text{ as well as }
 G_2 \Derives{\ol{\ell}_1}{}{} G_2'
 \]
 as well as a process $R'$ with 
 \[ 
 R_1 \Derives{\ol{\ell}_2}{N_2'}{G_2'} R' 
 \text{ and } 
 R_2 \Derives{\ol{\ell}_1}{N_1'}{G_1'} R'.
 \]
 with $N_i' \subseteq N_i$.
 Finally, we invoke $(\ParR_3)$ to obtain the re-converging reductions
 \[ 
 Q_1 \cpar R_1 \Derives{\ell_2 \cpar \ol{\ell_2}}{H_2' \cup N_2'}{F_2' \cpar G_2'} Q' \cpar R' 
 \text{ and } 
 Q_2 \cpar R_2 \Derives{\ell_1 \cpar \ol{\ell_1}}{H_1' \cup N_1'}{F_1' \cpar G_1'} Q' \cpar R'
 \] 
 with $H_i' \cup N_i' \subseteq H_i \cup N_i$,
 and also 
 \[ 
 F_{i} \cpar G_{i} \Derives{\ell_{3-i} \cpar \ol{\ell_{3-i}}}{}{} F_{i}' \cpar G_{i}'
 \] 
 as required, since this implies the weaker form
 $F_{i} \cpar G_{i} \astep{\tau} F_{i}' \cpar G_{i}'$ of environment shift. It remains to consider the case that $\ell_1 = \ell = \ell_2$. By assumption $P_1 \neq P_2$ and so we must have $Q_1 \neq Q_2$ or $R_1 \neq R_2$. Suppose $Q_1 \neq Q_2$. If $R_1 \neq R_2$ we apply a symmetric argument with the role of $\ell_i$ and $\ol{\ell}_i$ interchanged. 
 As observed above, the c-actions $\ell_i \col H_i[F_i]$ are trivially interference-free according to Def.~\ref{def:interference-free}, because $\ell_1 = \ell_2$ and $\ell_i \neq \tau$. 
 Then, by \myb{c-coherence} Def.~\ref{def:coherence} applied to $Q$ we conclude that there are processes $F_1'$ and $F_2'$ with transitions
 \[ 
 F_1 \Derives{\ell}{}{} F_1' 
 \text{ and }
 F_2 \Derives{\ell}{}{} F_2'
 \]
 and a process $Q'$ with transitions (the strong environment shift arises here because we have $\ell \in \R$ and $Q_1 \neq Q_2$)
 \[ 
 Q_1 \Derives{\ell}{H_2'}{F_2'} Q' 
 \text{ and }
 Q_2 \Derives{\ell}{H_1'}{F_1'} Q'.
 \]
 such that $H_i' \subseteq H_i$. This means that $\ell \in \iA(F_{3-i}) \subseteq \iA(E_{3-i})$ and so $\ol{\ell} \not\in N_{3-i}$ because of $N_{3-i} \cap \olwilA{\ast}(E_{3-i}) = \eset$ as per~\eqref{eqn:interf-assumption}. 
 If $R_1 = R_2$ then for each $i \in \{1,2\}$, by \myb{c-coherence} Def.~\ref{def:coherence} (as $\ol{\ell} \not \in N_{3-i} \cup \C$) there is a transition 
 \[ 
 R_i \Derives{\ol{\ell}}{N_{3-i}'}{G_{3-i}'} R'
 \] 
 with $N_{i}' \subseteq N_{i}$ and $G_i = G_i'$ or $G_i \Derives{\ol{\ell}}{}{} G_i'$. Using these transitions we now recombine
 \[ 
 Q_i \cpar R_i 
 \Derives{\ell \cpar \ol{\ell}}{H_{3-i}' \cup N_{3-i}'}{F_{3-i}' \cpar G_{3-i}'} Q' \cpar R'
 \] 
 with $H_{i}' \cup N_{i}' \subseteq H_{i} \cup N_{i}$. Note that if $G_i = G_i'$ then 
 \[ 
 F_i \cpar G_i \Derives{\ell}{}{} F_i' \cpar G_i = F_i' \cpar G_i'
 \] 
 or if $G_i \Derives{\ol{\ell}}{}{} G_i'$ then 
 \[ 
 F_i \cpar G_i \Derives{\ell\cpar\ol{\ell}}{}{} F_i' \cpar G_i'
 \] 
 This means that, in all cases, $F_i \cpar G_i \astep{\ell\cpar\ol{\ell}} F_i' \cpar G_i'$ as desired. If $R_1 \neq R_2$ we can apply \myb{c-coherence} Def~\ref{def:coherence} to $R$ in the same way and close the diamond with
 processes $G_1'$ and $G_2'$ and transitions (again, the strong environment shift arises here because we have $\ol{\ell} \in \R$ and $R_1 \neq R_2$)
 \[ 
 G_1 \Derives{\ol{\ell}}{}{} G_1' 
 \text{ and }
 G_2 \Derives{\ol{\ell}}{}{} G_2'
 \]
 as well as a process $R'$ with 
 \[ 
 R_1 \Derives{\ol{\ell}}{N_2'}{G_2'} R' 
 \text{ and } 
 R_2 \Derives{\ol{\ell}}{N_1'}{G_1'} R'.
 \]
 with $N_i' \subseteq N_i$. Finally, we invoke $(\ParR_3)$ to obtain the re-converging reductions
 \[ 
 Q_1 \cpar R_1 \Derives{\ell \cpar \ol{\ell}}{H_2' \cup N_2'}{F_2' \cpar G_2'} Q' \cpar R' 
 \text{ and } 
 Q_2 \cpar R_2 \Derives{\ell \cpar \ol{\ell}}{H_1' \cup N_1'}{F_1' \cpar G_1'} Q' \cpar R'
 \] 
 with $H_i' \cup N_i' \subseteq H_i \cup N_i$, 
 and also 
 \[ 
 F_{i} \cpar G_{i} \Derives{\ell \cpar \ol{\ell}}{}{} F_{i}' \cpar G_{i}'
 \] 
 which also implies $F_{i} \cpar G_{i} \astep{\ell \cpar \ol{\ell}} F_{i}' \cpar G_{i}'$
 as required. 
 
\item $(\ParR_{3a}) \indep{} (\ParR_{3\sigma})$ and, by symmetry, the induction case $(\ParR_{3\sigma}) \indep{} (\ParR_{3a})$: We assume $P = Q \cpar R$ and the non-interfering, diverging reductions 
 \[ 
 Q \cpar R \Derives{\ell\cpar\ol{\ell}}{H_1 \cup N_1 \cup B_1}{E_1} Q_1 \cpar R_1
 \text{ and }
 Q \cpar R \Derives{\sigma}{H_2 \cup N_2 \cup B_2}{\zero} Q_2 \cpar R_2
 \] 
 for $\ell \in \L$, $\sigma \in \C$, with $P_i = Q_i \cpar R_i$ and $E_i = F_i \cpar G_i$ are generated by instances $(\ParR_{3a})$ and $(\ParR_{3\sigma})$ of the communication rules, respectively. These transitions arise from
 \[ 
 Q \Derives{\ell}{H_1}{F_1} Q_1 
 \text{ and }
 R \Derives{\ol{\ell}}{N_1}{G_1} R_1
 \text{ and }
 Q \Derives{\sigma}{H_2}{\zero} Q_2 
 \text{ and }
 R \Derives{\sigma}{N_2}{\zero} R_2.
 \]
 We assume that the c-actions $(\ell\cpar\ol{\ell}) \col (H_1 \cup N_1 \cup B_1)[E_1]$ and $\sigma \col (H_2 \cup N_2 \cup B_2)[\zero]$ are non-interfering.
 Note that $\ell \neq \sigma$ and $\ol{\ell} \neq \sigma$.
 First, observe that by Lem.~\ref{lem:pivot-no-local-block-1} and \ref{lem:pivot-no-local-block-2}, $B_1 = \eset = B_2$.
 Also, $\ell \cpar \ol{\ell} = \tau \neq \sigma$ and so the non-interference assumption 
 means that $\sigma \not\in H_1 \cup N_1 \cup B_1$ as well as $\tau \not \in H_2 \cup N_2 \cup B_2$. The other part of the non-interference, from Def.~\ref{def:interference-free}(1), does not provide any extra information. 

 We claim that $\ell \in H_2$. By contradiction, if $\ell \not\in H_2$ then, since $\sigma \not\in H_1$, the two c-actions $\ell \col H_1[F_1]$ and $\sigma \col H_2[\zero]$ of $Q$ would be interference-free according to Def.~\ref{def:interference-free}. But then, since $\{ \ell, \sigma \} \cap \C \neq \eset$, \myb{c-coherence} Def.~\ref{def:coherence} applied to $Q$ would imply 
 $\sigma \in \iA(F_1)$ and $\ell \in \iA(\zero)$. The latter, however, is impossible. Thus, as claimed, $\ell \in H_2$. Likewise, from $\sigma \not\in N_1$ we derive $\ol{\ell} \in N_2$ in the same fashion.
 
 Now we invoke the fact that $\tau \not \in B_2$, i.e. 
 $H_2 \cap \oliA(R) \subseteq \{\sigma\}$ and $N_2 \cap \oliA(Q) \subseteq \{\sigma\}$. But $\ol{\ell} \in \iA(R)$ and $\ell \in \iA(Q)$, whence we must have $\ell \not\in H_2$ and $\ol{\ell} \not\in N_2$. This is a contradiction.
 Hence, transitions obtained by rules $\ParR_{3a}$ and $\ParR_{3\sigma}$ are never interference-free.
 
\item $\{(\ParR_1), (\ParR_2) \} \indep{} (\ParR_{3\sigma})$: 
 As the representative case, we assume $P = Q \cpar R$ and the non-interfering, diverging reductions are 
 \[ 
 Q \cpar R \Derives{\alpha_1}{H_1}{F_1 \cpar R} Q_1 \cpar R 
 \text{ and }
 Q \cpar R \Derives{\sigma}{H_2 \cup N_2 \cup B_2}{\zero} Q_2 \cpar R_2
 \] 
 with $\alpha_1 \in \R \cup \{\tau\}$ so that $P_1 = Q_1 \cpar R$ and $P_2 = Q_2 \cpar R_2$ are generated by the rule $(\ParR_1)$ and $(\ParR_{3\sigma})$, respectively, from transitions 
 \[ 
 Q \Derives{\alpha_1}{H_1}{F_1} Q_1 
 \text{ and }
 Q \Derives{\sigma}{H_2}{\zero} Q_2 
 \text{ and }
 R \Derives{\sigma}{N_2}{\zero} R_2.
 \]
 where $\alpha_1 \neq \sigma$ and the c-actions $\alpha_1 \col H_1[F_1 \cpar R]$ and $\sigma \col (H_2 \cup N_2 \cup B_2)[\zero]$ are interference-free. First, Def.~\ref{def:interference-free}(1) implies $\alpha_1 \not\in H_2 \cup N_2 \cup B_2$ and $\sigma \not\in H_1$. Further, if $\alpha_1 = \tau$ then
 $(H_2 \cup N_2 \cup B_2) \cap (\wilA{\ast}(\zero) \cup \{ \tau \}) = \eset$, which in particular means $H_2 \cap (\wilA{\ast}(\zero) \cup \{ \tau \}) = \eset$. Therefore, the c-actions $\alpha_1 \col H_1[F_1]$ and $\sigma \col H_2[\zero]$ are interference-free. Note that $\{ \alpha_1, \sigma\} \cap \C \neq \eset$. Hence we can exploit \myb{c-coherence} Def.~\ref{def:coherence} for $Q$, obtaining the stronger form of environment shift. But this would imply $\alpha_1 \in \iA(\zero)$ which is impossible.
 So we find that transitions obtained by rules $\ParR_{i}$ and $\ParR_{3\sigma}$ are never interference-free in the sense of \myb{c-coherence}.

\item $(\ParR_{3\sigma}) \indep{} (\ParR_{3\sigma})$:
 We assume $P = Q \cpar R$ and the non-interfering, diverging reductions 
 $ 
 Q \cpar R \Derives{\sigma_1}{H_1 \cup N_1 \cup B_1}{\zero} Q_1 \cpar R_1
 \text{ and }
 Q \cpar R \Derives{\sigma_2}{H_2 \cup N_2 \cup B_2}{\zero} Q_2 \cpar R_2
$ 
 with $P_i = Q_i \cpar R_i$ and $E_i = F_i \cpar G_i$ are generated by the communication rule $(\ParR_{3\sigma})$ and $(\ParR_{3\sigma})$. Observing that $\{ \sigma_1, \sigma_2 \} \subseteq \R$ is impossible, here, we may assume $\sigma_1 \neq \sigma_2$ or $Q_1 \cpar R_1 \neq Q_2 \cpar R_2$. This can only hold true if $Q_1 \neq Q_2$ or $R_1 \neq R_2$. Furthermore, the two clock transitions of $Q_1 \cpar Q_2$ can be assumed to be interference-free.
 From this it follows that the two underlying clock transitions 
 $Q \Derives{\sigma}{H_1}{\zero} Q_1$ 
 and 
 $Q \Derives{\sigma }{H_2}{\zero} Q_2$ 
 and the likewise the transitions 
 $R \Derives{\sigma}{N_1}{\zero} R_1$
  and 
  $R \Derives{\sigma}{N_2}{\zero} R_2$ 
  are interference-free. If $Q_1 \neq Q_2$ then we could apply \myb{c-coherence} of $Q$ and obtain $\sigma \in \iA(\zero)$, if $R_1 \neq R_2$ the same follows from \myb{c-coherence} of $R$. This is impossible, whence the case can be excluded. 
 \end{itemize}\vspace{-5mm}
\end{proof}

\subsubsection{Repetition}
\label{sec:repetition}

For arbitrary continuation processes $A$, a regular prefix $a \col a \cseq A$ needs to be self-blocking to be c-coherent. Such a prefix cannot receive from two senders, for $a \col a \cseq A \cpar \ol{a} \cpar \ol{a}$ will block under weak enabling. However, if the receiver is willing to engage in action $a$ repeatedly, say if $A \pdef a \cseq A$ then we can lift the self-blocking. The process $A \cpar \ol{a} \cpar \ol{a}$ is c-coherent and will happily consume both $\ol{a}$ sender actions. Repetition is the universal way of implementing multi-sender and multi-receiver scenarios. Note that we define a new operator, representing recursion via parallel replication on the same channel $\ell$, that can be easily encoded in the current syntax.

\begin{definition}[Sequential Bang Prefix]
 For every process $P$, action $\ell \in \L$ and $H \subseteq \Act$, let $\bang{\ell} \col H \cseq P$ be the process defined by the 
 SOS rule 
 \[ 
 \infer[(\RepR)]
 {\bang{\ell} \col H \cseq P \Derives{\ell}{H}{P} P \cpar \bang{\ell} \col H \cseq P}
 {}
 \] 
 We use $\bang{\ell} \col H$ as an abbreviation for $\bang{\ell} \col H \cseq \zero$ and $\bang{\ell}$ for $\bang{\ell} \col \eset \cseq \zero$.
\label{def:seq-bang-prefix}
\end{definition}
The purpose of repetition is to replicate input and output prefixes so they can be consumed multiple times and by multiple threads. To see this, let us notice the difference in the blocking sets of a regular output prefix $\ol{a} \col \ol{a} \cseq A$ and its repetition $\bang{\ol{a}} \cseq A$. 
 In the former case, the transition 
 $ 
 \ol{a}\col \ol{a} \cseq A \Derives{\ol{a}}{\{\ol{a}\}}{\zero} A
 $ 
 includes the output label in the blocking set $\{ \ol{a} \}$, by $(\ActR_1)$. This reflects the fact that the output prefix $\ol{a}$ is fully consumed by the transition. Accordingly, (strong/weak) enabling will block multiple concurrent receivers trying to access the label $a$ at the same time. Only a single thread can engage in the synchronisation. This is necessary, since the $\ol{a}$ prefix is consumed. The continuation process $A$ may not offer output $\overline{a}$ any more, or if it does, it may be incongruent with $A$. Therefore, each synchronisation with one receiver thread would preempt another concurrent receiver thread in getting access to $\ol{a}$. For contrast, in a transition 
 $ 
 \bang{\ol{a}} \cseq A \Derives{\ol{a}}{\eset}{A} A \cpar \bang{\ol{a}} \cseq A
 $ 
 generated from $(\ActR_1)$ and $(\RepR)$, the blocking set is empty, and so does not prevent concurrent receivers. This is fine since the prefix is not consumed but repeated in the continuation process $A \cpar \bang{\ol{a}} \cseq A$.
\begin{remark}[Why is the standard bang not good enough?]
 Note that the sequential bang $\bang{\ell} \cseq A$ is not expressible as $\bang{\ell} \cseq A = \bang{(\ell \cseq A)}$ by the standard `bang' operator $\bang{P}$ of process algebra which satisfies the \textit{false} (hence not present) structural equivalence $\bang{P} \equiv P \cpar \bang{P}$ in \ccslm. The corresponding equivalence 
 $\bang{\ell} \cseq A \cong \ell \cseq A \cpar \bang{\ell} \cseq A$ does not hold in \ccslm. Let us see why. 
 Suppose the behaviour of $\bang{\ell}$ is derived from (or identified with that of) $\ell \cseq A \cpar \bang{\ell} \cseq A$. Then the initial action of $\bang{\ell} \cseq A$ would be 
 $ 
 \ell \cseq A \cpar \bang{\ell} \cseq A 
 \Derives{\ell}{\eset}{\bang{\ell} \cseq A} A \cpar \bang{\ell} \cseq A
 $ 
 generated by rules $(\ActR_1)$ and $(\ParR_1)$. Notice that the concurrent environment $\bang{\ell} \cseq A = \bang{\ell} \cseq A$ is not empty and that we have $\ell \in \iA(\bang{\ell} \cseq A)$. Therefore, a parallel composition
 $ 
 \ol{\ell}\col \ol{\ell} \cpar \ell \cseq A \cpar \bang{\ell} 
 \cseq A \Derives{\tau}{\{\ol{\ell}\}}{\bang{\ell} \cseq A} A \cpar \bang{\ell} \cseq A
 $ 
 would block under (weak/strong) enabling because $\{\ol{\ell}\} \cap \oliA(\bang{\ell} \cseq A) \neq \eset$.
 Thus $\ell \cseq A \cpar \bang{\ell} \cseq A $ would not be able to synchronise even with a single (self-blocking) sender $\ol{\ell} \col \ol{\ell}$. 
 For contrast, the sequential bang prefix of Def.~\ref{def:seq-bang-prefix} can serve arbitrarily many receivers
 $ 
 \ol{\ell} \col \ol{\ell} \cpar 
 \ol{\ell} \col \ol{\ell} \cpar 
 \bang{\ell} \cseq A 
 \Derives{\tau}{\{\ol{\ell}\}}{\ol{\ell} \col \ol{\ell} \cpar A} 
 \ol{\ell} \col \ol{\ell} \cpar A \cpar \bang{\ell} \cseq A
 \Derives{\tau}{\{\ol{\ell}\}}{A \cpar A} 
 A \cpar A \cpar \bang{\ell} \cseq A,
 $ 
 provided that $\{\ol{\ell}\} \cap \olwilA{\ast}(A) = \{\ol{\ell}\} \cap \olwilA{\ast}(A) = \eset$.
 The problem is that the standard bang $\bang{\ell} \cseq A \equiv \ell \cseq A \cpar \bang{\ell} \cseq A$ is a \textit{parallel} repetition of label $\ell$ while the sequential bang $\bang{\ell} \cseq A$ offers all $\ell$-labels \textit{sequentially} in a single thread. Only the payload $A$ is offered in parallel. This corresponds to an \textit{async} operator that is triggered by label $\ell$, then spawns a child thread $A$ and repeates itself in the main thread. \qed
\end{remark}
%
%
%

\begin{proposition}
 Let $P : \pi$ be c-coherent and $\pi \restrict \R$ a pivot policy.
 Then, for each $\ell \in \pi \restrict \R$, the sequential bang prefix $\bang{\ell} \col H \cseq P$ is c-coherent for $\pi$ if $H \subseteq \{ \ell' \mid \pi \Vdash \ell' \ordpre \ell\}$. 
\label{prop:seq-bang-coherent}
\end{proposition}

\begin{proof}
 By definition, the process $\bang{\ell} \col H \cseq P$ offers only a single initial transition
$ 
 \bang{\ell} \col H \cseq P 
 \mbox{$\Derives{\ell}{H}{P}$} 
 P \cpar \bang{\ell} \col H \cseq P
$ 
 which must be obtained by application of rule $\RepR$.
 The conformance Def.~\ref{def:conformance} follows exactly as for action prefixes from $\ell \in \pi$ the assumption $H \subseteq \{ \ell' \mid \pi \Vdash \ell' \ordpre \ell\}$. 
 Since there are no diverging transitions for $\bang{\ell} \col H \cseq P$ the only situation to consider for confluence is that $\ell \not \in H$. For reconvergence it suffices to observe that the continuation process $P \cpar \bang{\ell} \cseq P$ offers another transition engaging with $\ell$ and the same blocking set $H$:
 $ 
 P \cpar \bang{\ell} \col H \cseq P 
 \mbox{$\Derives{\ell}{H}{P}$}
 P \cpar P \cpar \bang{\ell} \col H \cseq P.
 $ 
 By structural equivalence, $P \astep{\ell} \zero \cpar P$. This weak environment shift is enough to satisfy \myb{c-coherence} Def.~\ref{def:coherence}, because of the deterministic transition of a non-clock. 
 Finally, note that since $P$ is c-coherent for $\pi$ and \myb{c-coherence} for pivot policies closed under parallel composition (Prop.~\ref{prop:cpar-coherent}), we can assume that the continuation process $P \cpar \bang{\ell} \col H \cseq P$ is conformant to $\pi$, by co-induction. 
\end{proof}

\subsubsection{Hiding}
\label{sec:hiding}

\begin{proposition}
 If $Q$ is c-coherent for $\pi$ and $\pi \nVdash \sigma \ordpre \ell$ for all $\sigma \in L$, then $Q \hide L$ is c-coherent for $\pi$. 
\label{prop:hiding-coherent} 
\end{proposition}

\begin{proof}
 It suffices to prove the proposition for the special case $Q \hide \sigma$ of a single clock. We recall the semantic rule for this case: 
 \[
 \begin{array}{l@{\hspace{2cm}}l}
 \multicolumn{2}{l}{
 \infer[(\HideR)]
 {P \hide \sigma \Derives{\alpha \hide \sigma}{H'}{E \hide \sigma} Q \hide \sigma}
 {P \Derives{\alpha}{H}{E} Q & H' = H - \{ \sigma \}}} 
 \end{array}
 \]
 where $\sigma\hide\sigma = \tau$ and $\alpha\hide\sigma = \alpha$ if $\alpha \neq \sigma$.
 Let $P = Q \hide \sigma$ and the two non-interfering and diverging transitions 
 $ 
 P \Derives{\alpha_1\hide\sigma}{H_1}{E_1} P_1
 \text{ and }
 P \Derives{\alpha_2\hide\sigma}{H_2}{E_2} P_2.
$ 
 arise by $(\HideR)$ from transitions 
 $ Q \Derives{\alpha_1}{N_1}{F_1} Q_1 
 \text{ and } 
 Q \Derives{\alpha_2}{N_2}{F_2} Q_2$
 with $E_i = F_i \hide \sigma$, $P_i = Q_i \hide \sigma$ and $H_i = N_i - \{ \sigma \}$.
For \myb{c-coherence} Def.~\ref{def:coherence} we assume that $\alpha_1 \hide \sigma \neq \alpha_2 \hide \sigma$ or $P_1 \neq P_2$, or $\{ \alpha_1 \hide \sigma, \alpha_2 \hide \sigma \} \subseteq \R$ and $\alpha_i \hide \sigma \not\in H_{3-i}$. Further, let the c-actions $\alpha_i \col H_i[E_i]$ be interference-free. Observe that the first case implies $\alpha_1 \neq \alpha_2$ and the second case means $Q_1 \neq Q_2$. Moreover, because of the assumption that $\pi \nVdash \sigma \ordpre \alpha_{i}$ we must have $\sigma \not \in N_1 \cup N_2$. But this means $H_i = N_i$. First, suppose that $\alpha_1 = \alpha_2 = \sigma$. But then the assumption $Q_1 \neq Q_2$ contradicts strong determinacy of clocks (Proposition~\ref{prop:action-determinism}). Thus, the actions $\alpha_i$ cannot both be the clock $\sigma$ that is hidden. The case $\alpha_i = \sigma$ and $\alpha_{3-i} \in \Act - \{\sigma\}$, for some fixed $i \in \{1,2\}$ can also be excluded as follows: By Prop.~\ref{prop:clock-interference} the clock transition and the non-clock transition must interfere in $Q$, i.e., $\alpha_{i} \in N_{3-i}$ or $\alpha_{3-i} \in N_i$. The former $\sigma = \alpha_{i} \in N_{3-i}$ is impossible because of the above. The latter is outright impossible since then $\alpha_{3-i}\hide\sigma = \alpha_{3-i} \in N_i = H_i$ which contradicts the non-interference assumption on the transitions of $P$. 
 
The only remaining case to handle is that both of the diverging transitions are by labels $\alpha_1, \alpha_2 \in \Act - \{\sigma\}$ and $\alpha_i\hide\sigma = \alpha_i$ for both $i \in \{1,2\}$. We claim that the c-actions $\alpha_i \col N_i[F_i]$ are interference-free. If $\alpha_1 \neq \alpha_2$ then non-interference assumption directly gives $\alpha_i \not\in H_i = N_i$. Next, if for some $i \in \{1,2\}$, $\alpha_i = \alpha_i\hide\sigma = \tau$, then 
 the assumptions on $P$ not only give $\tau \not\in H_{3-i} = N_{3-i}$ but also $N_{3-i} \cap \olwilA{\ast}(F_{3-i} \hide \sigma) = H_{3-i} \cap \olwilA{\ast}(E_{3-i}) = \eset$. 
 Since by assumption on policy conformance $\sigma \not\in N_{3-i}$ the latter implies $N_{3-i} \cap \olwilA{\ast}(F_{3-i}) = \eset$. Now we have shown that the diverging transitions
 of $Q$ are non-interfering  we can use the \myb{c-coherence} Def.~\ref{def:coherence} for $Q$ by induction hypothesis and get $F_1'$, $F_2'$, $Q'$ such that 
$ 
 Q_1 \Derives{\alpha_2}{N_2'}{F_2'} Q'
 \text{ and }
 Q_2 \Derives{\alpha_1}{N_1'}{F_1'} Q'
 \text{ with }
 F_1 \astep{\alpha_2} F_1' 
 \text{ and } 
 F_2 \astep{\alpha_1} F_2' 
 $ 
 such that $N_i' \subseteq N_i$. By applying $(\HideR)$ to these transitions, we obtain 
 $ 
 P_1 \Derives{\alpha_2\hide\sigma}{H_2''}{F_2' \hide \sigma} Q' \hide \sigma
 \text{ and }
 P_2 \Derives{\alpha_1\hide\sigma}{H_1''}{F_1' \hide \sigma} Q' \hide \sigma
 $ 
 where $H_i'' = N_i' - \{ \sigma \} \subseteq N_i - \{ \sigma \} = H_i$. Note, by rule $(\HideR)$ we also have 
 $ 
 F_1 \hide \sigma \astep{\alpha_2\hide\sigma} F_1' \hide \sigma 
 \text{ and }
 F_2 \hide \sigma \astep{\alpha_1\hide\sigma}F_2' \hide \sigma.
 $ 
 In the special case that $\{ \alpha_1\hide\sigma, \alpha_2\hide\sigma \} = \{ \alpha_1, \alpha_2 \} \subseteq \R$ and $\alpha_1 \neq \alpha_2$ or $Q_1 \hide \sigma \neq Q_2 \hide \sigma$, or $\{ \alpha_1\hide\sigma, \alpha_2\hide\sigma \} \cap \C = \{ \alpha_1, \alpha_2 \} \cap \C \neq \eset$ we obtain the stronger context shifts
 $ 
 F_1 \Derives{\alpha_2}{}{} F_1' 
 \text{ and } 
 F_2 \Derives{\alpha_1}{}{} F_2' 
 $ 
 from which we infer
$ 
 F_1 \hide \sigma \Derives{\alpha_2\hide\sigma}{}{} F_1' \hide \sigma 
 \text{ and }
 F_2 \hide \sigma \Derives{\alpha_1\hide\sigma}{}{} F_2' \hide \sigma. 
$ 
This completes the proof. 
\end{proof}

\subsubsection{Restriction}
\label{sec:restriction}

Let us next look at the restriction operator. Again, the fact that $P$ is c-coherent does not imply that $P \restrict L$ is c-coherent. Consider the process $Q \eqdef (s + a \col s \cpar b \cseq \ol{s}) \restrict s$ which is conformant and pivotable. The signal $s$ blocks the action $a$ by priority in the thread $s + a \col s$. For the communication partner $\ol{s}$ to become active, however, it needs the synchronisation with an external action $\ol{b}$. The transition
%
$ Q \Derives{a}{\eset}{(\one \cpar b \cseq \ol{s})\restrict s} (b \cseq \ol{s}) \restrict s$
%
generated by $(\RestrR)$ does not provide enough information in the blocking set $H$ and environment $E$ for us to be able to characterise the environments in which the action $a$ is blocked. For instance, the parallel composition $Q \cpar \ol{b}$ will internally set the sender $\ol{s}$ free and thus block the $a$-transition. From the above transition
the $(\ParR_3)$ rule permits $Q \cpar \ol{b}$ to offer the $a$-step. This is not right as it is internally blocked and does not commute with the $b \cpar \ol{b}$ reduction. A simple solution to avoid such problems and preserve confluence is to force restricted signals so they do not block any visible actions. 
This restriction suffices in many cases.

As we have noted above, \ccs corresponds to the unclocked and (blocking) free processes of \ccslm (Prop.~\ref{prop:free-ccs}). For these processes, admissible transitions and strongly enabled transitions coincide. 
The key result of Milner~\cite{Milner:CCS} (Chap.~11, Prop.~19) is that confluence is preserved by inaction $\zero$, prefixing $\ell \cseq P$ and 
\textit{confluent composition} defined as 
$ P \mathrel{|_L} Q \eqdef (P \cpar Q) \restrict L$
where $\L(P) \cap \L(Q) = \eset$ and $\ol{\L(P)} \cap \L(Q) \subseteq L \cup \ol{L}$. 
We will show how confluence for this fragment of \ccs processes follows from our more general results on \myb{c-coherence}. From now on, for the rest of this section, we assume that all processes are unclocked, i.e., each prefix $\ell \col H \cseq Q$ has $H \cup \{ \ell \}\subseteq \A \cup \coA$. 

Let us call a process $P$ \textit{discrete} if all prefixes occurring in $P$ are of form $\ell\col\{\ell\}\cseq Q$, i.e., they are self-blocking. Under strong enabling, a discrete process behaves like a \ccs process with the restriction that each action $\ell$ only synchronises if there is a matching partner $\ol{\ell}$ in a \textit{unique} other parallel thread. If the matching partner is not unique, then the scheduling blocks. For instance, $\ell \col \ell \cpar \ol{\ell}\col\ol{\ell} \cpar \ol{\ell}\col\ol{\ell}$ blocks because there are two concurrent senders $\ol{\ell}$ matching the receiver $\ell$. 
Now consider the fragment $\ccscc$ of discrete processes built from inaction $\zero$, self-blocking prefixes $\ell\col\{\ell\}\cseq Q$ and confluent composition.
Milner's result says that the processes in $\ccscc$ are confluent under admissible scheduling. 

To emulate Milner's result in our setting, we consider the \textit{sort} $\L(P)$ of a process $P$ as the admissible actions of a \textit{discrete} policy $\pi_P$ with only reflexive precedences. In other words, $\pi_P \Vdash \ell_1 \ordpre \ell_2$ iff $\ell_1 = \ell_2$.
It is easy to see that a discrete process always conforms to $\pi_P$. 
Also, discrete policies are always pivot policies (Def.~\ref{def:pivot-policy}) and dependency-closed for all label sets (Def.~\ref{def:dep-closed}), as one shows easily. Thus, discrete processes, which are c-coherent for discrete policies, fullfill all conditions of the preservation laws for inaction $\zero$ (Prop.~\ref{prop:zero-coherent}), self-blocking action prefix $\ell\col\{\ell\}\cseq Q$ (Prop.~\ref{prop:channel-prefix-coherent}), parallel composition $P \cpar Q$ (Prop.~\ref{prop:cpar-coherent}) and restriction $P \restrict L$ (Prop.~\ref{prop:restrict-coherent}).
As a consequence and in particular, all processes in $\ccscc$ are c-coherent under strong enabling by our results.

Our final observation now is that, for Milner's fragment $\ccscc$, the transition semantics under admissible and strong enabling, as well as the notions of confluence and \myb{c-coherence}, coincide.
Firstly, if $P$ is discrete then in every c-action $\ell \col H[E]$ executed by a derivative $Q$ of $P$ we must have $H = \{ \ell \}$. Thus, the \myb{c-coherence} Def.~\ref{def:coherence}(1) becomes redundant since it is always satisfied. 
Further, any two c-actions $\ell_i \col H_i[E_i]$ of transitions 
$ 
 Q \Derives{\ell_1}{H_1}{E_1} Q_1 
 \text{ and }
 Q \Derives{\ell_2}{H_2}{E_2} Q_2 
$ 
for $\ell_1 \neq \ell_2$ or $Q_1 \neq Q_2$ are trivially interference-free. The confluent composition
of Milner has the further effect that for every c-action $\ell \col H[E]$ we must have $\ell \not\in \olwiA(E)$. This is proven by induction on the structure of $P \in \ccscc$. Further, the side condition of $\ParR_3$ becomes trivial, and thus generally $\tau \not\in H$. This is a result of the side-conditions of 
the confluent composition.
Hence, for $\ccscc$, strong enabling coincides with plain admissibility \ccs. 
Notice that the class of discrete processes and the class processes c-coherent for discrete policies is larger than $\ccscw$. In particular, we can explain \myb{c-coherence} of broadcast actions with repetitive prefixes. Hence, our results are more general even for the restricted class of discrete behaviours. This will be exploited with the examples discussed in a proper section.
%
%
\begin{proposition}
 If $Q : \pi$ and $\pi$ is precedence-closed for $L \cup \ol{L}$, then $({Q}\restrict{L}) : \pi - L$. 
\label{prop:restrict-coherent} 
\end{proposition}
%
\begin{proof}
It suffices to prove the proposition for the special case $P = Q \restrict a$ for a single channel name $a \in A$. Recall the rule $(\RestrR)$ for this case: 
 \[
 \begin{array}{l@{\hspace{2cm}}l}
 \multicolumn{2}{l}{
 \infer[(\RestrR)]
 {Q \restrict a \Derives{\alpha}{H'}{E \restrict a} R \restrict a}
 {Q \Derives{\alpha}{H}{E} R 
 & \alpha \not\in \{ a, \ol{a} \} &
 H' = H - \{ a, \ol{a} \}}
 }
 \end{array}
 \] 
 Consider a transition
 $ 
 Q \restrict a \Derives{\ell}{H'}{E \restrict a} Q' \restrict a
 \text{ derived from }
 Q \Derives{\ell}{H}{E} Q'
 $ 
 with $H' = H - \{ a, \ol{a} \}$, $\ell \not\in \{ a, \ol{a} \}$ and $\ell' \in H'$.
 Then $\ell' \not\in \{ a, \ol{a} \}$ and $\ell' \in H$. Conformance Def.~\ref{def:conformance} applied to $Q$ implies $\pi \Vdash \ell' \ordpre \ell$. But then $\pi - a \Vdash \ell' \ordpre \ell$, too, as both $\ell$ and $\ell'$ a distinct from $a$ and $\ol{a}$. This ensures conformance. If $P = Q \restrict a$ and the two non-interfering and diverging transitions 
$ 
 P_2  \LDerives{\alpha_2}{H_2}{E_2} P \Derives{\alpha_1}{H_1}{E_1} P_1
 $ 
arise from rule $(\RestrR)$ then we have diverging transitions
$Q_2 \LDerives{\alpha_2}{N_2}{F_2} Q \Derives{\alpha_1}{N_1}{F_1} Q_1$
with $H_i = N_i - \{ a, \ol{a} \}$, $\alpha_i \not\in \{ a, \ol{a} \}$, where $E_i = {F_i} \restrict a$ and $P_i = {Q_i}\restrict a$. We assume $\alpha_1 \neq \alpha_2$ or $P_1 \neq P_2$ or 
$\{ \alpha_1, \alpha_2 \} \subseteq \R$ and $\alpha_i \not \in H_{3-i}$. 
Further, suppose non-interference of $\alpha_1 \col H_1[E_1]$ and $\alpha_2 \col H_2[E_2]$. Note that in case $P_1 \neq P_2$ we immediately have $Q_1 \neq Q_2$. Next, notice that the premise of the $(\RestrR)$ rule ensures $\alpha_i \not\in \{ a, \ol{a} \}$. It is easy to see that this means $\{ a, \ol{a} \} \cap N_{i} = \eset$, i.e., $H_i = N_i$. By contraposition, suppose $\ell \in \{a, \ol{a} \}$ and $\ell \in N_i$. Then $\pi \Vdash \ell \ordpre \alpha_i$. But $\pi$ is precedence-closed for $\{a, \ol{a}\}$, so we would have $\alpha_i \in \{a,\ol{a}\}$ which is a contradiction. So, $H_i = N_i$.
 
 All this means that to invoke \myb{c-coherence} Def.~\ref{def:coherence} of the above $Q$
 is sufficent to show that the c-actions $\alpha_1 \col N_1[F_1]$ and $\alpha_2 \col N_2[F_2]$ do not interfere. Now, if $\alpha_1 \neq \alpha_2$ then from the non-interference assumption on the transitions out of $P$ we get $\alpha_i \not\in H_{3-i} = N_{3-i}$. This is what we need for Def.~\ref{def:interference-free}(1) to show that the c-actions $\alpha_i \col N_i[F_i]$ are interference-free.
For the second part Def.~\ref{def:interference-free}(2) suppose $\alpha_{3-i} = \tau$ for some $i \in \{1,2\}$. Then the assumed non-interference of the transitions out of $P$ means $H_i \cap (\olwilA{\ast}(E_i) \cup \{ \tau \}) = \eset$, which is the same as $\tau \not \in H_i$ and the following set condition 
 $(N_i - \{ a, \ol{a} \}) \cap \olwilA{\ast}(F_i \restrict a) = \eset$.
 Since $\{ a, \ol{a} \} \cap N_{3-i} = \eset$ we have 
 $N_i \cap \olwilA{\ast}(F_i) \subseteq \L - \{ a, \ol{a} \}$. 
 We claim that the $\alpha_i$-transitions of $Q$ is enabled, specifically $N_i \cap \olwilA{\ast}(F_i) = \eset$ observing that $\tau \not \in H_i = N_i$. 
 By contraposition, suppose $\beta \in N_i \cap \olwiA(F_i) \subseteq \L - \{ a, \ol{a} \}$. Hence, $\beta \not\in \{ a, \ol{a} \}$ and so $\beta \in N_i - \{ a, \ol{a} \}$ and also $\beta \in \olwilA{\ast}(F_i \restrict a)$, but this contradicts the above set-condition.
 Thus, we have shown that for all $i \in \{1,2\}$, $\alpha_{3-i} = \tau$ implies $N_i \cap \olwilA{\ast}(F_i) = \eset$. This proves the c-actions $\alpha_i \col N_i[F_i]$ out of $Q$ are interference-free. Hence, we can use \myb{c-coherence} of $Q$ for the diverging transition
 and get $F_1'$, $F_2'$, $Q'$ 
$ 
 Q_1 \Derives{\alpha_2}{N_2'}{F_2'} Q'
 \text{ and }
 Q_2 \Derives{\alpha_1}{N_1'}{F_1'} Q'
 \text{ with }
 F_1 \astep{\alpha_2} F_1' 
 \text{ and } 
 F_2 \astep{\alpha_1} F_2' 
$ 
 such that $N_i' \subseteq N_i$. We construct the required reconvergence by application of rule $(\RestrR)$:
 $ 
 P_1 \mbox{\Derives{\alpha_2}{H_2''}{E_2'}} Q' \restrict a
 \text{ and }
 P_2 \Derives{\alpha_1}{H_1''}{E_1'} Q' \restrict a
 \text{ with }
 F_1 \restrict a \astep{\alpha_2} F_1' \restrict a
 \text{ and } 
 F_2 \restrict a \astep{\alpha_1} F_2' \restrict a 
 $ 
 with $E_i' = F_i' \restrict a$ and $H_i'' = N_i' - \{ a, \ol{a} \} \subseteq N_i - \{ a, \ol{a} \} = H_i$.
 In the special case that $\{ \alpha_1, \alpha_2 \} \subseteq \R$ and $\alpha_1 \neq \alpha_2$ or $P_1 \neq P_2$ or $\{ \alpha_1, \alpha_2 \} \cap \C \neq \eset$ we can rely on the stronger context shifts
 $ 
 F_1 \Derives{\alpha_2}{}{} F_1' 
 \text{ and } 
 F_2 \Derives{\alpha_1}{}{} F_2' 
 $ 
 to conclude
$ 
 F_1 \restrict a \Derives{\alpha_2}{}{} F_1' \restrict a
 \text{ and }
 F_2 \restrict a \Derives{\alpha_1}{}{} F_2' \restrict a. 
$ 
This completes the proof.
\end{proof}
}



\section{Further work}
\label{sec:end}
%

We list in spare order some possible future works: 
\begin{itemize}
\item We plan to study the interaction of our notion of c-coherence with forms of congruence (static, dynamic and interaction laws in Milner's jargon) and  
notions of bisimulation.
\item We plan to generalise the interaction model of \ccslm to synchronisation algebras permitting action fusion in the spirit of Milner's ASCCS and Austry and Boudol's Meije (see~\cite{SCCS,Meije}).
\item Thm.~\ref{thm:summary-coherence-closure} suggests using policies as types. Among possible type disciplines we just mention: behavioural type systems, including session-types to capture rendezvous interactions; intersection-types, including non-idempotent ones to characterise termination and resources awareness, and quantitative-types, including linear-types, to model consuming resources.
\item C-actions are somewhat related with continuation passing style, and, in some sense they are more powerful because they permit a ``forecast'' analysis of all possible future actions: studying this relation, possibly in relation with abstract interpretation techniques, would be worth of study.
\item Clocks are unordered in \ccslm: introducing a partial order could be an interesting, although economic and fine grained, way to reason about time and time priorities. 
\end{itemize}


\newpage 
\pagebreak

\bibliography{references-tick} 

\appendix

%

\section{Auxiliary Results}\label{auxiliary}

In the following we list some simple observation on the properties of $\iA(P)$, $\wilA{\tau}(P)$ and $\wilA{\ast}(P)$ that will be used in the proofs.

 \begin{lemma} \mbox{}
 Let $P$, $Q$ be processes: 
 \begin{bracketenumerate}
 \item $P \Derives{\sigma}{H}{R} Q$ implies $R \equiv \zero$.
 \item $\alpha \in \iA(P)$ iff $P \Derives{\alpha}{}{} Q$.  
 \item $\iA(P) \cap \R \subseteq \iA(P \cpar Q)$.
 \item If $P$ is single-threaded, then $P \Derives{\alpha}{H}{R} Q$ implies $\alpha \in \L$, $R \equiv \zero$ and $\tau \not\in H$. 
 \item If $P \Derives{\alpha}{H}{R} Q$ then $\iA(R) \subseteq \iA(P)$.
\end{bracketenumerate}
\label{lem:prio-basic-1}
\end{lemma}

\begin{proof}
  (2) is by definition. The other points follow by an inspection of the SOS rules, using the definition of $\iA(P)$.
\end{proof}
Lem.~\ref{lem:prio-basic-1}(1) means that clock actions always run in an empty concurrent environment.  Observe that Lem.~\ref{lem:prio-basic-1}(2) means that the set $\iA(P)$ corresponds to the initial actions of $P$.  
The presence of $\tau$ in the blocking set of a transition
provides information about the set of initial actions of a process. Specifically, in rule $(\ComR)$ the presence of $\tau$ in the set $H = race(P\cpar Q)$
indicates a blocking situation arising from the priorities in $P$ or $Q$ conflicting with the rendevous actions $\ell$ or $\ol{\ell}$. Thus, the condition $\tau\not\in H$ can be used to implement priority-based scheduling.
Lem.~\ref{lem:prio-basic-1}(4) implies that single-threaded processes are never blocked.
\begin{lemma}\mbox{}
 Let $P$ be an arbitrary process: 
 \begin{bracketenumerate}
 \item $\iA(P) \cap \L \subseteq \wiA(P) \subseteq \wilA{\ast}(P)$.
 \item If $P \equiv Q$ then $\wilA{\ast}(P) = \wilA{\ast}(Q)$.
 \item If $P \Derives{\alpha}{}{} Q$ for $\alpha \in \R \cup \{ \tau \}$, then $\wilA{\ast}(Q) \subseteq \wilA{\ast}(P)$.
 \end{bracketenumerate}
\label{lem:c-enabled-aux}
\end{lemma}
\begin{proof}
 All points follow by an inspection of the SOS rules, using the definition of $\wiA(P)$ and $\wilA{\ast}(P)$.
\end{proof}

\paragraph*{On Interference}
\begin{proposition}
  Suppose $\alpha_1 \col H_1[E_1 \cpar R]$ and $\alpha_2 \col H_2[E_2 \cpar R]$ are interference-free. Then, $\alpha_1 \col H_1'[E_1]$ and $\alpha_2 \col H_2'[E_2]$ are interference-free for arbitrary subsets $H_1' \subseteq H_1$ and $H_2' \subseteq H_2$. 
\label{prop:interference}
\end{proposition}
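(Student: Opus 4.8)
The plan is to verify the two clauses of Definition~\ref{def:interference-free} directly for the reduced c-actions $\alpha_1 \col H_1'[E_1]$ and $\alpha_2 \col H_2'[E_2]$, exploiting that both the blocking sets and the concurrent contexts have only shrunk. Clause~(1) is immediate: if $\alpha_1 \neq \alpha_2$, then $\alpha_i \notin H_{3-i}$ holds by hypothesis, and since $H_{3-i}' \subseteq H_{3-i}$ we still have $\alpha_i \notin H_{3-i}'$. No reasoning about the contexts is needed here, because a non-membership claim is only preserved more easily when passing to a subset.

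The substance lies in clause~(2). Fix $i$ with $\alpha_{3-i} = \tau$; by the hypothesis on the original pair we know $H_i \cap (\olwilA{\ast}(E_i \cpar R) \cup \{\tau\}) = \eset$. The $\tau$-component is harmless, since $\tau \notin H_i$ forces $\tau \notin H_i' \subseteq H_i$. Hence everything reduces to showing $H_i' \cap \olwilA{\ast}(E_i) = \eset$, which I would obtain from a monotonicity lemma for potential actions under parallel composition: for every rendez-vous label $\ell \in \R$, if $\ell \in \wilA{\ast}(P)$ then $\ell \in \wilA{\ast}(P \cpar R)$.

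To prove this lemma I would use the path characterisation implicit in Definition~\ref{def:c-enabling}: $\ell \in \wilA{\ast}(P)$ exactly when there is a sequence $P \xrightarrow{\beta_1} \cdots \xrightarrow{\beta_n} P_n$ (with $n \geq 0$), every $\beta_j \in \R \cup \{\tau\}$, and $\ell \in \iA(P_n)$. Since no $\beta_j$ is a clock, rule $(\ParR_{1,2})$ lifts each step to $P_{j-1} \cpar R \xrightarrow{\beta_j} P_j \cpar R$, so $P \cpar R$ reaches $P_n \cpar R$ by non-clock transitions; and because $\ell$ is a rendez-vous label, the clause for parallel composition in Figure~\ref{fig:strong-initial-actions} retains it, giving $\ell \in \iA(P_n \cpar R)$, whence $\ell \in \wilA{\ast}(P \cpar R)$. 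Applying this with $P = E_i$, any $x \in H_i' \cap \olwilA{\ast}(E_i)$ satisfies $\ol{x} \in \wilA{\ast}(E_i)$ with $\ol{x} \in \R$, so by the lemma $\ol{x} \in \wilA{\ast}(E_i \cpar R)$, i.e. $x \in \olwilA{\ast}(E_i \cpar R)$; but $x \in H_i' \subseteq H_i$ then contradicts $H_i \cap \olwilA{\ast}(E_i \cpar R) = \eset$. This closes clause~(2).

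The main obstacle, and the one place where care is genuinely required, is that this monotonicity \emph{fails for clock labels}: a process such as $\sigma \cseq \zero$ has $\sigma \in \wilA{\ast}(\sigma \cseq \zero)$, yet $\sigma \notin \wilA{\ast}(\sigma \cseq \zero \cpar \zero)$, because a broadcast clock is initial in a parallel composition only when \emph{all} operands offer it. Thus the difference $\wilA{\ast}(E_i) \setminus \wilA{\ast}(E_i \cpar R)$ consists solely of clock labels. This does not damage the argument precisely because the blocking sets generated by the operational semantics of Figure~\ref{fig:free-sos} range over rendez-vous actions and $\tau$ only, so these surplus clock labels never appear in $H_i'$ and never contribute to $H_i' \cap \olwilA{\ast}(E_i)$. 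I would make this restriction on blocking sets explicit at the step where it is used, since it is exactly what separates the rendez-vous case (where monotonicity holds) from the clock case (where it does not).
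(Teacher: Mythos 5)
Your overall route is the paper's: clause (1) of Def.~\ref{def:interference-free} is immediate from $H_{3-i}'\subseteq H_{3-i}$, and clause (2) reduces to the monotonicity $\olwilA{\ast}(E_i)\subseteq\olwilA{\ast}(E_i\cpar R)$ of potential actions under parallel composition, which the paper simply invokes via Lem.~\ref{lem:prio-basic-1} and which you instead prove by lifting a witnessing path of non-clock transitions through $(\ParR_{1,2})$ and the parallel clause of $\iA$. Up to that point the two arguments coincide, and your explicit path argument is a genuine gain in rigour. You have also put your finger on a real subtlety that the paper's own proof glosses over: the inclusion fails for clock labels, since a clock is initial in $E_i\cpar R$ only if $R$ offers it as well, so $\wilA{\ast}(E_i)\setminus\wilA{\ast}(E_i\cpar R)$ can be non-empty and consists of clocks.

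The gap is in how you discharge that clock case. The claim that the blocking sets produced by Fig.~\ref{fig:free-sos} range over rendez-vous actions and $\tau$ only is false in \ccslm: the syntax allows $\alpha\col H\cseq P$ with $H\subseteq\L=\A\cup\coA\cup\C$, rule $(\ActR)$ copies $H$ verbatim into the transition label, and the paper relies on clocks occurring in blocking sets --- the clock-hiding example uses the prefix $a\col\sigma$ to derive $P\Derives{a}{\{\sigma\}}{\sigma}\sigma$, and Prop.~\ref{prop:clock-interference} explicitly concludes $\sigma\in H_1$. Since $\ol{\sigma}=\sigma$, a clock $\sigma\in H_i'$ with $\sigma\in\wilA{\ast}(E_i)\setminus\wilA{\ast}(E_i\cpar R)$ would defeat your final intersection step, so the clock-freeness of $H_i$ must either be proved as an invariant of the particular transitions to which the proposition is applied or added as an explicit hypothesis; as stated it is not available. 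To be fair, the paper's own proof has the same hole in a sharper form, since it asserts $\olwilA{\ast}(E_i)\subseteq\olwilA{\ast}(E_i\cpar R)$ unconditionally --- exactly the inclusion your counterexample $\sigma\cseq\zero$ refutes --- but your proof as written still rests on a premise the calculus does not satisfy.
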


As a special case, choosing $R = \zero$, Proposition~\ref{prop:interference} implies non-interference of $\alpha_i \col H_i'[E_i]$ for $i \in \{1,2\}$ from non-interference of $\alpha_i \col H_i[E_i]$ for any $H_i' \subseteq H_i$.

\begin{proof}
 \mrev{First note that if $H_i' \subseteq H_i$, the assumption $\alpha_i \not\in H_{3-i}$ implies also $\alpha_i \not\in H_{3-i}'$. Second, suppose that $H_i \cap \olwilA{\ast}(E_i \cpar R) = \eset$. Since by Lem.~\ref{lem:prio-basic-1}, $\olwilA{\ast}(E_i) \subseteq \olwilA{\ast}(E_i \cpar R)$, the inclusions $H_i' \subseteq H_i$ preserve this property, i.e., $\tau \not\in H_i'$ and $H_i' \cap (\olwilA{\ast}(E_i) \cup \{\tau\}) \subseteq H_i \cap (\olwilA{\ast}(E_i \cpar R) \cup \{\tau\})$. 
 Thus, all three conditions of non-interference Def.~\ref{def:interference-free} are preserved by changing from $H_i$ to $H_i'$.}{simplified argument}
\end{proof}

\paragraph*{On Policies}

We obtain a partial ordering $\pi_1 \preceq \pi_2$ on precedence policies by subset inclusion. Specifically, we have $(L_1, \sordpre{1}) \preceq (L_2, \sordpre 2)$ if $L_1 \subseteq L_2$ 
such that for all $\ell_1, \ell_2 \in L_1$, 
if $\ell_1 \sordpre 2 \ell_2$ then $\ell_1 \sordpre 1 \ell_2$. 
Intuitively, if $\pi_1 \preceq \pi_2$ then $\pi_2$ is a tightening of $\pi_1$ in the sense that it exports more labels (resources) subject to possibly fewer precedences (causality constraints) than $\pi_1$.
By way of this ordering, we can use policies to measure the degree of concurrency exhibited by a process. 
The alphabet of the $\preceq$-maximal policy $\pi_{max}$ contains all labels $\L$ and its precedence relation $\sordpre{max}$ is empty. 
It is the largest policy in the $\preceq$ ordering. 
The $\preceq$-minimal policy $\pi_{min}$ has empty alphabet and empty precedences. It is self-dual, i.e., $\ol{\pi}_{min} = \pi_{min}$. 

\medskip 

There is also the normal inclusion ordering $\pi_1 \subseteq \pi_2$ between policies defined by $L_1 \subseteq L_2$ and $\sordpre 1 \subseteq \sordpre 2$. It differs from $\preceq$ in the inclusion direction of the precedence alphabets. 

\medskip 

For any policy $\pi$ on alphabet $L$ we define its \textit{dual} as a policy $\ol{\pi}$ on the set of labels $\ol{L}$ 
where for all $\ol{\ell}_1, \ol{\ell}_2 \in \ol{L}$ we stipulate $\ol{\ell}_1 \ordpre \ol{\ell}_2 \in \ol{\pi}$ iff $\ell_1 = \ell_2$ or $\ell_{1} \indep{} \ell_2 \in \pi$, i.e., both $\ell_{1} \ordpre \ell_2 \not\in \pi$ and $\ell_{2} \ordpre \ell_1 \not\in \pi$. Intuitively, the dual policy $\ol{\pi}$ consists of matching copies $\ol{\ell}$ of all labels $\ell$ from $\pi$ and puts them in a precedence relation precisely if their original copies are identical or do not stand in precedence to each other, in any direction. This generates a form of ``negation'' $\ol{\pi}$ from $\pi$, which is reflexive and symmetric. 
Note that $\ol{\ol{\pi}}$ need not be identical to $\pi$ 
but contains more precedences. In general, we have $\ol{\ol{\pi}} \preceq \pi$ and $\ol{\ol{\pi}} = \pi$ if $\pi$ is symmetric. The policy $\ol{\pi}$ captures the independences between actions of $\pi$ and is called the associated \textit{independence alphabet}, in the sense of trace theory~\cite{DieckertR95}. Analogously, $\ol{\ol{\pi}}$ expresses the dependencies of actions in $\pi$, called the associated \textit{dependence alphabet}. 
             
\begin{example}
  Suppose $P \cnf \ell_1 \indep{} \ell_2$ then all transitions with labels $\ell_1$ and $\ell_2$ of $P$ must be from concurrent threads and thus cannot block each other. For instance, both $\ell_1 \cpar \ell_2$ and $\ell_1 + \ell_2$ conform to $\ell_1 \indep{} \ell_2$. However, the choice $\ell_1 + \ell_2$ is not c-coherent. It becomes c-coherent if we add a precedence $\ell_1 + \ell_2 \col \ell_1$ but then it is not longer conformant to $\ell_1 \indep{} \ell_2$. Instead, we now have $\ell_1 + \ell_2 \col \ell_1 \cnf \ell_1 \ordpre \ell_2$, where the policy expresses the precedence coded in the blocking sets of the prefixes.
  More generally, if $\ell_1 \ordpre \ell_2 \in \pi$ and $\ell_1 \neq \ell_2$ are distinct, then $P \cnf \pi$ may take $\ell_1$ and $\ell_2$ transitions to completely different states from the same thread. 
  \longshort{\version}{}{Necessarily, these must be transitions combined in a prioritised sum, where one of $\ell_i$ blocks the other $\ell_{3-i}$.}
\end{example}

\begin{definition}[Pivot Policy]
 A policy $\pi$ is called a \emph{pivot} policy if $\ol{\pi} \preceq \pi$. 
 We call a process $P$ \emph{pivotable} if $P$ conforms to a pivot policy. 
\label{def:pivot-policy}
\end{definition}
\begin{proposition}
 A policy $\pi = (L, \ordpre)$ is a pivot policy if 
 $\ol{L} \subseteq L$ and for all $\ell_1, \ell_2 \in L$ with $\ell_1 \neq \ell_2$ we have $\ell_1 \indep{} \ell_2 \in \pi$ or $\ol{\ell}_1 \indep{} \ol{\ell}_2 \in \pi$.
\label{prop:pivot-policy}
\end{proposition}
 
\begin{proof}
Let $\pi$ be a pivot policy. Since the alphabet of $\ol{\pi}$ is $\ol{L}$ and the alphabet of $\pi$ is $L$ the assumption $\ol{\pi} \preceq \pi$ implies $\ol{L} \subseteq L$ by definition of $\preceq$. Given labels $\ell_1, \ell_2 \in L$ with $\ell_1 \neq \ell_2$ suppose $\ell_1 \indep{} \ell_2 \not\in \pi$. Hence $\ell_1 \ordpre \ell_2 \in \pi$ or $\ell_2 \ordpre \ell_2 \in \pi$. Consider the first case, i.e., $\ell_1 \ordpre \ell_2 \in \pi$. Since $\ol{L} \subseteq L$ and $L = \ol{\ol{L}}$ it follows that also $L \subseteq \ol{L}$, i.e., the labels $\ell_i$ are also in the alphabet of $\ol{\pi}$. But then by definition of the ordering and the fact that $\ell_i = \ol{\ol{\ell}}_i$, the assumption $\ol{\pi} \preceq \pi$ implies that $\ol{\ol{\ell}}_1 \ordpre \ol{\ol{\ell}}_2 \in \ol{\pi}$. Now, the definition of the dual policy means that $\ol{\ell}_1 \ordpre \ol{\ell}_2 \not\in \pi$ and $\ol{\ell}_2 \ordpre \ol{\ell}_1  \not\in  \pi$. This is the same as $\ol{\ell}_1 \indep{} \ol{\ell}_2  \in  \pi$ as desired.

Suppose the properties (1) and (2) of the proposition hold for a policy $\pi$. We claim that $\ol{\pi} \preceq \pi$. The inclusion of alphabets is by assumption (1). Then, suppose $\ol{\ell}_1 \ordpre \ol{\ell}_2 \in \pi$ which implies 
 $\ol{\ell}_1 \indep{} \ol{\ell}_2 \not\in \pi$. If $\ell_1 = \ell_2$ we have $\ol{\ell}_1 \ordpre \ol{\ell}_2 \in \ol{\pi}$ directly by definition. So, let $\ell_1 \neq \ell_2$. 
 Because of $\ol{L} \subseteq L$ and (2) this gives us $\ell_1 \indep{} \ell_2 \in \pi$ considering again that $\ell_i = \ol{\ol{\ell}}_i$. 
 But $\ell_1 \indep{} \ell_2\in \pi$ is the same as $ \ol{\ell}_1 \ordpre \ol{\ell}_2 \in \ol{\pi}$ by definition. 
\end{proof}

An interesting special class of pivot policies are input-scheduled policies. 

\begin{definition}[Input-scheduled Processes] \mbox{}
  A policy $\pi$ is called
  \emph{input-scheduled} if $\pi \subseteq \pi_{is}$ where $\pi_{is} = (L_{is}, \sordpre{is})$ given by alphabet $L_{is} = \L$ and precedence relation $\ell_1 \ordpre \ell_2 \in \pi_{is}$ iff $\ell_1 = \ell_2$ or $\ell_1, \ell_2 \in \A \cup \C$.
  A process is \emph{input-scheduled} if $P$ conforms to an input-scheduled policy.
\label{def:is-policy}
\end{definition}

\begin{proposition}
 \mbox{Suppose $P$ is input-scheduled $\pi_{is}$ and $P \Derives{\alpha}{H}{R} Q$ for some $\alpha$, $H$, $R$, $Q$. Then:} 
 \begin{enumerate}
 \item If $\alpha \in \coA$ then $H \subseteq \{ \alpha, \tau \}$.
 \item If $\alpha \in \A \cup \C \cup \{ \tau \}$ then $H \subseteq \A \cup \C \cup \{ \tau \}$.
 \end{enumerate}
\label{prop:input-scheduled}
\end{proposition}

\begin{proof}
 Obvious by definition of conformance and the construction of $\pi_{is}$.
\end{proof}

The next crucial Lemma
implies that two c-coherent threads cannot block each other if they are conformant to the same pivot policy. The blocking of a reduction $\ell \cpar \ol{\ell}$ must always arise from a thread in the concurrent context that is not involved in the synchronisation. \mrev{}{The following results were wrong, because the notion of coherence was missing the stronger residual property. Now with the corrected definition of c-coherence ok.} 

\begin{lemma} \mbox{}
 Let $P_1 \cnf \pi$ and $P_2 \cnf \pi$ be c-coherent for the same pivot policy $\pi$ and $\ell \in \L$ such that
 $ 
 P_1 \Derives{\ell}{H_1}{R_1} P_1'
 \text{ and }
 P_2 \Derives{\ol{\ell}}{H_2}{R_2} P_2'.
 $ 
 Then, $H_2 \cap \oliA(R_1) = \eset$ implies $H_2 \cap \oliA(P_1) \subseteq \{\ol{\ell}\}$.
\label{lem:pivot-no-local-block-1}
\end{lemma}

\begin{proof}
 Let $\ell \in \L$ and the transitions be given as in the statement of the Lemma. Suppose $\alpha \in H_2$ and $\alpha \in \oliA(P_1)$. We claim that $\alpha = \ol{\ell}$, whence $H_2 \cap \oliA(P_1) \subseteq \{\ol{\ell}\}$.
 By contraposition let us assume that $\alpha \neq \ol{\ell}$, or equivalently $\ol{\alpha} \neq \ell$. Further,
 since $\iA(P_1) \subseteq \L$ we also have $\alpha \neq \tau$. The assumption $\alpha \in H_2$ means $\alpha \ordpre \ol{\ell} \in \pi$ by \mrev{Def.~\ref{def:conformance}}{wrong reference corrected} applied to $P_2$. Since $\pi$ is pivot we must thus have $\ol{\alpha} \indep{} \ell \in \pi$, specifically $\ol{\alpha} \ordpre \ell \not \in \pi$, i.e., $\ol{\alpha}\not\in H_1$, again by \mrev{Def.~\ref{def:conformance}}{wrong reference corrected}, now applied to $P_1$.
 But since $\ol{\alpha} \in \iA(P_1)$, by Lem.~\ref{lem:prio-basic-1}, there is a transition 
 $ 
 P_1 \Derives{\ol{\alpha}}{H}{E} Q.
 $ 
 Since also $\ell \ordpre \ol{\alpha} \not\in \pi$, by pivot property, it follows $\ell \not\in H$ by Def.~\ref{def:coherence}(1) again for $P_1$. Now both $\ol{\alpha} \not\in H_1$ and $\ell \not\in H$ mean that the c-actions $\ell \col H_1[R_1]$ and $\ol{\alpha}\col H[E]$ are interference-free (Def.~\ref{def:interference-free}). Since $\ell \neq \ol{\alpha}$ there must exist a reconvergent process $R$ and transitions 
$$
  P_1' \Derives{\ol{\alpha}}{H'}{E'} R
  \text{ and } 
  Q \Derives{\ell}{H_1'}{R_1'} R
  \text{ with } 
  R_1 \xrightarrow{\ol{\alpha}} R_1'
  \text{ and } 
  E \xrightarrow{\ell} E'.
$$
 The latter transitions are strengthenings of the residuals $R_1 \astep{\ol{\alpha}} R_1'$ and $E \astep{\ell} E'$ that must exist according to the statement of Coherence, because of $\{ \ell, \ol{\alpha} \} \subseteq \R$ and $\ell \neq \ol{\alpha}$. 
 However, the transition $R_1 \xrightarrow{\ol{\alpha}} R_1'$ implies $\ol{\alpha} \in \iA(R_1)$ which is impossible, because then $\alpha \in H_2 \cap \oliA(R_1)$ while $H_2 \cap \oliA(R_1) = \eset$ by assumption. 
\end{proof}
We can show that pivotable processes do not have internal races. 

\begin{lemma} 
Let $P \cnf \pi$ be c-coherent for a pivot policy $\pi$. Then, for every reduction 
$ 
P \Derives{\tau}{H}{E} P',
$ 
if $H \cap \oliA(E) = \eset$ then $\tau \not\in H$.
\label{lem:pivot-no-local-block-2}
\end{lemma}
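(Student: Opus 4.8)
The plan is to exploit the fact that the silent action $\tau$ can only be inserted into a blocking set by the $race$ side condition of rule $(\ParR_3)$: the axiom $(\ActR)$ exposes a prefix's blocking set, which is a subset of $\L$ and hence $\tau$-free; rules $(\ParR_{1,2})$, $(\SumR_{1,2})$ and $(\StructR)$ pass $H$ on unchanged; and $(\RestrR)$, $(\HideR)$ only delete visible names from $H$. Consequently, along the derivation spine of any transition carrying a visible rendez-vous label $\ell \in \R$ no instance of $(\ParR_3)$ can occur (since $(\ParR_3)$ always produces a label in $\{\tau\}\cup\C$), so such a transition automatically has $\tau \notin H$. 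I would record this as a preliminary observation, together with Lem.~\ref{lem:prio-basic-1}(1), which tells us that a clock transition always runs in the empty concurrent context $\zero$.

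I would then prove the statement by induction on the derivation of $P \Derives{\tau}{H}{E} P'$, strengthening it to all action labels: for a visible $\ell \in \R$ the conclusion $\tau \notin H$ needs no hypothesis, while for a clock $\sigma \in \C$ we have $E \equiv \zero$ so the hypothesis $H \cap \oliA(E) = \eset$ holds trivially and we must again prove $\tau \notin H$ outright. The central case is a synchronisation $P = Q \cpar R$ by $(\ParR_3)$, say from $Q \Derives{\ell}{H_1}{R_1} Q'$ and $R \Derives{\ol{\ell}}{H_2}{R_2} R'$, giving $H = H_1 \cup H_2 \cup race$ and $E = R_1 \cpar R_2$. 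When $\ell \in \R$ the two premises are visible, so the preliminary observation yields $\tau \notin H_1 \cup H_2$; and since $Q$ and $R$ conform to $\pi$ and are coherent for the pivot policy $\pi$, Lem.~\ref{lem:pivot-no-local-block-1} gives $H_1 \cap \oliA(R) \subseteq \{\ell\}$ and $H_2 \cap \oliA(Q) \subseteq \{\ol{\ell}\}$, i.e. exactly $race = \eset$. Hence $\tau \notin H$. The clock synchronisation case ($\ell \in \C$, with $R_1 \equiv R_2 \equiv \zero$) is identical except that $\tau \notin H_1, H_2$ now comes from the induction hypothesis applied to the clock premises.

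For the propagation rules I would relay the disjointness hypothesis to the premise and apply the induction hypothesis. Rules $(\SumR_{1,2})$ and $(\ParR_{1,2})$ keep $H$ and, for $(\ParR_1)$, only enlarge the context from $E_0$ to $E_0 \cpar R$; since the offending label would be silent, and $\ol{\tau}=\tau \notin \C$ survives every $\iA(\cdot\cpar\cdot)$ computation, a persisting internal race would force $\tau \in \oliA(E)$, contradicting the hypothesis. Rules $(\RestrR)$ and $(\HideR)$ alter $H$ and $E$ only by deletion or renaming of visible names, so $\tau \in H$ forces $\tau \in H_0$ in the premise; the genuinely new sub-case here is $\sigma \hide \sigma = \tau$, where the premise is a \emph{clock} transition of the hidden process, whose context is $\zero$ by Lem.~\ref{lem:prio-basic-1}(1), so the induction hypothesis discharges it for free.

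The main obstacle I expect is the bookkeeping that transfers $H \cap \oliA(E) = \eset$ down to the sub-transitions through the definitions of $\iA(\cdot \cpar \cdot)$, $\iA(\cdot \restrict L)$ and $\iA(\cdot \hide L)$ in Fig.~\ref{fig:strong-initial-actions}: these add synchronisation labels and strip clocks, so the naive inclusion $\oliA(R_i) \subseteq \oliA(R_1 \cpar R_2)$ only survives after intersecting with $\R$, and the hide/restrict cases need the corresponding set-arithmetic lemmas (in the spirit of Lem.~\ref{lem:c-enabled-aux}). The second delicate point is justifying that the parallel sub-components $Q, R$ carry enough coherence to invoke Lem.~\ref{lem:pivot-no-local-block-1}: conformance is inherited directly (each move of $Q$ lifts through $(\ParR_1)$ with the \emph{same} label and blocking set), and the only coherence instances consumed by Lem.~\ref{lem:pivot-no-local-block-1} concern pairs of \emph{visible} transitions, for which the interference-freeness clause Def.~\ref{def:interference-free}(2) is vacuous and therefore insensitive to enlarging the context by the spectator thread; so that coherence is inherited from $Q \cpar R$ as well.
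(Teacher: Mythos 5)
Your proof is correct and follows essentially the same route as the paper's: induct on the derivation, observe that $\tau$ can only enter a blocking set through the $race(\cdot)$ side condition of $(\ParR_3)$, and show that set is empty using Lem.~\ref{lem:pivot-no-local-block-1} together with coherence for the common pivot policy. Your version is in fact somewhat more careful than the paper's (which only treats the $(\ParR_3)$ case explicitly), notably in strengthening the induction to all labels so that $\tau$-freeness of the visible premises is actually available, and in covering the $\sigma \hide \sigma = \tau$ source of silent transitions that the paper's phrase ``must arise from a synchronisation'' passes over.
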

%
\begin{proof}
 Let $P$ be c-coherent for pivotable $\pi$ and a reduction
 $P \Derives{\tau}{H}{E} P'$
 such that $H \cap \oliA(E) = \eset$ be given. Suppose, by contraposition, $\tau \in H$. Since there are no action prefixes $\tau \cseq P'$ in \ccslm, the reduction must arise from a synchronisation of two threads $P_1$ and $P_2$ inside $P$, via rule $(\ComR)$, i.e.,
 $$
 \infer[(\ComR)]
 {P_1 \cpar P_2 \oldDerives{\ell \cpar \ol{\ell}}{H_1 \cup H_2 \cup B}{R_1 \cpar R_2} P_1' \cpar P_2'}
 {P_1 \Derives{\ell}{H_1}{R_1} P_1' & P_2 \Derives{\ol{\ell}}{H_2}{R_2} P_2' 
 & B = \{ \ell \cpar \ol{\ell} \mid 
 H_2 \cap \oliA(P_1) \not\subseteq \{\ol{\ell}\} 
 \text{ or } 
 H_1 \cap \oliA(P_2) \not\subseteq \{\ell\} \}
 }
 $$
 for some $\ell \in \L$. The assumption $H \cap \oliA(E) = \eset$ for the top-level reduction implies that $(H_1 \cup H_2) \cap \oliA(R_1 \cpar R_2) = \eset$ at the point of $(\ComR)$. Here we may assume, by induction on the depth of the derivation, that $\tau \not\in H_1 \cup H_2$. 
 We must show that the synchronisation action $\tau$ cannot enter into the blocking set by $(\ComR)$, i.e., $\tau = \ell \cpar \ol{\ell} \not\in B$, i.e., which is the same as $H_2 \cap \oliA(P_1) \subseteq \{\ol{\ell}\}$ and $H_1 \cap \oliA(P_2) \subseteq \{\ell\}$.
 The sub-expressions $P_1 \cnf \pi'$ and $P_2 \cnf \pi'$ are also c-coherent for a common pivot policy $\pi'$. This policy may contain local labels that are restricted or hidden in the outer process $P$. Since by assumption we have 
 $H_1 \cap \oliA(R_1) \subseteq (H_1 \cup H_2) \cap \oliA(R_1 \cpar R_2) = \eset$ and likewise $H_2 \cap \oliA(R_2) = \eset$, we can apply Lem.~\ref{lem:pivot-no-local-block-1} to show $H_2 \cap \oliA(P_1) \subseteq \{\ol{\ell}\}$ and $H_1 \cap \oliA(P_2) \subseteq \{\ell\}$. This means $B = \eset$ and we are done.
\end{proof}

\paragraph*{\mrev{The Role of Pivotability}{Fixed}}

\newcommand{\uola}{\ensuremath{\ul{\ol{a}}}} 
\newcommand{\uolb}{\ensuremath{\ul{\ol{b}}}} 
\newcommand{\uolc}{\ensuremath{\ul{\ol{c}}}}
\newcommand{\ola}{\ensuremath{\ol{a}}} 
\newcommand{\olc}{\ensuremath{\ol{c}}}
\newcommand{\olb}{\ensuremath{\ol{b}}}

Let us give an example to show that the requirement of pivotability cannot be removed in Thm.~\ref{thm:church-rosser}, expressing preservation of c-coherence for parallel compositions. For compactness of notation let us write $\ul{\ell} \col H$ for $\ell \col (H \cup \{\ell\})$ to create reflexive prefixes. Also, we write $\ell^\ast\col H$ for the infinite loop $\ell^\ast \col H \pdef \ell\col H \cseq \ell^\ast \col H$, and $\ell^\ast$ for $\ell^\ast\col \eset$.
One can show that the process $\ell^\ast\col H$ is c-coherent for all $H$. 

\medskip 

Now, consider the process $R$, given by
$$
  R = R_0  
    \cpar \ul{\ol{b}}^\ast \cpar \ul{\ol{c}}^\ast  
  \text{ with }
  R_0 = (\ul{\ol{b}} \col \ol{a} \cseq R_1
    + \ul{\ol{c}} \col \ol{a} \cseq R_2) 
  \quad
  R_1 = \ul{\ol{a}} + \ul{\ol{c}} \col \ol{a}
  \text{ and }
  R_2 = \ul{\ol{a}} + \ol{\ul{b}} \col \ol{a},
$$
which can be shown to be c-coherent. First, note that $R$ generates four initial (c-enabled) transitions:

\begin{enumerate}
  \item $R \Derives{\olb}{\{ \ola,\olb \}}{\uolb^\ast\cpar\uolc^\ast} R_1 \cpar \uolb^\ast \cpar \uolc^\ast$
  \item $R \Derives{\uolc}{\{\ola,\olc\}}{\uolb^\ast\cpar\uolc^\ast} R_2 \cpar \uolb^\ast \cpar \uolc^\ast$ 
  \item $R \Derives{\olb}{\{\olb\}}{R_0 \cpar \uolc^\ast} R_0 \cpar \uolb^\ast \cpar \uolc^\ast$
  \item $R \Derives{\olc}{\{\olc\}}{R_0 \cpar \uolb^\ast} R_0 \cpar \uolb^\ast \cpar \uolc^\ast$.
\end{enumerate}

\noindent Among these, we must look at all pairs of interference-free transitions $R \Derives{\alpha_1}{H_1}{E_1} Q_1$ and $R \Derives{\alpha_2}{H_2}{E_2} Q_2$. In particular,  $\alpha_1 \neq \alpha_2$ or $Q_1 \not\equiv Q_2$ or both $\{ \alpha_1, \alpha_2 \} \subseteq \R$ and $\alpha_i \not\in H_{3-i}$ by Def.~\ref{def:interference-free}(3). If $\alpha_1 = \alpha_2$, because of the reflexive blocking of all prefixes, we must have $Q_1 \not\equiv Q_2$, which leaves two cases: the pair of transitions (1)+(3) and the pair of transitions (2)+(4). In both cases one shows that there is a reconvergence with strong residuals as required by c-coherence Def.~\ref{def:coherence}. If $\alpha_1 \neq \alpha_2$, then $\{\alpha_1, \alpha_2\} = \{ \olb, \olc\}$ which has three  possible cases in which $\alpha_i \not\in H_{3-i}$ as required by Def.~\ref{def:interference-free}(1), viz. the combinations (1)+(2), (1)+(4) and (2)+(3). Again, in each case one can establish the reconvergence with strong residuals. 

\medskip 

We remark that the sub-process $R_0$ by itself is not c-coherent as it is missing the strong residuals. By putting it in parallel with $\uolb^\ast \cpar \uolc^\ast$ this deficiency is fixed. 
Thus, the example shows that not every sub-expression of a c-coherent process needs to be c-coherent. Next we show that, without pivotability, c-coherence can also be lost by parallel composition.
Towards this end, we put $R$ in parallel with the c-coherent process
$$
 Q = b^\ast \cpar c^\ast \col a
$$
which has two admissible transitions
$$ 
    Q \Derives{b}{\eset}{c^\ast \col a } Q 
    \text{ and }
    Q \Derives{c}{\{a\}}{b^\ast} Q.
$$ 
Coherence follows from Thm.~\ref{thm:summary-coherence-closure}(5) and the fact that both sub-processes $b^\ast$ and $c^\ast \col a$ are c-coherent and conformant to the same pivot-policy with sole precedence $a \ordpre c$.
We show that the parallel composition $Q \cpar R$ is not c-coherent. We have admissible c-actions 
$$ 
  Q \cpar R \Derives{b \cpar \ol{b}}{\{ \ol{a}, \olb\}}{c^\ast \col a \cpar \uolb^\ast \cpar \uolc^\ast} Q \cpar R_1 \cpar \uolb^\ast \cpar \uolc^\ast
  \text{ and } 
  Q \cpar R \Derives{c \cpar \ol{c}}{\{ \ol{c}, a, \ol{a} \}}{b^\ast \cpar \uolb^\ast \cpar \uolc^\ast} Q \cpar R_2 \cpar \uolb^\ast \cpar \uolc^\ast
$$ 
which are in fact c-enabled. \mrev{Notice that the second synchronisation $c \cpar \ol{c}$ is not blocked by a local race in $(\ComR)$, since $a \not\in \oliA(R)$ and $\{\ola, \olc \} \cap \oliA(b^\ast \cpar \uolb^\ast \cpar \uolc^\ast) = \eset$}{added for clarification} However, the continuation process $Q \cpar R_1 \cpar \uolb^\ast \cpar \uolc^\ast = Q \cpar (\ul{\ol{a}} + \ul{\ol{c}} \col \ol{a}) \cpar \uolb^\ast \cpar \uolc^\ast$ on the left is binary blocked by a race condition. We find 
$$ 
  Q \cpar R_1 \cpar \uolb^\ast \cpar \uolc^\ast
   \Derives{c \cpar \ol{c}}{\{ a, \ol{c}, \ol{a}, \tau \}}{b^\ast \cpar \uolb^\ast \cpar \uolc^\ast} Q \cpar \uolb^\ast \cpar \uolc^\ast
$$ 
where the silent $\tau$ enters the blocking set, because $\{ a \} \cap \oliA(\ul{\ol{a}} + \ul{\ol{c}} \col \ol{a}) \not\subseteq \{ c \}$. This is a race computed in rule $(\ComR)$. Thus, the reduction $c \cpar \ol{c}$ of the original state $Q \cpar R$ is no longer c-enabled in the state $Q \cpar R_1 \cpar \uolb^\ast \cpar \uolc^\ast$. This breaks coherence.

\medskip 

Finally, we notice that although each of $Q$ and $R$ are pivotable, their composition $Q \cpar R$ is not pivotable any more. A common policy $\pi$ for $Q \cpar R$ must obviously include the precedences $\ol{a} \ordpre \ol{c} \in \pi$ so $Q$ is conformant, as well as $a \ordpre c \in \pi$ so $R$ is conformant. This is not pivot according to Def.~\ref{def:pivot-policy}. 

\section{Proofs}\label{proofs}

\begin{proposition}
 If a c-coherent process $P$ offers \mrev{c-enabled}{term corrected} transitions on a clock $\sigma$ and another distinct action $\alpha \neq \sigma$ with 
$ 
 P \Derives{\alpha}{H_1}{R_1} P_1
 \text{ and } 
 P \Derives{\sigma}{H_2}{R_2} P_2
 $ 
 then 
 $\alpha \in H_2$ or $\sigma \in H_1$. 
 In particular, if $\alpha = \tau$ then $\sigma \in H_1$. 
\label{prop:clock-interference}
\end{proposition}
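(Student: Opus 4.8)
The plan is to argue by contradiction, combining coherence (Def.~\ref{def:coherence}) with a structural observation about broadcast transitions. Concretely, I would assume the negation of the conclusion, i.e.\ $\alpha \not\in H_2$ and $\sigma \not\in H_1$, show that the two c-actions $\alpha\col H_1[R_1]$ and $\sigma\col H_2[R_2]$ are then interference-free, invoke the coherence diamond, and extract a transition from the concurrent context $R_2$ of the clock step that cannot exist.

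The key enabling fact is an auxiliary lemma: for every admissible clock transition $P \Derives{\sigma}{H}{R} P'$ with $\sigma\in\C$, the concurrent context satisfies $R \scong \zero$. Intuitively, a broadcast leaves no concurrent residue because every thread must participate in the tick. I would prove this by a routine induction on the derivation. The axiom $(\ActR)$ yields $R = \zero$; the rules $(\SumR_{1,2})$, $(\RestrR)$ and $(\HideR)$ merely inherit the context and possibly close it under $\restrict L$ or $\hide L$, which collapse to $\zero$ by $(\mathsf{Scope}_{\zero})$; the rules $(\ParR_{1,2})$ do not apply since they require the action to lie outside $\C$; and in $(\ParR_3)$ a clock label $\sigma$ can only be synthesised when $\ell = \sigma$, so both premises are $\sigma$-transitions and the resulting context $R_1 \cpar R_2$ is $\zero \cpar \zero \scong \zero$ by the induction hypothesis and $(\mathsf{Zero})$.

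With the lemma in hand, I would verify interference-freedom of $\alpha\col H_1[R_1]$ and $\sigma\col H_2[R_2]$ (Def.~\ref{def:interference-free}). Clause~(1) holds because $\alpha \neq \sigma$ and, by the contradiction hypothesis, $\alpha\not\in H_2$ and $\sigma\not\in H_1$. Clause~(2) is vacuous unless one of the labels is $\tau$; since $\sigma\neq\tau$, the only relevant instance is $\alpha=\tau$, for which the requirement $H_2 \cap (\olwilA{\ast}(R_2)\cup\{\tau\})=\eset$ is precisely c-enabledness of the clock transition (Def.~\ref{def:c-enabling}). Applying coherence to the two transitions from $P$ is now legitimate: the premise $\alpha_1\neq\alpha_2$ is met since $\alpha\neq\sigma$, and because $\sigma\in\C$ we have $\{\alpha,\sigma\}\cap\C\neq\eset$, so the ``in particular'' part of Def.~\ref{def:coherence} yields a genuine residual transition $R_2 \Derives{\alpha}{}{} E_2'$. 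But the auxiliary lemma gives $R_2 \scong \zero$, and $\zero$ admits no transition (closure under $\equiv$ via $(\StructR)$ reduces any transition of $R_2$ to one of $\zero$). This contradiction establishes $\alpha\in H_2$ or $\sigma\in H_1$. For the final claim, when $\alpha=\tau$ enabledness of the clock step forces $\tau\not\in H_2$, so the first disjunct is excluded and $\sigma\in H_1$.

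I expect the main obstacle to be the $(\ParR_3)$ case of the auxiliary lemma: one must confirm that a $\sigma$-labelled conclusion can only arise from two $\sigma$-premises (rather than from any rendez-vous pair), so that both residues are $\zero$, and check that the $race$ side condition and any intervening hiding do not reintroduce a non-trivial context. Once that structural fact is secured, the remainder is a direct instantiation of the coherence diamond after the straightforward interference-freedom bookkeeping.
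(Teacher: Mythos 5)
Your argument is correct and is essentially the paper's own proof in contrapositive form: the paper likewise combines Lem.~\ref{lem:prio-basic-1}(1) (clock transitions have concurrent context $\zero$) with the strong environment shift that coherence forces when $\{\alpha,\sigma\}\cap\C\neq\eset$, concluding that the two c-actions must interfere and then unfolding Def.~\ref{def:interference-free} to get $\alpha\in H_2$ or $\sigma\in H_1$, with the same final appeal to $\tau\notin H_2$ for the $\alpha=\tau$ case. The only difference is that you re-derive the auxiliary lemma by rule induction where the paper simply cites it; no gap.
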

\begin{proof}
 Since $R_2 = \zero$ by Lem.~\ref{lem:prio-basic-1}(1) and $\iA(\zero) = \eset$, we have $\alpha \not \in \iA(R_2)$ and thus $\alpha$ cannot be an initial action 
 $R_2 \Derives{\alpha}{}{} R_2'$, 
 by Lem.~\ref{lem:prio-basic-1}(2). \mrev{This contradicts the existence of strong residuals which would be required because}{added explanation} of $\{ \alpha, \sigma \} \cap \C \neq \eset$. But then the coherence \mrev{Def.~\ref{def:coherence}}{corrected ref} implies that the c-actions $\alpha \col H_1[R_1]$ and $\sigma \col H_2[R_2]$ must interfere. 
 Since $\alpha \neq \sigma$ this implies $\alpha \in H_2$ or $\sigma \in H_1$, or $\alpha = \tau$ and $H_2 \cap (\olwilA{\ast}(R_2) \cup \{ \tau \}) \neq \eset$. Since $R_2 = \zero$ the latter reduces to $\alpha = \tau \in H_2$. 
 \mrev{This means that we have $\alpha \in H_2$ or $\sigma \in H_1$ in all cases.}{added explanation.}
 The last part of the Proposition now follows from the fact that the clock transition is \mrev{c-enabled}{corrected} and thus $\tau \not \in H_2$.
\end{proof}

\begin{proof}[Proof of Thm.~\ref{thm:church-rosser}]
 Let $P$ be c-coherent and $Q$ a derivative with c-enabled reductions 
$ 
Q \Derives{\tau}{H_1}{R_1} Q_1 \text{ and }$  $Q \Derives{\tau}{H_2}{R_2} Q_2.
$ 
If $Q_1 \equiv Q_2$ we are done immediately. Suppose $Q_1 \not\equiv Q_2$. Then c-enabling implies $\tau \not \in H_i$ as well as $H_i \cap \olwilA{\ast}(R_i) = \eset$, which implies $H_i \cap (\olwilA{\ast}(R_i) \cup \{ \tau \}) = \eset$. But then the c-actions $\tau \col H_1[R_1]$ and $\tau \col H_2[R_2]$ are interference-free, whence coherence \mrev{Def.~\ref{def:coherence}}{corrected ref} implies there exist $H_i'$ and processes $R_i'$, $Q'$ such that 
$ 
 Q_1 \Derives{\tau}{H_{2}'}{R_{2}'} Q'
 \text{ and }$ 
\mbox{$Q_2 \Derives{\tau}{H_{1}'}{R_{1}'} Q'$}
$\text{and }
 R_1 \astep{\tau} R_1' 
 \text{ and } 
 R_2 \astep{\tau} R_2'
 $ 
 with $H_{i}'\subseteq H_{i}$. 
 Finally, observe that $\olwilA{\ast}(R_i') \subseteq \olwilA{\ast}(R_i)$ by Lem.~\ref{lem:c-enabled-aux}(3) \mrev{and the definition of residual steps}{added explanation}, from which we infer 
$ 
H_i' \cap (\olwilA{\ast}(R_i') \cup \{ \tau \}) \subseteq 
 H_i \cap (\olwilA{\ast}(R_i) \cup \{ \tau \}).
 $ 
Thus, the reconverging reductions 
$Q_i \Derives{\tau}{H_{3-i}'}{R_{3-i}'} Q'$
are again c-enabled. This was to be shown.
\end{proof}

\begin{proof}[Proof of Prop. \ref{prop:action-determinism}]
 \mrev{
 Suppose $P$ is c-coherent with 
 $
 P \Derives{\sigma}{H_1}{R_1} P_1
 $ 
 and
 $ 
 P \Derives{\sigma}{H_2}{R_2} P_2.
 $ 
 %
 Suppose by contraposition, $P_1 \not\equiv P_2$.
 Then the c-actions $\sigma \col H_i[R_i]$ are trivially interference-free.   Then, c-coherence gives us a reconvergence and strong environment shifts implying $\sigma \in \iA(R_i)$ for $i = 1,2$. However, by Lem.~\ref{lem:prio-basic-1}(1) we have $R_i = \zero$ and thus $\iA(R_i) = \eset$. This is a contradiction. Thus, $P_1 \equiv P_2$ as claimed.
 }{corrected the proof; the original proof was for a hidden long version and did not fit the statement.}
 \end{proof}

\begin{lemma} \mbox{}
 Let $P_1 \cnf  \pi$ and $P_2 \cnf \pi$ be c-coherent for the same pivot policy $\pi$. If $P_1$ and $P_2$ are single-threaded, then 
 every admissible transition of $P_1 \cpar P_2$ is c-enabled. 
\label{lem:coherent-pivot-coincidence}
\end{lemma}

\begin{proof}
 Consider an admissible transition of $P_1 \cpar P_2$ where $P_1$ and $P_2$ are single-threaded. Firstly, by Lem.~\ref{lem:prio-basic-1}, each of the single-threaded processes can only generate c-actions $\ell_i \col H_i[R_i]$ where $\tau \not\in H_i$ and $R_i = \zero$. Such transitions are always c-enabled for trivial reasons. Further, any rendez-vous synchronisation of such an $\ell_1 \col H_1[R_1]$ and $\ell_2 \col H_2[R_2]$ generates a c-action $\tau \col (H_1 \cup H_2 \cup B)[R_1 \cpar R_2]$ with $R_1 \cpar R_2 = \zero$. 
 Because of Lem.~\ref{lem:pivot-no-local-block-1}, the set
$$B = \{\tau \mid H_1 \cap \oliA(P_2) \subseteq \{\ell\} \text{ or } H_2 \cap \oliA(P_1) \subseteq \{\ol{\ell}\}\}$$
must be empty. Thus, the reduction $\ell_1 \cpar \ell_2$ is c-enabled, too. 
\end{proof}

The following, slightly extended version of Prop.~\ref{prop:maximal-progress} says that a pivotable process cannot offer a clock action and exhibit another clock or a rendez-vous synchronisation at the same time. This means that in this class of processes, clocks and reductions are sequentially scheduled.

\begin{proposition}[Sequential Schedule and Clock Maximal Progress]
 Suppose $P$ is c-coherent and pivotable and $\sigma \in \iA(P)$. Then, for all $\ell \in \L$ with $\ell \neq \sigma$ we have $\ell \not\in \iA(P)$ or $\ol{\ell} \not\in \iA(P)$. In particular, $P$ is in normal form, i.e., there is no reduction $P \fstep{\tau} P'$. 
\label{prop:maximal-progress-x}
\end{proposition}

\begin{proof}
 The proof proceeds by contradiction. Let $P \cnf \pi$ for pivot policy $\pi$. Suppose $\ell \in \L$, $\ell \neq \sigma$ and
 $ 
 P \Derives{\sigma}{H}{\zero} P_1 \text{ and }
 P \Derives{\ell}{N}{F} P_2 \text{ and }
 P \Derives{\ol{\ell}}{M}{G} P_2.
 $ 
By Prop.~\ref{prop:clock-interference} we infer $\sigma \in N$ or $\ell \in H$, i.e., by conformance Def.~\ref{def:conformance} we have $\pi \Vdash \sigma \ordpre \ell$ or $\pi \Vdash \ell \ordpre \sigma$. Hence, $\pi \nVdash \sigma \indep{} \ell$. For the same reason we have $\pi \nVdash \sigma \indep{} \ol{\ell}$. 
 However, this contradicts the pivot property of $\pi$, which requires $\pi \Vdash \sigma \indep{} \ell$ or $\pi \Vdash \sigma \indep{} \ol{\ell}$. 
 \mrev{Thus, $\ell \not\in \iA(P)$ or $\ol{\ell} \not\in \iA(P)$.}{added explanation.} Finally, if $P \fstep{\tau} P'$ then the reduction must arise from a rendez-vous synchronisation inside $P$, i.e., there is $\ell \in \R$ with $\ell \in \iA(P)$ and $\ol{\ell} \in \iA(P)$. But then $\sigma \in \iA(P)$ is impossible as we have just seen. 
\end{proof}

The proof of Thm.~\ref{thm:summary-coherence-closure} will be conducted in the following Secs.~\ref{sec:stop-prefix}--\ref{sec:hiding}. It proceeds by induction on the structure of derivations in the SOS of \ccslm, given in Fig.~\ref{fig:free-sos}. We first show that the indirections of syntactic transformations via the static laws of Fig.~\ref{fig:structural-cong}, built into the SOS by the $(\StructR)$ rule, can in fact be eliminated in terms of a finite number of standard SOS rules that do not refer to structural laws. 

\begin{definition}
  A process $P$ is called \textit{$\zero$-free} if either $P = \zero$ or all occurrences of $\zero$ in $P$ are guarded by a prefix $\ell\col H\cseq R$.
\end{definition}

\begin{proposition}
  Every process $P$ is structurally congruent to a $\zero$-free process $Q$.
\label{prop:zero-free}
\end{proposition}

\begin{proof}
  By application of the laws 
  $P \star \zero \equiv P$ and $\zero \wr L \equiv \zero$, where $\star \in\{\cpar,+\}$ and $\wr \in \{\restrict, \hide\}$ we can eliminate every occurrence of $\zero$ in an process $P$ that is not of the form $\alpha \col H \cseq \zero$.
\end{proof} 

\begin{proposition}
  If $P$ is a $\zero$-free process, and there is a derivation ${\cal D}$ for  $P \Derives{\alpha}{H}{R} Q$, then all uses of the rule $(\StructR)$ in ${\cal D}$ can be eliminated in terms of the following symmetric versions of the rules $(\ParR)$ and $(\SumR)$:
  $$
  \begin{array}{r@{\qquad}r}
    \infer[(\SumR_1)]
    {P + Q \Derives{\alpha}{H}{R} P'}
    {P \Derives{\alpha}{H}{R} P'}
    &
    \infer[(\SumR_2)]
    {P + Q \Derives{\alpha}{H}{R} Q'}
    {Q \Derives{\alpha}{H}{R} Q'}
    \\[5mm]
    \infer[(\ParR_1)]
    {P \cpar Q \Derives{\alpha}{H}{R \cpar Q} P' \cpar Q}
    {P \Derives{\alpha}{H}{R} P' & \alpha \not\in \C }
    & 
     \infer[(\ParR_2)]
    {P \cpar Q \Derives{\alpha}{H}{R \cpar Q} P \cpar Q'}
    {Q \Derives{\alpha}{H}{R} Q' & \alpha \not\in \C }
    \end{array}
  $$
  More specifically, there is a derivation ${\cal D'}$ for  $P \Derives{\alpha}{H}{R'} Q'$ with $R' \equiv R$ and $Q' \equiv Q$. 
\label{prop:struct-simulate}
\end{proposition}
\begin{proof}[Proof Sketch]
By induction on the structure of processes one shows that the transitions generated by the SOS with $(\StructR)$ can be 
simulated by the extended SOS without $(\StructR)$. The argument is based on the following case analysis:
%
%
\begin{itemize}
    \item If $P = \zero$ then no process structurally equivalent to $P$ can generate any transition: If $P \equiv Q$ and $Q \Derives{\alpha}{H}{R} Q'$ then $P \neq \zero$. Hence, the statement of the proposition holds trivially. 
    \item For prefixes we show that if $P \equiv \ell \col H \cseq Q$ and $P \Derives{\ell'}{H'}{E'} Q'$ then $\ell' = \ell$, $H' = H$, $E' \equiv \zero$ and $Q' \equiv Q$.
    \item For sums we observe that if $P \equiv P_1 + P_2$ and $P \Derives{\ell}{H}{E} Q$ then there exists $i \in \{1,2\}$ such that $P_i \Derives{\ell}{H}{E'} Q'$ such that $E' \equiv E$ and $Q' \equiv Q$.
    
\item For all other operators, e.g.  $P \equiv  P_1 \cpar P_2$, $P \equiv Q \hide L$, $P \equiv Q \restrict L$, we can easily show that transitions  $P \Derives{\alpha}{H}{R} P'$ can be obtained from transitions of $P_i$ and $Q$.  
\end{itemize}
\end{proof}

Note that the restriction in Prop.~\ref{prop:struct-simulate} is necessary. If $P$ is not $\zero$-free, say $P = \zero \cpar Q$, then  we can have $\zero \cpar Q \xrightarrow{\sigma} Q'$ because of $(\StructR)$, because $\zero \cpar Q \equiv Q$ and $Q \xrightarrow{\sigma} Q'$, but without $(\StructR)$ we could not derive $\zero \cpar Q \xrightarrow{\sigma} Q'$.

Because of Prop.~\ref{prop:zero-free} we may assume our processes are $\zero$-free. 
Because of Prop.~\ref{prop:struct-simulate} we may prove the preservation of coherence (for $\zero$-free processes) by induction on the SOS rules without considering $(\StructR)$, but including the symmetric rules $(\ParR_i)$ and $(\SumR_i)$ for $i \in \{1,2\}$.

\subsection{Stop and Prefixes}
\label{sec:stop-prefix}

\begin{proposition}
 The process $\zero$ is c-coherent for any policy, i.e., $\zero \cnf \pi$ for all $\pi$.
\label{prop:zero-coherent} 
\end{proposition}

\begin{proof}
 The inactive process $\zero$ is c-coherent for trivial reasons, simply because it does not offer any transitions at all. There is no SOS rule applicable to it in the $(\StructR)$-free setting. Note that in the system with $(\StructR)$ rule, we would have to argue that no process structurally equivalent to $\zero$ has any transitions.
\end{proof}

\begin{proposition}
 Let process $Q$ be c-coherent for $\pi$. Then, for every action $\ell \in \R$ the prefix expression $\ell \col H \cseq Q$ is c-coherent for $\pi$, if $\ell \in H \subseteq \{ \ell' \mid \ell' \ordpre \ell \in \pi\}$. 
\label{prop:channel-prefix-coherent} 
\end{proposition}

\begin{proof}
 A prefix expression $P = \ell \col H \cseq Q$ generates only a single transition by rule $(\ActR)$, so that the assumption
 \begin{equation}
   P \Derives{\alpha_1}{H_1}{E_1} Q_1
   \text{ and }
   P \Derives{\alpha_2}{H_2}{E_2} Q_2
 \label{eqn:prefix-divergence}
 \end{equation}
 implies $\alpha_i = \ell$, $H_i = H$, $E_i \equiv \zero$ and $Q_i \equiv Q$. 
 It is easy to see that $\ell \col H \cseq Q$ conforms to $\pi$ if \mrev{$H \subseteq \{ \beta \mid \beta \ordpre \ell \in \pi \}$.}{corrected} 
 Since \mrev{$\alpha_i = \ell \in H = H_{3-i}$}{corrected} and $\{ \alpha_1, \alpha_2 \} \subseteq \R$ 
 nothing needs to be proved \mrev{since~\eqref{eqn:prefix-divergence} do not count as a interprerence-free in the sense of Def.~\ref{def:interference-free}}{added explanation}. 
 Finally, note that the only immediate derivative of $\alpha \col H \cseq Q$ is $Q$ which is c-coherent for $\pi$ by assumption. 

 Again, it is important to notice that the SOS with $(\StructR)$ would not immediately warrant the conclusion that both derivations~\eqref{eqn:prefix-divergence} are by $(\ActR)$. This is because these transitions could be the transitions of two syntactically distinct (but structurally congruent) expressions $P_1 \equiv P \equiv P_2$. Instead, we would have to argue that if $P = \ell \col H \cseq Q \equiv P'$ and $P' \Derives{\alpha'}{H'}{E'} Q'$ then $\alpha' = \ell$, $H' = H$, $E' \equiv \zero$ and $Q' \equiv Q$. That this is the case is a property of our specific system of structural laws and handled by the above Prop.~\ref{prop:struct-simulate}. This would obviously be violated, e.g., if we added a structural law like $\ell \col H \cseq Q \equiv \alpha' \col H' \cseq Q'$ where $\alpha' \neq \ell$, $H' \neq H$, $E' \not\equiv \zero$ or $Q' \not\equiv Q$.
\end{proof}

As Examples~\ref{ex:rwmem} and~\ref{examples:Esterel-signal} show, 
a prefix $\ell \col H\cseq Q$ for $\ell \in \R$ can very well be c-coherent even if $\ell \not\in H$. Here we note that for clock prefixes, the blocking set is only constrained by the policy.

\begin{proposition}
 If $Q$ is c-coherent for $\pi$ and $\sigma \in \C$ is a clock, then the prefix expression $\sigma \col H \cseq Q$ is c-coherent for $\pi$, too, if $H \subseteq \{ \beta \mid  \beta \ordpre \sigma \in \pi\}$. 
\label{prop:clock-prefix-coherent} 
\end{proposition}

\begin{proof}
 The argument runs exactly as in case of Prop.~\ref{prop:channel-prefix-coherent}. However, we do not need to require condition $\sigma \in H$ because if $\alpha_i = \sigma$ then $\alpha_1 = \alpha_2$, $Q_1 \equiv Q_2$ \mrev{(by clock determinism Prop.~\ref{prop:action-determinism})}{added explanation} and $\{ \alpha_1, \alpha_2 \} \not\subseteq \R$. Thus, \mrev{the clock transitions are not interference-free and so}{added explanation} no reconvergence is required. 
 Again, we observe that the only immediate derivative of $\sigma \col H \cseq Q$ is $Q$ which is c-coherent for $\pi$ by assumption.
\end{proof}

\subsection{Summation}
\label{sec:summation}
\begin{proposition}
 \mrev{Let $Q \cnf \pi_1$ and $R \cnf \pi_2$ be c-coherent and for all pairs of initial transitions 
 $ 
 Q \Derives{\alpha_1}{H_1}{F_1} Q' \text{ and }
 R \Derives{\alpha_2}{H_2}{F_2} R'
 $ 
 we have $\alpha_1 \neq \alpha_2$ as well as $\alpha_1 \in H_2$ or $\alpha_2 \in H_1$. 
 Then $Q + R$ is c-coherent for $\pi$ if $\pi_1 \subseteq \pi$ and $\pi_2 \subseteq \pi$.}{corrected statement} 
 \label{prop:sum-coherent} 
\end{proposition}
%
\begin{proof}
 Let $P = Q + R$ be given with $Q$ and $R$ c-coherent for $\pi_1$ and $\pi_2$, respectively and the rest as in the statement of the proposition. Every transition of $P$ is either a transition of $Q$ or of $R$ via rules $(\SumR_i)$. Conformance to $\pi$ directly follows from the same property of $Q$ or $R$, respectively. The proof of c-coherence Def.~\ref{def:coherence} is straightforward, exploiting that two competing but non-interfering transitions must either be both from $Q$ or both from $R$. More precisely, suppose 
 $ 
 P \Derives{\alpha_1}{H_1}{F_1} P_1 \text{ and }
 P \Derives{\alpha_2}{H_2}{F_2} P_2
 $ 
 are \mrev{interference-free, in particular}{added explanation} $\alpha_1 \neq \alpha_2$ or $P_1 \not\equiv P_2$ or $\{ \alpha_1, \alpha_2 \} \subseteq \R$ and $\alpha_i \not\in H_{3-i}$.
 Now if the two transitions of $P$ are by $(\SumR_1)$ from $Q$ \mrev{and}{corrected} by $(\SumR_2)$ from $R$, then $\alpha_1 \in \iA(Q)$ and $\alpha_2 \in \iA(R)$. But then by assumption, $\alpha_1 \neq \alpha_2$ and $\alpha_1 \in H_2$ or $\alpha_2 \in H_1$, \mrev{contradicting the assumption that}{better explanation} $\alpha_1 \col H_1[F_1]$ and $\alpha_2 \col H_2[E_2]$ are interference-free. This means, to prove coherence \mrev{Def.~\ref{def:coherence}}{corrected reference} we may assume that both transitions of $P$ are either by $(\SumR_1)$ from $Q$ or both by $(\SumR_2)$ from $R$.
 Suppose both reductions of $P = Q + R$ are generated by rule $(\SumR_1)$, i.e., 
 $ 
 Q \Derives{\alpha_1}{H_1}{F_1} Q_1 \text{ and }
 Q \Derives{\alpha_2}{H_2}{F_2} Q_2
 $ 
 where the transitions generated by $(\SumR_1)$ are 
$ Q + R \Derives{\alpha_1}{H_1}{F_1} Q_1 
 \text{ and } 
 Q + R \Derives{\alpha_2}{H_2}{F_2} Q_2.
 $
Under the given assumptions, the reconverging transitions of coherence \mrev{Def.~\ref{def:coherence}}{corrected ref} are obtained directly from coherence of $Q$. \mrev{}{dropped ``by induction''} The case that both transitions are from $R$ by rule $(\SumR_2)$ is symmetrical. Finally, the immediate \mrev{subexpressions}{changed wording} $P_1$ and $P_2$ of $P_1 + P_2$ are c-coherent for $\pi$, because they are c-coherent for $\pi_i$ by assumption and thus also for the common extension $\pi_i \subseteq \pi$.
\end{proof}

\subsection{Parallel}
\label{sec:parallel}

\begin{lemma}[Key lemma]
 Let $Q \cnf \pi$ and $R \cnf \pi$ be such that $\pi \restrict \R$ is a pivot policy. Then $(Q \cpar R) \cnf \pi$.
\label{prop:cpar-coherent} 
\end{lemma}
%
%
\begin{proof}
 Let $P = Q \cpar R$ such that both $Q$ and $R$ are c-coherent for $\pi$ where $\pi \restrict \R$ is a pivot policy.
 For conformance consider a transition 
 %
$$ P \Derives{\ell}{H}{E} P'
$$
with $\ell \in \L$ and $\ell' \in H$. First suppose $\ell \not\in \C$. Then this transition is not a synchronisation but a transition of either one of the sub-processes $Q$ by $(\ParR_1)$ or $P$ by $(\ParR_2)$ with the same blocking set $H$. In either case, we thus use coherence $Q \cnf \pi$ or $R \cnf \pi$ to conclude $\ell' \ordpre \ell \in \pi$. 
 If $\ell = \sigma$, then the transition in question is a synchronisation of $Q$ and $R$ via rule $(\ComR)$
\mrev{$$ 
 Q \cpar R \Derives{\sigma}{H}{\zero} P'
$$}{adjusted to fit the following text} 
 with $E \equiv \zero$, $H = H_2 \cup N_2 \cup B_2$ and $P' = Q_2 \cpar R_2$ generated from transitions 
 $$ 
 Q \Derives{\sigma}{H_2}{\zero} Q_2 
 \text{ and }
 R \Derives{\sigma}{N_2}{\zero} R_2.
 $$
 and \mrev{$B_2 = \{ \tau \mid H_2 \cap \oliA(R) \not\subseteq \{ \sigma \} \text{ or }N_2 \cap \oliA(Q) \not\subseteq \{\sigma\}\}$}{corrected}. 
 By Lem.~\ref{lem:pivot-no-local-block-1} we have $B_2 = \eset$.  If we now take an arbitrary $\ell' \in H \cap \L$ then $\ell' \in H_2$ or $\ell' \in N_2$. In these cases, we obtain $\ell' \ordpre \ell \in \pi$ by conformance Def.~\ref{def:conformance} from $Q  \cnf \pi$ or $R \cnf \pi$. 

 In the sequel, we tackle coherence Def.~\ref{def:coherence}. 
 As before, we work in the SOS without $(\StructR)$ but with symmetrical rules $(\ParR_i)$. First we observe that the immediate derivatives of $Q \cpar R$ are always parallel compositions $Q' \cpar R'$ of derivatives of $Q$ and $R$. Thus, we may assume that the immediate derivatives $Q' \cpar R'$ are c-coherent for $\pi$ because coherence of $P$ and $Q$ is preserved under transitions. \mrev{}{dropped ``by co-induction''} Now suppose 
 $$ 
 P \Derives{\alpha_1}{H_1}{E_1} P_1 
 \text{ and }
 P \Derives{\alpha_2}{H_2}{E_2} P_2
 $$ 
 \mrev{are interference-free, in particular}{shortened explanation} such that $\alpha_1 \neq \alpha_2$ or $P_1 \not \equiv P_2$ or $\{ \alpha_1, \alpha_2 \} \subseteq \R$ and $\alpha_i \not\in H_{3-i}$. 
 We argue reconvergence by case analysis on the rules that generate these transitions. These could be $(\ParR_1)$, $(\ParR_2)$ or $(\ComR)$.
 First note that by Lem.~\ref{lem:pivot-no-local-block-1} we can drop the consideration of the synchronisation action $\ell \cpar \ol{\ell}$ for the blocking set in $(\ComR)$ altogether:
 $$
 \infer[(\ComR)]
 {P \cpar Q \oldDerives{\ell \cpar \ol{\ell}}{H_1 \cup H_2}{R_1 \cpar R_2} P' \cpar Q'}
 {P \Derives{\ell}{H_1}{R_1} P' & Q \Derives{\ol{\ell}}{H_2}{R_2} Q' 
 }
 $$
 We will use the name $(\ComR_{a})$ to refer to the instance of $(\ComR)$ with $\ell \in \R$ and the name $(\ComR_{\sigma})$ for the instance with $\ell \in \C$ and we \mrev{proceed}{corrected typo} by case analysis.

\begin{itemize}

\item $(\ParR_1, \ParR_2) \indep{} (\ComR_{a})$. We start with the case where one of the reductions to $P_i$ is non-synchronising by $(\ParR_1)$ or $(\ParR_2)$ and the second reduction to $P_{3-i}$ is a synchronisation obtained by $(\ComR_{a})$. By symmetry, it suffices to consider the case that the non-synchronising transition is by $(\ParR_1)$ from $Q$, i.e., $\alpha_1 \not\in C$ and $\alpha_2 = \ell \cpar \ol{\ell}$ and
$$ 
 Q \cpar R \Derives{\alpha_1}{H_1}{F_1 \cpar R} Q_1 \cpar R 
 \text{ and }
 Q \cpar R \Derives{\ell \cpar \ol{\ell}}{H_2' \cup N_2}{F_2 \cpar G_2} Q_2 \cpar R_2
$$ 
 with $E_1 =  F_1 \cpar R$, $E_2 = F_2 \cpar G_2$, 
 $H_2 = H_2' \cup N_2$, $P_1 = Q_1 \cpar R$ and $P_2 = Q_2 \cpar R_2$, 
 where $\ell \in \R$ is a rendez-vous action such that 
 $$ 
 Q \Derives{\alpha_1}{H_1}{F_1} Q_1 \text{ and }
 Q \Derives{\ell}{H_2'}{F_2} Q_2 \text{ and }
 R \Derives{\ol{\ell}}{N_2}{G_2} R_2.
 $$ 
 \mrev{To show c-coherence we exploit the assumption that}{improved wording}
 the c-actions $\alpha_1 \col H_1[F_1\cpar R]$ and $\ell\cpar\ol{\ell}\col (H_2'\cup N_2)[F_2 \cpar G_2]$ are non-interfering. This implies, in particular, $\alpha_1 \neq \ell \cpar \ol{\ell} = \tau$ or $Q_1 \cpar R \not \equiv Q_2 \cpar R_2$. The third case $\{ \alpha_1, \alpha_2 \} \subseteq \R$ is excluded since $\alpha_2 = \tau$. 
 Observe that $\{ \alpha_1, \ell\cpar\ol{\ell} \} \not\subseteq \R$ and also $\{ \alpha_1, \ell\cpar\ol{\ell} \} \cap \C = \eset$. Thus only weak \mrev{residual steps}{adjusted wording} are needed. We wish to exploit c-coherence of $Q$. The first observation is that 
 \mrev{non-interference of the c-actions $\alpha_1 \col H_1[F_1\cpar R]$ and $\ell\cpar\ol{\ell}\col (H_2'\cup N_2)[F_2 \cpar G_2]$}{corrected} 
 implies that $H_1 \cap \olwilA{\ast}(F_1 \cpar R) = \eset$. Since $\ell = \ol{\ol{\ell}} \in \oliA(R) \cap \L \subseteq \olwilA{\ast}(F_1 \cpar R)$ by Lem.~\ref{lem:prio-basic-1} and Lem.~\ref{lem:c-enabled-aux}, this implies that $\ell \not\in H_1$ up front.
 Next, if $\alpha_1 \neq \ell \cpar \ol{\ell}$, non-interference \mrev{Def.~\ref{def:interference-free}}{corrected reference} means 
 $\alpha_1 \not\in H_2 = H_2' \cup N_2$ and thus 
 $\alpha_1 \not\in H_2'$.
 The same is true if $\alpha_1 = \ell \cpar \ol{\ell} = \tau$, but then by non-interference we have the equation
 $$ (H_2' \cup N_2) \cap (\olwilA{\ast}(F_2 \cpar G_2) \cup \{ \tau \}) = 
 H_2 \cap (\olwilA{\ast}(E_2) \cup \{ \tau \}) = \eset
 $$
 from which $\alpha_1 = \tau \not \in H_2'$ follows.
 \mrev{Thus $\alpha_1 \not\in H_2'$ and $\ell \not\in H_1$.}{added explanation}
 This settles the first part of the non-interference property \mrev{Def.~\ref{def:interference-free}}{corrected reference} for the diverging transitions out of $Q$. 

 For the second part of non-interference, suppose $\alpha_1 = \tau$. Then the non-interference assumption
 implies that $H_2' \cap \olwilA{\ast}(F_2 \cpar G_2) = \eset$.
 Now since $\olwilA{\ast}(F_2) \subseteq \olwilA{\ast}(F_2 \cpar G_2)$ 
 we conclude from this that $H_2' \cap \olwilA{\ast}(F_2) = \eset$. 

 \mrev{For the third part of non-interference}{added explanation} we note that we always have $\alpha_1 \neq \ell$ or $\{ \alpha_1, \ell \}\subseteq \R$, and in the latter case also $\alpha_1 \not\in H_2'$ and $\ell \not\in H_1$ from the above. 
 This completes the proof that the c-actions $\alpha_1 \col H_1[F_1]$ and $\ell \col H_2'[F_2]$ must be non-interfering in all cases. 

 Hence, we can use c-coherence Def.~\ref{def:coherence} of $Q$
 and obtain processes $F_1'$, $F_2'$, $Q'$ with reconverging transitions
 $$ Q_1 \Derives{\ell}{H_2''}{F_2'} Q' 
 \text{ and }
 Q_2 \Derives{\alpha_1}{H_1'}{F_1'} Q' 
 $$  
 such that $H_2'' \subseteq H_2'$ and $H_1' \subseteq H_1$. 
 We now invoke $(\ComR_{a})$ and $(\ParR_1)$ to obtain reconverging transitions
 $$ 
   Q_1 \cpar R \Derives{\ell \cpar \ol{\ell}}{H_2'' \cup N_2}{F_2' \cpar G_2} Q' \cpar R_2 
   \text{ and } 
   Q_2 \cpar R_2 \Derives{\alpha_1}{H_1'}{F_1' \cpar R_2} Q' \cpar R_2
 $$ 
 such that $H_1' \subseteq H_1$ and $H_2'' \cup N_2 \subseteq H_2' \cup N_2$.
 Regarding the concurrent environments, if $\alpha_1 \in \R$ then the coherence of $Q$ guarantees that 
 $ 
 F_1 \Derives{\ell}{}{} F_1' 
 \text{ and }
 F_2 \Derives{\alpha_1}{}{} F_2'
 $ 
 from which we infer 
$ 
 F_2 \cpar G_2 \Derives{\alpha_1}{}{} F_2' \cpar G_2
 \text{ and } 
 F_1 \cpar R \Derives{\ell\cpar\ol{\ell}}{}{} F_1' \cpar R_2.
 $ 

 If $\alpha_1 = \tau$ the above guarantee from the coherence of $Q$ is weaker: We only have $F_1 \astep{\ell} F_1'$ and $F_2 \astep{\alpha_1} F_2'$.
 In any case we obtain the environment shifts
 $ 
 F_2 \cpar G_2 \astep{\alpha_1} F_2' \cpar G_2
 \text{ and }
 F_1 \cpar R \astep{\ell\cpar\ol{\ell}} F_1' \cpar R_2,
$ 
 as required.
 
 \medskip 

 \item $(\ParR_1)\indep{} (\ParR_2)$: Suppose the two non-interfering transitions of $P = Q \cpar R$ are by $(\ParR_1)$ from $Q$ and $(\ParR_2)$ from $R$. Thus, we are looking at transitions
 $$ 
 Q \Derives{\alpha_1}{H_1}{F_1} Q_1 
 \text{ and } 
 R \Derives{\alpha_2}{H_2}{F_2} R_2
 $$ 
 for $\{\alpha_1, \alpha_2\} \subseteq \R \cup \{\tau\}$ combined via $(\ParR)$ to generate
 $$ 
   Q \cpar R \Derives{\alpha_1}{H_1}{F_1 \cpar R} Q_1 \cpar R
  \text{ and }
   Q \cpar R \Derives{\alpha_2}{H_2}{Q \cpar F_2} Q \cpar R_2
 $$ 
 with $E_1 = F_1 \cpar R$ and $E_2 = Q \cpar F_2$. 
 Then we can directly construct reconverging transitions, applying $(\ParR)$ for
 $$ 
   Q_1 \cpar R \Derives{\alpha_2}{H_2}{Q_1 \cpar F_2} Q_1 \cpar R_2 
   \text{ and } 
   Q \cpar R_2 \Derives{\alpha_1}{H_1}{F_1 \cpar R_2} Q_1 \cpar R_2.
 $$
 Obviously, $Q_1 \cpar R_2$ reduces to $Q_1 \cpar R_2$ and by $(\ParR_1)$ and $(\ParR_2)$ 
 we also have the strong environment shifts
$ 
 F_1 \cpar R \Derives{\alpha_2}{}{}$ $F_1 \cpar R_2$ 
$ \text{and } 
 Q \cpar F_2 \Derives{\alpha_1}{}{} Q_1 \cpar F_2
 $ 
 which completes the claim for we have, in particular, $F_1 \cpar R \astep{\alpha_2} F_1 \cpar R_2$ and $Q \cpar F_2 \astep{\alpha_1} Q_1 \cpar F_2$.
 Notice that in this case we do not need to assume that $Q$ or $R$ are c-coherent, i.e., the assumption that the c-actions $\alpha_i \col H_i[E_i]$ are interference-free.

 \medskip 

 \item $(\ParR_i) \indep{} (\ParR_i)$ for $i \in \{1,2\}$: Suppose both non-interfering and diverging reductions are by $(\ParR_1)$ from $Q$ (or symmetrically, both by $(\ParR_2)$ from $R$). Then,  
 $$ 
   Q \cpar R \Derives{\alpha_1}{H_1}{F_1 \cpar R} Q_1 \cpar R
   \text{ and } 
   Q \cpar R \Derives{\alpha_2}{H_2}{F_2 \cpar R} Q_2 \cpar R
 $$
 with $\{\alpha_1, \alpha_2\} \subseteq \R \cup \{\tau\}$ and 
 $P_i = Q_i \cpar R$, $E_i = F_i \cpar R$, where 
 $$ 
   Q \Derives{\alpha_1}{H_1}{F_1} Q_1
   \text{ and } 
   Q \Derives{\alpha_2}{H_2}{F_2} Q_2
 $$ 
 \mrev{Further, assume non-interference of $\alpha_i \col H_i[F_i \cpar R]$, in particular we have}{reworded} $\alpha_1 \neq \alpha_2$ or $Q_1 \cpar R \not\equiv Q_2 \cpar R$ or $\{ \alpha_1, \alpha_2 \} \subseteq \R$ and $\alpha_i \not\in H_{3-i}$. This implies $Q_1 \not \equiv Q_2$.
 We obtain non-interference of $\alpha_i \col H_i[F_i]$ by Prop.~\ref{prop:interference}.
 We can therefore apply coherence Def.~\ref{def:coherence} of $Q$ and obtain processes $F_1'$, $F_2'$ and $Q'$ with reconverging transitions 
 $$ 
 Q_1 \Derives{\alpha_2}{H_2'}{F_2'} Q' \text{ and }
 Q_2 \Derives{\alpha_1}{H_1'}{F_1'} Q'
 $$ 
 with $H_i' \subseteq H_i$ and such that
 \begin{eqnarray} 
 F_1 \astep{\alpha_2} F_1' 
 \text{ and }
 F_2 \astep{\alpha_1} F_2'
 \label{eqn:par-aux-1}
 \end{eqnarray}
and if $\{\alpha_1, \alpha_2\} \subseteq \R$ and $\alpha_1 \neq \alpha_2$ or $Q_1 \not \equiv Q_2$, more strongly 
 \begin{eqnarray} 
 F_1 \Derives{\alpha_2}{}{} F_1' \text{ and }
 F_2 \Derives{\alpha_1}{}{} F_2'.
 \label{eqn:par-aux-2} 
 \end{eqnarray}
 Reapplying $(\ParR_1)$ we construct the transitions
 $$ 
 Q_1 \cpar R \Derives{\alpha_2}{H_2'}{F_2' \cpar R} Q' \cpar R
 \text{ and }
 Q_2 \cpar R \Derives{\alpha_1}{H_1'}{F_1' \cpar R} Q' \cpar R.
 $$
 Obviously, by $(\ParR_1)$ and $(\ParR_2)$ we also obtain transitions 
 $F_1 \cpar R \astep{\alpha_2} F_1' \cpar R$
and 
$ F_2 \cpar R \astep{\alpha_1} F_2' \cpar R$ 
 from~\eqref{eqn:par-aux-1} or, more strongly,
 $F_1 \cpar R \Derives{\alpha_2}{}{} F_1' \cpar R$ 
 and 
 $F_2 \cpar R \Derives{\alpha_1}{}{} F_2' \cpar R$ 
 from~\eqref{eqn:par-aux-2} if $\{\alpha_1, \alpha_2\} \subseteq \R$ and $\alpha_1 \neq \alpha_2$ or $Q_1 \cpar R \not \equiv Q_2 \cpar R$.
 
\medskip 

\item $(\ComR_{a}) \indep{} (\ComR_{a})$:
 This is the most interesting case in which we are going to exploit the assumption that $Q$ and $R$ are c-coherent for the same pivot policy $\pi$. Without loss of generality we assume $P = Q \cpar R$ and both the \mrev{interference-free}{added} reductions 
 $$ 
   Q \cpar R \Derives{\alpha_i}{H_i \cup N_i}{E_i} Q_i \cpar R_i
 $$ 
 with $P_i = Q_i \cpar R_i$ are $\tau$-actions generated by the communication rule $(\ComR)$, i.e. $\alpha_1 = \ell_1 \cpar \ol{\ell_1} = \tau = \ell_2 \cpar \ol{\ell_2} = \alpha_2$ for actions $\ell_i \in \R$.  Since $\alpha_1 = \alpha_2$ and $\{ \alpha_1, \alpha_2 \} = \{ \tau \} \not\subseteq \R$ we only need to prove \mrev{reconvergence in}{changed word and grammar} the case that $P_1 \not\equiv P_2$, i.e., $Q_1 \not \equiv Q_2$ or $R_1 \not\equiv R_2$. Moreover, we only need weak \mrev{residual steps}{changed wording}, because $\{ \alpha_1, \alpha_2 \} = \{ \tau \} \not\subseteq \R$ and $\{ \alpha_1, \alpha_2 \} \cap \C = \eset$.
 Also, non-interference Def.~\ref{def:interference-free}(2) means that $(H_i \cup N_i) \cap (\olwilA{\ast}(E_i) \cup \{ \tau \}) = \eset$, i.e.,
 \begin{eqnarray}
 H_i \cap (\olwilA{\ast}(E_i) \cup \{ \tau \}) = \eset 
 \text{ and } 
 N_i \cap (\olwilA{\ast}(E_i) \cup \{ \tau \}) = \eset.
 \label{eqn:interf-assumption} 
 \end{eqnarray}

 Thus, we are looking at transitions
 $$ 
 Q \Derives{\ell_i}{H_i}{F_i} Q_i \text{ and } 
 R \Derives{\ol{\ell}_i}{N_i}{G_i} R_i.
 $$
 We claim that the two c-actions $\ell_i \col H_i[F_i]$ of $Q$, and likewise the two c-actions $\ol{\ell}_i \col N_i[G_i]$ of $R$, are interference-free. 
 Note that since $\ell_i \neq \tau$, we only need to consider the first \mrev{and third part of Def.~\ref{def:interference-free}.}{added to reflect changed definitions}
 We first show that at least one of these pairs of c-actions must be interference free, because of \mrev{c-coherence}{} and conformance to $\pi$ where $\pi \restrict \R$ is a pivot policy. It will then transpire that the other must be interference-free, too, because of the assumption~\eqref{eqn:interf-assumption}. 

 \medskip 

 We distinguish two cases depending on whether $\ell_1$ and $\ell_2$ are identical or not. 
 \mrev{First}{simplified argument} suppose $\ell_1 \neq \ell_2$. Both $Q$ and $R$ are c-coherent for the same policy $\pi$ and $\pi \restrict \R$ is a pivot policy. Thus, $ \ell_1 \indep{} \ell_2 \in \pi$ or $\ol{\ell}_1 \indep{} \ol{\ell}_2 \in \pi$, by Def.~\ref{def:pivot-policy}. 
 Hence, $\ell_i \not\in H_{3-i}$ or $\ol{\ell}_i \not\in N_{3-i}$ by conformance Def.~\ref{def:conformance}. \mrev{Since both $\{\ell_1, \ell_2\} \subseteq \R$ and $\{\ol{\ell}_1, \ol{\ell}_2\} \subseteq \R$}{added explanation} which means that at least one of the pairs of c-actions $\ell_i \col H_i[F_i]$ of $Q$ or the two c-actions $\ol{\ell}_i \col N_i[G_i]$ must be interference-free by policy. 

 \medskip 

 Now we argue that if $\ell_1 \neq \ell_2$ and one of the pairs $\ell_i \col H_i[F_i]$ or $\ol{\ell}_i \col N_i[G_i]$ is interference-free the other must be, too, and so we can close the diamond for both $Q$ and $R$. By symmetry, it suffices to run the argument in case that the c-actions $\ell_i \col H_i[F_i]$ of $Q$ are \mrev{interference-free}{changed wording}. Then, we have $\{ \ell_1, \ell_2 \} \subseteq \R$ and $\ell_i \not\in H_{3-i}$. 
 Hence, we invoke c-coherence Def.~\ref{def:coherence} of $Q$ obtaining processes $F_1'$ and $F_2'$ with \mrev{strong residual steps}{changed wording}
 $ 
 F_1 \Derives{\ell_2}{}{} F_1' 
 \text{ and }
 F_2 \Derives{\ell_1}{}{} F_2'
$ 
 and a process $Q'$ with transitions
 $$ 
 Q_1 \Derives{\ell_2}{H_2'}{F_2'} Q' 
 \text{ and }
 Q_2 \Derives{\ell_1}{H_1'}{F_1'} Q'.
 $$
 such that $H_i' \subseteq H_i$. 
 This means $\ell_i \in \iA(F_{3-i}) \subseteq \iA(F_{3-i} \cpar G_{3-i}) = \iA(E_{3-i}) \subseteq \wilA{\ast}(E_{3-i})$ and therefore $\ol{\ell}_i \not\in N_{3-i}$ because of~\eqref{eqn:interf-assumption}. \mrev{This implies}{changed wording}, as claimed, the c-actions $\ol{\ell}_i \col N_i[G_i]$ are interference-free, too. \mrev{This}{simplified argument} 
 entitles us to exploit c-coherence Def.~\ref{def:coherence} for $R$ 
 from which we obtain processes $G_1'$ and $G_2'$ with transitions 
 (again, \mrev{with strong residual steps}{adjusted})
 $ 
 G_1 \Derives{\ol{\ell}_2}{}{} G_1' 
 \text{ as well as }
 G_2 \Derives{\ol{\ell}_1}{}{} G_2'
$ 
 as well as a process $R'$ with 
 $$ 
 R_1 \Derives{\ol{\ell}_2}{N_2'}{G_2'} R' 
 \text{ and } 
 R_2 \Derives{\ol{\ell}_1}{N_1'}{G_1'} R'.
 $$
 with $N_i' \subseteq N_i$.
 Finally, we invoke $(\ComR)$ to obtain the re-converging reductions
 $$ 
   Q_1 \cpar R_1 \Derives{\ell_2 \cpar \ol{\ell}_2}{H_2' \cup N_2'}{F_2' \cpar G_2'} Q' \cpar R' 
   \text{ and } 
   Q_2 \cpar R_2 \Derives{\ell_1 \cpar \ol{\ell}_1}{H_1' \cup N_1'}{F_1' \cpar G_1'} Q' \cpar R'
 $$ 
 with $H_i' \cup N_i' \subseteq H_i \cup N_i$,
 and also 
 $$ 
 F_{i} \cpar G_{i} \Derives{\ell_{3-i} \cpar \ol{\ell}_{3-i}}{}{} F_{i}' \cpar G_{i}'
 $$ 
 as required, since this implies the weaker form
 $F_{i} \cpar G_{i} \astep{\tau} F_{i}' \cpar G_{i}'$ of \mrev{residual steps}{adjusted}.

 \medskip 

 It remains to consider the case that $\ell_1 = \ell = \ell_2$. By assumption $P_1 \not\equiv P_2$ and so we must have $Q_1 \not\equiv Q_2$ or $R_1 \not\equiv R_2$. Suppose $Q_1 \not\equiv Q_2$. If $R_1 \not\equiv R_2$ we apply a symmetric argument with the role of $\ell_i$ and $\ol{\ell}_i$ interchanged. 
 \mrev{We now observe that}{changed wording} the c-actions $\ell_i \col H_i[F_i]$ are trivially interference-free according to Def.~\ref{def:interference-free}, because $\ell_1 = \ell_2$, $\ell_i \neq \tau$ \mrev{and $Q_1 \not\equiv Q_2$}{improved explanation}. 
 Then, by c-coherence Def.~\ref{def:coherence} applied to $Q$ we conclude that there are processes $F_1'$ and $F_2'$ with transitions
 $ 
 F_1 \Derives{\ell}{}{} F_1' 
 \text{ and }
 F_2 \Derives{\ell}{}{} F_2'
 $ 
 and a process $Q'$ with transitions (the strong \mrev{residual steps}{adjusted} arise here because we have $\ell \in \R$ and $Q_1 \nequiv Q_2$)
 $$ 
 Q_1 \Derives{\ell}{H_2'}{F_2'} Q' 
 \text{ and }
 Q_2 \Derives{\ell}{H_1'}{F_1'} Q',
 $$
 such that $H_i' \subseteq H_i$. This means that $\ell \in \iA(F_{3-i}) \subseteq \iA(E_{3-i})$ and so $\ol{\ell} \not\in N_{3-i}$ because of $N_{3-i} \cap \olwilA{\ast}(E_{3-i}) = \eset$ as per~\eqref{eqn:interf-assumption}. 
 \mrev{Independently of whether $R_1 \equiv R_2$ or $R_1 \not\equiv R_2$, 
 we observe that the c-actions $\ol{\ell}_i \col N_i[G_i]$ of $R$, are interference-free, because $\{\ol{\ell}_1, \ol{\ell}_2\} = \{\ol{\ell}\} \subseteq \R$ and $\ol{\ell} \not \in N_{3-i}$. Thus, there are reconverging transitions}{corrected and simplified argument} 
 $$ 
 R_i \Derives{\ol{\ell}}{N_{3-i}'}{G_{3-i}'} R'
 $$ 
 with $N_{i}' \subseteq N_{i}$ and $G_i \equiv G_i'$ or $G_i \Derives{\ol{\ell}}{}{} G_i'$. Using these transitions we now recombine
 $$ 
 Q_i \cpar R_i 
 \Derives{\ell \cpar \ol{\ell}}{H_{3-i}' \cup N_{3-i}'}{F_{3-i}' \cpar G_{3-i}'} Q' \cpar R'
  $$ 
 with $H_{i}' \cup N_{i}' \subseteq H_{i} \cup N_{i}$. Note that if $G_i \equiv G_i'$ then 
 $$ 
 F_i \cpar G_i \Derives{\ell}{}{} F_i' \cpar G_i = F_i' \cpar G_i'
 $$ 
 or if $G_i \Derives{\ol{\ell}}{}{} G_i'$ then 
 $$ 
 F_i \cpar G_i \Derives{\ell\cpar\ol{\ell}}{}{} F_i' \cpar G_i'.
 $$ 
 This means that, in all cases, $F_i \cpar G_i \astep{\ell\cpar\ol{\ell}} F_i' \cpar G_i'$ as desired. 
 Finally, we invoke $(\ComR)$ to obtain the re-converging reductions
 $$ 
 Q_1 \cpar R_1 \Derives{\ell \cpar \ol{\ell}}{H_2' \cup N_2'}{F_2' \cpar G_2'} Q' \cpar R' 
 \text{ and } 
 Q_2 \cpar R_2 \Derives{\ell \cpar \ol{\ell}}{H_1' \cup N_1'}{F_1' \cpar G_1'} Q' \cpar R'
 $$ 
 with $H_i' \cup N_i' \subseteq H_i \cup N_i$, 
 and also 
 $ 
 F_{i} \cpar G_{i} \Derives{\ell \cpar \ol{\ell}}{}{} F_{i}' \cpar G_{i}'
 $ 
 which also implies $F_{i} \cpar G_{i} \astep{\ell \cpar \ol{\ell}} F_{i}' \cpar G_{i}'$
 as required. 
 
 \medskip 

\item $(\ComR_{a}) \indep{} (\ComR_{\sigma})$ and, by symmetry, the induction case $(\ComR_{\sigma}) \indep{} (\ComR_{a})$: We assume $P = Q \cpar R$ and the diverging reductions 
$$ 
 Q \cpar R \Derives{\ell\cpar\ol{\ell}}{H_1 \cup N_1 \cup B_1}{E_1} Q_1 \cpar R_1
 \text{ and } 
 Q \cpar R \Derives{\sigma}{H_2 \cup N_2 \cup B_2}{\zero} Q_2 \cpar R_2
$$ 
 for \mrev{$\ell \in \R$}{corrected}, $\sigma \in \C$, with $P_i = Q_i \cpar R_i$ and $E_i = F_i \cpar G_i$ are generated by instances $(\ComR_{a})$ and $(\ComR_{\sigma})$ of the communication rules, respectively. These transitions arise from
 $$ 
 Q \Derives{\ell}{H_1}{F_1} Q_1 
 \text{ and } 
 R \Derives{\ol{\ell}}{N_1}{G_1} R_1
 \text{ and } 
 Q \Derives{\sigma}{H_2}{\zero} Q_2 
 \text{ and } 
 R \Derives{\sigma}{N_2}{\zero} R_2.
 $$
 We assume that the c-actions $(\ell\cpar\ol{\ell}) \col (H_1 \cup N_1 \cup B_1)[E_1]$ and $\sigma \col (H_2 \cup N_2 \cup B_2)[\zero]$ are \mrev{interference-free}{changed term}.
 Note that $\ell \neq \sigma$ and $\ol{\ell} \neq \sigma$.
 We have $\ell \cpar \ol{\ell} = \tau \neq \sigma$ and so the non-interference assumption 
 means that $\sigma \not\in H_1 \cup N_1 \cup B_1$ as well as $\tau \not \in H_2 \cup N_2 \cup B_2$. The other \mrev{two parts}{adjusted} of non-interference, from \mrev{Def.~\ref{def:interference-free}, do not}{adjusted} provide any extra information. 

 We claim that $\ell \in H_2$. By contradiction, if $\ell \not\in H_2$ then, since $\sigma \not\in H_1$, the two c-actions $\ell \col H_1[F_1]$ and $\sigma \col H_2[\zero]$ of $Q$ would be interference-free according to Def.~\ref{def:interference-free}. But then, since $\{ \ell, \sigma \} \cap \C \neq \eset$, c-coherence Def.~\ref{def:coherence} applied to $Q$ would imply 
 $\sigma \in \iA(F_1)$ and $\ell \in \iA(\zero)$, \mrev{from the required strong residual steps}{added explanation}. The latter, however, is impossible. Thus, as claimed, $\ell \in H_2$. Likewise, from $\sigma \not\in N_1$ we derive $\ol{\ell} \in N_2$ in the same fashion.
 
 Now we invoke the fact that $\tau \not \in B_2$, i.e. 
 $H_2 \cap \oliA(R) \subseteq \{\sigma\}$ and $N_2 \cap \oliA(Q) \subseteq \{\sigma\}$. But $\ol{\ell} \in \iA(R)$ and $\ell \in \iA(Q)$, whence we must have $\ell \not\in H_2$ and $\ol{\ell} \not\in N_2$. This is a contradiction.
 Hence, transitions obtained by rules $(\ComR_{a})$ and $(\ComR_{\sigma})$ are never interference-free.
 
 \medskip 

\item $\{(\ParR_1), (\ParR_2) \} \indep{} (\ComR_{\sigma})$: 
 As the representative case, we assume $P = Q \cpar R$ and the diverging reductions are 
 $$ 
 Q \cpar R \Derives{\alpha_1}{H_1}{F_1 \cpar R} Q_1 \cpar R 
 \text{ and } 
 Q \cpar R \Derives{\sigma}{H_2 \cup N_2 \cup B_2}{\zero} Q_2 \cpar R_2
 $$ 
 with $\alpha_1 \in \R \cup \{\tau\}$ so that $P_1 \equiv Q_1 \cpar R$ and $P_2 \equiv Q_2 \cpar R_2$ are generated by the rule $(\ParR_1)$ and $(\ComR_{\sigma})$, respectively, from transitions 
 $$ 
 Q \Derives{\alpha_1}{H_1}{F_1} Q_1 
 \text{ and } 
 Q \Derives{\sigma}{H_2}{\zero} Q_2 
 \text{ and } 
 R \Derives{\sigma}{N_2}{\zero} R_2.
 $$
 where $\alpha_1 \neq \sigma$ and the c-actions $\alpha_1 \col H_1[F_1 \cpar R]$ and $\sigma \col (H_2 \cup N_2 \cup B_2)[\zero]$ are interference-free. First, \mrev{Def.~\ref{def:interference-free}}{simplified reference} implies $\alpha_1 \not\in H_2 \cup N_2 \cup B_2$ and $\sigma \not\in H_1$. Further, if $\alpha_1 = \tau$ then
 $(H_2 \cup N_2 \cup B_2) \cap (\wilA{\ast}(\zero) \cup \{ \tau \}) = \eset$, which in particular means $H_2 \cap (\wilA{\ast}(\zero) \cup \{ \tau \}) = \eset$. Therefore, the c-actions $\alpha_1 \col H_1[F_1]$ and $\sigma \col H_2[\zero]$ are interference-free. Note that $\{ \alpha_1, \sigma\} \cap \C \neq \eset$. Hence we can exploit c-coherence Def.~\ref{def:coherence} for $Q$, obtaining the stronger form of \mrev{residual steps}{adjusted}. But this would imply $\alpha_1 \in \iA(\zero)$ which is impossible.
 So we find that transitions obtained by rules $(\ParR_i)$ and $(\ComR_{\sigma})$ are never interference-free in the sense of c-coherence.

\item $(\ComR_{\sigma}) \indep{} (\ComR_{\sigma})$:
 We assume $P = Q \cpar R$ and the diverging reductions 
 $$ 
   Q \cpar R \Derives{\sigma_1}{H_1 \cup N_1 \cup B_1}{\zero} Q_1 \cpar R_1
   \text{ and }
   Q \cpar R \Derives{\sigma_2}{H_2 \cup N_2 \cup B_2}{\zero} Q_2 \cpar R_2
 $$ 
 with $P_i = Q_i \cpar R_i$ and $E_i = F_i \cpar G_i$ are generated by the communication rules $(\ComR_{\sigma})$. Furthermore, the two clock transitions of $Q_1 \cpar Q_2$ \mrev{are}{corrected} assumed to be interference-free.  
 
\mrev{First, note that if $\sigma_1 \neq \sigma_2$ the non-interference implies $\sigma_i \not\in H_{3-i} \cup N_{3-i} \cup B_{3-i}$ and thus $\sigma_i \not\in H_{3-i}$ as well as $\sigma_i \not\in N_{3-i}$.}{added to complete argument}
Furthermore, observing that $\{ \sigma_1, \sigma_2 \} \subseteq \R$ is impossible, here, we may assume $\sigma_1 \neq \sigma_2$ or $Q_1 \cpar R_1 \nequiv Q_2 \cpar R_2$. This can only hold true if $Q_1 \not\equiv Q_2$ or $R_1 \not\equiv R_2$. 
From this it follows that \mrev{at least one of}{corrected argument} the two underlying clock transitions 
 $Q \Derives{\sigma}{H_1}{\zero} Q_1$ 
 and 
 $Q \Derives{\sigma }{H_2}{\zero} Q_2$ 
 \mrev{or the pair of}{corrected argument} transitions 
 $R \Derives{\sigma}{N_1}{\zero} R_1$
  and 
  $R \Derives{\sigma}{N_2}{\zero} R_2$ 
  are interference-free. \mrev{Specifically, if}{changed wording} $Q_1 \not\equiv Q_2$ then we could apply c-coherence of $Q$ and obtain $\sigma \in \iA(\zero)$, if $R_1 \not\equiv R_2$ the same follows from c-coherence of $R$. This is impossible, whence the case can be excluded. 
 \end{itemize}
\end{proof}

\longshort{\version}{}
{%
\subsection{Repetition}
\label{sec:repetition}
For arbitrary continuation processes $A$, a regular prefix $a \col a \cseq A$ needs to be self-blocking to be c-coherent. Such a prefix cannot receive from two senders, for $a \col a \cseq A \cpar \ol{a} \cpar \ol{a}$ will block under weak enabling. However, if the receiver is willing to engage in action $a$ repeatedly, say if $A \pdef a \cseq A$ then we can lift the self-blocking. The process $A \cpar \ol{a} \cpar \ol{a}$ is c-coherent and will happily consume both $\ol{a}$ sender actions. Repetition is the universal way of implementing multi-sender and multi-receiver scenarios. Note that we define a new operator, representing recursion via parallel replication on the same channel $\ell$, that can be easily encoded in the current syntax.

\begin{definition}[Sequential Bang Prefix]
 For every process $P$, action $\ell \in \L$ and $H \subseteq \Act$, let $\bang{\ell} \col H \cseq P$ be the process defined by the 
 SOS rule 
 $$ 
 \infer[(\RepR)]
 {\bang{\ell} \col H \cseq P \Derives{\ell}{H}{P} P \cpar \bang{\ell} \col H \cseq P}
 {}
 $$  
 We use $\bang{\ell} \col H$ as an abbreviation for $\bang{\ell} \col H \cseq \zero$ and $\bang{\ell}$ for $\bang{\ell} \col \eset \cseq \zero$.
\label{def:seq-bang-prefix}
\end{definition}
The purpose of repetition is to replicate input and output prefixes so they can be consumed multiple times and by multiple threads. To see this, let us notice the difference in the blocking sets of a regular output prefix $\ol{a} \col \ol{a} \cseq A$ and its repetition $\bang{\ol{a}} \cseq A$. 
 In the former case, the transition 
 $ 
 \ol{a}\col \ol{a} \cseq A \Derives{\ol{a}}{\{\ol{a}\}}{\zero} A
 $ 
 includes the output label in the blocking set $\{ \ol{a} \}$, by $(\ActR_1)$. This reflects the fact that the output prefix $\ol{a}$ is fully consumed by the transition. Accordingly, (strong/weak) enabling will block multiple concurrent receivers trying to access the label $a$ at the same time. Only a single thread can engage in the synchronisation. This is necessary, since the $\ol{a}$ prefix is consumed. The continuation process $A$ may not offer output $\overline{a}$ any more, or if it does, it may be incongruent with $A$. Therefore, each synchronisation with one receiver thread would preempt another concurrent receiver thread in getting access to $\ol{a}$. For contrast, in a transition 
 $ 
 \bang{\ol{a}} \cseq A \Derives{\ol{a}}{\eset}{A} A \cpar \bang{\ol{a}} \cseq A
 $ 
 generated from $(\ActR_1)$ and $(\RepR)$, the blocking set is empty, and so does not prevent concurrent receivers. This is fine since the prefix is not consumed but repeated in the continuation process $A \cpar \bang{\ol{a}} \cseq A$.
\begin{remark}[Why is the standard bang not good enough?]
 Note that the sequential bang $\bang{\ell} \cseq A$ is not expressible as $\bang{\ell} \cseq A = \bang{(\ell \cseq A)}$ by the standard `bang' operator $\bang{P}$ of process algebra which satisfies the \textit{false} (hence not present) structural equivalence $\bang{P} \equiv P \cpar \bang{P}$ in \ccslm. The corresponding Milner's equivalence 
 $\bang{\ell} \cseq A \cong \ell \cseq A \cpar \bang{\ell} \cseq A$ does not hold in \ccslm. Let us see why. 
 Suppose the behaviour of $\bang{\ell}$ is derived from (or identified with that of) $\ell \cseq A \cpar \bang{\ell} \cseq A$. Then the initial action of $\bang{\ell} \cseq A$ would be 
 $ 
 \ell \cseq A \cpar \bang{\ell} \cseq A 
 \Derives{\ell}{\eset}{\bang{\ell} \cseq A} A \cpar \bang{\ell} \cseq A
 $ 
 generated by rules $(\ActR_1)$ and $(\ParR_1)$. Notice that the concurrent environment 
 $\bang{\ell} \cseq A$
 is not empty and that we have $\ell \in \iA(\bang{\ell} \cseq A)$. Therefore, a parallel composition
 $ 
 \ol{\ell}\col \ol{\ell} \cpar \ell \cseq A \cpar \bang{\ell} 
 \cseq A \Derives{\tau}{\{\ol{\ell}\}}{\bang{\ell} \cseq A} A \cpar \bang{\ell} \cseq A
 $ 
 would block under (weak/strong) enabling because $\{\ol{\ell}\} \cap \oliA(\bang{\ell} \cseq A) \neq \eset$.
 Thus $\ell \cseq A \cpar \bang{\ell} \cseq A $ would not be able to synchronise even with a single (self-blocking) sender $\ol{\ell} \col \ol{\ell}$. 
 For contrast, the sequential bang prefix of Def.~\ref{def:seq-bang-prefix} can serve arbitrarily many receivers
 $ 
 \ol{\ell} \col \ol{\ell} \cpar 
 \ol{\ell} \col \ol{\ell} \cpar 
 \bang{\ell} \cseq A$ 
 $\Derives{\tau}{\{\ol{\ell}\}}{\ol{\ell} \col \ol{\ell} \cpar A} 
 \ol{\ell} \col \ol{\ell} \cpar A \cpar \bang{\ell} \cseq A$
 $\Derives{\tau}{\{\ol{\ell}\}}{A \cpar A} 
 A \cpar A \cpar \bang{\ell} \cseq A,
 $ 
 provided that $\{\ol{\ell}\} \cap \olwilA{\ast}(A) = \{\ol{\ell}\} \cap \olwilA{\ast}(A) = \eset$.
 The problem is that the standard bang $\bang{\ell} \cseq A \equiv \ell \cseq A \cpar \bang{\ell} \cseq A$ is a \textit{parallel} repetition of label $\ell$ while the sequential bang $\bang{\ell} \cseq A$ offers all $\ell$-labels \textit{sequentially} in a single thread. Only the payload $A$ is offered in parallel. This corresponds to an \textit{async} operator that is triggered by label $\ell$, then spawns a child thread $A$ and repeates itself in the main thread. \qed
\end{remark}
%
%

\begin{proposition}
 Let $P : \pi$ be c-coherent and where $\pi \restrict \R$ is a pivot policy. 
 Then, for every $\ell \in \pi \restrict \R$ such that $P \astep{\ell} P \cpar P$, the sequential bang prefix $\bang{\ell} \col H \cseq P$ is c-coherent for $\pi$ if $H \subseteq \{ \ell' \mid  \ell' \ordpre \ell \in \pi\}$. 
\label{prop:seq-bang-coherent}
\end{proposition}

\begin{proof}
 By definition, the process $\bang{\ell} \col H \cseq P$ offers only a single initial transition
$ 
 \bang{\ell} \col H \cseq P 
\Derives{\ell}{H}{P}$
$ P \cpar \bang{\ell} \col H \cseq P
$ 
 which must be obtained by application of rule $\RepR$.
 The conformance Def.~\ref{def:conformance} follows exactly as for action prefixes from $\ell \in \pi$ the assumption $H \subseteq \{ \ell' \mid  \ell' \ordpre \ell \in \pi\}$. 
 Since there are no diverging transitions for $\bang{\ell} \col H \cseq P$ the only situation to consider for confluence is that $\ell \not \in H$. For reconvergence it suffices to observe that the continuation process $P \cpar \bang{\ell} \cseq P$ offers another transition engaging with $\ell$ and the same blocking set $H$:
 $ 
 P \cpar \bang{\ell} \col H \cseq P 
 \mbox{$\Derives{\ell}{H}{P \cpar P}$}
 P \cpar P \cpar \bang{\ell} \col H \cseq P.
 $ 
 By assumption, $P \astep{\ell} P \cpar P$. This weak environment shift is enough to satisfy coherence Def.~\ref{def:coherence}, because of the deterministic transition of a non-clock. 
 Finally, note that since $P$ is c-coherent for $\pi$ and coherence for pivot policies closed under parallel composition (Prop.~\ref{prop:cpar-coherent}), we can assume that the continuation process $P \cpar \bang{\ell} \col H \cseq P$ is conformant to $\pi$, by co-induction. 
\end{proof}
} 

\subsection{Hiding}
\label{sec:hiding}

\begin{proposition}
 If $Q$ is c-coherent for $\pi$ and $\sigma \ordpre \ell \not\in \pi$ for all $\sigma \in L$ and $\ell \in \pi$, then $Q \hide L$ is c-coherent for $\pi$. 
\label{prop:hiding-coherent} 
\end{proposition}

\begin{proof}
 It suffices to prove the proposition for the special case \mrev{$P = Q \hide \sigma$}{corrected confusion of $P$ and $Q$ in the notation below} of a single clock. We recall the semantic rule for this case: 
$$
 \begin{array}{l@{\hspace{2cm}}l}
 \multicolumn{2}{l}{
 \infer[(\HideR)]
 {Q \hide \sigma \Derives{\alpha \hide \sigma}{H'}{E \hide \sigma} Q' \hide \sigma}
 {Q \Derives{\alpha}{H}{E} Q' & H' = H - \{ \sigma \}}} 
 \end{array}
$$
 where $\sigma\hide\sigma = \tau$ and $\alpha\hide\sigma = \alpha$ if $\alpha \neq \sigma$.
 Let $P = Q \hide \sigma$ and the two diverging transitions 
 $$ 
 P \Derives{\alpha_1\hide\sigma}{H_1}{E_1} P_1
 \text{ and } 
 P \Derives{\alpha_2\hide\sigma}{H_2}{E_2} P_2
 $$
 arise by $(\HideR)$ from transitions 
 $$ Q \Derives{\alpha_1}{N_1}{F_1} Q_1 
 \text{ and } 
 Q \Derives{\alpha_2}{N_2}{F_2} Q_2
 $$
 with $E_i = F_i \hide \sigma$, $P_i = Q_i \hide \sigma$ and $H_i = N_i - \{ \sigma \}$.
 We assume that the c-actions $\alpha_i \hide \sigma \col H_i[E_i]$ are interference-free. 
 \mrev{}{dropped redundant argument} 
 Moreover, because of the assumption that $\sigma \ordpre \alpha_{i} \not\in \pi$ we must have $\sigma \not \in N_1 \cup N_2$. But this means $H_i = N_i$. 

First, suppose that $\alpha_1 = \alpha_2 = \sigma$. But then the assumption $Q_1 \not\equiv Q_2$ contradicts strong \mrev{determinism}{corrected term} of clocks (Proposition~\ref{prop:action-determinism}). Thus, the actions $\alpha_i$ cannot both be the clock $\sigma$ that is hidden. The case $\alpha_i = \sigma$ and $\alpha_{3-i} \in \Act - \{\sigma\}$, for some fixed $i \in \{1,2\}$ can also be excluded as follows: By Prop.~\ref{prop:clock-interference} the clock transition and the non-clock transition must \mrev{block each other}{adjusted wording} in $Q$, i.e., $\alpha_{i} \in N_{3-i}$ or $\alpha_{3-i} \in N_i$. The former $\sigma = \alpha_{i} \in N_{3-i}$ is impossible because of the above. The latter is outright impossible since then $\alpha_{3-i}\hide\sigma = \alpha_{3-i} \in N_i = H_i$ which contradicts the non-interference assumption on the transitions of $P$. 
 
The only remaining case to handle is that both of the diverging transitions are by labels \mrev{$\{\alpha_1, \alpha_2\} \subseteq \Act - \{\sigma\}$}{more precise} and $\alpha_i\hide\sigma = \alpha_i$ for both $i \in \{1,2\}$. We claim that the c-actions $\alpha_i \col N_i[F_i]$ are interference-free. If $\alpha_1 \neq \alpha_2$ then non-interference assumption \mrev{of the c-actions $\alpha_i \hide \sigma \col H_i[E_i]$}{added for explanation} directly gives $\alpha_i = \alpha_i \hide \sigma \not\in H_i = N_i$. Next, if for some $i \in \{1,2\}$, $\alpha_i = \alpha_i\hide\sigma = \tau$, then 
 the assumptions on $P$ not only give $\tau \not\in H_{3-i} = N_{3-i}$ but also 
 $$
  N_{3-i} \cap \olwilA{\ast}(F_{3-i} \hide \sigma) = H_{3-i} \cap \olwilA{\ast}(E_{3-i}) = \eset.
 $$ 
 Since by assumption on policy conformance $\sigma \not\in N_{3-i}$ the latter implies $N_{3-i} \cap \olwilA{\ast}(F_{3-i}) = \eset$. 
 \mrev{It remains to verify the third property of non-interference Def.~\ref{def:interference-free}: If $\alpha_1 = \alpha_2$ and $Q_1 \equiv Q_2$ then 
 both $\alpha_1 \hide \sigma = \alpha_2 \hide \sigma$ or $P_1 \equiv P_2$. Then, non-interference of the c-actions $\alpha_i \hide \sigma \col H_i[E_i]$ from $P$ implies that $\{ \alpha_1 \hide \sigma, \alpha_2 \hide \sigma \} \subseteq \R$ and $\alpha_i \hide \sigma \not\in H_{3-i}$. But this means that $\alpha_i \neq \sigma$, $\{ \alpha_1, \alpha_2 \} \subseteq \R$ and $\alpha_i \not\in N_{3-i}$.
 }{completed argument}
 Now we have shown that the diverging transitions
 of $Q$ are \mrev{interference-free}{adjusted term} we can use c-coherence Def.~\ref{def:coherence} for $Q$ by \mrev{}{dropped ``by induction hypothesis''} and get $F_1'$, $F_2'$, $Q'$ such that 
 $$ 
 Q_1 \Derives{\alpha_2}{N_2'}{F_2'} Q'
 \text{ and }
 Q_2 \Derives{\alpha_1}{N_1'}{F_1'} Q'
 \text{ with }
 F_1 \astep{\alpha_2} F_1' 
 \text{ and } 
 F_2 \astep{\alpha_1} F_2' 
 $$ 
 such that $N_i' \subseteq N_i$. By applying $(\HideR)$ to these transitions, we obtain 
 $$ 
 P_1 \Derives{\alpha_2\hide\sigma}{H_2''}{F_2' \hide \sigma} Q' \hide \sigma
 \text{ and }
 P_2 \Derives{\alpha_1\hide\sigma}{H_1''}{F_1' \hide \sigma} Q' \hide \sigma
 $$ 
 where $H_i'' = N_i' - \{ \sigma \} \subseteq N_i - \{ \sigma \} = H_i$. Note, by rule $(\HideR)$ we also have 
 $$ 
 F_1 \hide \sigma \astep{\alpha_2\hide\sigma} F_1' \hide \sigma 
 \text{ and }
 F_2 \hide \sigma \astep{\alpha_1\hide\sigma}F_2' \hide \sigma.
 $$ 
 In the special case that $\{ \alpha_1\hide\sigma, \alpha_2\hide\sigma \} = \{ \alpha_1, \alpha_2 \} \subseteq \R$ and $\alpha_1 \neq \alpha_2$ or $Q_1 \hide \sigma \not\equiv Q_2 \hide \sigma$, or $\{ \alpha_1\hide\sigma, \alpha_2\hide\sigma \} \cap \C = \{ \alpha_1, \alpha_2 \} \cap \C \neq \eset$ we obtain the stronger \mrev{residual steps}{adjusted} 
 $ 
 F_1 \Derives{\alpha_2}{}{} F_1' 
 \text{ and } 
 F_2 \Derives{\alpha_1}{}{} F_2' 
 $ 
 from which we infer
 $$ 
 F_1 \hide \sigma \Derives{\alpha_2\hide\sigma}{}{} F_1' \hide \sigma 
 \text{ and }
 F_2 \hide \sigma \Derives{\alpha_1\hide\sigma}{}{} F_2' \hide \sigma. 
 $$ 
This completes the proof. 
\end{proof}

\subsection{Restriction}
\label{sec:restriction}

Let us next look at the restriction operator. 
The following definition is relevant for creating coherent processes under action restriction (App.~\ref{sec:restriction}).

\medskip 

The fact that $P$ is c-coherent does not imply that $P \restrict L$ is c-coherent. Consider the process $Q = (s + a \col s \cpar b \cseq \ol{s}) \restrict s$ which is conformant and pivotable. The signal $s$ blocks the action $a$ by priority in the thread $s + a \col s$. For the communication partner $\ol{s}$ to become active, however, it needs the synchronisation with an external action $\ol{b}$. The transition
%
$$
 Q \Derives{a}{\eset}{(\one \cpar b \cseq \ol{s})\restrict s} (b \cseq \ol{s}) \restrict s
$$
%
generated by $(\RestrR)$ does not provide enough information in the blocking set $H$ and environment $E$ for us to be able to characterise the environments in which the action $a$ is blocked. For instance, the parallel composition $Q \cpar \ol{b}$ will internally set the sender $\ol{s}$ free and thus block the $a$-transition. From the above transition
the $(\ComR)$ rule permits $Q \cpar \ol{b}$ to offer the $a$-step. This is not right as it is internally blocked and does not commute with the $b \cpar \ol{b}$ reduction. A simple solution to avoid such problems and preserve confluence is to force restricted signals so they do not block any visible actions. 
This restriction suffices in many cases. Another, more general solution is to use the extended rule $\RestrR^*$  discussed above after Example~\ref{ex:clock-hiding}.

%

%
\begin{proposition}
 If $Q \cnf \pi$ and $\pi$ is precedence-closed for $L \cup \ol{L}$, then $({Q}\restrict{L}) \cnf \pi \restrict L$. 
\label{prop:restrict-coherent} 
\end{proposition}
%
\begin{proof}
It suffices to prove the proposition for the special case $P = Q \restrict a$ for a single channel name $a \in A$. Recall the rule $(\RestrR)$ for this case: 
 $$
 \begin{array}{l@{\hspace{2cm}}l}
 \multicolumn{2}{l}{
 \infer[(\RestrR)]
 {Q \restrict a \Derives{\alpha}{H'}{E \restrict a} R \restrict a}
 {Q \Derives{\alpha}{H}{E} R 
 & \alpha \not\in \{ a, \ol{a} \} &
 H' = H - \{ a, \ol{a} \}}
 }
 \end{array}
 $$
 Consider a transition
 $$ 
 Q \restrict a \Derives{\ell}{H'}{E \restrict a} Q' \restrict a
 \text{ derived from }
 Q \Derives{\ell}{H}{E} Q'
 $$ 
 with $H' = H - \{ a, \ol{a} \}$, $\ell \not\in \{ a, \ol{a} \}$ and $\ell' \in H'$.
 Then $\ell' \not\in \{ a, \ol{a} \}$ and $\ell' \in H$. Conformance Def.~\ref{def:conformance} applied to $Q$ implies $\ell' \ordpre \ell \in \pi$. But then $\ell' \ordpre \ell \in \pi - a$, too, as both $\ell$ and $\ell'$ a distinct from $a$ and $\ol{a}$. This ensures conformance. 

 If $P = Q \restrict a$ and the two diverging transitions 
 $$ 
 P_2  \LDerives{\alpha_2}{H_2}{E_2} P \Derives{\alpha_1}{H_1}{E_1} P_1
 $$
arise from rule $(\RestrR)$ then we have diverging transitions
$$Q_2 \LDerives{\alpha_2}{N_2}{F_2} Q \Derives{\alpha_1}{N_1}{F_1} Q_1$$
with $H_i = N_i - \{ a, \ol{a} \}$, $\alpha_i \not\in \{ a, \ol{a} \}$, where $E_i \equiv {F_i} \restrict a$ and $P_i \equiv {Q_i}\restrict a$. Suppose the c-actions $\alpha_1 \col H_1[E_1]$ and $\alpha_2 \col H_2[E_2]$ are interference-free. \mrev{}{dropped redundant argument}.  
 It is sufficent to show that the c-actions $\alpha_1 \col N_1[F_1]$ and $\alpha_2 \col N_2[F_2]$ \mrev{are interference-free}{adjusted}. Now, if $\alpha_1 \neq \alpha_2$ then from the non-interference assumption on the transitions out of $P$ we get \mrev{$\alpha_i \not\in H_{3-i}$ and thus $\alpha_i \not\in N_{3-i}$ since $\alpha_i \not\in \{ a, \ol{a}\}$.}{corrected argument} This is what we need for Def.~\ref{def:interference-free}(1) to show that the c-actions $\alpha_i \col N_i[F_i]$ are interference-free.
For the second part Def.~\ref{def:interference-free}(2) suppose $\alpha_{3-i} = \tau$ for some $i \in \{1,2\}$. Then the assumed non-interference of the transitions out of $P$ means $H_i \cap (\olwilA{\ast}(E_i) \cup \{ \tau \}) = \eset$, which is the same as $\tau \not \in H_i$ and the following set condition 
$$
  (N_i - \{ a, \ol{a} \}) \cap \olwilA{\ast}(F_i \restrict a) = \eset.
$$
 Since $\{ a, \ol{a} \} \cap N_{3-i} = \eset$ we have 
 $N_i \cap \olwilA{\ast}(F_i) \subseteq \L - \{ a, \ol{a} \}$. 
 We claim that the $\alpha_i$-transitions of $Q$ is enabled, specifically $N_i \cap \olwilA{\ast}(F_i) = \eset$ observing that $\tau \not \in H_i = N_i$. 
 By contraposition, suppose $\beta \in N_i \cap \olwiA(F_i) \subseteq \L - \{ a, \ol{a} \}$. Hence, $\beta \not\in \{ a, \ol{a} \}$ and so $\beta \in N_i - \{ a, \ol{a} \}$ and also $\beta \in \olwilA{\ast}(F_i \restrict a)$, but this contradicts the above set-condition.
 Thus, we have shown that for all $i \in \{1,2\}$, $\alpha_{3-i} = \tau$ implies $N_i \cap \olwilA{\ast}(F_i) = \eset$. 
 \mrev{For the third part of Def.~\ref{def:interference-free}(3) suppose $\alpha_1 = \alpha_2$ and $Q_1 \equiv Q_2$. Then, $P_1 \equiv P_2$ and non-interference of the c-actions $\alpha_i \col H_i[E_i]$ of $P$ implies that 
 $\{\alpha_1, \alpha_2\} \subseteq \R$ and $\alpha_i \not\in H_{3-i}$.
 Next, notice that the premise of the $(\RestrR)$ rule ensures $\alpha_i \not\in \{ a, \ol{a} \}$. It is easy to see that this means $\{ a, \ol{a} \} \cap N_{i} = \eset$, i.e., $H_i = N_i$. By contraposition, suppose $\ell \in \{a, \ol{a} \}$ and $\ell \in N_i$. Then $\ell \ordpre \alpha_i \in \pi$. But $\pi$ is precedence-closed for $\{a, \ol{a}\}$, so we would have $\alpha_i \in \{a,\ol{a}\}$ which is a contradiction. So, $H_i = N_i$. But then $\alpha_i \not\in H_{3-i}$ gives us $\alpha_i \not\in N_{3-i}$.
 }{completed argument}

 This proves the c-actions $\alpha_i \col N_i[F_i]$ out of $Q$ are interference-free. Hence, we can use coherence of $Q$ for the diverging transition
 and get $F_1'$, $F_2'$, $Q'$ such that
 $$ 
 Q_1 \Derives{\alpha_2}{N_2'}{F_2'} Q'
 \text{ and } 
 Q_2 \Derives{\alpha_1}{N_1'}{F_1'} Q'
 \text{ with }
 F_1 \astep{\alpha_2} F_1' 
 \text{ and } 
 F_2 \astep{\alpha_1} F_2'
 $$ 
 such that $N_i' \subseteq N_i$. We construct the required reconvergence by application of rule $(\RestrR)$:
 $$ 
 P_1 \mbox{\Derives{\alpha_2}{H_2''}{E_2'}} Q' \restrict a
 \text{ and }
 P_2 \Derives{\alpha_1}{H_1''}{E_1'} Q' \restrict a
 \text{ with }
 F_1 \restrict a \astep{\alpha_2} F_1' \restrict a
 \text{ and } 
 F_2 \restrict a \astep{\alpha_1} F_2' \restrict a 
 $$ 
 with $E_i' = F_i' \restrict a$ and $H_i'' = N_i' - \{ a, \ol{a} \} \subseteq N_i - \{ a, \ol{a} \} = H_i$.
 In the special case that $\{ \alpha_1, \alpha_2 \} \subseteq \R$ and $\alpha_1 \neq \alpha_2$ or $P_1 \not\equiv P_2$ or $\{ \alpha_1, \alpha_2 \} \cap \C \neq \eset$ we can rely on the stronger \mrev{residual steps}{adjusted} 
 $ 
 F_1 \Derives{\alpha_2}{}{} F_1' 
 \text{ and } 
 F_2 \Derives{\alpha_1}{}{} F_2' 
 $ 
 to conclude
$ 
 F_1 \restrict a \Derives{\alpha_2}{}{} F_1' \restrict a
 \text{ and }
 F_2 \restrict a \Derives{\alpha_1}{}{} F_2' \restrict a. 
$ 
This completes the proof.
\end{proof}
%
%
%
%


%

\section{Examples, new and reloaded}
\label{examples-again}

This section contains a number of reloaded and new examples in view of the definitions and results presented in the previous section. For didactic reasons we will introduce some new auxiliary SOS rules, like $(\HideR^*)$ and $(\RestrR^*)$, which however are not part of \ccslm.

\subsection{Priorities and Scheduling}

The first example shows how aliasing can destroy coherence. 

\begin{example}[On Def. \ref{def:coherence}]
 If $A \not\equiv B$ then $a \cseq A + a \cseq B$ is not c-coherent while $a \col a \cseq A \cpar a \col a \cseq B$ is c-coherent.
 Conflicting choices on the same label in a single thread as in $a \cseq A + a \cseq B$ cannot be externally observed and thus cannot be deterministically scheduled via their externally observable properties. In c-coherent processes such choices are assumed to be resolved internally by each thread itself, rather than through the precedence relation.
 Note that the process $a \col a \cseq A + a \col a \cseq B$ in which the two action prefixes are self-blocking is not c-coherent either. The reason is that the two identical (self-blocking) transitions lead to structurally distinct states $A \not\equiv B$. Thus, the confluence condition expressed in Def.~\ref{def:coherence} applies and $A$ and $B$ should be reconvergent with transitions $A \Derives{a}{}{} Q'$ and $B \Derives{a}{}{} Q'$. But this is not guaranteed for arbitrary $A$ and $B$. On the other hand, a single self-blocking prefix $a \col a \cseq A$ is c-coherent. Without the self-blocking, $a \cseq A$ is only c-coherent if process $A$ permits infinite repetition of $a$ as in $A \pdef a \cseq A$ or $A \pdef a \cpar A$. 
 
The precedence relation cannot resolve the choice of a prefix with itself. The precedence relation can resolve the choice between \textit{distinct} actions prefixes. For instance, if $a \neq b$ then $a \cseq A + b \cseq B$ is not c-coherent, for any $A$ and $B$. However, both $a \col a \cseq A \cpar b \col b \cseq B$ and $a \col a \cseq A + b \col \{a, b\} \cseq B$ are c-coherent. In the former case, the actions are concurrently independent and so confluence occurs naturally. In the latter case, the choice is resolved by precedence: No reconvergence is required by coherence Def.~\ref{def:coherence} as the actions are interfering with each other. If we identify $a$ and $b$ (considering them as aliases) and replace them by the same action $c$, then the resulting process $c \col c \cseq A + c \col c \cseq B$ is no longer c-coherent, in general. \qed
\end{example}

The next example shows that structural determinism is striclty stronger that the property of having unique normal forms. 

\begin{example}[On Normal form]
 For the process $P_1 \eqdef (a \cseq c \cpar b \cseq d \cpar \ol{a} \cpar \ol{b}) \restrict \{a,b\}$ we have 
 $ 
 (a \cseq c \cpar d \cpar \ol{a}) \restrict \{a,b\}
 \LDerives{\tau}{}{} P_1 \Derives{\tau}{}{} (c \cpar b \cseq d \cpar \ol{b}) \restrict \{a,b\} 
 $ 
 where the continuation processes $(c \cpar b \cseq d \cpar \ol{b}) \restrict \{a,b\}$ and $(a \cseq c \cpar d \cpar \ol{a}) \restrict \{a,b\}$ are not structurally equivalent. They are not even behaviourally equivalent.
 However, both reductions are concurrently independent and commute with each other. We can execute both rendez-vous synchronisations $\tau = a \cpar \ol{a}$ and $\tau = b \cpar \ol{b}$ in any order, the process $P$ converges to a unique final outcome $(c \cpar d) \restrict \{a,b\}$. Similarly, in the multi-cast process $P_2 \eqdef (a \cseq c \cpar a \cseq d \cpar \ol{a} \cpar \ol{a}) \restrict \{a\}$ the reduction behaviour is not structurally deterministic but nevertheless generates a unique normal form. 
 \qed
\end{example}

The next example shows how priorities provide a compositional way of coding sequential composition. 

\begin{example}[On Sequentiality à la Esterel in \ccslm]
Like \ccs, our calculus \ccslm does not have a generic operator for sequential composition as in Esterel. This is not necessary, because sequential composition can be coded in \ccs by action prefixing and parallel composition.  More precisely, a sequential composition $P;Q$ is obtained as a parallel composition $([P]_t \cpar t \col t \cseq Q) \restrict \{t\}$ where $[P]_t$ is the process $P$, modified so when it terminates, then the termination signal $\ol{t}$ is sent. The signal $t$ will then trigger the sequentially downstream process $Q$. The operation $[P]_t$ can be implemented by a simple (linear) syntactic transformation of $P$.
In the same way we can detect termination of a parallel composition $P = P_1 \cpar P_2$. Since a parallel terminates if both threads terminate, we can define $[P]_t$ as follows 
$$ [P_1 \cpar P_2]_t \eqdef 
 ([P_1]_{s_1} \cpar [P_2]_{s_2} \cpar 
 s_1 \col s_1 \cseq s_2 \col s_2 \cseq \ol{t}) \restrict \{s_1, s_2\}
$$
where $s_1$ and $s_2$ are fresh ``local'' termination signals.
The ``termination detector'' $T_{\textit{seq}} \pdef s_1 \col s_1 \cseq s_2 \col s_2 \cseq \ol{t}$ first waits for $P_1$ to terminate and then for $P_2$, before it sends the signal $\ol{t}$ indicating the termination of the parallel composition. We could equally first wait for $P_2$ and then for $P_1$, if the difference is not observable from the outside.
In cases where the termination detector needs to permit both processes $P_1$ and $P_2$ to terminate in any order, however, we need a different solution. We need to ``wait concurrently'' for $s_1$ or $s_1$, not sequentially as in $T_{\textit{seq}}$. The obvious idea would be to use a \textit{confluent sum} 
$$T_{\textit{csum}} \pdef s_1 \col s_1 \cseq s_2 \col s_2 \cseq \ol{t} + s_2 \col s_2 \cseq s_1 \col s_1 \cseq \ol{t}$$ as discussed in Milner~\cite{Milner:CCS}~(Chap.~11.4). 
The sum must explicitly unfold all possible orders in which the termination signals can arrive and thus again creates an exponential descriptive complexity, which is not good.
Interestingly, in \ccslm we can exploit precedence scheduling to obtain a robust and compositional solution. Consider the process
$ 
T_{\textit{par}} \pdef s \cseq T_{\textit{par}} + \ol{t} \col \{s, \ol{t} \}.
$ 
It engages in an unbounded number of receptions on channel $s$ without changing its state. At the same time, it offers to send the termination signal $\ol{t}$, but only if the action $s$ is not possible. Under the weak enabling strategy, the composition $\ol{s} \cpar \ol{s} \cpar T$ will consume the two senders $\ol{s}$ and only then engage in $t$. In this way, we can use $T_{\textit{par}}$ to detect the \textit{absence} of $s$ which corresponds to the termination of the two environment processes $\ol{s} \cpar \ol{s}$, in any order. 
The signal $s$ could be the shared termination signal of each process $P_1$ and $P_2$ that is sent and consumed by $T_{\textit{par}}$ if and when the process terminates.
If we execute the parallel composition $P_1 \cpar P_2 \cpar T_{\textit{par}}$ under constructive enabling, the termination signal $\ol{t}$ will not be sent until such time that both $P_1$ and $P_2$ have been able to send $s$ to $T$. 
From the scheduling point of view, the composition $P_1 \cpar P_2$ acts like a counter that is initialised to $2$, representing the number of potential offerings of $s$ initially in $P_1 \cpar P_2$. It gets decremented each time one of $P_1$ or $P_2$ terminates. For instance, $\zero \cpar P_2$ and $P_1 \cpar \zero$ have one potential offering left, while $\zero \cpar \zero$ has no more potential to send $s$. At this moment, the counter has reached $\zero$ and $\zero \cpar \zero \cpar T_{\textit{par}}$ can take the $\ol{t}$ action and thereby send termination. 
In sum, there are two important observations here: First, we find that precedences with constructive enabling permits us to code sequential composition of Esterel in a compositional fashion. Second, the coding of Milner's confluent sum $T_{\textit{csum}}$ is possible in \ccslm with linear descriptive complexity. \qed
\end{example}

The next example illustrates that the closure property for coherence of hiding (Thm.~\ref{thm:church-rosser}(6), see also Prop.~\ref{prop:hiding-coherent}) needs to require that the hidden clock must not take precedence over rendez-vous actions.

\begin{example} [On Clock-hiding]
\label{ex:clock-hiding}
The process $P \pdef \sigma\cseq A + a \col \{\sigma, a \} \cseq B \cpar \sigma$ offers a choice of two transitions
$ 
  P \Derives{\sigma}{\eset}{\zero} A$
and
$
  P \Derives{a}{\{\sigma,a\}}{\sigma} B \cpar \sigma.
$ 
These two transitions interfere with each other since $\sigma \neq a$ and $\sigma \in \{\sigma, a\}$, violating condition (1) of Def.~\ref{def:coherence}. 
Further, each of the transitions interferes with itself since it violates condition (3) of Def.~\ref{def:coherence}. 
Thus, $P$ is trivially c-coherent, because no pair of non-interfering transitions exists. 
Observe that the $a$-transition is not weakly enabled,   
given that $\{ \sigma, a \} \cap \oliA(\sigma) = \{\sigma\} \neq \eset$. 
Any policy with $P \cnf \pi$ must have $\sigma \ordpre a \in \pi$ and $a \ordpre a \in \pi$. This does not satisfy the side condition of Thm.~\ref{thm:summary-coherence-closure}(6). So we cannot conclude that $P \hide \sigma$ is c-coherent. 
Let us see what happens when the clock is hidden and thus generates a silent step. Indeed, hiding the clock, we obtain from rule $(\HideR)$ the two transitions 
$$ 
  P \hide \sigma 
    \Derives{\tau}{\eset}{\zero\hide\sigma} A \hide \sigma
  \text{ and } 
  P \hide \sigma \Derives{a}{\{a \}}{\sigma \hide \sigma} (B \cpar \sigma) \hide \sigma
$$ 
where the blocking set of the $a$-transition turns into $\{\sigma,a \} - \{\sigma\} = \{a\}$. 
Because $\sigma$ is removed from the blocking set, the $a$-transition is now constructively enabled. Moreover, both the $\tau$-transition and the $a$-transition are interference-free. However, in general, there is no chance that we can find recovergent transitions $A \hide \sigma \xrightarrow{a} Q_1$ and $(B \cpar \sigma) \hide \sigma \xrightarrow{\tau} Q_2$ with $Q_1 \equiv Q_2$. Thus, $P \hide \sigma$ is not c-coherent any more. \qed 
\end{example}

The problem behind the previous example on clock hiding may be subsumed as follows: On the one hand, the $(\HideR)$ rule closes the scope of a clock, but one the other hand it does not actually implement any blocking that comes from this clock inside the scope.  
We could now refine the hiding rule in a manner that, like $(\ComR)$, it implements a ``race test'' and introduces a $\tau$ into the blocking set whenever there is a conflict. This could suggest the following $(\HideR^*)$ rule: 
$$
  \infer[(\HideR^*)]
  {P \hide \sigma \Derives{\alpha \hide \sigma}{H'}{R \hide \sigma} Q \hide \sigma}
  {P \Derives{\alpha}{H}{R} Q & 
    H' = H - \{\sigma\} \cup 
    \{ \tau \mid \sigma \in \olwilA{\ast}(R) \cap H  \}}
$$
In this modified hiding rule, the c-enabling check for the conclusion transition verifies that $\tau \not\in H'$ and thus that $\sigma \not\in \olwilA{\ast}(R) \cap H$. In other words, if the local clock $\sigma$ is in $H$ (blocking the visible action $\alpha$), then the concurrent context $R$ inside the clock scope must not conflict with $\sigma$.

\begin{example}[Clock-hiding with built-in Scheduling Test]
Consider the process $P \pdef \sigma \cseq A + a \col \{\sigma,a\} \cseq B \cpar b \col b \cseq \sigma $ where the local clock $\sigma$ also blocks the action $a$ by priority. But now the clock is guarded in $b \col b \cseq \sigma$ and so to become active for $P$, it needs the synchronisation with an external action $\ol{b}$. 
The extended hiding rule $(\HideR^*)$ would generate the $a$-transition
$$
   P \hide \sigma \Derives{a}{\{ \tau, a \}}
   {(b \col b \cseq \sigma) \hide \sigma} 
   (B \cpar b \col b \cseq \sigma) \hide \sigma
 \text{ from }
   P \Derives{a}{\{\sigma\}}{b \col b \cseq\sigma} B \cpar b \col b \cseq \sigma.
$$
The $\tau$ enters into the blocking set because of the race test which discovers that  
$\sigma \in \olwilA{\ast}( b \col b \cseq \sigma) \cap \{ \sigma \}$, where $\{\sigma\}$ is the blocking set of the $a$-transition in $P$.  Hence, the $a$-transition of $P \hide \sigma$ is not (constructively) enabled anymore, which saves coherence. There is no requirement to exhibit any reconvergence for it. 
\qed
\end{example}

\medskip

Not surprisingly, \textit{mutatis mutandis}, a we could also envisage the same refinement for the restriction, suggesting the following $(\RestrR^*)$ rule:
$$
  \infer[(\RestrR^*)]
  {P \restrict \ell \Derives{\alpha}{H'}{R \restrict \ell} Q \restrict \ell}
  {P \Derives{\alpha}{H}{R} Q & 
     \alpha \not\in \{\ell, \ol{\ell} \} 
     & H' = H - \{\ell, \ol{\ell} \} \cup 
     \{ \tau \mid \ell \in \olwilA{\ast}(R) \cap H \}
  }
$$
We conjecture that the side conditions on the policy in Thm.~\ref{thm:summary-coherence-closure}(6,7) can be lifted to show with a modified SOS using the above rules $(\HideR^*)$ and $(\RestrR^*)$, the operations of hiding and restriction preserve coherence.


\subsection{Policies} 
\label{sec:C2-Policies}

In the following let us again write $\ul{\ell} \col H$ for $\ell \col (H \cup \{\ell\})$ to create reflexive prefixes. Also, we write $\ell^\ast\col H$ for the infinite loop $\ell^\ast \col H \pdef \ell\col H \cseq \ell^\ast \col H$ and write $\ell^\ast$ for $\ell^\ast\col \eset$. 

\medskip 

The following example shows that coherence and pivotability are independent notions.

\begin{example}[On coherence and pivotable processes]
The process $$P_1 \pdef a^\ast \cpar \ol{b}^\ast \cpar \ul{\ol{a}} \col b \cseq A + \ul{b} \cseq B$$ is c-coherent and pivotable with a policy $\pi$ such that 
$\pi \Vdash b \ordpre \ol{a}$, $\pi \Vdash \ol{a} \ordpre \ol{a}$ and $\pi \Vdash b \ordpre b$, while $\pi \Vdash \ol{b} \indep a$, $\pi \Vdash a \indep a$ and $\pi \Vdash \ol{b} \indep \ol{b}$. 
The process $P_1$ admits of two reductions 
\[
P_1 \Derives{a \cpar \ol{a}}{\{a, b\}}{\ol{b}^\ast} a^\ast \cpar \ol{b}^\ast \cpar A
\text{ and }
  P_1 \Derives{b \cpar \ol{b}}{\{\}}{a^\ast} a^\ast \cpar \ol{b}^\ast \cpar B.
\]
The first reduction $a \cpar \ol{a}$ is blocked by the presence of the offering $\ol{b}$ from a \textit{third} thread $\ol{b}^\ast$. 
It is not weakly enabled, because $\{a, b\} \cap \oliA(\ol{b}^\ast) \neq \eset$. 
For contrast, the process $$P_2 \pdef (a^\ast + \tau\cseq\ol{b}^\ast) \cpar \ul{\ol{a}} \col b \cseq A + \ul{b} \cseq B$$ is pivotable, too, but not c-coherent. 
Note that the reduction $a \cpar \ol{a}$ is not blocked anymore by the race test of rule $(\ComR)$ since it only looks into the initial actions $\iA(\tau\cseq\ol{b}^\ast)$ which does not contain $b$.
%
If we introduce a precedence into the first thread, say $$P_3 \pdef (a^\ast \col \ol{b} + \tau \cseq \ol{b}^\ast) \cpar (\ul{\ol{a}} \col b \cseq A + \ul{b} \cseq B),$$ then the process $P_3$ is again c-coherent but no longer pivotable. Like for $P_2$, the synchronisation $\tau = a \cpar \ol{a}$ is blocked and not weakly enabled, because of the introduction of $\tau$ into the blocking set by the race condition of $(\ComR)$. \qed
\end{example}

\begin{example}[On Coherent and Pivotable Processes]
 The process $P_1 \pdef \sigma + \ul{a} \col \sigma \cpar \ol{a}^\ast$ is c-coherent and pivotable, with policy $\sigma \ordpre a \in \pi$ and $\sigma \indep \ol{a} \in \pi$. It has both $a, \ol{a} \in \iA(P_1)$ but $\sigma\not\in \iA(P_1)$ since the clock is not offered by the parallel process $\ol{a}$. The synchronisation $\tau = a \cpar \ol{a}$ can go ahead.
 On the other hand, 
 the process $P_2 \pdef \sigma + \ul{a} \col \sigma \cpar \ol{a}^\ast + \sigma \col \ol{a}$ has all three actions $\sigma, a, \ol{a} \in \iA(P_2)$ initial. It is c-coherent but not pivotable. Since both $a$ and $\ol{a}$ are in a precedence relationship with $\sigma$, $P_2$ does not satisfy concurrent independence of $\sigma$ with neither $a$ nor $\ol{a}$. Formally, any policy $\pi$ for $P_2$ must have $\sigma \ordpre a \in \pi$ and $\ol{a} \ordpre \sigma \in \pi$ violating the pivot requirement that if $\sigma \ordpre a \in \pi$,  then $\sigma \indep \ol{a} \in \pi$. 
\qed
\end{example}

\begin{example}[On Pivotable but not Input-scheduled Processes]
 The process $a + b \col c \cpar \ol{a} + \ol{b} \col \ol{a}$ is pivotable. It conforms to $\pi$ with precedences $c \ordpre b \in \pi$ and $\ol{a} \ordpre \ol{b} \in \pi$, while at the same time $\ol{c} \indep \ol{b} \in \pi$ and $a \indep b \in \pi$. However, assuming $a \neq b$, the process is not input-scheduled, because it has $\ol{a} \ordpre \ol{b}\in \pi$ while $\ol{a} \indep \ol{b} \in \pi_{is}$. The same is true of the process in Ex.~\ref{ex:binary-blocking}. \qed
 \end{example}

\begin{figure}[t]
\hfill 
\begin{minipage}{0.5\textwidth}
\begin{subfigure}{\textwidth}
 \centering
 \includegraphics[scale=0.6]{images/wAND-mem}
\caption{A Boolean Read-Write process with c-coherent states $\wAND_0 \Vdash \mem$ and $\wAND_1 \Vdash \mem$.}
\label{fig:wAND-mem}
\end{subfigure}
\end{minipage}
\hfill
\begin{minipage}{0.4\textwidth}
\begin{subfigure}{\textwidth}
 \centering
 \includegraphics[scale=0.6]{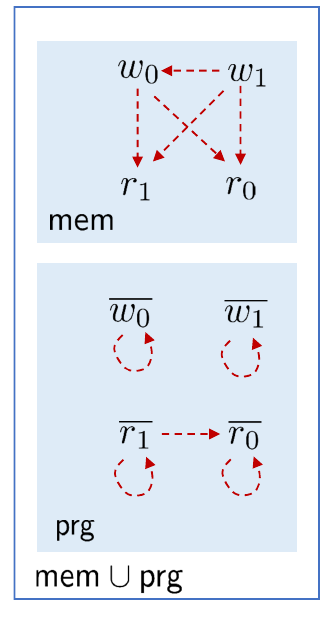}
\caption{The $\mem \cup \prg$ Pivot Policy.}
\label{fig:mem-prg-policy}
\end{subfigure}
\end{minipage}
\hfill
\caption{A Boolean Read/Write memory process (Fig.~\ref{fig:wAND-mem}) and the common pivot policy $\mem \cup \prg$ (Fig.~\ref{fig:mem-prg-policy}) governing both memory and programs.}
\label{fig:mem-prg2}
\end{figure}

\begin{example}[On Read/Write Memory, reloaded]
\label{ex:read-write-reloaded}
Fig.~\ref{fig:mem-prg2} show that the processes $\wAND_v$ conform to a policy $\mathsf{mem}$ with $w_i \ordpre r_j \in \mathsf{mem}$ and $w_1 \ordpre w_0 \in \mathsf{mem}$. Obviously, $\mathsf{mem}$ is a pivot policy. The processes $\wAND_v$ are also c-coherent. 
Consider any two interference-free transitions $Q \Derives{\alpha_1}{H_1}{\zero} Q_1$ and $Q \Derives{\alpha_2}{H_2}{\zero} Q_2$ of a derivative $Q$ of $\wAND_v$. There are two cases. In the first case, both transitions are from the same prefix $\alpha_1 = \ell = \alpha_2$ and $Q_1 \equiv Q' \equiv Q_2$. Then we must have $\ell \not\in H$ by Def.~\ref{def:coherence}. But for all transitions with a label $\ell$ that is not self-blocking the Read-Write Memory ensures that
there is a repetition 
$Q' \Derives{\ell}{H'}{\zero} Q'$ 
with $H' = H$. This yields the reconvergence required by Def.~\ref{def:coherence} as one can show. For the second case suppose both prefixes are distinct, i.e., $\alpha_1 \neq \alpha_2$.
Then, obviously, 
the two transitions cannot be interference-free in the Read-Write Memory, because of the choice of blocking sets which orders all summation prefixes in a linear precedence order. Thus, Def.~\ref{def:coherence} is satisfied, too, for trivial reasons. 
So, the memory, which satisfies $\wAND_v \cnf \mathsf{mem}$, is pivotable.

\medskip 

\noindent Note that $\wAND_v$ is not a reflexive process, since none of its actions is self-blocking. We can put the memory in parallel with arbitrarily many writers and readers. 
Consider a parallel composition $\wAND_v \cpar P$ where $P$ is any single-threaded process that synchronises via $\ol{w_i}$ for writing and $\ol{r_i}$ for reading. Each writing action of $P$ is a self-blocking prefix $\ol{w_i} \col \ol{w_i} \cseq Q$ and each reading is a choice $\ol{r_1} \col \ol{r_1} \cseq R_1 + \ol{r_0} \col \{\ol{r_0},\ol{r_1}\} \cseq R_0$ where the continuation processes $R_0$ and $R_1$ are selected according to the content of the memory, by matching with the $r_0$ action of $\wAND_0$ or the $r_1$ action of $\wAND_1$. If the memory would offer both a synchronisation on $r_0$ and $r_1$ (which it does not), the reader would take the match on $r_1$, because of the precedences. 
 If $Q$, $R_0$ and $R_1$ follow the same principle, such a sequential program is a discrete process $P$ that is c-coherent and conformant to the policy $\prg$ with $\ol{r_1} \ordpre \ol{r_i} \in \prg$ and $\ol{r_0} \ordpre \ol{r_0} \in \prg$.
 Now we observe that the union $\mem \cup \prg$ of policies is a pivot policy. Since $\wAND_v \cnf \mem$ and $P \cnf \prg$ we have $\wAND_v \cnf \mem \cup \prg$ as well as $P \cnf \mem \cup \prg$. Then, by Prop.~\ref{prop:cpar-coherent} we get 
 $$\wAND_v \cpar P \cnf \mem \cup \prg$$ 
 i.e., the composition is c-coherent and pivotable.
 In fact, we can put the memory in parallel with arbitrarily many sequential programs c-coherent for $\prg$ and the composition $\wAND_v \cpar P_1 \cpar P_2 \cpar \cdots \cpar P_n$ will be c-coherent for $\mem \cup \prg$. In particular, all reductions will be confluent under strong enabling. Technically, what happens is that first all writers of value $1$, synchronising as $w_1 \cpar \ol{w_1}$, will go ahead. Then all writers of $\zero$, synchronising as $w_0 \cpar \ol{w_0}$ become enabled. Only then the readers will be able to retrieve the final value via $r_0 \cpar \ol{r_0}$ or $r_1 \cpar \ol{r_1}$ rendez-vous synchronisations. \qed
\end{example}

 Observe that the pivot policy $\mem \cup \prg$ of Ex.~\ref{ex:read-write-reloaded} permits a choice (and precedence) between both the receiver actions $w_0$, $w_1$ but also between the sender actions $\ol{r_0}$, $\ol{r_1}$. As a result, the combined memory and program, which is c-coherent for $\mem \cup \prg$, is not an input-scheduled process in the sense of Def.~\ref{def:is-policy}. This breaks the assumption on which the theory of classical prioritised process algebras like~\cite{CamilleriWin95} or~\cite{CleavelandLN01} are based. In reference to the former, Phillips writes:
 \begin{quote} 
 \textit{They [Camilleri and Winskel in \ccscw] decide to sidestep this question by breaking the symmetry in CCS between inputs and outputs, and only allowing prioritised choice on input actions. We feel that this complicates the syntax and operational semantics,
 and should not be necessary.~\cite{Phillips01}}
 \end{quote}
 Our example and analysis indeed highlights the usefulness of the more general setting suggested by Phillips. In order to explain the theory of deterministic shared memory we need a larger class of processes than those considered in \ccscw.

\medskip 

Note that all the examples of Sec.~\ref{examples:scheduling}
have the ``input-scheduled'' property that 
(non-trivial, i.e. non-self-blocking) precedences only exist between receiver actions $\A$, while sender actions $\coA$ never appear in any choice contexts. These are the processes covered by~\cite{CamilleriWin95} which restrict prioritised choice to receiver actions (``input-guarded, prioritised choice'') in the same way as the programming language 
Occam~\cite{Barrett89} with its ``PRIALT'' construct does (cf. \cite{CamilleriWin95}).

\medskip 

\begin{example}
\label{ex:wired-and}
The memory cell schedules readers after the writers and among the writers prefers to engage first with with those that set the value to $1$, in the prioritised sum. This implements an Esterel-style resolution function for multi-writer convergecast, multi-reader broadcast of boolean signals. All concurrent readers receive the same value, which is the combined-AND of all values written by any concurrent writer. If any process writes a boolean $1$ it will win the competition. This is also known as a ``wired-AND''. Likewise, we can implement a memory cell that models a ``wired-OR'' if we swap the roles of $w_0$ and $w_1$ in the prioritised choices of $\wAND_v$. 

Note that the wired-AND behaviour only applies to concurrent threads. A single thread can use the $\wAND$ cell as a local memory and destructively update it sequentially as it wishes. Consider the difference: Both \textit{concurrent} writers in $\ol{w_1} \col \ol{w_1} \cpar \ol{w_0} \col \ol{w_0} \cpar \wAND_0 \cpar R$ or in $\ol{w_0} \col \ol{w_0} \cpar \ol{w_1} \col \ol{w_1} \cpar \wAND_0 \cpar R$ will necessarily make the reader $R \eqdef \ol{r_1} \col \ol{r_1} \cseq R_1 + \ol{r_0} \col \{\ol{r_0}, \ol{r_1}\} \cseq R_2$ detect a value $r_0$ and therefore end in state $R_2$ (wired-AND of $w_0$ and $w_1$).
 The two single-threaded \textit{sequential} writings $\ol{w_1} \col \ol{w_1} \cseq \ol{w_0} \col \ol{w_0} \cpar \wAND_0 \cpar R$ and $\ol{w_0} \col \ol{w_0} \cseq \ol{w_1} \col \ol{w_1} \cpar \wAND_0 \cpar R$ will generate two different readings by $R$: The first is the reading $r_0$, leading to $R_2$. The second is the reading $r_1$, leading to $R_1$. \qed
\end{example}

\begin{example}[Conformance of \ABRO] 
\label{ex:multicastabro}



Let us verify that \ABRO from Fig.~\ref{fig:ABRO} is c-coherent and pivotable using Thm.~\ref{thm:summary-coherence-closure}.
%

The choices in each of the threads $\sA$, $\sB$, $\sR$, $\sO$ and $\sT$ are fully resolved by the blocking sets, i.e., every pair of actions arising from distinct prefixes in the summations are in precedence relationships. This means that any pair of distinct prefixes are interfering as condition (1) of Def.~\ref{def:interference-free} is violated. 
 Every $\tau$-prefix in a derivative of $\sA$, $\sB$, $\sR$, $\sO$ or $\sT$ is self-interfering since condition (3) of Def.~\ref{def:interference-free} is violated. Also, each self-blocking rendez-vous prefix in a derivative is self-interfering since condition (3) of Def.~\ref{def:interference-free} is violated. 
 All the rendevous prefixes in a derivative that are not self-blocking have the property that the action can be repeated with the same blocking in the continuation process. This happens for channels $\ol{k}$ and $s$ in the recursions $\sR' \pdef \ol{k} \cseq \sR' + \ldots$ and $\sT \pdef \ldots + s \col k \cseq \sT + \ldots$, respectively. Thus, the threads $\sA$, $\sB$, $\sR$, $\sO$ and $\sT$ are c-coherent. 

 A suitable policy $\pi$, for which the threads $\sA$, $\sB$, $\sR$, $\sO$ and $\sT$ are conformant, enforces the precedences 
 \begin{itemize} 
 \item $k \ordpre \ell \in \pi$ for $\ell \in \{ k, a, b, s, t, \ol{t} \}$ 
 \item $\ell \ordpre \ell \in \pi$ for $\ell \in \{ r, k, a, b, \ol{s}, t, \ol{o} \}$
 \item $s \ordpre \ol{t} \in \pi$
 \end{itemize}
 The dual matching pairs for these precedences on distinct labels, however, are all concurrently independent. More precisely, $\ol{k} \indep \ol{\ell} \in \pi$ for all $\ell \in \{ a, b, s, t, \ol{t} \}$
 and $\ol{s} \indep t \in \pi$.
%
Thus, the policy $\pi$ (and thus $\pi \restrict \R$) is a pivot policy and so the parallel composition $\sA \cpar \sB \cpar \sR \cpar \sO$ is c-coherent for $\pi$ by Thm.~\ref{thm:summary-coherence-closure}(5). Further, $\pi$ is precedence-closed for $\{ s, t \} \cup \{ \ol{s}, \ol{t} \}$ which means that $(\sA \cpar \sB \cpar \sR \cpar \sO) \restrict \{s,t\}$ is c-coherent for $\pi \restrict \{s,t\}$ by Thm.~\ref{thm:summary-coherence-closure}(7). Finally, since $\{ \} \subseteq \{ \ell \mid \ell \ordpre \sigma \in \pi \restrict \{s,t\} \} = \{a, b \}$, the process $\ABRO$ (which starts with a prefix $\sigma \col H$ where $H = \eset$) is c-coherent for $\pi \restrict \{s,t\}$ by Thm.~\ref{thm:summary-coherence-closure}(3). \qed
\end{example}



\section{From Asynchronous Priorities to Synchronous ones, via Clocks}
\label{sec:app-pilgimage}

This final and rather non-typical section gives to the curious reader a nice view of our personal pilgrimage of previous work looking at reaching more control and determinism in asynchronous process algebra.  This lead us to consider of adding clocks to the previous literature and to study confluence in an synchronous setting. 

\medskip
\noindent 
Note that the SOS rules presented in this section are not part of \ccslm, but only presented to provide links with existing work and for the sake of recording our speculations.


\subsection{\ccs with (Weak) Priorities}

Let us look in more detail on how a precedence mechanism such as ``write-before-read'' might be derived from the classical priorities studied for \ccs. 
As a simple example, suppose $S$ is a (synchronous write, asynchronous read) memory cell that can be read at any time offering the read action $r$, but only be written once with the write action $w$.
Ignoring data values, the cell would be modelled in plain \ccs
by the equations 
$S \eqdef w \cseq S' + r \cseq S''$ where 
$S' \eqdef r \cseq S'$ and
$S'' \eqdef r \cseq S'' + e \cseq S''$. 
\begin{figure}[t!]
  \centering
  \includegraphics[scale=0.6]{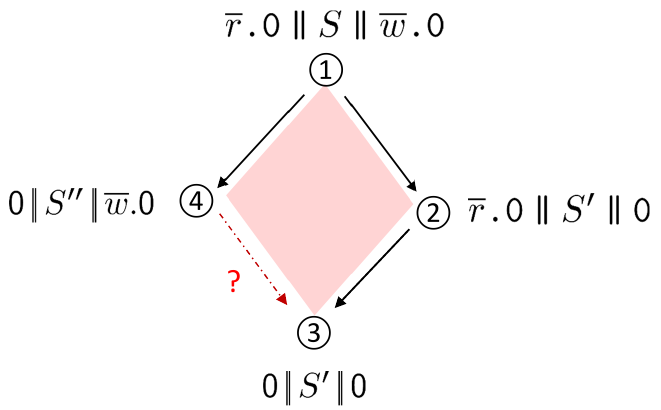}
\caption{Symmetric Choice. The final outcome, in states \cthree and \cfour, is non-determinate because it depends on the order of read and write. The continuations states are $S' \eqdef r \cseq S'$ and $S'' \eqdef r \cseq S'' + e \cseq S''$. For confluence, the error state \cthree should permit a (delayed) read to move forward and (eventually) meet with state \cfour, indicated by the dashed red arrow. However, this is not in general guaranteed.}
\label{fig:conf-prio-1}
\end{figure}
Initially, memory cell $S$ is empty and offers a choice between a read action $r$ and a write action $w$. If the write action is performed, the cell receives a value and changes to state $S'$. In this filled state $S'$, only read actions are admissible to access the stored value, the state does not change. 
When the read action is executed on the empty cell $S$, however, the continuation state is $S''$ which represents an error state. In the error state $S''$ only a read or an error action $e$ is possible, but no write and the cell remains in error. 
Observe that only the read action is always admissible in both states $S$ and $S'$, while the write is admissible only in state $S$. 
If $S$ is put in parallel with a reader offering $\ol{r}$ and a writer offering $\ol{w}$, then the final state of $S$ depends on 
the order between the $\ol{r}$ and $\ol{w}$ actions. If we synchronise with the external sequence $\ol{r} \cseq \ol{w}$ on $S$ where the read action is first, the writer gets blocked on the error state $S''$, 
while if we exercise $\ol{w} \cseq \ol{r}$ where the write action is first, both can proceed and we obtain filled state $S'$. In term rewriting system jargon~\cite{KlopTRS} this is called ``critical pair''.

Formally, let $R \eqdef \ol{r} \cseq \zero$ be a reader process with a single read and $W \eqdef \ol{w} \cseq \zero$ the writer with a single write, both terminating in the empty behaviour $\zero$. 
Then, according to the operational semantics of \ccs, we have the two execution sequences
$
  R \cpar S \cpar W \Derives{\tau}{}{} 
  \zero \cpar S'' \cpar W 
  $
  and
  $
  R \cpar S \cpar W \Derives{\tau}{}{} R \cpar S' \cpar \zero \Derives{\tau}{}{} 
  \zero \cpar S' \cpar \zero
 $,
where both configurations $\zero \cpar S'' \cpar W$ and 
$\zero \cpar S'' \cpar \zero$ are distinct normal forms, in the sense that no reductions are possible anymore. 
This is pictorially illustrated in Fig.~\ref{fig:conf-prio-1}. 
Observe that the behaviour is not only \textit{non-deterministic} in the sense that there is more than one scheduling sequence. It is also non-determinate as the final outcome depends on the scheduling decisions.
  
\medskip 

Among the first \ccs extensions with priorities is \ccscw introduced by Camilleri and Winskel~\cite{CamilleriWin95}. It provides a \textit{prioritised choice} operator
$
Q \eqdef \alpha_1.Q_1 \wpsum \alpha_2.Q_2
$
called ``prisum'' that gives the action $\alpha_1$ priority over $\alpha_2$ and blocks the second prefix $\alpha_2$ in case $\alpha_1 = \alpha_2$. 
Such priorities help us resolve the issue. 
The problem with non-determinism originates from the preemptive behaviour of the write-read choice $w \cseq S' + r \cseq S''$. It permits the read and the write actions to compete with each other for the next state of the process $S$. In data-flow and functional programming, we resolve the conflict by making the write action take precedence over the read one. First, the data-flow variable must be written by the \textit{producer} process, then it can be read by the \textit{consumers}, if any. This ``write-before-read'' scheduling policy is a \textit{prioritised choice}, represented in \ccscw as
$
 S \eqdef w \cseq S'  \wpsum r \cseq S''
$
in which the prisum $ \wpsum$ operator enforces a precedence from left to right. The store $S$ is willing to engage in a read action $r$ (to the right of $ \wpsum$) only if the environment does not offer a write $w$ (on the left of $ \wpsum$) at the same time, concurrently. 

\medskip 

Another important study in the endeavour of adding priorities to \ccs is the work of Phillips~\cite{Phillips01}. In Phillips' extension of \ccs with priority guards, referred to as \ccsp, the previous behavior is expressed by 
$ 
 S \eqdef {\eset\of w} \cseq \, S' +  {\{ w \}\of r} \cseq \, S''
$,
where each action is pre-annotated by a set of actions that take priority. Process $S$ will synchronise with the reader process $R$ to move to the error state $S''$ with \textit{guarded} action $\{w\}\of r$. The \textit{priority guard} $\{w\}$ states that this can only happen if the environment does not offer a write $w$. Otherwise, the only action available is $\eset\of w$ which makes the cell move into state $S'$. The write action has an empty guard and so cannot be blocked. 
If both $\ol{w}$ and $\ol{r}$ are offered concurrently, then only the higher-prioritised write action $\ol{w}$ will go ahead. 

\medskip 

The standard way to implement this scheduling in prioritised \ccs is via a transition relation that keeps track of blocking information. In line with the approaches of~\cite{CamilleriWin95,Phillips01,CleavelandLN01}, let us write
$
  P \Derives{\alpha}{H}{} P'
$ 
to denote that process $P$ can engage in an action $\alpha$ to continue as $P'$, provided the concurrent environment does not offer any action from $H$.
We will call $H$ the \textit{blocking set} of the transition.
In Plotkin's SOS, the rules for prefix and choice in \ccscw are the following\footnote{%
For the fragment of \ccscw that we are considering, the precise connection is the following: $\vdash_R P \Derives{\alpha}{}{} P'$ in the notation of \ccscw iff $R$ is a subset of o-actions (``output'' in our parlance)
such that $P \Derives{\alpha}{H}{} P'$ and either $\alpha$ is an o-action and $H = \eset$ or $\alpha$ is an i-action with $\ol{\alpha} \in R$ and $R \cap \ol{H} = \eset$.}:
\[ %
    \begin{Prooftree}
    \label{actpri}
        \Bproof
          \\[(\ActR)] \alpha \cseq P \Derives{\alpha}{\eset}{} P
        \EEproof
        \quad
        \Bproof
          P \Derives{\alpha}{H}{} P'
          \\[$(\PriSumR_1)$]
          P \wpsum Q \Derives{\alpha}{H}{} P'
        \EEproof
       \quad 
        \Bproof
          \Lproof
          Q \Derives{\alpha}{H}{} Q'
          \ANDproof
            \tau, \alpha \not\in \iA(P)
          \Rproof[$(\PriSumR_2)$]
            P \wpsum Q 
            \Derives{\alpha}{H \cup \iA(P)}{} Q'
        \EEproof
    \end{Prooftree}
\] 
where $\iA(P)$ represents the set of \textit{initial actions} of $P$ that are offered by $P$ in some environment and $\wpsum$ is the priority sum defined by \ccscw.
From rule $(\ActR)$, individual prefixes generate transitions without any blocking constraints. Rules $(\PriSumR_1)$ and $(\PriSumR_2)$ describe how the transitions of a choice $P \wpsum Q$ are obtained from the transitions of $P$ and of $Q$: More in details, rule $(\PriSumR_1)$ lets all transitions of the higher-prioritised process $P$ pass through to form a transition of the choice $P \wpsum Q$ without restriction.
This is different for $Q$, where, according to rule $(\PriSumR_2)$, an action of the lower-prioritised process $Q$ can only be executed, and thus $P$ preempted by it, if $P$ does not have an initial silent action and if it does not offer the same action $\alpha$, already. All initial actions of $P$ are added to the blocking set of $Q$'s transition.  This will ensure that it gets blocked in concurrent contexts in which $P$ has a communication partner. The blocking sets take effect in the rules for parallel composition displayed below:
\[ %
    \begin{Prooftree}
        \Bproof
          \Lproof 
             P_1 \Derives{\alpha_1}{H_1}{} P_1'
          \ANDproof 
             H_1 \cap \ol{\iA}(P_2) = \{\, \}
          \Rproof[$(\ParR_1)$] 
           P_1 \cpar P_2 
              \Derives{\alpha_1}{H_1}{} P_1' \cpar P_2
        \EEproof
        \qquad
        \Bproof
          \Lproof 
             P_2 \Derives{\alpha_2}{H_2}{} P_2'
          \ANDproof 
             H_2 \cap \ol{\iA}(P_1) = \{\, \}
          \Rproof[$(\ParR_2)$] 
           P_1 \cpar P_2 
              \Derives{\alpha_2}{H_2}{} P_1 \cpar P_2'
        \EEproof
    \end{Prooftree}
\]
\[ %
    \begin{Prooftree}
        \Bproof
          \Lproof 
             P_1 \Derives{\alpha_1}{H_1}{} P_1'
          \ANDproof 
             P_2 \Derives{\alpha_2}{H_2}{} P_2'
          \ANDproof
             \alpha_1 = \ol{\alpha}_2
          \ANDproof
             H_1 \cap \ol{\iA}(P_2) = \{\, \}
          \ANDproof
             H_2 \cap \ol{\iA}(P_1) = \{\, \}
          \Rproof[$(\ParR_3)$] 
           P_1 \cpar P_2 
              \Derives{\tau}{H_1 \cup H_2}{} P_1' \cpar P_2'
        \EEproof
    \end{Prooftree}
\]
More in details, rules $(\ParR_i)$ for $i \in \{1,2\}$ define an action $\alpha_i$ by $P_1 \cpar P_2$ offered in one of the sub-processes $P_i$. Such is enabled if the competing concurrent process $P_{3-i}$ has no initial action that is in the blocking set $H_i$ of action $\alpha_i$, i.e. when the condition $H_i \cap \ol{\iA}(P_{3-i}) = \{\, \}$ is satisfied.
Rule $(\ParR_3)$ implements the synchronisation between an action $\alpha_1$ in $P_1$ and the associated matching action $\alpha_2 = \ol{\alpha}_1$, originating from $P_2$. The side-condition is the same: each process $P_i$ only accepts the matching action call from $P_{3-i}$ if it does not offer any action that blocks action $\alpha_i$.
\begin{figure}[!t]
  \centering
  \includegraphics[scale=0.6]{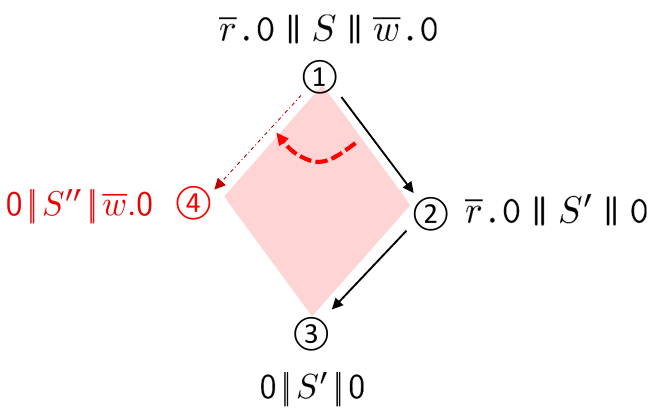}
  $S \eqdef w \cseq S' \,\myr{\csum}\, r \cseq S''$
\caption{Prioritised Choice. The read-write data race on memory cell $S$ resolved by scheduling priority.  The write takes precedence over the read, indicated by the dashed arrow. This removes the dotted transition on the top left. Since there is only one execution path, the system is deterministic. The continuations states $S' \eqdef r \cseq S'$ and $S'' \eqdef r \cseq S'' + e \cseq S''$ of the cell are the same as in Fig.~\ref{fig:conf-prio-1}. 
}
\label{fig:conf-prio-2}
\end{figure}
Let us see how this blocking semantics works for our running example of a write-once memory cell, where we now enforce a write-before-read priority on the empty cell $S$:
$ 
   S \eqdef w \cseq S' \wpsum r \cseq S'' \mbox{ and }
   S' \eqdef r \cseq S' \mbox{ and }
   S'' \eqdef r \cseq S'' + e \cseq S''.
$ 
The cell $S$ has two potential transitions that can be fired, namely
\[ 
 \begin{Prooftree}
  \Bproof
    \\[$(\ActR)$]
      w\cseq S' \Derives{w}{\eset}{} S'
    \\[$(\PriSumR_1)$]
      S \Derives{w}{\eset}{} S'
  \EEproof
  \quad\mbox{ or }\quad
  \Bproof
    \Lproof 
      \\[$(\ActR)$]
        r \cseq S'' \Derives{r}{\eset}{} S''
    \ANDproof
        \tau, r \not\in \iA(w \cseq S')
    \Rproof[$(\PriSumR_2)$]
      S \Derives{r}{\{ w \}}{} S''
  \EEproof
 \end{Prooftree}
\] 
considering that $\iA(w \cseq S') = \{ w \}$.
The latter transition synchronises with the reader process' $\ol{r}$ transition
to give 
$ 
  R \cpar S \Derives{\tau}{\{ w \}}{} \zero \cpar S''
$ 
as verified by the derivation
\[ 
 \begin{Prooftree}
  \Bproof
   \Lproof 
    \\[$(\ActR)$] 
      R \Derives{\ol{r}}{\eset}{} \zero
   \ANDproof
      \stackrel{\text{as above}}{\vdots} 
      \\ 
      S \Derives{r}{\{ w \}}{} S''
   \ANDproof
      \eset \cap \ol{\iA}(S) = \eset 
   \ANDproof 
      \{ w \} \cap \ol{\iA}(R) = \eset
   \Rproof[$(\ParR_3)$]
     R \cpar S \Derives{\tau}{\{ w \}}{} \zero \cpar S''
  \EEproof
 \end{Prooftree}
\]
in which the side-conditions $\{ w \} \cap \ol{\iA}(R) = \{ w \} \cap \{ r \} = \eset$ and $\eset \cap \ol{\iA}(S) = \eset$ of rule $(\ParR_3)$ are satisfied. 
This is fine since a single reader should be able to access the store, as long as there is no conflicting write. Note that 
the resulting silent transition
is guarded by the write action $w$ to record this constraint. Now if we add the writer process, too, then we get stuck. The writer's initial action $\ol{w} \in \iA(W)$
conflicts with the silent transition.
The execution 
in the context $E = [\cdot] \cpar W$ is blocked by rule $(\ParR_1)$, because $\{ w \} \cap \ol{\iA}(W) = \{ w \} \cap \{ w \} \neq \eset$. The derivation tree 
\[ 
 \begin{Prooftree}
  \Bproof
    \Lproof 
      \Lproof 
        \\[$(\ActR)$]
        R \Derives{\ol{r}}{\eset}{} \zero
      \ANDproof 
        \Lproof 
          \\[$(\ActR)$]
             r \cseq S'' \Derives{r}{\eset}{} S''
        \Rproof[$(\PriSumR_2)$]
           S \Derives{r}{\{w\}}{} S''
      \Rproof[$(\ParR_3)$]
      R \cpar S \Derives{\tau}{\{w\}}{} \zero \cpar S''
    \ANDproof 
      \myr{\{w\} \cap \iA(W) \neq \eset}
    \Rproof[$(\ParR_1)$]
      R \cpar S \cpar W \,\,\,\slash\!\!\!\!\!\!\!\Derives{\tau}{\{w\}}{} 
      \zero \cpar S'' \cpar W
  \EEproof
 \end{Prooftree}
\] 
where the side-conditions of $(\PriSumR_2)$ and of $(\ParR_3)$ are satisfied,
blocks at side-condition of rule $(\ParR_1)$.
On the other hand, the writing of $S$ by $W$ followed by the reading of $S'$ by $R$ can be derived, i.e., we have 
$ 
   R \cpar S \cpar W \Derives{\tau}{\eset}{} 
      R \cpar S' \cpar \zero
   \text{ and }
   R \cpar S' \cpar \zero \Derives{\tau}{\eset}{} 
      \zero \cpar S' \cpar \zero. 
$ 
The derivation trees in the SOS are as follows
\[ 
 \begin{Prooftree}
  \Bproof
    \Lproof
       \Lproof 
         \\[$(\ActR)$]
         w \cseq S' \Derives{w}{\eset}{} S'
         \\[$(\PriSumR_1)$]
         S \Derives{w}{\eset}{} S' 
       \Rproof[$(\ParR_2)$]
       R \cpar S \Derives{w}{\eset}{} R \cpar S'
    \ANDproof
      \\[$(\ActR)$]
      W \Derives{\ol{w}}{\eset}{} \zero
    \Rproof[$(\ParR_3)$] 
     R \cpar S \cpar W \Derives{\tau}{\eset}{} 
      R \cpar S' \cpar \zero
  \EEproof
  \quad
  \Bproof
    \Lproof
         \\[$(\ActR)$]
         R \Derives{\ol{r}}{\eset}{} \zero
    \ANDproof
      \\[$(\ActR)$]
         S' \Derives{r}{\eset}{} S'
      \\[$(\ParR_1)$]
         S' \cpar 0 \Derives{r}{\eset}{} S' \cpar \zero
    \Rproof[$(\ParR_3)$] 
     R \cpar S' \cpar \zero \Derives{\tau}{\eset}{} 
      \zero \cpar S' \cpar \zero 
  \EEproof
 \end{Prooftree}
\] 
where the blocking side-conditions of $(\ParR_1)$, $(\ParR_2)$ and $(\ParR_3)$ 
are all trivially satisfied. 
The scheduling which resolves the non-determinism is depicted in Fig.~\ref{fig:conf-prio-2}. 

\medskip

\noindent Note that some of the rules in the above have the same names but are different from those used in \ccslm.



\subsection{Asynchronous \ccs with Strong Priorities}

Our example from the previous D.1 subsection shows how (weak) priority-based scheduling helps us to exercise control over the execution order in an asynchronous process calculus with rendez-vous communication. The surprising observation is that very little needs to be changed in the traditional notion of priority for \ccs to achieve confluence. For this we need to look how to ``tune'' the classical weak priorities for our purposes. Our fix will make the difference between \textit{weak} priority-based scheduling of \ccscw and \ccsp and the \textit{strong} priority-based scheduling of \ccslm (Def.~\ref{def:c-enabling}).
What we want is that every process using only strong priority satisfies a \textit{weak local confluence} property~\cite{KlopTRS}: Suppose $P$ admits two transitions to different states $P_1 \neq P_2$ that do not block each other, i.e.,
$ 
  P \Derives{\alpha_1}{H_1}{} P_1
  \text{ and } 
  P \Derives{\alpha_2}{H_2}{} P_2
$ 
such that $\alpha_1 \not\in H_2$ and $\alpha_2 \not\in H_1$. 
Then, each action $\alpha_i$ (for $i\in\{1,2\}$) is still admissible in $P_{3-i}$, and there exist $P'$ and $H_i' \subseteq H_i$ such that
$ 
  P_i \Derives{\alpha_{3-i}}{H_i'}{} P'.
$ 
The assumption $\alpha_i \not\in H_{3-i}$ uses the blocking sets to express non-interference between transitions. 
Then, our weak\footnote{In the pure $\lambda$-calculus the diamond property holds for $\beta$-reduction without any assumptions. We use the term ``weak'' confluence, because of the side-condition of non-interference.} local confluence expresses a weak ``diamond property'' stating that every diverging choice of non-interfering transitions immediately reconverges. The subset inclusions $H_i' \subseteq H_i$ ensure that delaying an admissible action does not introduce more chances of interference.

\subsubsection{Fix 1}
The standard notion of (weak) priority for \ccs, as implemented in \ccscw or \ccsp, of course was not designed to restrict \ccs for determinism but to extend it conservatively to be able to express scheduling control. 
Even in the fragment of \ccscw where all choices are prioritised, weak confluence fails.  
For instance, although our memory cell $S$ behaves deterministically for a single reader $R$ and a single writer $W$, in the context $E \eqdef W_1 \cpar [\cdot] \cpar W_2$ of two writers $W_i \eqdef \ol{w} \cseq W_i'$, it creates a race.
The two writers compete for the single write action on $S$ and only one can win. 
Using the operational semantics of \ccscw we can derive the (non-interfering) transitions 
$ 
   W_1 \cpar S \cpar W_2 \Derives{\tau}{\eset}{} 
     W_1' \cpar S' \cpar W_2 
  \text{ and }
   W_1 \cpar S \cpar W_2 \Derives{\tau}{\eset}{} 
     W_1 \cpar S' \cpar W_2'. 
$ 
Still, the two outcomes of the divergence are a critical pair that have no reason to reconverge $W_1' \cpar S' \cpar W_2 \Derives{}{}{} P'$ and $W_1 \cpar S' \cpar W_2' \Derives{}{}{} P'$ if $W_i'$ are arbitrary behaviours.
To preserve confluence we must block the environment from consuming the single prefix $w \cseq S'$ twice. 
The problem is that the synchronisations 
$ 
  W_1 \cpar S \Derives{\tau}{H_1}{} W_1' \cpar S'
  \text{ and }
  S \cpar W_2 \Derives{\tau}{H_2}{} S \cpar W_2'
$ 
have empty blocking sets $H_i = \eset$ which does not prevent the second writer $W_{3-i}$ to be added in parallel. If $w \in H_i$, then the rules $(\ParR_{3-i})$ would automatically prevent the second reader. 
Now, suppose we extended the action rule $(\ActR)$ to expose the singleton nature of a prefix, like
\[ 
\begin{Prooftree}
  \Bproof
    \\[$(\ActR^+)$]
    \alpha\cseq P \Derives{\alpha}{\{ \alpha \}}{} P
  \EEproof 
\end{Prooftree}
\]
we would not only prevent a second concurrent writer but already a single writer. The transitions
$ 
  W_1 \Derives{\ol{w}}{\{ \ol{w} \}}{} W_1' 
  \text{ and }
  S \Derives{w}{\{ w \}}{} S'
$ 
would not be able to communicate, violating both side-condition 
$\{ \ol{w} \} \cap \ol{\iA}(S) = \eset$ and $\{ w \} \cap \ol{\iA}(W_1) = \eset$ of $(\ParR_3)$.
The issue is that in the traditional setting the blocking set $H$ of a transition
\[ 
  P \Derives{\alpha}{H}{} P'
\] 
expresses a priority rather than an interference. An action cannot take priority over itself without blocking itself. Therefore, in \ccscw and \ccsp we always have $\alpha \not \in H$. 
Here, we need to apply a slightly different interpretation.
The actions in $\beta \in H$ only prevent $\alpha$ from synchronising with a matching $\ol{\alpha}$ in the environment, if $\ol{\beta}$ can be executed in a \textit{different} (and thus competing) prefix from the one that executes $\ol{\alpha}$.
Actions in $H$ \textit{interfere} with $\alpha$ in the sense that they are intial actions of $P$ sharing the same control thread in which $\alpha$ is offered. 
Thus, we want to avoid a situation in which the environment synchronises with $\ol{\alpha}$ in the presence of \textit{another distinct} communication on $\ol{\beta}$.
For then we would have a race for the behaviour of the thread shared by $\alpha$ and $\beta$ in $P$. 

To fix this, we need to refine the side-conditions $H_i \cap \ol{\iA}(P_{3-i})$ in $(\ParR_3)$ to check not \textit{all} initial actions $\ol{\iA}(P_{3-i})$ but only those initial actions of $P_{3-i}$ that are \textit{competing} with the matching transitions 
$ 
  P_{1} \Derives{\alpha_{1}}{H_{1}}{} P_{1}'
  \text{ and } 
  P_{2} \Derives{\alpha_{2}}{H_{2}}{} P_{2}'
$ 
with $\alpha_1 = \ol{\alpha}_{2}$, involved in the communication $P_1 \cpar P_2 \Derives{\tau}{H_1\cup H_2}{} P_1' \cpar P_2'$ that we want to protect. To achieve this we must refine the information exposed by a transition. More specifically, we extend each transition 
\[ 
  P \Derives{\alpha}{H}{K} P'
\] 
by the set $K$ of initial actions that are offered in competing prefixes of the same thread as $\alpha$ or by threads concurrent to the thread from which $\alpha$ is taken.
Then, the set of all initial actions is simply
$\iA(P) = K \cup \{ \alpha \}$.
Let us call $K$ the \textit{initial environment} of the transition. 
From the description, we expect the invariant that $H \subseteq K$.
%
\begin{remark}
Note that we cannot simply define $K \eqdef \iA(P) - \{ \alpha \}$, because $\alpha$ could be contained in $K$. For example, take $P \eqdef \alpha\cseq\zero \cpar \alpha\cseq\zero$, where 
$ P\Derives{\alpha}{\{ \alpha \}}{\{\alpha\}}  \alpha\cseq\zero.$ 
The blocking set is $H = \{ \alpha \}$ since the prefix is volatile and using $\ActR^+$ must be protected from competing consumptions.
Also, we need $\alpha$ to be in the initial environment $K = \{ \alpha \}$, because the transition's action $\alpha$ is offered a second time in a concurrent thread. This is different from the set $\iA(P) - \{\alpha\} = \{\alpha\} - \{\alpha\} = \eset$. This shows that we must compute the initial environment $K$ explicitly as an extra label with each transition.
\qed
\end{remark}
We collect the initial environments in the rules for prefix, summation and parallel as follows:
\[ %
    \begin{Prooftree}
    \label{actpri}
        \Bproof
          \\[($\ActR^+$)] \alpha \cseq P \Derives{\alpha}{\{ \alpha\}}{\eset} P
        \EEproof
        \quad
        \Bproof
          P \Derives{\alpha}{H}{K} P'
          \\[$(\PriSumR_1^+)$]
          P \wpsum Q \Derives{\alpha}{H}{K \cup \iA(Q)} P'
        \EEproof
    \end{Prooftree}
\] 
\[ %
    \begin{Prooftree}
        \Bproof
          \Lproof
          Q \Derives{\alpha}{H}{K} Q'
          \ANDproof
            \tau, \alpha \not\in \iA(P)
          \Rproof[$(\PriSumR_2^+)$]
            P \wpsum Q 
            \Derives{\alpha}{H \cup \iA(P)}{K \cup \iA(P)} Q'
        \EEproof
        \qquad
        \Bproof
          \Lproof 
             P_2 \Derives{\alpha_2}{H_2}{K_2} P_2'
          \ANDproof 
             H_2 \cap \ol{\iA}(P_{1}) = \{\, \}
          \Rproof[$(\ParR_{2}^+)$] 
           P_1 \cpar P_2 
              \Derives{\alpha_2}{H_2}{K_2 \cup \iA(P_1)} P_1 \cpar P_2'
        \EEproof
    \end{Prooftree}
\]
\[ %
    \begin{Prooftree}
        \Bproof
          \Lproof 
             P_i \Derives{\alpha_i}{H_i}{K_i} P_i'
          \ANDproof
             \alpha_1 = \ol{\alpha}_2
          \ANDproof
             H_i \cap \ol{K}_{3-i} = \{\, \}
          \ANDproof
             \text{for $i \in \{ 1, 2 \}$}
          \Rproof[$(\ParR_3^+)$] 
           P_1 \cpar P_2 
              \Derives{\tau}{H_1 \cup H_2}{K_1 \cup K_2} P_1' \cpar P_2'
        \EEproof
    \end{Prooftree}
\]
Now the composition $W \cpar S$ of the store with a single writer will not block but two writers $W \cpar S \cpar W$ create a race detected by $(\ParR_3^+)$, because of rule $(\ActR^+)$ that introduces $\ol{w}$ into the blocking set but \textit{not} into the initial environment of the write action $\ol{w}$ of $W$.
 
\medskip \noindent  Note that the above rules $(\ActR^+)$, $(\PriSumR_i^+)$ and $(\ParR_{i}^+)$ are not used in \ccslm. 

\subsubsection{Fix 2}

In the previous subsection we have seen how priorities can be beneficial to approaching to a deterministic scheduling in an intrinsic non-deterministic process algebra. 
The priority blocking eliminates the competition between mutually pre-empting actions and the remaining independent  actions naturally commute. 
Now another question naturally arises: Does this mean that if we replace the non-deterministic choice $P + Q$ by a prioritised choice $P \wpsum Q$, then we obtain a determinate process algebra? 
Indeed, we can show that in the ``asynchronous'' fragment of \ccscw, i.e., where all prefixes $\alpha \cseq P$ are of form $\alpha \cseq \zero$, confluence can be achieved with the SOS rules above. 
However, in the ``synchronous'' fragment of \ccs, where actions sequentially guard new actions, we run into a problem of non-monotonicity.  A communication between prefixes may trigger new actions that block another communication that was enabled before. In other words, it is not guaranteed that independent actions remain admissible. 

As an example, in the above reader-writer scenario, the read actions may initially be enabled but then become blocked in the continuation processes. Consider a parallel composition $R \cpar S^* \cpar W^*$ in which a modified store $S^* \eqdef S \cpar d \cseq \zero$  offers another concurrent action $d$ on the side, in parallel. Let the writer $W^*$ be delayed by a synchronisation with this concurrent action $d$, i.e., $W^* \eqdef \ol{d} \cseq W \eqdef \ol{d} \cseq \ol{w} \cseq \zero.$
The scheduling sequences are depicted in Fig.~\ref{fig:conf-prio-3}. 
We can still derive 
$ 
  S \cpar d \cseq \zero 
    \Derives{r}{\{ r, w \}}{\{w, d\}} S'' \cpar d \cseq \zero
  \text{ but also } 
  S\cpar d \cseq\zero \Derives{d}{\{ d \}}{\{r, w\}} S.
$ 
The reading synchronisation
\[
R \cpar S \cpar d\cseq\zero 
   \Derives{\tau}{\{ \ol{r}, r, w \}}{\{w, d\}} S'' \cpar d\cseq\zero
\]
can now take place by $(\ParR_3^+)$ before the write to give the following reduction sequence: 
\[ 
 R \cpar S \cpar d \cseq \zero \cpar W^* 
    \Derives{\tau}{\{\ol{r}, r, w\}}{\{ w,d, \ol{d} \}} 
 S'' \cpar d \cseq \zero \cpar W^*
    \Derives{\tau}{\{ d, \ol{d} \}}{\{ r, e \}} 
 S'' \cpar W.
\] 
The first reduction is justified by rule $(\ParR_1^+)$, because the side-condition $\{r, \ol{r}, w\} \cap \ol{\iA}(W^*) = \{r, \ol{r}, w\} \cap \{ d \} = \eset$ is satisfied. The blocking side-condition only looks at the moment when the read is happening and does not see the write action $\ol{w}$ by $W^*$ hidden behind its initial action $\ol{d}$. It lets process $R$ access and read the store $S$.  
As before, however, the following reduction sequence is allowed, too:
\[ 
 R \cpar S \cpar d \cseq \zero \cpar W^* 
    \Derives{\tau}{\{ d, \ol{d} \}}{\{ \ol{r}, r, w \}} 
 R \cpar S  \cpar W
    \Derives{\tau}{\{w, \ol{w} \}}{\{ \ol{r}, r \}} 
 R \cpar S'  
    \Derives{\tau}{\{ \ol{r}, r, w \}}{\{ w \}} 
 S' 
\] 
in which the write action happens before the read action.
This creates two diverging reduction sequences (incoherent schedules) in which the final store is $S'$ or $S''$, non-deterministically. 
 
\medskip 

\begin{figure}
  \centering
  \includegraphics[scale=0.6]{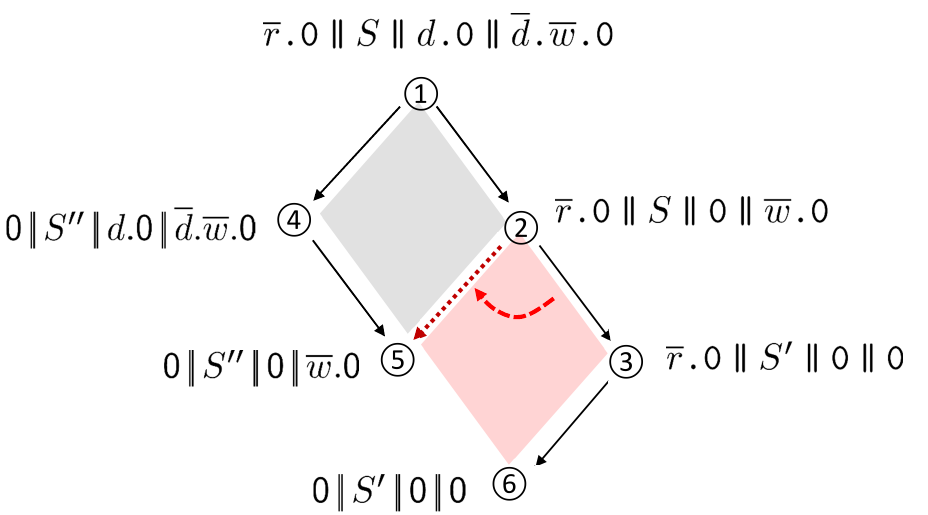}
\caption{The store $S$ in composition with a reader $\ol{r}\cseq\zero$ and a delayed writer $d \cseq \zero \cpar \ol{d}\cseq\ol{w}\cseq \zero$. 
The behaviour has two incongruent final outcomes, a store $S''$ with error in state \cfive and an error-free store $S$ in state \csix. The transition system is not confluent.
The reason is that  transition labelled $\tau \of \{ w \}$ is admissible in state \cone but not anymore in state \ctwo. A solution is to block the reading already in state \cone. In other words, the priority blocking (write-before-read) that kicks in at state \ctwo needs to be detected already at state \cone. Again, all transitons are silent with labels omitted.}
\label{fig:conf-prio-3}
\end{figure}

The new scenario is visualised in Fig.~\ref{fig:conf-prio-3}.
Our insight is that the blocking side-conditions of the standard Camilleri-Winskel theory~\cite{CamilleriWin95}, described above, are too weak in scheduling control for our purposes, which is to guarantee a weak Church-Rosser property~\cite{KlopTRS}.
%
%
The problem is that in the side-condition of rule $(\ParR_2^+)$ we are only looking at the initial actions of $Q$. This is why the read action is blocked out of state \ctwo in Fig.~\ref{fig:conf-prio-3} but not out of state \cone. In state \cone there is no pending initial write. The write action $\ol{w}$ only becomes initial when the parallel processes $\ol{d} \cseq \zero \cpar d \cseq \ol{w}\cseq \zero$ have interacted on action $d$ and advanced together with a silent action $\tau$, i.e. $\ol{d} \cseq \zero \cpar d \cseq \ol{w}\cseq \zero \xrightarrow{\tau}{}{} \zero \cpar \ol{w}\cseq \zero$. In order to observe that the read action $r$ out of state \cone should not be enabled, we must follow all sequences of interactions that can possibly happen concurrently to the read: The read is to be blocked, because after one silent $\tau$-step, in the reachable configuration of state \ctwo, the write action $w$ in the blocking set $\{w\}$ of a silent action, meets a matching initial partner $\ol{w}$. 

\medskip 

This suggests a strategy to fix up the standard theory. 
We define a refined transition relation of the form
\[ 
  P \Derives{a}{H; L}{E} P'
\]
in which we accumulate, along with the \textit{upper} blocking set $H$ of a transition (as before, the prefixes of the same thread that take higher priority due to prioritised choice $\wpsum$) the following: 
\begin{itemize} 
  \item a \textit{lower} blocking set $L$, consisting of the prefixes of the same thread that take lower priority due to prioritised choice $\wpsum$; and 
  \item the \textit{reduction context} $E$ of all concurrent processes that run in competition with the transition. From $E$ we can find out what blocking actions might be generated after any number of silent actions.
\end{itemize}
The initial environment $K$ of a transition then is $K = H \cup L \cup \iA(E)$. 
We collect the upper and lower blocking sets, as well as the reduction context in the rules for prefix and summation as follows:
\[ %
    \begin{Prooftree}
    \label{actpri}
        \Bproof
          \\[($\ActR^*$)] \alpha \cseq P \Derives{\alpha}{\{ \alpha\};\eset}{\zero} P
        \EEproof
        \quad
        \Bproof
          P \Derives{\alpha}{H;L}{E} P'
          \\[$(\PriSumR_1^*)$]
          P \wpsum Q \Derives{\alpha}{H; L \cup \iA(Q)}{E} P'
        \EEproof
    \end{Prooftree}
\] 
\[ %
    \begin{Prooftree}
        \Bproof
          \Lproof
          Q \Derives{\alpha}{H;L}{E} Q'
          \ANDproof
            \tau, \alpha \not\in \iA(P)
          \Rproof[$(\PriSumR_2^*)$]
            P \wpsum Q 
            \Derives{\alpha}{H \cup \iA(P);L}{E} Q'
        \EEproof
    \end{Prooftree}
\]
Further, let $\wiA(E)$ be informally the set of all actions that can become initial along $\tau$-sequences from $E$, which obviously includes $\iA(E)$. We will call the set $\wiA(E)$ the set of \textit{weak initial} actions of $E$.
The rules $(\ParR_1^+)$ and $(\ParR_2^+)$ need to be reformulated, too, so the reduction context $E$ is accumulated:
\[ %
    \begin{Prooftree}
        \Bproof
          \Lproof 
             P \Derives{a}{H; L}{E} P'
          \ANDproof 
             E' = E \cpar Q
          \ANDproof 
             H \cap \ol{\wiA}(E') = \{\, \}
          \Rproof[$(\SParR_1)$] 
           P \cpar Q 
              \Derives{a}{H; L}{E'} P' \cpar Q
        \EEproof
    \end{Prooftree}
\]
Rule $(\ParR_2^*)$ is symmetric to $(\ParR_1^*)$.
Rule $(\ParR_3^+)$, for the synchronisation of an action $\alpha$ and its co-action $\ol{\alpha}$ using these weak initial sets, is reformulated as follows:
\[ %
    \begin{Prooftree}
        \Bproof
          \Lproof 
             P \Derives{\alpha}{H_1; L_1}{E_1} P'
          \ANDproof
             Q \Derives{\ol{\alpha}}{H_2; L_2}{E_2} Q'
          \ANDproof 
             E = E_1 \cpar E_2
          \ANDproof
             \forall i.\, 
              H_i \cap (\ol{\wiA}(E) \cup L_{3-i}) = \eset
          \Rproof[$(\SParR_3)$] 
           P \cpar Q \Derives{\tau}{H_1 \cup H_2; L_1 \cup L_2}{E} P' \cpar Q'
        \EEproof
    \end{Prooftree}
\]
We are now ready to revisit our running example. 
By $(\SParR_2)$, since
\[ 
  R \Derives{\ol{r}}{\{ \ol{r} \}; \eset}{\zero} R' 
  \text{ and } 
  S \cpar d \cseq \zero
   \Derives{r}{\{w, r \}; \eset}{d \cseq \zero}
   S'' \cpar d \cseq \zero \text{, we get } 
  R \cpar S \cpar d \cseq \zero 
  \Derives{\tau}{\{w, r, \ol{r}\}; \eset}{d \cseq \zero} 
  S'' \cpar d \cseq \zero.
\] 
The side-condition of $(\SParR_3)$ is satisfied since $\wiA( d \cseq \zero) = \{ d \}$.
Thus, both $\{\ol{r} \} \cap (\olwilA{\tau}(d \cseq \zero) \cup \eset) = \eset$ as well as $\{ w, r \} \cap (\olwilA{\tau}(d \cseq \zero) \cup \eset) = \eset$.
If now want to shunt this past the writer process $W^*$ using rule $(\SParR_1)$, then we get stuck: 
Since 
$ 
   d\cseq \zero \cpar W^* 
  \Derives{\tau}{\{d, \ol{d} \}; \eset}{\zero}  W
$ 
and $\ol{w} \in \iA(W)$, we find $\ol{w} \in \wiS(d\cseq \zero \cpar W^*)$
and thus $\{ w \} \cap \olwilA{\tau}(d\cseq \zero \cpar W^*) \neq \eset$. Therefore, the side-condition of $(\SParR_1)$ is violated and the reading interaction of $R \cpar S \cpar d \cseq \zero$ cannot be composed with $W^*$.
This is what we want. The transition from state \cone to state \cfour in Fig.~\ref{fig:conf-prio-3} is now blocked.
On the other hand, the reduction
$ 
  R \cpar S \cpar d \cseq \zero
  \Derives{d}{\{ d \};\eset}{R \cpar S} 
  R \cpar S
\text{ can be composed with } 
   W^* \Derives{\ol{d}}{\{ \ol{d} \};\eset}{\zero}
   W
$   
   via $(\SParR_3)$ to make a reduction 
$  R \cpar S \cpar d \cseq \zero \cpar W^* 
  \Derives{\tau}{\{ d , \ol{d} \};\eset}{R \cpar S} 
  R \cpar S \cpar W.
$ 
Regarding the side-condition, we notice that the weak initial action set $\wiA(R \cpar S)$ of the reduction context is irrelevant, because it does not contain $d$ or $\ol{d}$.
In configuration $R \cpar S \cpar W$ (state \ctwo in Fig.~\ref{fig:conf-prio-3}), the reader $R$ cannot interact with $S$. The associated reduction
$ 
  R \cpar S \cpar W 
  \Derives{\tau}{\{ r, \ol{r}, w \}; \eset}{W} 
 S'' \cpar W
$ 
is blocked by the reduction environment, because 
$w \in \olwilA{\tau}(\ol{w}\cseq \zero)$.
\medskip 

For our further discussion, let us call the proposed change from $\iA$ to $\wiA$ in the side-conditions leading to $(\SParR_i)$ from $(\ParR_i^+)$ a process algebra with \textit{strong priorities}\footnote{The terminology is suggested by the fact that strong priorities use weak initial sets as rule premises while weak priorities are based on strong initial sets. Rules are implications, so weakening the antecedent makes a rule logically stronger.} and call the traditional process algebra with priorities, introduced by \cite{CamilleriWin95,CleavelandLN01}, as process algebra with \textit{weak priorities}.  
 
As an intermediate step to implement our precedence-based scheduling, the previous example suggests we look at the \textit{weak initial actions} for a process, given in the following definition.
\begin{definition}[Weak Initial  Actions \& Strong Enabling]
 The set $\wiA(P)$ of \emph{weak initial} transitions is the smallest extension $\iA(P) \subseteq \wiA(P)$ such that if $\alpha \in \wiA(Q)$ and $P \xrightarrow{\tau} Q$ 
 then $\alpha \in \wiA(P)$. 
 \longshort{\version}{}{%
 In other words, to obtain $\wiA(P)$ we close $\iA(P)$ under silent transitions of $P$.
 }
 Let $\olwilA{\tau}(P) = \ol{\wilA{\tau}(P)}$.
 A transition
 $P \Derives{\alpha}{H}{R} P'$ 
 is called \emph{strongly enabled} if $H \cap (\olwilA{\tau}(R) \cup \{\tau\}) = \eset$.
\label{def:strong-enabling}
\end{definition}  
It is easy to see that $\iA(P) \subseteq \wiA(P)$, so every strongly enabled transition is weakly enabled: as such, for free processes, the notions of weak and strong enabling coincide. 

\begin{remark} 
 Note also that enabling as a scheduling constraint is not monotonic under parallel composition. The enabledness property of a transition is not, in general, preserved when the process is placed into a concurrent context. If a reduction 
$ 
P \Derives{\tau}{H}{R} P'
$ 
is enabled, then we know that $H \cap \oliA(R) = \eset$ or $H \cap \olwilA{\tau}(R) = \eset$, depending on what scheduling we choose. This means that the context $R$ cannot (weakly) generate a synchronisation with a blocking initial action. This does not mean that an extension $R \cpar Q$ may not perhaps have $H \cap \oliA(R \cpar Q) \neq \eset$ or $H \cap \olwilA{\tau}(R \cpar Q) \neq \eset$. This happens if blocking actions come from $Q$ or, in the case of strong enabling, become reachable though the interactions of $R$ and $Q$. Then the parallel extension $P \cpar Q$ does not permit a reduction
$ 
P \cpar Q \Derives{\tau}{H}{R} P' \cpar Q.
$ 
\qed
\end{remark}

\medskip \noindent  Note that the above rules $(\ActR^*)$, $(\PriSumR^*_i)$ and $(\ParR_{i}^*)$ are only intermediaries and not used in \ccslm. 
As indicated by the example above, strong enabling generates confluent reductions in the reflexive (self-blocking) case, where rendez-vous actions have a unique sender and a unique receiver process. However, for multi-cast communication, i.e., irreflexive (non self-blocking) c-actions, we need the even stronger notion of \textit{constructive enabling} based on $\olwilA{\ast}(R)$ as used in \ccslm, instead of $\olwilA{\tau}(R)$.

\subsubsection{Digression: Strong Priorities lead to Impredicativity Issues} 

On the face of it, strong priorities may seem rather a natural modification to make. However, they amount to a significant change of the game compared to the standard weak priorities presented in the literature. For they break the standard method of defining the transition relation inductively as the least relation closed under a finite set of production rules. 
The key problem is the logical cycle in the structural definition of the operational semantics: the rules to generate the transitions of a process $P$
ensure $H \cap \olwilA{\tau}(E) = \eset$, i.e., that none of the blocking actions $H$ matches with the transitively reachable initial actions of the reduction context $E$ which itself is extracted as a part of $P$.

Obviously, to compute $\wiA(E)$ we need to know the full transition relation for $E$. In other words, the existence of a transition of a process $P$ depends on the absence of certain transitions of a certain part of process $P$. If $P$ is defined recursively then the behaviour of a part of $P$ might depend on the behaviour of $P$ itself. Does such a self-referential semantical definition make sense at all? How do we ensure the absence of a Russell-paradox style relation, by defining a process $\crec p.\, P$ to have a transition to $P'$ if it does not have a transition to $P'$ for a suitable $P'$? Such would be plainly inconsistent at the meta-level, 
because it would not only define simply an empty transition relation\footnote{As an example consider the unguarded process $\crec p \cseq p$ in the standard inductive theory. Its behaviour is well-defined as the empty transition relation.}, 
but it would not define a unique semantic relation at all. 

In the standard theory of weakly prioritised process algebra as described in the cited references of \ccs with priorities, the problem does not exist, because the set of (strong) initial actions $\iA(E)$ can be determined by structural recursion on $E$ without any reference to the transition relation. Unfortunately, for strong prioritized process algebra as defined in this paper, this is not possible anymore. 
So, how do we make sense of transition-generating rules $(\SParR_i)$ that depend on negative premises? One option is to treat the negation in the side-condition $H \cap \olwilA{\tau}(E) = \eset$ constructively rather than classically, say like negation in normal logic programming. This would amount to building a second independent (inductive) rule system to generate constructive judgements about the absence of weakly reachable transitions. Trouble is that there is no canonical semantics of constructive negation, even in logic programming\footnote{Different well-known semantics are negation-by-failure, stable semantics, supported model semantics.}. 

Some progress has been made for unifying the proposals in a game-theoretic and intuitionistic setting which works well for special situations like the constructive semantics of Esterel~\cite{Berry99} or StateCharts~\cite{HarPnuPruShe87}.  However, playing with constructive negation at the meta-level will give a variety of semantics whose adequacy might be difficult to judge in practical application contexts.

A more perspicuous attack is to keep the classical set-theoretic interpretation
but use an over-approximation $L \cap \olfwiA(E) = \eset$ where $\fwiA(E) \supseteq \wiA(E)$ is intuitively a superset of weakly reachable initial actions obtained from a relaxed transition relation. A natural over-approximation that can be defined independently is the ``free speculative'' scheduling that ignores all priorities. The stronger side-condition $L \cap \olfwiA(E) = \eset$ forces the absence of a blocking sender action under the worst-case assumption of priority-free scheduling. Obviously, this is sound but will have the effect that some programs block that could make progress under a tighter approximation of the sets $\wiA(E)$. Between $\fwiA(E)$ and $\wiA(E)$ there are a whole range of different semantics that may or may not be equivalent to specific constructive semantics of negation. Exploring the range of options systematically will certainly be a worthwhile research exercise likely to cover much new ground in the theory of process algebras with priorities. In \ccslm the impredicativity problem can be avoided by defining the SOS (cf\ Fig.~\ref{fig:free-sos}) of admissible transitions in a ``free' fashion, without blocking side conditions. We compute local races in $(\ComR)$, possibly adding $\tau$ into the blocking set, in a constructive fashion. We express constructive scheduling by the filter condition of c-enabledness (Def.~\ref{def:c-enabling}) using $\wilA{\ast}$. The latter is defined using the SOS of admissible transitions rather than the c-enabled transitions, which would be self-referential. 




\subsection{Synchronous \ccs with (Strong) Priorities and Clocks}
\longshort{\version}{We}{%
The aim of this paper is to show that strong priorities are related to synchronous programming, independently of the choice of solving the negation problem. We adopt the over-approximating approach via free scheduling based on the formally defined notion of \textit{freely reachable initial actions} 
and postponing a discussion of variations to future work. 
Instead, we} argue that the use of priorities for determinacy calls for a combination of strong priorities with a second important concept that has been investigated in the literature before, but independently. This is the notion of a \textit{clock}, known from Timed Process Algebras, like e.g. \tpl~\cite{TPA} and \pmc~\cite{AndersenMen94}.
Intuitively, a clock $\sigma$ is a ``broadcast action'' in the spirit of Hoare's \csp~\cite{Hoare:CSP} that acts as a synchronisation barrier for a set of concurrent processes. It bundles each processes' actions into a sequence of \textit{macro-steps} that are aligned with each other in a lock-step fashion. During each macro-step, a process executes only a temporal slice of its total behaviour, at the end of which it waits for all other processes in the same \textit{clock scope} to reach the end of the current phase. When all processes have reached the barrier, they are released into the next round of computation. 
This scheduling principle is familiar from computer hardware like sequential circuits~\cite{Unger:async-circ} and VHDL~\cite{KloosB95}, or synchronous languages like Esterel~\cite{Berry99}, Lustre~\cite{HalbwachsCRP91}, SCCharts~\cite{vonHanxledenDM+14}, BSP~\cite{Valiant90}, just to mention a few. For our purposes, clocks are the universal trick to break causality cycles in priority scheduling, when processes need to communicate with each other in iterated cycles (so-called \textit{ticks} or \textit{macro-steps}) while maintaining determinacy.

\medskip 

Let us illustrate the use of clocks using our running example. As discussed above with Fig~\ref{fig:conf-prio-2}, 
the evaluation of $R \cpar S \cpar W$ under prioritised choice ``prunes'' all reductions that could lead to non-confluent reductions. This sorts the data race between a single read and a single write. But what if we wanted to iterate concurrent reading and writing? Then, we need to implement some form of lock-step synchronisation between $R$ and $W$. A natural synchronisation scheme for this is provided by a clock $\sigma$ that acts as a broadcast action to separate each round of communication. Let us now use the syntax of \ccslm where blocking is defined in the prefixes via explicit blocking sets as in \ccsp rather than by the prioritised sum operator $\wpsum$ as in \ccscw. Moreover, from now on now, we will use the SOS rules from \ccslm, expect that we are interested in strong enabling (Def.~\ref{def:strong-enabling}) based on the weak initial actions $\wiA$ instead of constructive enabling (Def.~\ref{def:c-enabling}) which is defined using potential actions $\wilA{\ast}$.

\medskip 

We define the recursive processes 
$W \eqdef d \cseq \zero \cpar \ol{d}\cseq\ol{w}\cseq \sigma\cseq W$
and $R \eqdef \ol{r}\cseq \sigma \cseq R$
for one write and one read per macro-step. Assuming the content of the memory cell is reset for every tick, we can define $S$ as follows
\begin{align*} 
  S &\eqdef w\cseq S'  + r\of\{w\} \cseq S'' + \sigma\of\{w,r\} \cseq S \\ 
  S' &\eqdef r \cseq S''  + \sigma\of\{r\} \cseq S' \\
  S'' &\eqdef w \cseq \mathsf{fail} + r\of\{w\} \cseq S''  + \sigma\of\{w, r\} \cseq S.
\end{align*}
\longshort{\version}{The}{%
Alternatively, the cell might be refreshed to a default value, like an Esterel ``pure signal'' that is reset to an absence at each ``clock tick''. Whether the store's state changes with the clock or not, the}
clock always takes \textit{lowest} priority among all other rendez-vous actions out of a state. 
The idea is that the clock should fire only if the system has stabilised and no other admissible data actions are possible. This is called \textit{maximal progress} and it is a standard assumption in timed process algebras~\cite{TPA,CleavelandLM97}.
The fairly standard rules for clock actions, inspired by timed process algebras would then be:
\[ %
    \begin{Prooftree}
        \Bproof
          \\[$(\ClkR_1)$] 
           \sigma \cseq P 
              \Derives{\sigma}{\eset}{\zero} P
        \EEproof
        \qquad
        \Bproof
          \Lproof 
             P \Derives{\sigma}{H_1}{R_1} P'
          \ANDproof 
             P \Derives{\sigma}{H_2}{R_2} P'
          \ANDproof
             (H_1 \cup H_2) \cap \wiS(R_1 \cpar R_2) = \eset 
          \Rproof[$(\ClkR_2)$] 
           P \cpar Q 
              \Derives{\sigma}{H_1 \cup H_2}{R_1 \cpar R_2} P' \cpar Q'
        \EEproof
    \end{Prooftree}
\]
Those two rules  $(\ClkR_i)$ enforces global synchronisation and make the clock deterministic. 
The side condition of $(\ClkR_2)$ enforces maximal progress and essentially corresponds to checking strong enabling (cf. Def.~\ref{def:strong-enabling}). The rules arise essentially from the rules $(\ActR)$ and $(\ComR)$ of \ccslm for clocks and pivotable processes. We remark that $(\ClkR_2)$ does not check the local race condition in contrast to $(\ComR)$ but this condition (like adding $\tau$ into the blocking sets) is redundant for pivotable processes.
In Fig.~\ref{fig:conf-prio-4} we illustrate the resulting transition system. 

\begin{figure}[!t]
  \centering
 \includegraphics[scale=0.5]{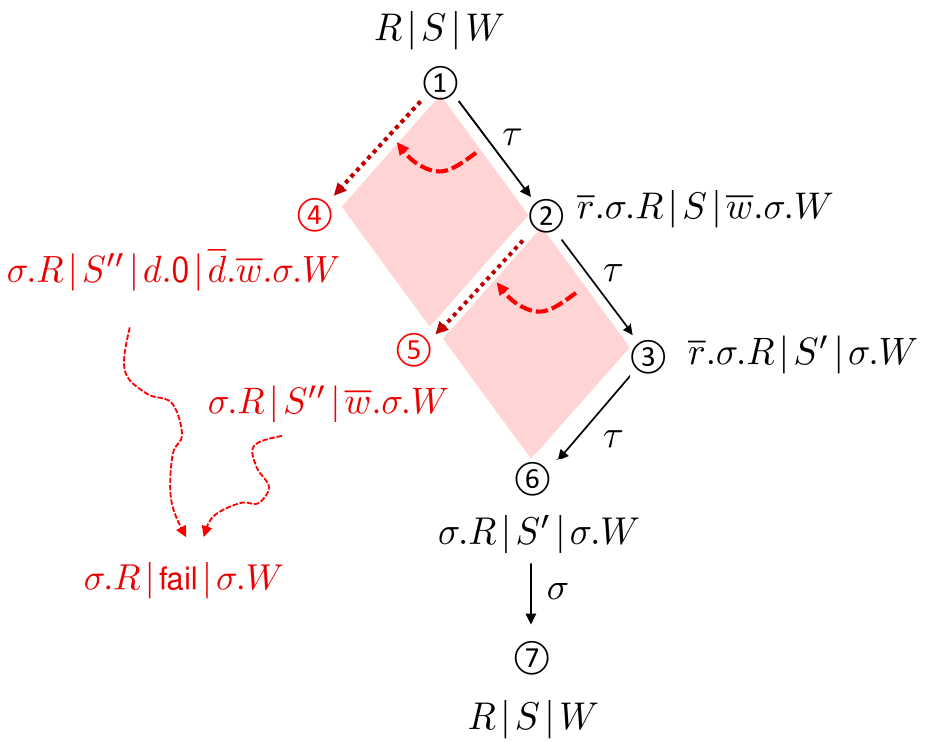}
\caption{Strong priorities introduce confluence in scheduling and clocks permit us to iterate the schedules. The transitions \cone-\ctwo-\cthree-\csix are the deterministically scheduled sequence of the initial macro-step comprising s single write and then a single read. The configuration $\sigma \cseq R \cpar S' \cpar \sigma \cseq W$ in state \csix is the completion of the macro-state, where only a clock transition is possible using rule $(\ClkR_3)$ to start the next macro-step in state \cseven. The dotted transitions \cone-\cfour and \ctwo-\cfive are blocked by the side conditions of $(\ParR_i)$ involving the weak initial actions $\wiA$. The states \cfour and \cfive would end up in the ``error'' configuration $\sigma \cseq R \cpar \mathsf{fail} \cpar \sigma \cseq W$ which does not permit the clock anymore.  
}
\label{fig:conf-prio-4}
\end{figure}

\medskip

The importance of the clock $\sigma$ is that its associated synchronisation rule $(\ClkR_2)$ enforces, in a quite natural way, the barrier explicitly in the SOS rather than making it an artefact of a specific synchronisation process, as such obliging the programmer to an extra effort. The fact that the broadcasting clock is hard-wired into the rules can now be exploited for the computation of the \textit{weak initial action sets} $\wiS(R)$. 
More precisely, we only need to predict the behaviour of $R$ for the \textit{current} macro-step, i.e., up to the next synchronization of $\sigma$ by $R$. We know that all actions taking place after $\sigma$ in $R$ are definitely observed to happen in the next macro step and thus not in competition with any given current action $a$. In this fashion, the clock acts as a \textit{stopper sentinel} for prediction and evaluation of the blocking set $L$ relative to the reduction environment $R$. If every recursion is guarded by a clock, then the search space becomes finite.

\medskip \noindent
Note that rules $(\ClkR_1)$ and $(\ClkR_2)$ are NOT used in this paper.

\subsection{Strong Enabling vs Constructive Enabling}

In this final section we aim to show why strong enabling based on $\olwilA{\tau}(R)$ is not sufficient and why we need the stronger notion of c-enabling (Def.~\ref{def:c-enabling}), based on $\olwilA{\ast}(R)$. 
%
%

\medskip 

\noindent
Suppose $$S_0 \eqdef a \col e \cseq S_0 + e \cseq S_1 + \sigma \col \{a, e\} \cseq S_0$$ is an Esterel Signal (Sec.~\ref{examples:Esterel-signal}) in its absent state, with abbreviations $e = \emit$, $a = \abs$ and $p = \pres$. Its initial action 
$S_0 \Derives{a}{\{e\}}{\zero} S_0$ 
to communicate absence does not block itself to permit multi-reader scenarios. 

Consider a (perhaps non-typical) reader $C \eqdef \ol{a} \col \ol{a} \cseq \ol{e} \col \ol{e} \cseq C_1$ that wants to emit the signal when it finds it is absent. The composition $S_0 \cpar C$ is deterministic under strong enabling, with the only reduction 
$S_0 \cpar C \Derives{\tau}{\{e,\ol{a}\}}{\zero} S_0 \cpar \ol{e} \col \ol{e} \cseq C_1.$ 
The blocking set $\{e,\ol{a}\}$ contains two labels:
\begin{itemize} 
 \item First, $e \in \{e,\ol{a}\}$ indicates priority, i.e., that we do not want another concurrent thread sending $\ol{e}$, because the synchronisation for absence is based on the assumption that there is no emission. 
 Specifically, 
 $S_0 \cpar C \cpar \ol{e} \Derives{\tau}{\{e,\ol{a}\}}{\ol{e}} S_0 \cpar \ol{e} \col \ol{e} \cseq C_1 \cpar \ol{e}$
 will block because $\{e,\ol{a}\} \cap \olwilA{\tau}(\ol{e}) \neq \eset$.
 Instead, the transition 
 $S_0 \cpar C \cpar \ol{e} \Derives{\tau}{\{\ol{e}\}}{C} S_1 \cpar C$ 
 which emits the signal can proceed. 
 \item Second, $\ol{a} \in \{e,\ol{a}\}$ indicates that the sending prefix $\ol{a}\col\ol{a}$ of the reader $C$ (which has entered into the synchronisation $\tau = a \cpar \ol{a}$) must not be consumed by another competing receiver (such as another signal). 
 Specifically, 
 $S_0 \cpar C \cpar S_0 \Derives{\tau}{\{e,\ol{a}\}}{a} S_0 \cpar \ol{e} \col \ol{e} \cseq C_1 \cpar S_0$
 will block. 
\end{itemize}
Now, take $S_0 \cpar C \cpar D$ where the signal $S_0$ is connected with two readers, where $C$ is as above and $D \eqdef \ol{a} \col \ol{a} \cseq \ol{e} \col \ol{e} \cseq D_1$ analogous. Both readers can go ahead and synchronise in reactions ($\tau = a \cpar \ol{a}$)
\[   
  S_0 \cpar C \cpar D \Derives{\tau}{\{e,\ol{a}\}}{D} S_0 \cpar \ol{e} \col \ol{e} \cseq C_1 \cpar D
 \text{ and } 
  S_0 \cpar C \cpar D \Derives{\tau}{\{e,\ol{a}\}}{C} S_0 \cpar C \cpar \ol{e} \col \ol{e} \cseq D_1
\]
which are both strongly enabled since 
$\olwilA{\tau}(D) = \{ a \} = \olwilA{\tau}(C)$ which is disjoint from the blocking set $\{e,\ol{a}\}$.
This may seem ok, as we wanted to have multiple readers access the signal on the same port $a$. However, now the emission of $\ol{e}$ by the process that advanced first will block the reading of absence $a \cpar \ol{a}$ by the other which stayed behind. In each case symmetrically, there is only one (strongly enabled) transition left, viz.
\[ 
  S_0 \cpar \ol{e} \col \ol{e} \cseq C_1 \cpar D
 \Derives{\tau}{\{\ol{e}\}}{D} S_1 \cpar C_1 \cpar D
 \text{ and } 
 S_0 \cpar C \cpar \ol{e} \col \ol{e} \cseq D_1
 \Derives{\tau}{\{\ol{e}\}}{C} S_1 \cpar C \cpar D_1
\]
where the combined final states $C_1 \cpar D$ and $C \cpar D_1$ of the readers are behaviourally different. Thus, we have non-determinacy and the confluence is lost. 

\medskip 

The core of the problem is that the reduction contexts $D$ and $C$, respectively, of the above rendez-vous synchronisations do not account for the fact that the signal $S_0$ can be simultaneously reused by the other reader that is not involved in the transition.
We could account for this reuse by adding $S_0$ into the concurrent context as in 
\[  
  S_0 \cpar C \cpar D \Derives{\tau}{\{e,\ol{a}\}}{S_0\cpar D} S_0 \cpar \ol{e} \col \ol{e} \cseq C_1 \cpar D.
\]
Now $D$ can interact with $S_0$ in the concurrent context $S_0\cpar D$. The above transition
depends on the assumption that no other thread will send $\ol{e}$. This is not true, because now $e \in \olwilA{\tau}(S_0\cpar D)$. Thus, putting the shared signal $S_0$ into the concurrent environment is enforcing determinism by blocking the strong enabling 
properties of the above transitions.
Unfortunately, the fix is too strong. The presence of $S_0$ in the concurrent context would block even the single reader in $S_0 \cpar \ol{a} \col \ol{a}$. 
Specifically, the reduction
\[
  S_0 \cpar \ol{a} \col \ol{a} \Derives{\tau}{\{e,\ol{a}\}}{S_0} S_0
\]
would block, because $\ol{a} \in \oliA(S_0) \subseteq \olwilA{\tau}(S_0)$, and so the reduction
is not strongly enabled any more. Thus, our fix would not permit any reading of the signal $S_0$, whatsoever.

\medskip 

The solution is to tighten up the notion of strong enabling and close it under arbitrary non-clock transitions in the concurrent environment rather than just the silent steps. Technically, in \ccslm, we define a larger set $\olwilA{\ast}(E) \supseteq \olwilA{\tau}(E)$ of potential labels (Def.~\ref{def:c-enabling}) that closes under \textit{all} non-clock actions of $E$.
In this case, we do find $e \in \{ e, \ol{a} \} \cap \olwilA{\ast}(D) \neq \eset$. 
In this way, diverging reductions
are no longer enabled. On the other hand, the reduction 
\[    
  S_0 \cpar C \Derives{\tau}{\{e, \ol{a}\}}{\zero} S_0 \cpar \ol{e} \col \ol{e} \cseq C_1
\]
remains enabled, because $\olwilA{\ast}(\zero) = \eset$. 
In general, single readers $S_0 \cpar C$ where $C$ is an arbitrary sequential process, or multiple readers $S_0 \cpar C \cpar D$ where \textit{at most one} of $C$ or $D$ is writing, i.e., sending $\ol{e}$, remain enabled. 

%


\end{document}